\renewenvironment{proof}[1][{\bf Proof:}]{\begin{trivlist}
\item[\hskip \labelsep {\bfseries #1}]}{\end{trivlist}}
\newcommand*{\vv}[1]{\overrightarrow{#1}}
\newcommand{\cauA}[2]{#1\circ#2}
\newcommand{\inp}[5]{\SigmaB_{#2\in#3}#1?\Seq{#4_#2}{#5_#2}}
\newcommand{\oup}[5]{\bigoplus_{#2\in#3}#1!\Seq{#4_#2}{#5_#2}}
\newcommand{\oupp}[5]{\bigoplus_{#2\in#3}#1!\Seq{#4_#2}{#5}}
\newcommand{\oupTx}[5]{\textstyle{\bigoplus}_{#2\in#3}#1!\Seq{#4_#2}{#5_#2}}
 \newcommand{\RG}{\ensuremath{{\sf G}}}
\newcommand{\GlSyB}{\mbox{$\mathlarger{\mathlarger{\boxplus}}$}}
 \newcommand{\PiB}{\mathlarger{\mathlarger\Pi}}
  \newcommand{\SigmaB}{\mathlarger{\mathlarger\Sigma}}
     \newcommand{\bms}{\vartheta}
   \newcommand{\ES}{\mathcal{P\!E}}
   \newcommand{\ESA}{\mathcal{P\!E}}
     \newcommand{\EGGA}{\mathcal{T\!E}}
 \newcommand{\te}{\eta}
  \newcommand{\ee}{\epsilon}
  \newcommand{\sep}{~|~}
  \newcommand{\eh}[1]{|#1|}
 \newcommand{\mylabel}[1]{\label{#1}}
\newcommand{\comoccA}{\ensuremath{\delta}}
\newcommand{\Commpair}[2]{\ensuremath{(#1,#2)}}
\newcommand{\pastpref}[2]{\ensuremath{\concat{{#1}}{#2}}}
\newcommand{\act}[1]{\ensuremath{\sf act}(#1)}
\newcommand{\concat}[2]{\ensuremath{#1\,{\cdot}\,#2}}
 \newcommand{\inpP}[5]{\SigmaB_{#2\in#3}#1?\Seq{#4_#2}{#5}}
\newcommand{\oupP}[5]{\bigoplus_{#2\in#3}#1!\Seq{#4_#2}{#5}}
\newcommand{\refToFigure}[1]{Figure~\ref{#1}}
\newcommand{\refToSection}[1]{Section~\ref{#1}}
\newcommand{\refToExample}[1]{Example~\ref{#1}}
\newcommand{\refToLemma}[1]{Lemma~\ref{#1}}
\newcommand{\refToTheorem}[1]{Theorem~\ref{#1}}
\newcommand{\refToDef}[1]{Definition~\ref{#1}}
\newcommand{\refToProp}[1]{Proposition~\ref{#1}}
\newcommand{\refToFact}[1]{Fact~\ref{#1}}
\newcommand{\set}[1]{\{#1\}}
\newcommand{\kf}[1]{\ensuremath{\mathsf{#1}\xspace}}
\newcommand{\cSinferrule}[3][]{
  \mprset{fraction={===},
  fractionaboveskip=0.2ex,
  fractionbelowskip=1.30ex}
  \inferrule[#1]{#2}{~#3}
}
\newcommand{\PP}{\ensuremath{P}}
\newcommand{\la}{\ell}
\newcommand{\M}{\la}
\newcommand{\Messages}{\ensuremath{{\sf Lab}}}
\newcommand{\pp}{{\sf p}}
\newcommand{\q}{{\sf q}}
\newcommand{\pr}{{\sf r}}
\newcommand{\ps}{{\sf s}}
\newcommand{\pt}{{\sf t}}
\newcommand{\Participants}{\ensuremath{{\sf Part}}}
\newcommand{\sendL}[2]{#1!#2}
\newcommand{\rcvL}[2]{#1?#2}
\newcommand{\dagL}[2]{#1\!\dagger\! #2}
\newcommand{\pc}{~|~}
\newcommand{\Seq}[2]{#1;#2}
\newcommand{\inact}{\ensuremath{\mathbf{0}}}
\newcommand{\val}{v}
\newcommand{\Q}{\ensuremath{Q}}
\newcommand{\R}{\ensuremath{R}}
\newcommand{\del}[1]{\ensuremath{\delta}}
\newcommand{\Nt}{\ensuremath{{\sf N}}}
\newcommand{\parN}{\mathrel{\|}}
\newcommand{\pP}[2] {#1[\![\,#2\,]\!]}
\newcommand{\stackred}[1]{\xrightarrow{#1}}
 \newcommand{\G}{\ensuremath{{\sf G}}}
 \newcommand{\End}{\kf{End}}
\newcommand{\play}[1]{\ensuremath{{\sf play}(#1)}}
\newcommand{\GlSy}{\GlSyB}
 \newcommand{\parG}{\mathrel{\|}}
\newcommand{\proj}[2]{#1 \!  \upharpoonright  \! #2\,}
\newcommand{\projs}[2]{#1 \! \Rsh   \! #2\,}
\newcommand{\projAP}[2]{#1\,@\,#2\,}
\newcommand{\projtau}[2]{#1\,@\,#2\,}
\newcommand{\subt}{\leq}
\newcommand{\derN}[2]{\vdash #1 :#2}
\newcommand{\rulename}[1]{\ensuremath{[\textsc{#1}]}}
\newcommand{\weight}{\ensuremath{{\sf depth}}}
\newcommand{\swap}[3]{#1{\rhd}_{\!#3}#2}
\newcommand{\gr}{\ensuremath{\,\#\,}}
\newcommand{\grr}{\ensuremath{\,\#\,}}
\newcommand{\impl}{\ensuremath{\Rightarrow}}
\newcommand{\procev}{\eta}
\newcommand{\Procev}{\ES}
\newcommand{\preEv}{\chi}
\newcommand{\netevA}{\rho}
\newcommand{\globevA}{\delta}
\newcommand{\precP}{\ensuremath{\leq}}
\newcommand{\precN}{\ensuremath{\prec^\os}}
\newcommand{\precE}{\ensuremath{\prec^\ee}}
\newcommand{\precNL}[1]{\ensuremath{\prec^{#1}}}
   \newcommand{\GEa}{\mathcal{T\!E}}
\newcommand{\DE}{\mathcal{D\!E}}
\newcommand{\DEA}{\mathcal{N\!E}}
 \newcommand{\GEA}{\mathcal{N\!E}}
\newcommand{\dualA}[2]{\ensuremath{#1\Join #2}}
\newcommand{\dualprecsim}[2]{\ensuremath{#1~_{\precapprox}\!\Join_{\,\succapprox} #2}}
\newcommand{\locev}[2]{\ensuremath{#1:: #2}}
\newcommand{\locevA}[3]{\ensuremath{#1:: #3}}
\newcommand{\locevAS}[3]{\ensuremath{#1:: #3}}
\newcommand{\actseq}{\zeta}
\newcommand{\comseqA}{ \tau } 
\newcommand{\procActs}{\vv{\pi;}}
\newcommand{\procActsS}{\vv{\pi}}
\newcommand{\procActsP}{\vv{\pi';}}
\newcommand{\procActsPS}{\vv{\pi'}}
\newcommand{\eqclass}[1]{\ensuremath{[#1]_{\sim}}}
\newcommand{\ESP}[1]{\ensuremath{\mathcal{S^P}(#1)}}
\newcommand{\ESN}[1]{\ensuremath{\mathcal{S}^{\mathcal{N}}(#1)}}
\newcommand{\ESNA}[1]{\ensuremath{\mathcal{S}^{\mathcal{N}}(#1)}}
\newcommand{\ESet}{\ensuremath{\mathcal{X}}}
\newcommand{\GSet}{\ensuremath{\mathcal{Y}}}
\newcommand{\ESGA}[1]{\ensuremath{\mathcal{S}^{\mathcal{T}}(#1)}}
\newcommand{\emptyseq}{\ensuremath{\epsilon}}
\newcommand{\eqdef}{\ensuremath{{=_{\sf def}}}}
\newcommand{\Conf}[1]{\ensuremath{\mathcal{C}(#1)}}
\newcommand{\CD}[1]{\ensuremath{\mathcal{D}(#1)}}
\newcommand{\io}[1]{\ensuremath{{\sf i/o}}(#1)}
\newcommand{\last}[1]{\ensuremath{{\sf last}}(#1)}
\newcommand{\necA}[1]{\ensuremath{{\sf nec}(#1)}}
\newcommand{\gecA}[1]{\ensuremath{{\sf tec}(#1)}}
\newcommand{\postPA}[2]{\ensuremath{{#2}\,\blacklozenge\,{(#1)}}}
\newcommand{\prePA}[2]{\ensuremath{{#2}\,\lozenge\,{(#1)}}}
\newcommand{\postA}[2]{\ensuremath{{#2}\,\blacklozenge\,{#1}}}
\newcommand{\preA}[2]{\ensuremath{{#2}\,\lozenge\,{#1}}}
\newcommand{\postGA}[2]{\ensuremath{{#2}\bullet{#1}}}
\newcommand{\preGA}[2]{\ensuremath{{#2}\circ{#1}}}
\newcommand{\pair}[2]{(#1,#2)}
\newcommand{\coDefGr}{::=^{coind}}
\newcommand{\GP}{\G}
\newcommand{\asCom}{\beta}
\newcommand{\FPaths}[1]{{\sf Tr^+}(#1)}
\newcommand{\IPaths}[1]{{\sf Tr^+}(#1)}
\newcommand{\confAs}[2]{#1\parN#2}
\newcommand{\Msg}{\mathcal{M}}
\newcommand{\addMsg}[2]{#1\cdot #2}
\newcommand{\nr}[1]{{\sf nr}(#1)}
\newcommand{\agtO}[6]{\GlSyB_{#3\in #4}#1#2!#5_#3;#6_#3}
\newcommand{\agtOP}[6]{\GlSyB_{#3\in #4}#1#2!#5_#3;#6}
\newcommand{\agtI}[4]{#1#2?#3;#4}
\newcommand{\agtOS}[4]{#1#2!#3;#4}
\newcommand{\agtIS}[3]{#1#2?#3}
\newcommand{\agtSOS}[3]{#1#2!#3}
\newcommand{\asty}[2]{#1 \parN #2}   
\newcommand{\BB}{\mathcal B}
\newcommand{\derSI}[2]{\vdash^{b} #2}
\newcommand{\derPI}[2]{\vdash^{b} #2}
\newcommand{\projnet}[2]{{\sc proj}_{#1}(#2)}
\newcommand{\projnetfun}[1]{{\sc proj}_{#1}}
\newcommand{\mq}[3]{\langle#1,#2,#3\rangle}
\newcommand{\CommAs}[3]{#1#3!#2}
\newcommand{\CommAsI}[3]{#1#3?#2}
\newcommand{\Comm}[3]{#1#3#2}
\newcommand{\os}{\omega}
\newcommand{\ct}[3]{#1\propto^{\,#2}\!\!#3}
\newcommand{\oks}[2]{{\sf req}(#1,#2)}
\newcommand{\eqA}[2]{[#1,#2]_\sim}
\newcommand{\point}{{\sf ev}}
\newcommand{\osq}[1]{\ensuremath{{\sf otr}(#1)}}
\newcommand{\pro}[2]{#1\,@\,#2}
\newcommand{\at}[2]{#1[#2]}
\newcommand{\range}[3]{#1[#2\,...\,#3]}
\newcommand{\cardin}[1]{\!\!\pc\! #1\!\!\pc\!}
\newcommand{\mult}[2]{\ensuremath{{\sf m}(#1, #2)}}
\newcommand{\traceS}{trace }
\newcommand{\tracesS}{traces }
\newcommand{\trace}{trace}
\newcommand{\traces}{traces}
\newcommand{\mapWh}[2]{#1\triangleright#2}
\newcommand{\mapBl}[2]{#1\blacktriangleright#2}
\newcommand{\prefMsg}[2]{#2\cdot #1}
\newcommand{\prR}[3]{(#1,#2)\in\downarrow_#3}
\newcommand{\iR}[3]{(#1,#2)\in #3}
\newcommand{\RR}{{\mathcal R}}
\newcommand{\filt}[2]{#1\,\lceil_\os \, #2}
\newcommand{\filtP}[3]{#1\,\lceil_{#2} \, #3}
\newcommand{\idepth}{\ensuremath{{\sf idepth}}}
\newcommand{\maxP}{\ensuremath{{\sf maxPath}}}
\newcommand{\AsOut}{Ext-Out}
\newcommand{\AsIn}{Ext-In}
\newcommand{\qt}{asynchronous type}
\newcommand{\qts}{asynchronous types}
\newcommand{\Qts}{Asynchronous types}
\newcommand{\sgt}{global type}
\newcommand{\sgts}{global types}
\newcommand{\agt}{asynchronous type}
\newcommand{\agts}{asynchronous types}
\newcommand{\oi}{input/output}
\newcommand{\balanced}{balanced}
\newcommand{\balancing}{balancing}
\newenvironment{lemmaa}[2]{\begin{trivlist}
\item[\hskip \labelsep {\bfseries Lemma #1}] {#2} \it}{\end{trivlist}}
\def\sm{\smallskip}
\begin{document}


\setcounter{page}{1}
\publyear{24}
\papernumber{2188}
\volume{192}
\issue{1}

\finalVersionForARXIV

\title{Global Types and Event Structure  Semantics \\
                for Asynchronous Multiparty Sessions}

\author{Ilaria Castellani\thanks{This research has been supported by the ANR17-CE25-0014-01
                              CISC project.}
\\
INRIA, Universit\'e C\^ote d'Azur, France\\
ilaria.castellani@inria.fr
\and Mariangiola Dezani-Ciancaglini\\
Dipartimento di Informatica, Universit\`a di Torino, Italy\\
 dezani@di.unito.it
\and Paola Giannini\thanks{Address for correspondence:  Computer Science Institute, DiSSTE, Pz. S.
                 Eusebio 5 - 13100 Vercelli, Italy.}\thanks{This original research has the financial support of the Universit\`{a}  del Piemonte Orientale.
                         This work was partially funded by the MIUR project ``T-LADIES'' (PRIN 2020TL3X8X).
                                     \newline \newline
                    \vspace*{-6mm}{\scriptsize{Received August 2023; \ accepted April 2024.}}}
\\
DiSSTE, Universit\`{a} del Piemonte Orientale,  Italy\\
paola.giannini@uniupo.it }

\maketitle

\runninghead{I. Castellani et al.}{Types and Semantics for Asynchronous Sessions}

\begin{abstract}
  We propose an interpretation of multiparty sessions with
  asynchronous communication as \emph{Flow Event Structures}.  We
  introduce a new notion of \emph{asynchronous type} for such
  sessions, ensuring the expected properties for multiparty sessions,
  including progress.
  Our asynchronous types, which reflect asynchrony more directly and
  more precisely than standard global types and are more permissive,
  are themselves interpreted as \emph{Prime Event Structures}.  The
  main result is that the Event Structure interpretation of a session
  is equivalent, when the session is typable, to the Event Structure
  interpretation of its asynchronous type, namely their domains of
  configurations are isomorphic.
  \end{abstract}

\begin{keywords}
Communication-centric Systems,
Communication-based Programming,
Process Calculi, Event Structures, Multiparty Session Types.
\end{keywords}


 \section{Introduction}\label{intro}
 Session types describe interactions among a number of participants,
 which proceed according to a given protocol.  They extend classical
 data types by specifying, in addition to the type of exchanged data,
 also the interactive behaviour of participants, namely the sequence
 of their \oi\ actions towards other participants.  The aim of
 session types is to ensure safety properties for sessions, such as
 the \emph{absence of communication errors} (no type mismatch in
 exchanged data) and \emph{deadlock-freedom} (no standstill until
 every participant  is terminated).  Sometimes, a
 stronger property is targeted, called \emph{progress} (no
 participant waits forever).\medskip

 Initially conceived for describing binary protocols in the
$\pi$-calculus~\cite{THK94,HVK98}, session types have been later
extended to multiparty protocols~\cite{CHY08,CHY16} and embedded into
a range of functional, concurrent, and object-oriented programming
languages~\cite{ABB0CDGGGH16}.
While binary sessions can be described by a single session type,
multiparty sessions require two kinds of types: a \emph{global type}
that describes the whole session protocol, and \emph{local types} that
describe the contributions of the individual participants to the
protocol. The key requirement in order to achieve the expected safety
properties is that all local types be obtained as projections from the
same global type.

 Communication in sessions is always directed from a given sender to a
given receiver. It can be synchronous or asynchronous. In the first
case, sender and receiver need to synchronise in order to exchange a
message. In the second case, messages may be sent at any time, hence a
sender is never blocked. The sent messages are stored in a queue,
where they may be fetched by the intended receiver.  Asynchronous
communication is often favoured for multiparty sessions, since such
sessions may be used to model web services or distributed
applications, where the participants are spread over different sites.

Session types have been shown to bear a strong connection with models
of concurrency such as communicating automata~\cite{DY12}, as well as
with message-sequence charts~\cite{CHY16}, graphical
choreographies~\cite{LTY15,TuostoG18}, and various brands of linear
logics~\cite{CP10,TCP11,Wadler14,PCPT14,CPT16}.

In a companion paper~\cite{CDG22}, we investigated the relationship
between synchronous multiparty sessions and Event Structures
(ESs)~\cite{Win88}, a well-known model of concurrency which is
grounded on the notions of causality and conflict between events.  We
considered a simple calculus, where sessions are described as networks
of sequential processes~\cite{DGJPY15}, equipped with standard global
types~\cite{CHY08}. We proposed an interpretation of sessions as
\emph{Flow Event Structures} (FESs)~\cite{BC88a,BC94}, as well as an
interpretation of global types as \emph{Prime Event Structures}
(PESs)~\cite{Win80,NPW81}. We showed that for typed sessions these two
interpretations yield isomorphic domains of configurations.

In the present paper, we undertake a similar endeavour in the
asynchronous setting.  This involves devising a new notion of
asynchronous type for asynchronous networks.  We start by considering
a core session calculus as in the synchronous case, where processes
are only able to exchange labels, not values, hence local types
coincide with processes.
Moreover, networks are now endowed with a queue and they act on this
queue by performing outputs or inputs: an output stores a message in
the queue, while an input fetches a message from the queue.
 The  present paper differs from~\cite{CDG22} not only
for the operational semantics, but also for the typing rules and more
essentially for the event structure semantics of sessions and
types.

\eject
To illustrate the difference between synchronous and asynchronous
sessions and motivate the introduction of
new types for the latter, let us discuss a simple example. Consider the
network:\smallskip

\centerline{ $\Nt=\pP{\pp}{\sendL{\q}\la;\rcvL{\q}{\la'}} \parN
  \pP{\q}{\sendL{\pp}{\la'};\rcvL{\pp}\la}$ }\smallskip

\noindent where each of the participants $\pp$ and $\q$ wishes to first send a
message to the other one and then receive a message from 
the other one.
\medskip\\
 \hspace*{5.08mm}
In a synchronous setting this network is stuck, because a network
communication arises from the synchronisation of an output with a
matching input, and here the output $\sendL{\q}{\la}$ of $\pp$ cannot
synchronise with the input $\rcvL{\pp}{\la}$ of $\q$, since the latter
is guarded by the output $\sendL{\pp}{\la'}$.  Similarly, the output
$\sendL{\pp}{\la'}$ of $\q$ cannot synchronise with the input
$\rcvL{\q}{\la'}$ of $\pp$.  Indeed, this network is not typable
because  any  global type for it
should  have one of the two forms:\smallskip

\centerline{$\G_1 = \pp \rightarrow \q : \la ; \, \q \rightarrow \pp :
  \la' $ \qquad\qquad $\G_2 = \q \rightarrow \pp : \la' ;  \,\pp \rightarrow \q : \la $}

\noindent
However, neither of the $\G_i$ projects down to the correct
processes
for both $\pp$ and $\q$ in $\Nt$. For instance, $\G_1$ projects to the
correct process
$\concat{\sendL{\q}{\la}}{\rcvL{\q}{\la'}}$ for $\pp$,
but its projection on $\q$ is $\concat
{\rcvL{\pp}{\la}}{\sendL{\pp}{\la'}}$, which is not the correct
process for $\q$.
\medskip\\
 \hspace*{5.08mm}
In an asynchronous setting, on the other hand, this network is run in
parallel with a queue $\Msg$, which we indicate by
$\confAs{\Nt}{\Msg}$, and it can always move for whatever choice of
$\Msg$. Indeed, the moves of an asynchronous network are
not complete communications but rather ``communication halves'',
namely outputs or inputs.  For instance, if the queue is empty, then
$\confAs\Nt\emptyset$ can move by first performing the two outputs in
any order, and then the two inputs in any order. If instead the queue
contains a message from $\pp$ to $\q$ with
label $\la_1$,
followed by
a message from $\q$ to $\pp$ with
label $\la_2$,
which we indicate by $\Msg =
\addMsg{\mq{\pp}{\la_1}{\q}}{\mq{\q}{\la_2}{\pp}}$, then, assuming
$\la_1 \neq \la$ and $\la_2 \neq \la'$, the network will be stuck
after performing the two outputs.
Indeed, the two inputs will not be able to occur, since the two
messages on top of the queue are not those expected by $\pp$ and
$\q$. Hence we look for a notion of type that accepts the network
$\confAs\Nt\emptyset$ but rejects the network
$\confAs{\Nt}{\addMsg{\mq{\pp}{\la_1}{\q}}{\mq{\q}{\la_2}{\pp}}}$.
\\
 \hspace*{5.08mm}
The idea for our new \emph{\qts} is quite simple: to split
communications into outputs and inputs, and to add a queue to the
type, thus mimicking very closely the behaviour of asynchronous
networks. Hence, our \qts\ have the form $\G \parN \Msg$. Clearly, we
must impose some well formedness conditions on such types, taking into
account also the content of the queue. Essentially, this amounts to
requiring that each input appearing in the type be justified by a
preceding output in the type or by a message in the queue, and vice
versa, that each output in the type or message in the
queue be matched by a corresponding input in the type.
\\
 \hspace*{5.08mm}
Having introduced this new notion of type, 
it becomes now possible to type the network $\confAs\Nt\emptyset$ with
the \qt\ $\G \parN \emptyset$, where
$\G =
\Seq{\Seq{\Seq{\CommAs{\pp}{\la}{\q}}{\CommAs{\q}{\la'}{\pp}}}{\CommAsI{\pp}{\la}{\q}
  }}{\CommAsI{\q}{\la'}{\pp}}$, or with the other  \qts\ obtained from it by swapping the outputs and/or the
inputs.  Instead, the network
$\confAs{\Nt}{\addMsg{\mq{\pp}{\la_1}{\q}}{\mq{\q}{\la_2}{\pp}}}$ will
be rejected, because the \qt\
$\G \parN \addMsg{\mq{\pp}{\la_1}{\q}}{\mq{\q}{\la_2}{\pp}}$ is not
well formed, since its two inputs do not match the first two messages
in the queue.
\\
 \hspace*{5.08mm}
A different solution was proposed in~\cite{Mostrous2009} by means of
an asynchronous subtyping relation on local types which allows outputs
to be brought forward.  In our setting this boils down to a subtyping
relation on processes yielding both $\sendL{\q}\la;\rcvL{\q}{\la'}\leq
\rcvL{\q}{\la'};\sendL{\q}\la$ and
$\sendL{\pp}{\la'};\rcvL{\pp}\la\leq\rcvL{\pp}\la;\sendL{\pp}{\la'}$.
With the help of this subtyping, both $\G_1$ and $\G_2$ become types
for the network $\Nt\parG\emptyset$ above.  Unfortunately, however,
this subtyping turned out to be undecidable~\cite{BCZ17,LY17}.

\noindent To define our interpretations of asynchronous networks and types into
FESs and PESs respectively, we follow the same schema as for their
synchronous counterparts in our previous work~\cite{CDG22}.  In
particular, the events of the ESs are defined syntactically and they
record the ``history'' of the particular communication occurrence they
represent. More specifically, the events of the FES associated with a
network -- which we call \emph{network events} -- record the local
history of their communication.  By contrast, the events of the PES
associated with an \qt\ -- which we call \emph{type events} -- record
the global history of their communication, namely the whole sequence
of past communications that caused it, which is extracted from the
computation trace using a permutation equivalence.  However, while
in~\cite{CDG22} an event represented a communication between two
participants, here it represents an output or an input pertaining to a
single \mbox{participant}.  Hence, some care must be taken in defining the
flow relation between network events\footnote{In FESs, the flow
  relation represents a direct causality between events.}, and in
particular the ``cross-flow'' between an output event and the matching
input event, since input events may also be justified by a message in
the queue.  For type events, queues appear inside events and affect
the permutation equivalence.  Therefore, our ES semantics for the
asynchronous setting is far from being a trivial adaptation of that
given in~\cite{CDG22} for the synchronous setting.

\medskip
\noindent To sum up,  the contribution of this paper is twofold:\sm

1) We propose an original syntax for \qts, which,
in our view, models asynchronous communication in a more
natural  and precise  way than existing approaches.
Our type system is more permissive than the standard one~\cite{CHY16}
-- in particular, it allows outputs to take precedence over inputs as
in~\cite{Mostrous2009}, a characteristics of asynchronous
communication -- but it remains decidable. We show that our \qts\
ensure classical safety properties as well as progress.

2) We present an Event Structure semantics for asynchronous networks
and \qts. Networks are interpreted as FESs and \qts\ are interpreted
as PESs. Our main result here is an isomorphism between the
configuration domains of the FES of a typed network and the PES of its
\qt.

\medskip
The paper is organised as follows. Section~\ref{sec:calculus}
introduces our calculus for asynchronous multiparty sessions.
Section~\ref{sec:typesAs} introduces asynchronous types and the
associated type system, establishing its main properties.
Section~\ref{sec:eventStr} recaps some necessary background
on Event Structures.  In Section~\ref{sec:process-ES} we recall
our interpretation of processes as PESs, taken from our companion
paper~\cite{CDG22}. In Section~\ref{sec:netA-ES} we present our
interpretation of asynchronous networks as FESs.  In
Section~\ref{sec:eventsA} we define our interpretation of asynchronous
types as PESs.  Finally, in Section~\ref{sec:resultsA} we prove the
equivalence between the FES semantics of a network and the PES
semantics of its asynchronous type.  We conclude with a discussion on
related work and future research directions in
Section~\ref{sec:relatedA}.
All results are given with full proofs. When not appearing in
the main text, the proofs may be found in the Appendix, which contains
also the glossary of symbols.

\section{A core calculus for multiparty sessions}\mylabel{sec:calculus}

We now formally introduce our calculus, where asynchronous multiparty
sessions are represented as networks of sequential processes with
queues.

\noindent We assume the following base sets: \emph{participants}, ranged over by
$\pp,\q,\pr$ and forming the set $\Participants$, and \emph{labels},
ranged over by $\la,\la',\dots$ and forming the set $\Messages$.

\medskip
 Let $\pi \in \{ \sendL{\pp}{\la}, \rcvL{\pp}{\la} \pc
 \pp\in \Participants, \la \in \Messages\}$ denote an \emph{atomic
   action}.  The action $\sendL{\pp}{\la}$ represents an output of
 label $\la$ to participant $\pp$, while the action $\rcvL{\pp}{\la}$
 represents an input of label $\la$ from participant $\pp$.

 \begin{definition}[Processes]\mylabel{p}
Processes are  defined by:\sm

  \centerline{$
\begin{array}{lll}
\PP & \coDefGr  &
\oup\pp{i}{I}{\la}{\PP}%
~~\mid~~
\inp\pp{i}{I}{\la}{\PP}%
~~\mid~~
\inact
\end{array}
$} \sm

\noindent
where $I$ is non-empty and $\la_h\not=\la_k$ for all $h,k\in I$,
$h\neq k$, i.e.
labels in choices are all different.
\end{definition}

The symbol $ \coDefGr$, in the definition above and in later
definitions, indicates that the productions should be interpreted
\emph{coinductively}.
Namely, they define possibly infinite
processes.  However, we assume such processes to be \emph{regular},
that is, with finitely many distinct subprocesses. In this way, we
only obtain processes which are solutions of
finite sets of equations, see~\cite{Cour83}. So, when writing
processes, we shall use (mutually) recursive equations.

In the following,
trailing $\inact$'s will be omitted. Processes of the shape$\,\oup\pp{i}{I}{\la}{\!\PP}$ and
$\inp\pp{i}{I}{\la}{\!\PP}$ are called {\em output} and {\em input
  processes}, respectively. We will write $\pp!\la ; \PP $ or $\pp?\la; \PP$ for choices with
just one branch, and use the infix notation $\oplus$ and $+$ in the examples.

Processes can be seen as trees where internal nodes are decorated by
$\pp!$ or $\pp?$, leaves by $\inact$, and edges by labels $\la$.

In a full-fledged calculus, labels would carry values, namely
they would be of the form $\la(\val)$.  For
simplicity, we consider only simple labels $\la$ here.  This will allow us to
project global types directly to processes, without having to
explicitly introduce local types, see Section \ref{sec:typesAs}.

\medskip
In our calculus, \emph{asynchronous
communication} is modelled in the usual way, by
storing sent messages in a queue.
We define \emph{messages} to be triples $\mq\pp{\la}\q$, where $\pp$
is the sender and $\q$ is the receiver, and \emph{message queues} (or
simply \emph{queues}) to be possibly empty sequences of messages:\smallskip

\centerline{$\Msg::=\emptyset \mid \addMsg{\mq\pp{\la}\q}{\Msg} $}\sm

The order of messages in the queue is the order in which they will be
read. Since the only reading order that matters is that between
messages with the same sender and the same receiver, we consider
message queues modulo the structural equivalence given by:\smallskip

\centerline{$
\addMsg{\addMsg{\Msg}{\mq\pp{\la}\q}}{\addMsg{\mq\pr{\la'}\ps}{\Msg'}}\equiv
  \addMsg{\addMsg{\Msg}{\mq\pr{\la'}\ps}}{\addMsg{\mq\pp{\la}\q}{\Msg'}}
  ~~\text{if}~~\pp\not=\pr~~\text{or}~~\q\not=\ps
$}\sm

\smallskip\noindent
Note in particular that $\addMsg{\mq\pp{\la}\q}{\mq\q{\la'}\pp} \equiv
\addMsg{\mq\q{\la'}\pp}{\mq\pp{\la}\q}$. These two equivalent queues
represent a situation in which both participants $\pp$ and $\q$ have
sent a label to the other one, and neither of them has read the
label sent by the other one.
Since the two sends occur in parallel, the order of the corresponding
messages in the queue should be irrelevant. This point will be further
illustrated by  Examples~\ref{sync-async-characteristic-example} and~\ref{htp}.

\medskip
 In the following we will always consider queues modulo structural equivalence.

Networks are comprised of located processes of the form
$\pP{\pp}{\PP}$ composed in parallel, each with a different
participant $\pp$, and by a message queue.

\eject
\begin{definition}[Networks]
{\em  Networks} are defined by:\sm

\centerline{$   \Nt \parallel \Msg$} \sm

\noindent where $ \Nt = \pP{\pp_1}{\PP_1} \parN
\cdots \parN \pP{\pp_n}{\PP_n}$ with
 $\pp_h \neq \pp_k $ for any $h \neq k$,  and $\Msg$ is a message queue.
\end{definition}

We assume the standard laws for parallel composition, stating that
$\parN$ is associative, commutative, and has neutral element
$\pP\pp\inact$ for any fresh $\pp$. These laws, together with the
structural equivalence on queues, give rise to the structural
congruence on networks, also denoted by the symbol $\equiv$.

\medskip
If $\PP\neq\inact$ we write $\pP{\pp}{\PP}\in\Nt$ as short for
$\Nt\equiv\pP{\pp}{\PP}\parN\Nt'$ for some $\Nt'$.  This abbreviation
is justified by the associativity and commutativity of parallel composition.

\begin{figure}[!ht]\small
 \centerline{$
\begin{array}{c}
\confAs{\pP{\pp}{\oup\q{i}{I}{\la}{\PP}}\parN\Nt}{\Msg} \stackred{\CommAs\pp{\la_k}\q}
  \confAs{\pP{\pp}{\PP_k}\parN\Nt}{\addMsg{\Msg}{\mq\pp{\la_k}\q}}\quad \text{ where  }\
   k \in I  \quad
   {~~~~~~\rulename{Send}} \\[3pt]
\confAs{\pP{\q}{\inp\pp{j}{J}{\la}{\Q}}\parN\Nt}{\addMsg{\mq\pp{\la_k}\q}{\Msg}}\stackred{\CommAsI\pp{\la_k}\q}
 \confAs{\pP{\q}{\Q_k}\parN\Nt}{\Msg}\quad  \text{ where  }\
 k \in J \quad
  {~~~~~~\rulename{Rcv}}
\end{array}
$}
\caption{LTS for networks.}\mylabel{fig:netredAs}\mylabel{fig:asynprocLTS}\vspace*{-1mm}
\end{figure}

To define the {\em operational semantics} of networks, we use  an
 LTS whose  transitions are decorated by
outputs or inputs.
We define the set of {\em \oi\ communications} (communications for
short), ranged over by $\asCom$, $\asCom'$, to be $\{
\CommAs{\pp}{\M}{\q}, \CommAsI{\pp}{\M}{\q} \pc
\pp,\q\in \Participants, \M \in \Messages\}$, where
$\CommAs{\pp}{\M}{\q}$ represents the send of a label $\M$ from
participant $\pp$ to participant $\q$, and $\CommAsI{\pp}{\M}{\q}$ the
read by participant $\q$ of the label $\M$ sent by participant $\pp$.
To memorise this notation, it is helpful to view $\pp\q$ as the
channel from $\pp$ to $\q$ and the exclamation/question mark as the
mode (write/read) in which the channel is used. The LTS semantics of
networks, defined modulo $\equiv$,  is specified by the two
Rules $\rulename{Send}$ and $\rulename{Rcv}$ given in
\refToFigure{fig:asynprocLTS}.  Rule $\rulename{Send}$ allows a
participant $\pp$ with an internal choice (a sender) to send to a
participant $\q$ one of its possible labels $\la_k$ by adding it to
the queue. Symmetrically, Rule $\rulename{Rcv}$ allows a participant
$\q$ with an external choice (a receiver) to read the first label
$\la_k$ sent to it by participant $\pp$, provided this label is among
the $\la_j$'s specified in the choice.  Using structural equivalence,
the first message from $\pp$ to $\q$, if any, can always be moved to
the top of the queue.

A key role will be played by (possibly empty) sequences of
communications, defined as traces.
\begin{definition}[Traces]\label{tra}  (Finite) traces are defined by:

\centerline{$\comseqA::=\ee\mid\concat\beta\comseqA$}
 We use $\cardin{\comseqA}$ to denote the length of the trace $\comseqA$.
\end{definition}
When $\comseqA=\concat{\beta_1}{\concat\ldots{\beta_n}}$ ($n\geq 1)$
we write $\Nt\parN\Msg\stackred{\comseqA}\Nt'\parN\Msg'$ as short for\smallskip

\centerline{$\Nt\parN\Msg\stackred{\beta_1}\Nt_1\parN\Msg_1\cdots\stackred{\beta_n}\Nt_{n}\parN\Msg_{n}
  = \Nt'\parN\Msg'$}

\smallskip
Let us now consider the semantics of the network
$\confAs{\Nt}{\emptyset}$ discussed in the introduction.

\begin{example}
\mylabel{sync-async-characteristic-example}
Let $\Nt=\pP{\pp}{\sendL{\q}\la;\rcvL{\q}{\la'}} \parN
  \pP{\q}{\sendL{\pp}{\la'};\rcvL{\pp}\la}$.
Then $\confAs\Nt\emptyset$ can move by first
performing the two sends, in any order, and then the two reads, in any
order.
A possible execution of $\confAs\Nt\emptyset$, where
the use of $\equiv$ on the queue is explicitly shown, is:\smallskip

\centerline{$ \begin{array}{lcl}
\confAs\Nt\emptyset &\stackred{\CommAs\pp{\la}\q}&
  \confAs{\pP{\pp}{\rcvL{\q}{\la'}} \parN
  \pP{\q}{\sendL{\pp}{\la'};\rcvL{\pp}\la}}{\mq\pp{\la}\q}\\[2pt]
&\stackred{\CommAs\q{\la'}\pp}&
  \confAs{\pP{\pp}{\rcvL{\q}{\la'}} \parN
  \pP{\q}{\rcvL{\pp}\la}}{\addMsg{\mq\pp{\la}\q}{\mq\q{\la'}\pp}
}\\[2pt]
&\equiv&
  \confAs{\pP{\pp}{\rcvL{\q}{\la'}} \parN
  \pP{\q}{\rcvL{\pp}\la}}{\addMsg{\mq\q{\la'}\pp}{\mq\pp{\la}\q}
}\\[2pt]
&\stackred{\CommAsI\q{\la'}\pp}&
  \confAs{\pP{\pp}{\inact} \parN
  \pP{\q}{\rcvL{\pp}\la}}{\mq\pp{\la}\q}\\[2pt]
&\stackred{\CommAsI\pp{\la}\q}&
  \confAs{\pP{\pp}{\inact} \parN
  \pP{\q}{\inact}}{\emptyset}\\
\end{array}
$}
\end{example}

The following example illustrates a simple case in which both
participants need to do outputs before inputs.

\begin{example}\label{htp}
  Alice and Bob play heads and tails as follows: they each write their
  prediction on a piece of paper and then they exchange their papers.
  If the predictions are the same they start again, otherwise they
  flip the coin and the winner is the one whose prediction was
  correct.  The initial interaction in this scenario may be
  represented by the network $\pP\pp\PP\parN\pP\q\Q\parN\emptyset$
  where $\pp,\q$ incarnate Alice and Bob, $h, t$ stand for head and
  tail, and $\PP, \Q$ are defined as follows:\medskip

  \centerline{$\begin{array}{lcl}
                 \PP&=&\q! h;(\q?h; P+\q?t)\oplus \q! t;(\q?h+\q?t; P)\\
                 \Q&=&\pp! h;(\pp?h; Q+\pp?t)\oplus \pp! t;(\pp?h+\pp?t; Q)
                       \end{array}$}

\end{example}

We now introduce the notion of player, which will be extensively used
in the rest of the paper. The player of a communication $\beta$ is the
participant who is active in $\beta$.

\begin{definition}[Players of communications and traces]
    \mylabel{def:partAct}
  We denote by \play{\beta} the
  {\em set of players of communication  
    $\beta$}
defined by\medskip

\centerline{
 $\play{\CommAs{\pp}{\M}{\q}}=\set{\pp} \qquad
    \play{\CommAsI{\pp}{\M}{\q}}=\set{\q}$ }\medskip

The function $\sf play$ is extended to traces in the obvious way:

\medskip
\centerline{$\play\ee=\emptyset\qquad\play{\concat\beta\comseqA}=
 \play\beta\cup\play\comseqA$}
 \end{definition}

\noindent
 Notice that the notion of player is characteristic of asynchronous
 communications, where only one of the involved participants is
 active, namely the sender for an output communication and the
 receiver for an input communication.  Instead, in synchronous
 communications both participants (also called roles in the
 literature) are active.

\section{Asynchronous types}
\mylabel{sec:typesAs}

In this section we introduce our new types for asynchronous
communication. The underlying idea is quite simple: to split the
communication constructor of standard global types into an output
constructor and an input constructor.  This will allow us to type
networks in which all participants make all their outputs before their
inputs, like the networks of
Examples~\ref{sync-async-characteristic-example} and \ref{htp}, whose
asynchronous types will be presented in~\refToExample{ta}.
\eject

\begin{definition}[Global and \qts]
\mylabel{def:GlobalTypesAs}\hbox{}\hfill\vspace*{-5mm}
\begin{enumerate}
\item {\em Global types}  are defined by:\sm

\centerline{
$\begin{array}{lll}
      \G~~& \coDefGr &
    \agtO{\pp}{\q}i I{\la}{\G}
    ~\mid ~\agtI \pp\q \la \G
     ~\mid ~\End
  \end{array}
$
}
\noindent
where $I$ is non-empty and  $\pp\neq \q$ and $\la_h\not=\la_k$ for all $h,k\in I$,
$h\neq k$. 
\item {\em \Qts} are pairs made of a
  global type and a queue, written $ \GP \parG\Msg$.
 \end{enumerate}
\end{definition}
\noindent
As for processes, $ \coDefGr$ indicates that global types are
coinductively defined \emph{regular} terms.  The global type
$\agtO{\pp}{\q}i I{\la}{\G}$ specifies that $\pp$ sends a label
$\la_k$ with $k\in I$ to $\q$ and then the interaction described by
the global type $\G_k$ takes place. Dually, the global type $\agtI
\pp\q \la \G $ specifies that $\q$ receives label $\la$ from $\pp$ and
then the interaction described by the global type $\G$ takes place. We
will omit trailing $\End$'s.

Global types can be naturally seen as trees where internal nodes are
decorated by $\pp\q!$ or $\pp\q?$, leaves by $\End$, and edges by
labels $\la$. The sequences of decorations of nodes and edges on the
path from the root to an edge of the tree are \traces, in the sense of
\refToDef{tra}.  We denote by $\FPaths{\G}$ the set of such \traces\
in the tree of $\G$.  By definition, $\FPaths{\End} = \emptyset$ and
each trace in $\FPaths{\G}$ is non-empty.

\medskip
In~\refToDef{def:partAct} we introduced the notion of player for
communications and traces. It is useful to extend this notion to global types,
by defining the set of {\em players of  a \sgt\ $\GP$},  $\play{\GP}$,
to be the union of the sets of players of all its traces, namely \sm

\centerline{$\play{\GP} = \bigcup_{\comseqA \in
    \FPaths{\G}} \play{\comseqA} $}\medskip

\noindent The regularity assumption ensures that the set of
players of a \sgt\  is finite.

\medskip
Asynchronous types will be used to type networks, see
\refToFigure{fig:typingAs}.  A standard guarantee they should ensure
is that each participant whose behaviour is not terminated can do some
action.  Moreover, since communications are split into outputs and
inputs in global types, we must make sure that each input is balanced
by an output in the type or by a message in the queue, and vice versa.
These requirements will be formulated as well-formedness
conditions on \agts.

The remainder of this section is divided in two subsections: the first
focusses on well-formedness of asynchronous types, and the second
presents the type system and shows that it enjoys the properties of
subject reduction and session fidelity and that moreover it ensures
progress.

\subsection{Well-formed asynchronous types}

We start by defining the projection of \sgts\ onto participants
(\refToFigure{fig:projAsP}).  We proceed by defining the boundedness
predicate for \sgts\ (\refToDef{def:depth-ila}) and the \balancing\
predicate for asynchronous types (\refToFigure{wfagtA}).

 \begin{figure}[!h]
 \vspace*{-2mm}
 \centerline{\small
 $
 \begin{array}{c}
  \proj\G{\pr} = \inact \text{ if }\pr\not\in  \play{\G}
\\
\proj{(\agtO{\pp}{\q}i I{\la}{\G})}\pr=\begin{cases}
 \oupP\q{i}{I}{\M}{\proj{\G_i}\pp}    & \text{if }\pr=\pp, \\
 \proj{\G_1}\q     & \text{if }\pr=\q \text{ and }I=\set 1 
 \\
  \procActs
\,\inpP\pp{i}{I}{\M}{\PP_i}     & \text{if }\pr=\q \text{ and }\eh{I}>1\text{ and }
\proj{\G_i}\q=
  \procActs    \,\Seq{\rcvL\pp{\M_i}}{\PP_i}, \\
 \proj{\RG_1}\pr  & \text{if } \pr\not\in\set{\pp,\q} \text{
   and } \pr\in   \play{\G_1}\\
   & \text{ and }
 \proj{\G_i}\pr=\proj{\G_1}\pr\text{ for all } i \in I
\end{cases}\\ \\
\proj{(\agtI \pp\q \la \G )}\pr=\begin{cases}
 \Seq{\rcvL{\pp}{\la}}{\proj{\G}\pr}   & \text{if }\pr=\q\\
   \proj{\G}\pr   &  \text{if }\pr\not=\q \text{ and }\pr\in  \play{\G}
\end{cases}
\end{array}
$
}
\caption{Projection of \sgts\  onto participants.
} \mylabel{fig:projAsP}\vspace*{-3mm}
\end{figure}

As mentioned earlier, the projection of \sgts\ on participants yields
processes. Its coinductive definition is given in
\refToFigure{fig:projAsP}, where we use $\vv\pi$ to denote any
sequence, possibly empty, of \oi\ actions separated by ``;''. We write
$\eh I$ for the cardinality of $I$.

\medskip
The projection of a \sgt\ on a participant which is not a player of
the type is the inactive process $\inact$.  In particular, the
projection
of $\End$ is $\inact$ on all participants.

\medskip

The projection of an output choice type on the sender yields an output
process sending one of its labels to the receiver and then acting
according to the projection of the
corresponding branch. \medskip\\
The projection of an output choice type on the receiver $\q$ has two
clauses: one for the case where the choice has a single branch, and
one for the case where the choice has more than one branch.  In the
first case, the projection is simply the projection of the
continuation of the single branch on $\q$. In the second case, the
projection is defined if the projection of the continuation of each
branch on $\q$ starts with the same sequence of actions $\vv\pi$,
followed by an input of the label sent by $\pp$ on that branch and
then by a possibly different process in each branch.  In fact,
participant $\q$ must receive the label chosen by participant $\pp$
before behaving differently in different branches. The projection on
$\q$ is then the initial sequence of actions $\vv\pi$ followed by an
external choice on the different sent labels.  The sequence
$\procActsS$ is allowed to contain another input of a (possibly equal)
label from $\pp$, for example:
\[
\centerline{
$\begin{array}{c}
\proj{(\CommAs\pp{\la_1}\q;\CommAs\pp\la\q;\CommAsI\pp\la\q;\CommAsI\pp{\la_1}\q;
\CommAsI\pp\la\q \ \GlSyB \ \CommAs\pp{\la_2}\q;\CommAs\pp{\la'}\q;\CommAsI\pp\la\q;\CommAsI\pp{\la_2}\q;
\CommAsI\pp{\la'}\q)}{\q} =  \\
\pp?\la; (\pp?\la_1 ;  \pp? \la + \pp?\la_2 ;  \pp? {\la'})
\end{array}
$ }
\]
In \refToExample{ex:projRcv} we will show why we need to
distinguish these two cases.

\medskip
The projection of an output choice type on the other participants is
defined only if it produces the same process
for all branches of the choice. \sm

The projection of an input type on the receiver is an input action
followed by the projection of the rest of the type. For the other
participants,
the projection is simply the projection of the rest of the type.\sm

Note that our projection adopts the strict requirement of~\cite{CHY08}
for participants not involved in a choice, namely it requires their
behaviours to be the same in all branches of a choice. A more
permissive projection (in line with~\cite{SY19}) for the same global
types is given in~\cite{DGD21a}. Our choice here is motivated by
simplicity, in order to focus on the event structure semantics.

To guarantee the property of progress, our types for networks must
ensure that each network player occurs in every computation, whether
finite or infinite. To this end, each type player should occur in
every path of the tree of the type. Now, projectability already
ensures that each player of a choice type occurs in all its
branches. This implies that if one branch of the choice gives rise to
an infinite path, either the player occurs at some finite depth in
this path, or this path crosses infinitely many branching points in
which the player occurs in all branches. In the latter case, since the
depth of the player increases when crossing each branching point,
there is no bound on the depth of the player over all paths of the
type. Hence, to ensure that all type players occur in all paths, it is
enough to require the existence of such bounds. This motivates the
following definitions of depth and boundedness.

We first define the {\em depth of a participant $\pp$ in a
  \sgt\ $\G$}, $\weight(\G,\pp)$, which uses the length function
$|~|$ of \refToDef{tra} as well as the functions $\play{\beta}$
and $\play{\comseqA}$
of \refToDef{def:partAct} and the function $\play{\G}$ defined earlier
in this section.
Intuitively, $\weight(\G,\pp)$
is the limit, computed over all paths of the tree of $\G$, of the
depth of the first occurrence of the player $\pp$ in the path.

\begin{definition}[Depth]
Let the two functions $\weight (\comseqA,\pp)$ and $\weight(\G,\pp)$ be defined by:\medskip

\centerline{
$\begin{array}{l} \weight(\comseqA,\pp)=\begin{cases}
      n&\text{ if }\comseqA =
      \concat{\concat{\comseqA_1}{\asCom}}{\comseqA_2}\text { and
      }\cardin{\comseqA_1} = n-1\text { and }\pp \notin \play{\comseqA_1}\text { and }\pp \in \play{\asCom}\\
      0 & \text{otherwise }
\end{cases}\\
\weight(\G,\pp)=
   \sup  \{\weight(\comseqA,\pp)\ |\ \comseqA\in\IPaths{\G}\}
\end{array}$}
\end{definition}

\begin{definition} [Boundedness]\mylabel{def:depth-ila}
 We say that a \sgt\ $\GP$ is {\em bounded} if $\weight(\G',\pp)$ is
 finite for all subtrees $\G'$ of $\GP$ and for all players $\pp$.
 \end{definition}

 To show that $\GP$ is bounded  it is enough to check
 $\weight(\GP',\pp)$ for  all  subtrees  $\G'$ of $\GP$ and
 $\pp\in\play{\GP'}$,  since   for any
 other $\pp$ we have $\weight(\GP',\pp)=0$.

\sm
Note that the depth of a participant which is a player of $\G$
 does not necessarily decrease in the subtrees of $\G$. As a matter of fact,
 this depth can be finite in $\G$ but infinite in one of its subtrees, as
 shown by the following example.

\begin{example}\label{ex:inf}
Consider
 $\G= \agtOS \pr\q {\la}{\agtIS \pr\q \la;\G'} $  where\sm

 \centerline{ $\G'=
 {\agtOS \pp\q {\la_1}{\agtIS \pp\q \la_1;\agtSOS \pp\pr {\la_3} ;\agtIS \pp\pr{\la_3}}~\GlSyB~\agtOS \pp\q {\la_2}{\Seq{\agtIS \pp\q \la_2}{\G'}}}$}

 \sm Then we have:

 \centerline{
 $\weight(\G,\pr)=1\quad\quad \weight(\G,\pp)=  3
 \quad\quad\weight(\G,\q)=2$
 }
 whereas

 \centerline{
 $\weight(\G',\pr)=\infty\quad\quad \weight(\G',\pp)=1\quad\quad\weight(\G',\q)=2$
 }

 \medskip\noindent  since $\underbrace{\CommAs\pp{\la_2}\q\cdot\CommAsI\pp{\la_2}\q\cdots\CommAs\pp{\la_2}\q\cdot\CommAsI\pp{\la_2}\q}_n\cdot\CommAs\pp{\la_1}\q\cdot\CommAsI\pp{\la_1}\q\cdot\CommAs\pp{\la_3}\pr\cdot\CommAsI\pp{\la_3}\pr\in\IPaths{\G'}$  \ for all $n\geq 0$  and \linebreak
   $\sup\{4+2n\ |\ n\geq 0\} = \infty$.
 \end{example}

 However, the  finite   depth of a participant which is a player of $\G$ but not
 the player of its root communication decreases in the immediate
 subtrees of $\G$, as stated in the following lemma.

 \begin{lemma}\label{ddA}
 \begin{enumerate}
\item\label{ddA1} If $\G= \agtO{\pp}{\q}i I{\la}{\G}$ and $\pr\in\play{\G}$ and $\pr\not=\pp$, then $\weight(\G,\pr)>\weight(\G_i,\pr)$ for all $i\in I$.
\item\label{ddA2} If $\G= \agtI \pp\q \la {\G'}$ and  $\pr\in\play{\G}$ and $\pr\not=\q$, then $\weight(\G,\pr)>\weight(\G',\pr)$.
\end{enumerate}
\end{lemma}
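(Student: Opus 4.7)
The plan is to prove both parts directly from the definitions of $\ord$ and $\weight$, by decomposing $\IPaths{\G}$ into traces of the immediate subtree(s) prefixed by the root communication of $\G$, and then exploiting the fact that in both cases this root communication does not involve $\pr$.

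For part 1, every trace in $\IPaths{\G}$ has the form $\CommAs{\pp}{\la_j}{\q} \cdot \comseqA'$ for some $j \in I$ and some $\comseqA' \in \IPaths{\G_j} \cup \{\ee\}$, and $\play{\CommAs{\pp}{\la_j}{\q}} = \{\pp\}$ does not contain $\pr$ since $\pr \neq \pp$. Unfolding the definition of $\ord$ then gives $\ord(\CommAs{\pp}{\la_j}{\q} \cdot \comseqA', \pr) = \ord(\comseqA', \pr) + 1$ whenever $\pr$ occurs in $\comseqA'$. Now fix $i \in I$. If $\pr \in \play{\G_i}$, taking the supremum over traces of $\G_i$ that contain $\pr$ yields $\weight(\G,\pr) \geq \weight(\G_i,\pr) + 1$, hence the strict inequality. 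Otherwise $\weight(\G_i,\pr) = 0$; since $\pr \in \play{\G}$ and $\pr \neq \pp$, some other branch $j \in I$ must satisfy $\pr \in \play{\G_j}$, and applying the same shift argument on branch $j$ gives $\weight(\G,\pr) \geq 1 > 0 = \weight(\G_i,\pr)$.

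Part 2 is a simpler instance of the same pattern. From $\play{\G} = \{\q\} \cup \play{\G'}$ and $\pr \neq \q$ we obtain $\pr \in \play{\G'}$. Every trace in $\IPaths{\G}$ is $\CommAsI{\pp}{\la}{\q} \cdot \comseqA'$ with $\comseqA' \in \IPaths{\G'} \cup \{\ee\}$, and $\play{\CommAsI{\pp}{\la}{\q}} = \{\q\}$ does not contain $\pr$. The same shift-by-one argument yields $\weight(\G,\pr) = \weight(\G',\pr) + 1$, which is strictly larger than $\weight(\G',\pr)$.

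The only point that requires a modicum of care is the treatment of infinite depths. \refToExample{ex:inf} shows that $\weight$ can become infinite when descending to a non-immediate subtree, so the strict inequality $n+1 > n$ only makes unambiguous sense when $n$ is finite. The lemma is therefore to be read in the regime where the subtree weights $\weight(\G_i,\pr)$ (respectively $\weight(\G',\pr)$) are finite, in particular under the boundedness assumption on $\G$ used throughout the paper; in that regime the shift argument produces the two strict inequalities without further subtlety.
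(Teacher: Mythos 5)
The paper states this lemma without any proof, so there is no official argument to compare against; your direct unfolding of $\ord$ and $\weight$ — writing every trace of $\G$ as the root communication (whose player set is $\set{\pp}$, resp.\ $\set{\q}$, hence excludes $\pr$) prefixed to a trace of an immediate subtree, observing that this shifts $\ord(\cdot,\pr)$ by one on exactly the traces where $\pr$ occurs, and then passing to the supremum — is correct and is clearly the intended argument. Your closing caveat is also well taken: as literally stated the strict inequality degenerates when $\weight(\G_i,\pr)=\infty$ (since then $\weight(\G,\pr)=\infty$ as well), and the lemma is only ever invoked for bounded global types (in \refToLemma{pb} and \refToTheorem{pr}), where all the quantities involved are finite and your shift-by-one argument closes the proof.
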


The definition of projection given in \refToFigure{fig:projAsP} is
sound for bounded global types.

\begin{lemma}\label{pb}
If $\GP$ is bounded, then $\proj\GP\pr$ is a partial function for all $\pr$.
\end{lemma}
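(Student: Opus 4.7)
The plan is to exploit the regularity of $\GP$ and view the projection as the solution of a finite system of mutually recursive equations, then to argue that boundedness forces this system to have a unique solution.

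Since $\GP$ is regular, it has only finitely many distinct subtrees $\G^{(1)},\dots,\G^{(n)}$. Reading off the clauses of~\refToFigure{fig:projAsP}, each pair $(\G^{(j)},\pr)$ determines an equation for $\proj{\G^{(j)}}{\pr}$. I would classify each such equation as \emph{emitting}---its right-hand side starts with a process constructor (input prefix, output choice, or $\inact$)---or \emph{non-emitting}, meaning it equates $\proj{\G^{(j)}}{\pr}$ with $\proj{\G^{(k)}}{\pr}$ for some subtree $\G^{(k)}$ of $\G^{(j)}$. Inspection of the clauses shows that the non-emitting ones are exactly three: the third case for an output with $\pr=\q$ and $|I|=1$, the fourth case for an output with $\pr\notin\set{\pp,\q}$ (under the side condition that all branch projections coincide), and the input clause with $\pr\neq\q$. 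In the multi-branch case with $\pr=\q$, although the prefix $\procActsS$ is a priori arbitrary, the fact that labels in a choice are pairwise distinct forces $\procActsS$ and the continuations $\PP_i$ to be uniquely determined by the $\proj{\G_i}{\q}$, so the corresponding equation is emitting.

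Next, I would invoke~\refToLemma{ddA} to show that every non-emitting equation strictly decreases $\weight(\cdot,\pr)$. Indeed, in each of the three non-emitting situations $\pr$ is a player of $\G^{(j)}$ but is not the player of the root communication of $\G^{(j)}$, so one of the two clauses of~\refToLemma{ddA} applies and gives $\weight(\G^{(k)},\pr)<\weight(\G^{(j)},\pr)$. Boundedness of $\GP$ ensures that each $\weight(\G^{(j)},\pr)$ is finite, so any chain of non-emitting substitutions has length bounded by $\weight(\G^{(j)},\pr)$; in particular, the dependency relation among the unknowns $\proj{\G^{(j)}}{\pr}$ induced by non-emitting equations is acyclic.

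Finally, iterated substitution along this acyclic structure eliminates every non-emitting equation and yields an equivalent system of emitting (hence guarded) equations over finitely many unknowns. Such a guarded recursive system admits a unique solution among regular processes by the classical theory of regular trees~\cite{Cour83}, which gives $\proj{\GP}{\pr}$ uniquely whenever it is defined, i.e., projection is a partial function. The main subtlety---and the one place where boundedness is essential---is the termination of non-emitting chains, which is delivered by~\refToLemma{ddA} together with the finiteness of $\weight$.
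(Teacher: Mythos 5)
Your reduction to a finite system of process equations is a genuinely different route from the paper's, and the part of it that handles the non-emitting clauses is correct: those three clauses do strictly decrease $\weight(\cdot,\pr)$ by \refToLemma{ddA}, so boundedness makes their dependency chains terminate and they can be substituted away. The gap is in the step ``an equivalent system of emitting (hence guarded) equations \ldots admits a unique solution by the classical theory of regular trees''. The clause projecting an output choice on the receiver with $\eh{I}>1$ is not a guarded equation over the process signature: its right-hand side $\procActs\,\inpP\pp{i}{I}{\M}{\PP_i}$ is not a constructor applied to the unknowns $\proj{\G_i}{\q}$, but is obtained by \emph{decomposing} those unknowns as $\procActs\,\Seq{\rcvL\pp{\M_i}}{\PP_i}$ and copying the common prefix $\procActsS$ --- whose length is not bounded a priori and depends on the solution itself --- verbatim into the result. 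Such an equation is non-expansive but not contractive (agreement of the arguments up to depth $n$ yields agreement of the result only up to depth $n$, not $n+1$), so the unique-solution theorem for guarded systems of regular equations does not cover dependency cycles passing through this clause, and for cyclic global types such cycles do occur.

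The missing ingredient is exactly the observation you reserved for the non-emitting clauses: \refToLemma{ddA}(\ref{ddA1}) also applies to the multi-branch receiver clause (there $\pr=\q\neq\pp$ and $\q\in\play{\G_i}$), so this clause too strictly decreases $\weight(\cdot,\pr)$. This is how the paper proceeds: it characterises process equality as the largest relation closed under two clauses and shows that any two projections of the same type satisfy those clauses by induction on $\weight(\G,\pr)$, invoking the induction hypothesis not only for the forwarding clauses but, crucially, for the multi-branch receiver clause (cases 2 and 3 of the $d>1$ analysis), which is also where the conclusion $\procActsS=\ee$ is forced. To salvage your argument you would need to treat the merge clause together with the non-emitting ones, so that every remaining dependency cycle passes through a clause where $\pr$ is the player of the root --- the only genuinely productive ones --- and then argue uniqueness by eventual contractivity. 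As written, uniqueness is not established.
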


\begin{figure}[!ht]\small
\vspace*{-2mm}
\begin{center}
    $\begin{array}{c}
            \cSinferrule{\derSI{\BB,\confAs{\agtO{\pp}{\q}i I{\la}{\G}}{\Msg}}\confAs{\G_i}{\addMsg\Msg{\mq\pp{\la_i}\q}}
              \text{ for all } i\in I\quad
              \text{if $\agtO{\pp}{\q}i I{\la}{\G}$ is cyclic then $\Msg=\emptyset$}}
              {\derSI\BB{\confAs{\agtO{\pp}{\q}i I{\la}{\G}}{\Msg}}}{~~\rulename{Out}}\\ \\

   \cSinferrule{\derSI{\BB,(\agtI \pp\q \la \G,\addMsg{\mq\pp{\la}\q}\Msg)}{\confAs{\G}{\Msg}}}{\derSI\BB{\confAs{\agtI \pp\q \la \G}{\addMsg{\mq\pp{\la}\q}\Msg}}}{~~\rulename{In}}    \qquad
        \cSinferrule{}{\derSI\BB{\confAs{\End}{\emptyset}}}{}~~{\rulename{\End}}
  \quad
     \end{array}$
    \end{center} \vspace*{-3mm}
  \caption{Balancing predicate.} \mylabel{wfagtA}
\end{figure}


 To ensure the correspondence between outputs and inputs,
in~\refToFigure{wfagtA} we define
 the \emph{balancing} predicate $\derSI{}{}$ on asynchronous types, and we
say that $\asty{\G}{\Msg}$ is \emph{balanced} if
$\derSI{}{\asty{\G}{\Msg}}$.
The intuition is that every  initial  input should
come with a corresponding message in the queue (Rule $\rulename{In}$),
ensuring that the input can take place.  Then, each message in the
queue can be exchanged for a corresponding output that will prefix the
type (Rule $\rulename{Out}$): this output will then precede the
previously inserted input and thus ensure again that the input can
take place. In short,
\balancing\ holds if the messages in the queue and the outputs in the
global type are matched by inputs in the global type and vice versa.
We say that a \sgt\ is {\em cyclic} if its tree contains itself as
proper subtree.  So the condition ``if the \sgt\ is cyclic then the
queue is empty'' in Rule $\rulename{Out}$ ensures that there is no
message left in the queue at the beginning of a new cycle, namely that
all messages put in the queue by cyclic \sgts\ have matching inputs in
the same cycle.  For instance we can apply Rule $\rulename{Out}$ to
the asynchronous type $\confAs{\G'}{\emptyset}$ of
\refToExample{ex:iwf}(\ref{ex:iwf3}), but not to the asynchronous type
$\confAs{\G}{\mq\pp{\la}\q}$ of
\refToExample{ex:iwf}(\ref{ex:iwf2}). Similarly, in
\refToExample{ex:iwf}(\ref{ex:iwf4}), Rule $\rulename{Out}$ can be
used for $\asty{\G_2}{\emptyset}\,$ but not
for $\confAs{\G_2}{\mq\pp{\la}\q}$. \medskip

The double line indicates that the rules are interpreted coinductively
\cite{pier02} (Chapter 21).  The condition in Rule $\rulename{Out}$
guarantees that we get only regular proof derivations, therefore the
judgement $\derPI{\BB}{\confAs{\GP}{\emptyset}}$ is decidable.  If we
derive $\derPI{\BB}{\confAs{\GP}{\emptyset}}$ we can ensure that in
$\GP\parG\emptyset$ all outputs are matched by corresponding inputs
and vice versa, see the Progress Theorem (\refToTheorem{pr}). The
progress property holds also for standard global
types~\cite{DY11,Coppo2016}.

\medskip
The next example illustrates the use of the \balancing\ predicate on
several asynchronous types.

\begin{example}\label{ex:iwf}
\begin{enumerate}
\item\label{ex:iwf1}  The  \agt\  $\confAs{\agtIS \q\pp {\la};\agtSOS \pp\q {\la'};\agtIS \pp\q
  {\la'}}{\mq\q{\la}\pp}$  is  \balanced,
  as shown by the following derivation:

\centerline{\prooftree
\prooftree
\prooftree
\derSI\BB{\confAs{\End}{\emptyset}}
\justifies
\derSI\BB{\confAs{\agtIS \pp\q {\la'}}{\mq\pp{\la'}\q}}
\endprooftree
\justifies
\derSI\BB{\confAs{\agtSOS \pp\q {\la'};\agtIS \pp\q {\la'}}{\emptyset}}
\endprooftree
\justifies
\derSI\BB{\confAs{\agtIS \q\pp {\la};\agtSOS \pp\q {\la'};\agtIS \pp\q {\la'}}{\mq\q{\la}\pp}}
\endprooftree}\sm

\item\label{ex:iwf2}  Let $\G=\agtSOS \pp\q \la;\agtSOS \pp\q \la;\agtIS \pp\q \la;\G$.  Then   $\confAs{\G}{\emptyset}$
is not \balanced.
Indeed, we cannot complete the proof tree for
$\derSI\BB{\confAs{\G}{\emptyset}}$ because, since $\G$ is cyclic,
we cannot apply Rule $\rulename{Out}$ to infer the premise
$\derSI\BB{\confAs{\G}{\mq\pp{\la}\q}}$ in the following
deduction:\sm

\centerline{\prooftree \prooftree \prooftree
  \derSI\BB{\confAs{\G}{\mq\pp{\la}\q}} \justifies
  \derSI\BB{\confAs{\agtIS \pp\q
      \la;\G}{\addMsg{\mq\pp{\la}\q}{\mq\pp{\la}\q}}}
\endprooftree
\justifies
\derSI\BB{\confAs{\agtSOS \pp\q \la;\agtIS \pp\q \la;\G}{\mq\pp{\la}\q}}
\endprooftree
\justifies
\derSI\BB{\confAs{\G}{\emptyset}}
\endprooftree}\sm

\item\label{ex:iwf3}  Let $\G'=(\agtOS \pp\q {\la_1}{\agtIS \pp\q
    \la_1;\G'}~\GlSyB~\agtOS \pp\q {\la_2}{\agtIS \pp\q
    \la_2})$. Then   $\confAs{\G'}{\emptyset}$
 is balanced,
as we can see from the
infinite   (but regular)    proof tree that follows:

\centerline{
\prooftree
{
\prooftree
\vdots
\justifies
\prooftree
\derSI\BB{\confAs{\G'}{\emptyset}}
\justifies
\derSI{}{\confAs{{\agtIS \pp\q \la_1;\G'}}{\mq\pp{\la_1}\q}}
\endprooftree
\endprooftree
}
\quad
{
\prooftree
\derSI\BB{\confAs{\End}{\emptyset}}
\justifies
\derSI{}{\confAs{{\agtIS \pp\q \la_2}}{\mq\pp{\la_2}\q}}
\endprooftree
}
\justifies
\derSI\BB{\confAs{\G'}{\emptyset}}
\endprooftree}

\item\label{ex:iwf4} Let $\G_1=\agtSOS \pp\q \la;\agtSOS \pp\q
  \la;\agtIS \pp\q \la;\G_2$ and $\G_2=\agtSOS \pp\pr \la;\CommAsI
  \pp\la\pr;\G_2$.  Then   $\confAs{\G_1}{\emptyset}$
is not \balanced.
  Indeed, we cannot complete the proof tree for
  $\derSI\BB{\confAs{\G_1}{\emptyset}}$,  since $\G_2$ is cyclic, so we cannot
  apply Rule $\rulename{Out}$ to  infer
the premise   $\derSI\BB{\confAs{\G_2}{\mq\pp{\la}\q}}$  in the following
  deduction:\sm

  \centerline{\prooftree \prooftree \prooftree
    \derSI\BB{\confAs{\G_2}{\mq\pp{\la}\q}} \justifies \derSI\BB{\confAs{\agtIS \pp\q
      \la;\G_2}{\addMsg{\mq\pp{\la}\q}{\mq\pp{\la}\q}}}
\endprooftree
\justifies
\derSI\BB{\confAs{\agtSOS \pp\q \la;\agtIS \pp\q \la;\G_2}{\mq\pp{\la}\q}}
\endprooftree
\justifies
\derSI\BB{\confAs{\G_1}{\mq\pp{\la}\q}}
\endprooftree}\sm

Instead,
 $\asty{\G_2}{\emptyset}\,$ is balanced:

\centerline{\prooftree
\prooftree
\prooftree
\vdots
\justifies
\derSI\BB{\confAs{\G_2}{\emptyset}}
\endprooftree
\justifies
\derSI\BB{\confAs{\CommAsI \pp\la\pr;\G_2}{\mq\pp{\la}\pr}}
\endprooftree
\justifies
\derSI\BB{\confAs{\G_2}{\emptyset}}
\endprooftree}
\end{enumerate}
\end{example}

\noindent  It is interesting to notice that balancing of asynchronous types does not imply
projectability of their global types. For example, the type $\G\parG\emptyset$ where
\[
\centerline{$\G=\CommAs
\pp{\la_1}\q;\CommAsI \pp{\la_1}\q;\CommAs \pr{\la_1}\q;\CommAsI
\pr{\la_1}\q ~\GlSyB~\CommAs \pp{\la_2}\q;\CommAsI
\pp{\la_2}\q;\CommAs \pr{\la_2}\q;\CommAsI \pr{\la_2}\q$}\]
 is balanced,
but $\G$ is not projectable on participant $\pr$  for any of the
projection definitions  in  the literature. Notably,
 type $\G$ prescribes that  $\pr$ should behave differently according
to the message exchanged between $\pp$ and $\q$, an unreasonable
requirement.

\vspace{1.6mm}
Projectability, boundedness and \balancing\
are the three
properties that single out the \agts\ we want to use in our type system.
\begin{definition}[Well-formed   \agts]
\mylabel{def:agtA}
We say that the   \agt\
$\GP\parG\Msg$
is   {\em well formed}
if  it is balanced,
$\proj{\GP}{\pp}$ is defined for all $\pp$ and $\GP$ is bounded.
 \end{definition}
 Clearly, it is sufficient to check that $\proj{\GP}{\pp}$ is defined
 for all $\pp\in\play{\GP}$, since 
 $\proj{\GP}{\pp}=\inact$ otherwise.

\subsection{Type system}

In this subsection we present our type system.  For establishing its
expected properties, we then introduce an LTS for \agts\
(\refToFigure{ltsgtAs}) and show that well-formedness of \agts\ is
preserved by transitions (\refToLemma{prop:wfa}).

\begin{figure}[!h]\small
\vspace*{1.8mm}
 \centerline{
 $
 \begin{array}{c}
 \cSinferrule{\PP_i\subt\Q_i \quad \text{for all }i\in I}{\oupTx\pp{i}{I}{\M}{\PP}\subt \oup\pp{i}{I}{\M}{\Q}}{\rulename{ $\subt$-out}}
 \quad
 \cSinferrule{\PP_i\subt\Q_i \quad \text{for all }i\in I}{\inp\pp{i}{I\cup J}{\M}{\PP}\subt \inp\pp{i}{I}{\M}{\Q}}{\rulename{ $\subt$-In}}\\
  \inact\subt\inact~\rulename{ $\subt$ -$\inact$}
\\ \\
\inferrule{
\PP_i\subt  \proj\GP{\pp_i}  
\quad \text{for all }i\in I\qquad\play\GP\subseteq \set{\pp_i\mid i\in I}}
{\derN{\confAs{\PiB_{i\in I}\pP{\pp_i}{\PP_i}}{\Msg}}\GP\parG\Msg} ~\rulename{Net}
\\
\end{array}
$}
\caption{Preorder on processes and network typing rule.} \mylabel{fig:typingAs}
\end{figure}
\noindent The typing rules are given in \refToFigure{fig:typingAs}. In the
unique typing rule for networks, Rule $\rulename{Net}$, we assume the
\agt\  to be well formed.

 \medskip
We first define a preorder on processes, $\PP\leq\Q$, 
meaning that {\em process $\PP$ can be used where we expect process
  $\Q$}.
More precisely, $\PP\leq\Q$ if either $\PP$ is equal to $\Q$,   or we
are in one of two situations:
either both $\PP$ and $\Q$ are output processes, sending the same labels to the same participant,
and after the send $\PP$ continues with a process that can be used
when we expect the corresponding one in $\Q$; or they are both input
processes receiving labels from the same participant, and $\PP$ may
receive more labels than $\Q$ (and thus have more behaviours) but
whenever it receives the same label as $\Q$ it continues with a
process that can be used when we expect the corresponding one in $\Q$.
The rules are interpreted coinductively, since the processes may
be infinite.
However, derivability is decidable, since
 processes have finitely many distinct subprocesses.

\medskip  Clearly, our preorder on processes plays the same role as the
 subtyping relation on local types in other  works.  In the original
 standard subtyping of~\cite{GH05} a better type has more outputs and
 less inputs, while in the subtyping of~\cite{DemangeonH11} a better
 type has less outputs and more inputs. The subtyping of~\cite{GH05}
 allows channel substitution, while the subtyping of
 \cite{DemangeonH11} allows process substitution, as observed
 in~\cite{Gay16}. This justifies our structural preorder on processes,
 which is akin to a restriction of the subtyping
 of~\cite{DemangeonH11}. The advantage of this restriction is a strong
 version of Session Fidelity, see \refToTheorem{sfA}. On the other
 hand, as shown in~\cite{BDLT21}, such a restriction does not
 change the class of networks that can be typed by standard global
 types (but may change the types assigned to them). The proof
 in~\cite{BDLT21} easily adapts to our asynchronous types.

A network $\Nt\parN\Msg$ is typed by the \agt\ $\GP\parG\Msg$ if
for every participant $\pp$ such that $\pP{\pp}{\PP} \in \Nt$ the
process $\PP$ behaves as specified by the projection of $\GP$ on
$\pp$.
In Rule $\rulename{Net}$, the condition
$\play{\GP}\subseteq\set{\pp_i\mid i\in I}$ ensures that all players
of $\GP$ appear in the network.  Moreover it permits additional
participants that do not appear in $\GP$, allowing the typing of
sessions containing $\pP{\pp}{\inact}$ for a fresh $\pp$ -- a
property required to guarantee invariance of types under structural
congruence of networks.

\begin{example}\label{ta}
  The network of \refToExample{sync-async-characteristic-example} can
  be typed by $\G\parN\emptyset$ for four possible choices of $\G$:
\[\centerline{$\begin{array}{lll}\agtSOS \pp\q {\la};\agtSOS \q\pp
      {\la'};\agtIS \pp\q {\la};\agtIS \q\pp {\la'}&\qquad&
      \agtSOS \pp\q {\la};\agtSOS \q\pp {\la'};\agtIS \q\pp {\la'};\agtIS \pp\q {\la}\\
      \agtSOS \q\pp {\la'};\agtSOS \pp\q {\la};\agtIS \pp\q
      {\la};\agtIS \q\pp {\la'}&& \agtSOS \q\pp {\la'};\agtSOS \pp\q
      {\la};\agtIS \q\pp {\la'};\agtIS \pp\q {\la}\end{array}$}
\]
\noindent since   each participant only needs to do the output before the
  input. Notice that this network cannot be typed with the standard
  global types of~\cite{CHY16}.

\medskip
  The network of \refToExample{htp} can be typed by $\G\parN\emptyset$
  where
\[  \centerline{$\begin{array}{lll}\G&=&\pp\q!h;(\q\pp!h;\pp\q?h;\q\pp?h;\G \boxplus\q\pp!t;\pp\q? h;\q\pp? t ) \boxplus\\
  &&\pp\q!t;(\q\pp!h;\pp\q? t ;\q\pp? h  \boxplus\q\pp!t;\pp\q?t;\q\pp?t;\G)\parG\emptyset \end{array}$
  }\]
  \noindent   Again, this   network cannot  be typed  with the standard global
 types of~\cite{CHY16}.
  On the other hand, both these networks  can be typed by adding
  the semantic subtyping as defined in~\cite{Mostrous2009} to the type
  system with the standard global types.    However,   the resulting system is less precise than ours, although
  it can   type more networks. As an example, consider the standard global type
\[
\centerline{
$\G' = \pp \rightarrow \q:h;(\q \rightarrow \pp:h;\G' \boxplus \q \rightarrow \pp:t) \boxplus
\pp \rightarrow \q:t;(\q \rightarrow \pp:h \boxplus \q \rightarrow \pp:t;\G')$}
\]
  \noindent and its projections $\proj{\G'}{\pp}=\PP'$  and $\proj{\G'}{\q}=\Q'$ defined by
\[   \centerline{$\begin{array}{lcl}
                 \PP'&=& \q! h;(\q?h; \PP'+\q?t)\oplus \q! t;(\q?h+\q?t; \PP')\\
                 \Q'&=&\pp? h;(\pp!h; \Q'\oplus\pp!t) + \pp? t;(\pp!h\oplus\pp!t; \Q')
                       \end{array}$}\]
 \noindent  Then  $\G'$ is a type for  the network   $\pP\pp{\PP}\parN\pP\q{\Q}\parN\emptyset$
                     of \refToExample{htp},  since $\Q$ can be
                     obtained from $\Q'$
                      using  asynchronous subtyping,
                     by  anticipating     the  outputs towards $\pp$.
                      Clearly, the global type $\G'$ also types   the network
                     $\pP\pp{\PP'}\parN\pP\q{\Q'}\parN\emptyset$. However,
                     this network does not follow the rules of the
                      game  given in \refToExample{htp}.
\end{example}

   \begin{figure}[!h] 
   \vspace*{-2mm}
 \scalebox{0.85}{
 \centerline{$
\begin{array}{c}
 \agtO{\pp}{\q}i I{\la}{\G}\parG\Msg
 \stackred{\CommAs\pp{\la_k}\q}\G_k\parG\addMsg\Msg{\mq\pp{\la_k}\q}\text{~~~~~~~~}
 ~~\text{where}  ~~~~k\in I{~~~\text{~~~~~~~~}\rulename{\AsOut}}
 \\ \\[-2pt]
\agtI \pp\q \la \G\parG\addMsg{\mq\pp{\la}\q}\Msg \stackred{\CommAsI\pp{\la}\q}\G\parG\Msg
 ~~~\rulename{\AsIn}
\\ \\[-2pt]
 \prooftree
 \G_i\parG\addMsg{\Msg}{\mq\pp{\la_i}\q}\stackred\asCom\G_i' \parG\addMsg{\Msg'}{\mq\pp{\la_i}\q}
\text{ for all }  i \in I \qquad\pp\not\in
\play{\asCom}
 \justifies
  \agtO{\pp}{\q}i I{\la}{\G}\parG\Msg \stackred \asCom\agtO{\pp}{\q}i I{\la}{\G'}\parG\Msg'
 \using ~~~\rulename{IComm-Out}
  \endprooftree\\ \\[-2pt]
\prooftree
 \G\parG\Msg\stackred\asCom\G'\parG\Msg'
\qquad\q\not\in
\play{\asCom}
 \justifies
 \agtI \pp\q \la \G\parG\addMsg{\mq\pp{\la}\q}{\Msg} \stackred\asCom\agtI \pp\q \la {\G'}\parG\addMsg{\mq\pp{\la}\q}{\Msg'}
 \using ~~~\rulename{IComm-In}
  \endprooftree\\ \\[-2pt]
\end{array}
$} }\vspace*{-4mm}
\caption{
LTS for    \agts.
}\mylabel{ltsgtAs}\vspace*{-4mm}
\end{figure}

Figure \ref{ltsgtAs} gives the LTS for \agts.  It shows that a
communication can be performed also under a choice or an input guard,
provided it is independent from it. Actually, we are interested
only in the transitions of balanced asynchronous types, see
Figure~\ref{wfagtA}. This justifies the shape of the message queues in
Rules $\rulename{IComm-Out}$ and $\rulename{IComm-In}$. The conditions
$\pp\not\in \play{\asCom}$ and $\q\not\in \play{\asCom}$ in these
rules ensure that $\asCom$ does not depend on the enclosing
communications.

\medskip
We say that $\G\parN\Msg\stackred{\beta}\G'\parN\Msg'$ is a \emph{top
  transition} if it is derived using either Rule $\rulename{\AsOut}$
or Rule $\rulename{\AsIn}$.  Top transitions preserve well-formedness
of \agts:
\begin{lemma}\label{wftr}
If $\G\parN\Msg\stackred{\beta}\G'\parN\Msg'$ is a top transition and $\G\parG\Msg$ is well formed, then $\G'\parG\Msg'$ is well formed too.
\end{lemma}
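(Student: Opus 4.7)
The plan is to unpack well-formedness into its three constituents (balancing, projectability, and boundedness) and verify each separately for both possible top transitions, namely Rule $\rulename{\AsOut}$ producing $\G_k\parG\addMsg\Msg{\mq\pp{\la_k}\q}$ from $\agtO{\pp}{\q}i I{\la}{\G}\parG\Msg$, and Rule $\rulename{\AsIn}$ producing $\G\parG\Msg$ from $\agtI \pp\q \la \G\parG\addMsg{\mq\pp{\la}\q}\Msg$.

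For balancing, the key observation is that both top transitions reproduce exactly the premise of the corresponding balancing rule. In the $\rulename{\AsIn}$ case, inversion of Rule $\rulename{In}$ applied to $\derSI{}{\confAs{\agtI \pp\q \la \G}{\addMsg{\mq\pp{\la}\q}\Msg}}$ directly yields $\derSI{}{\confAs{\G}{\Msg}}$. In the $\rulename{\AsOut}$ case, Rule $\rulename{Out}$ requires $\derSI{}{\confAs{\G_i}{\addMsg{\Msg}{\mq\pp{\la_i}\q}}}$ for all $i\in I$, so we can pick the specific branch $k$; the cyclicity side condition carries over, since if $\G_k$ were cyclic then so would be $\agtO{\pp}{\q}i I{\la}{\G}$. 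The coinductive reading of the balancing rules ensures that the local environment $\BB$ attached to Rule $\rulename{Out}$ does not obstruct the extraction of $\derSI{}{\confAs{\G_k}{\addMsg{\Msg}{\mq\pp{\la_k}\q}}}$.

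For projectability, an inspection of the clauses in \refToFigure{fig:projAsP} shows that, for every participant $\pr$, if $\proj{(\agtO{\pp}{\q}i I{\la}{\G})}{\pr}$ is defined then every $\proj{\G_i}{\pr}$ occurring in its computation is defined as well; in particular $\proj{\G_k}{\pr}$ is defined. The analogous property for the $\rulename{\AsIn}$ case is even more immediate since $\proj{(\agtI \pp\q \la \G)}{\pr}$ is defined in terms of $\proj{\G}{\pr}$ for every $\pr$. Boundedness is preserved by both top transitions because $\G_k$ (respectively $\G$) is a subtree of the original global type, hence every subtree of the residual is a subtree of the original, so finiteness of $\weight(\cdot,\pr)$ transfers directly.

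The main obstacle I anticipate is the balancing argument, because the coinductive reading of Rules $\rulename{In}$ and $\rulename{Out}$ together with the bookkeeping set $\BB$ makes ordinary inversion delicate. The cleanest way to handle this is to regard a balancing derivation as a (regular) infinite tree and to observe that the subtree rooted at the appropriate premise of the last rule applied is itself a balancing derivation of the residual judgement. Once this coinductive inversion principle is made explicit, the projectability and boundedness arguments follow by straightforward structural observations on \refToFigure{fig:projAsP} and \refToDef{def:depth-ila}, respectively.
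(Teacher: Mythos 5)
Your proof is correct and follows essentially the same route as the paper's: split well-formedness into balancing (obtained by inversion on Rules \rulename{In}/\rulename{Out}), projectability (by inspection of the projection clauses, which require each $\proj{\G_i}{\pr}$ to be defined), and boundedness (by the subtree observation). The only quibble is that your aside claiming cyclicity of $\G_k$ would imply cyclicity of the whole output choice is neither needed nor quite right --- inversion already hands you the premise $\derSI{}{\confAs{\G_k}{\addMsg{\Msg}{\mq\pp{\la_k}\q}}}$ as a subderivation, without re-applying Rule \rulename{Out} to $\G_k$.
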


The following lemma
detects the immediate transitions of an \agt\ from the projections of
its \sgt.

 \begin{lemma}\label{keysrA} Let $\G\parN\Msg$ be well formed.

 \vspace*{-3mm}
\begin{enumerate}
\item\label{keysrA1} If $\proj\G\pp=\oup\q{i}{I}{\M}{\PP}$, then
  $\G\parN\Msg\stackred{\CommAs\pp{\la_i}\q}\G_i\parN\addMsg\Msg{\mq\pp{\la_i}\q}$
  and $\proj{\G_i}\pp=\PP_i$ for all $i\in I$.
\item\label{keysrA2}
If $\proj\G\q=\inp\pp{i}{I}{\M}{\PP}$
and  $\Msg\equiv\addMsg{\mq\pp{\la}\q}{\Msg'}$   for some $\la$,
then  $I=\set{k}\,$ and $\la=\la_k$  and
$\G\parN\Msg\stackred{\CommAsI\pp{\la_k}\q}\G'\parN\Msg'$
and $\proj{\G'}\q=\PP_k$.
\end{enumerate}
\end{lemma}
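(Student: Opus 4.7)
The plan is to prove both parts simultaneously by induction on $\weight(\G, \pr)$, where $\pr = \pp$ in Part~\ref{keysrA1} and $\pr = \q$ in Part~\ref{keysrA2}. Since $\G \parN \Msg$ is well formed, $\G$ is bounded, so this depth is a natural number; \refToLemma{ddA} delivers strict decrease whenever we descend into the relevant subtree. Each step is a case analysis on the topmost constructor of $\G$, following the clauses of the projection in \refToFigure{fig:projAsP}.

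For Part~\ref{keysrA1}, the non-recursive case is $\G = \agtO{\pp}{\q'}{i}{I'}{\la'}{\G'}$ (matching sender). The projection clause forces $\q' = \q$, $I' = I$, $\la'_i = \la_i$ and $\PP_i = \proj{\G'_i}{\pp}$, and Rule~$\rulename{\AsOut}$ supplies the required transition. All other topmost constructors --- an output with a different sender, or an input --- apply the induction hypothesis to (one or all of) the continuations and close with Rule~$\rulename{IComm-Out}$ or Rule~$\rulename{IComm-In}$. The side-condition $\pp' \notin \play{\CommAs{\pp}{\la_i}{\q}} = \set{\pp}$ of each rule follows from the case hypothesis that the outer player differs from $\pp$, and in the input case Rule~$\rulename{In}$ of \refToFigure{wfagtA} forces the matching queue message $\mq{\pp'}{\la'}{\q'}$ at the head of $\Msg$; the structural equivalence $\equiv$ then lets the induction hypothesis fire on $\G' \parN \Msg''$ where $\Msg \equiv \addMsg{\mq{\pp'}{\la'}{\q'}}{\Msg''}$.

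Part~\ref{keysrA2} follows the same pattern. The non-recursive case is $\G = \agtI{\pp}{\q}{\la'}{\G'}$, where the projection produces a single-branch input, so $I = \set{k}$ with $\la_k = \la'$. Balancing (Rule~$\rulename{In}$) forces $\Msg \equiv \addMsg{\mq{\pp}{\la'}{\q}}{\Msg''}$, and the hypothesis $\Msg \equiv \addMsg{\mq{\pp}{\la}{\q}}{\Msg'}$ then gives $\la = \la'$ and $\Msg' \equiv \Msg''$, since both describe the head $(\pp,\q)$-message of $\Msg$; Rule~$\rulename{\AsIn}$ concludes. The recursive cases mirror those of Part~\ref{keysrA1}, using $\equiv$ to reorder $\mq{\pp}{\la}{\q}$ past messages with a different sender-receiver pair so that the induction hypothesis applies to the continuation with a queue still headed by $\mq{\pp}{\la}{\q}$.

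The main obstacle is the case $\G = \agtO{\pp}{\q}{i}{I'}{\la'}{\G'}$ with $|I'|>1$, whose projection clause, when its common prefix is empty, yields $\proj{\G}{\q} = \inp{\pp}{i}{I'}{\la'}{\PP}$ as a multi-branch input and thus conflicts with the conclusion $I = \set{k}$. I plan to exclude this case by balancing: Rule~$\rulename{Out}$ of \refToFigure{wfagtA} requires each $\G'_i \parN \addMsg{\mq{\pp}{\la}{\q}}{\mq{\pp}{\la'_i}{\q}}$ to be balanced; since $\proj{\G'_i}{\q}$ begins with $\rcvL{\pp}{\la'_i}$ and the $\la'_i$ are pairwise distinct, tracing the first $(\pp,\q)$-input along the balancing derivation of each $\G'_i$ shows that this input must carry the label $\la'_i$, which cannot simultaneously agree with the head message $\mq{\pp}{\la}{\q}$ for every $i$. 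The one delicate observation needed to formalise this tracing is that only $(\pp,\q)$-input constructors in the global type consume the head $(\pp,\q)$-message of the queue, while all other balancing steps preserve it.
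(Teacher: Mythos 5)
Your proposal follows the same skeleton as the paper's proof: induction on $\weight(\G,\pp)$ (resp.\ $\weight(\G,\q)$), justified by boundedness and the strict decrease given by \refToLemma{ddA}; a base case where the root constructor is the one being projected, closed by Rule $\rulename{\AsOut}$ or $\rulename{\AsIn}$; and recursive cases closed by Rules $\rulename{IComm-Out}$/$\rulename{IComm-In}$, with balancing (Rule $\rulename{In}$ of \refToFigure{wfagtA}) supplying the queue message needed in the input cases and $\equiv$ keeping $\mq{\pp}{\la}{\q}$ at the head. The one point where you genuinely diverge is the case you rightly single out as the main obstacle in Part (2), namely $\G=\agtO{\pp}{\q}i I{\la}{\G}$ with $\eh I>1$ and empty common prefix. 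The paper does not dispose of this case by a separate balancing argument: it applies the induction hypothesis to each branch $\G_i\parN\addMsg{\mq\pp{\la}\q}{\addMsg{\Msg'}{\mq\pp{\la_i}\q}}$, whose projection on $\q$ is the single-branch input $\Seq{\rcvL\pp{\la_i}}{\PP_i}$; the conclusion of the induction hypothesis then forces $\la_i=\la$ for every $i$, and pairwise distinctness of the labels yields $\eh I=1$ immediately. Your route instead traces the first $(\pp,\q)$-input through the balancing derivation of each branch. That can be made to work, but it rests on an auxiliary claim you only gesture at: that the first $\pp\q?$-node reached along a path of $\G_i$ carries exactly the label $\la_i$ with which $\proj{\G_i}{\q}$ begins. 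This is true, but it needs its own induction through the projection clauses (the label may be introduced arbitrarily deep, below further choices, single-branch outputs and inputs of other participants), and it is precisely the information the induction hypothesis already hands you. The paper's resolution is therefore strictly less work; I would adopt it and keep the rest of your argument unchanged.
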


The next lemma shows how to retrieve the projections of a \sgt\ from
the immediate transitions of the \qt\ obtained by combining the \sgt\
with a compliant queue. It also shows that transitions of asynchronous
types preserve projectability of their global types. For this purpose,
one needs the two clauses in the projection of an output choice on the
receiver, see \refToFigure{fig:projAsP}, as illustrated by the
following example.

 \begin{example}\label{ex:projRcv}
   Let $\G=\Seq{\CommAs\pp{\la}\q}{\Seq{\CommAsI\pp{\la}\q}}{\G'}$,
   where $\G'=
   \Seq{\CommAs\q{\la_1}\pr}{\Seq{\CommAsI\q{\la_1}\pr}{\CommAsI\pp{\la}\q}}\,\GlSy\,
   \Seq{\CommAs\q{\la_2}\pr}{\Seq{\CommAsI\q{\la_2}\pr}{\CommAsI\pp{\la}\q}}
   $.  The \agt\ $\G\parG\mq{\pp}{\la}{\q}$ is well formed.  Assume we
   modify the definition of projection of an output choice on the
   receiver by removing its first clause and the restriction of the
   second to  $\eh{I}>1$.  Then $\proj{\G}{\q}$ is defined since
   $\proj{(\Seq{\CommAsI\pp{\la}\q}{\G'})}{\q}=\Seq{\rcvL{\pp}{\la}}{(\Seq{\sendL{\pr}{\la_1}}{\rcvL{\pp}{\la}}\,
     \oplus\, \Seq{\sendL{\pr}{\la_2}}{\rcvL{\pp}{\la}} )}$ has the
   required shape.  Applying Rule $\rulename{IComm-Out}$ we get
   $\G\parG\mq{\pp}{\la}{\q}\stackred{\CommAsI\pp{\la}\q}\Seq{\CommAs\pp{\la}\q}{\G'}\parG\emptyset$.
   The projection $\proj{(\Seq{\CommAs\pp{\la}\q}{\G'})}{\q}$ would
   not be defined since $\proj{{\G'}}{\q}=
   \Seq{\sendL{\pr}{\la_1}}{\rcvL{\pp}{\la}}\, \oplus\,
   \Seq{\sendL{\pr}{\la_2}}{\rcvL{\pp}{\la}}$ does not have the
   required shape.
\end{example}

\begin{lemma}\label{keysrA34} Let $\G\parN\Msg$ be well formed.

\vspace*{-2mm}
\begin{enumerate}
\item\label{keysrA3} If $\G\parN\Msg\stackred{\CommAs\pp{\la}\q}\G'\parN\Msg'$,  then
  $\Msg'\equiv\addMsg{\Msg}{\mq\pp\la\q}$ and
  $\proj\G\pp=\oup\q{i}{I}{\M}{\PP}$
 and $\M=\M_k$
  and $\proj{\G'}\pp=\PP_k$ for some $k\in I$
  and  $\proj{\G}\pr\subt\proj{\G'}\pr$ for all   $\pr\not=\pp$.
\item\label{keysrA4} If $\G\parN\Msg\stackred{\CommAsI\pp{\la}\q}\G'\parN\Msg'$,  then  $\Msg\equiv\addMsg{\mq\pp{\la}\q}\Msg'$ and $\proj\G\q=\agtI{\pp}{\q}{\la}{\proj{\G'}\q}$
  and  $\proj\G\pr\subt\proj{\G'}\pr$ for all   $\pr\not=\q$.
  \end{enumerate}
  \end{lemma}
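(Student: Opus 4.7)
The plan is to prove both claims simultaneously by induction on the derivation of the transition $\G\parN\Msg\stackred{\beta}\G'\parN\Msg'$, with one case per rule of Figure~\ref{ltsgtAs}. The two parts are intertwined because the premises of $\rulename{IComm-Out}$ and $\rulename{IComm-In}$ carry subsidiary transitions labelled by an arbitrary communication $\beta$, so a single mutual induction handles both.

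The base cases are $\rulename{\AsOut}$ for the output part and $\rulename{\AsIn}$ for the input part. For $\rulename{\AsOut}$ with $\G=\agtO{\pp}{\q}{i}{I}{\la}{\G}$ and $\G'=\G_k$, the first clause of the projection definition gives $\proj\G\pp=\oup\q{i}{I}{\M}{\proj{\G_i}\pp}$ and $\proj{\G'}\pp=\proj{\G_k}\pp$, delivering the required shape with $\PP_i=\proj{\G_i}\pp$. For $\pr\neq\pp$, the merge condition forces $\proj\G\pr=\proj{\G_j}\pr$ for all $j\in I$ when $\pr\notin\{\pp,\q\}$, so $\proj\G\pr=\proj{\G'}\pr$; when $\pr=\q$ and $|I|>1$, $\proj\G\q$ has the form $\vv\pi;\inp\pp{i}{I}{\M}{\Q^i}$ while $\proj{\G'}\q=\vv\pi;\rcvL\pp{\M_k};\Q^k$, and the preorder follows from rule~$\subt$-In (which permits the larger process to have more input branches), lifted through the common prefix $\vv\pi$ by the single-branch preorder rules. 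The queue equation $\Msg'\equiv\addMsg\Msg{\mq\pp\la\q}$ is immediate. Rule $\rulename{\AsIn}$ is analogous and simpler, as no output choice is involved.

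In the inductive case for $\rulename{IComm-Out}$ we have $\G=\agtO{\pp'}{\q'}{i}{I}{\la'}{\G}$, $\G'=\agtO{\pp'}{\q'}{i}{I}{\la'}{\G'}$ and, for every $i\in I$, a transition $\G_i\parN\addMsg{\Msg}{\mq{\pp'}{\la'_i}{\q'}}\stackred{\beta}\G'_i\parN\addMsg{\Msg'}{\mq{\pp'}{\la'_i}{\q'}}$ with $\pp'\notin\play\beta$. The induction hypothesis on each branch yields the required projection information on the player of $\beta$ and a pointwise preorder on the remaining participants. We then case-analyse whether the player coincides with $\q'$. If not, the projection clauses reassemble the branchwise projections directly: on the player, they are identical across branches by the merge condition; the preorder claim for $\pr\neq\pp$ lifts from the branches via rule~$\subt$-out at $\pp'$, rule~$\subt$-In at $\q'$, or the merge condition elsewhere. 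If the player is $\q'$, then the common prefix $\vv\pi$ in $\proj{\G_i}{\q'}=\vv\pi;\rcvL{\pp'}{\la'_i};\Q^i$ must begin with the action of $\beta$; stripping this action leaves a residual prefix common to every $\proj{\G'_i}{\q'}$ by the IH, so $\proj{\G'}{\q'}$ is defined and serves as the continuation required. The queue equivalence propagates from the branches by commuting $\mq{\pp'}{\la'_i}{\q'}$ past $\mq\pp\la\q$, which is legal because $\pp'\neq\pp$. Rule $\rulename{IComm-In}$ is treated analogously but more easily, since it has a single continuation and the leading input $\rcvL{\pp'}{\la'}$ on $\q'$ is preserved verbatim.

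The main obstacle is the sub-case of $\rulename{IComm-Out}$ where the player of $\beta$ is the receiver $\q'$ of a multi-branch output guard. Here one must first deduce, from the output or input shape of each $\proj{\G_i}{\q'}$ supplied by the IH, that the common prefix $\vv\pi$ demanded by the projection of $\G$ on $\q'$ indeed begins with the action of $\beta$; and second, establish, using pairwise distinctness of the labels $\la'_i$, that the preorder $\proj{\G_i}{\q'}\leq\proj{\G'_i}{\q'}$ forces the updated prefixes to coincide across branches and to end with the matching input $\rcvL{\pp'}{\la'_i}$, so that $\proj{\G'}{\q'}$ is well defined and has the same shape as $\proj\G{\q'}$ minus the leading action of $\beta$. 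Well-formedness of $\G'\parN\Msg'$, granted by Lemma~\ref{wftr}, guarantees throughout the argument that all projections used are defined.
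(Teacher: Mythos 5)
Your proposal is correct and follows essentially the same route as the paper: induction on the derivation of the transition, with the base cases read off the projection clauses, the inductive cases for \rulename{IComm-Out}/\rulename{IComm-In} assembling the branchwise induction hypotheses (commuting the queue messages, lifting the preorder through the common prefix, and handling the delicate multi-branch guard whose receiver is the player of $\beta$ exactly as the paper does). The only cosmetic difference is that you set up a mutual induction on both parts, whereas the premises of the \rulename{IComm} rules carry the \emph{same} label as their conclusion, so each part admits its own separate induction as in the paper — this makes your induction hypothesis stronger than necessary but is harmless.
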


The LTS also preserves \balancing\ of asynchronous types.

\begin{lemma}\mylabel{srgA}
If  $\derPI{}{\confAs{\G}{\Msg}}$ and $\G\parG\Msg\stackred{\beta}\G'\parG\Msg'$, then $\derPI{}{\confAs{\G'}{\Msg'}}$.
 \end{lemma}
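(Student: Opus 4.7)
The plan is to proceed by induction on the derivation of the transition $\G\parG\Msg\stackred{\beta}\G'\parG\Msg'$, case-splitting on the four rules of Figure~\ref{ltsgtAs}. For the two \emph{top} rules $\rulename{\AsOut}$ and $\rulename{\AsIn}$, the result follows by a single inversion on the corresponding balancing rule, $\rulename{Out}$ or $\rulename{In}$ respectively: the required \balancing\ judgement for the residual already appears as a premise inside the balancing derivation of $\G\parG\Msg$. This essentially reprises the balancing portion of the proof of \refToLemma{wftr}.

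For the two \emph{congruence} rules $\rulename{IComm-Out}$ and $\rulename{IComm-In}$, I would first invert the topmost balancing rule of the hypothesis $\derPI{}{\G\parG\Msg}$ to obtain balancing judgements for each of the deeper sub-configurations, namely $\derPI{}{\G_i\parG\addMsg{\Msg}{\mq\pp{\la_i}\q}}$ for all $i \in I$ (resp.\ $\derPI{}{\G''\parG\Msg''}$ in the input case). Then I would apply the induction hypothesis to the matching sub-transitions supplied by the premises of the conclusion, and finally reapply the same topmost balancing rule to assemble a derivation for $\derPI{}{\G'\parG\Msg'}$. Because balancing is interpreted coinductively and the LTS maps regular types to regular types, this rebuilding produces a regular (hence admissible) coinductive proof.

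The step I expect to be the most delicate is meeting the side-condition of Rule $\rulename{Out}$ in the $\rulename{IComm-Out}$ case, namely the requirement that $\Msg'=\emptyset$ whenever the target $\G'=\agtO{\pp}{\q}i I{\la}{\G'}$ is cyclic. Since the sub-action $\asCom$ fires uniformly inside all branches $\G_i$ and the transition preserves the top-level operator, cyclicity of $\G'$ has to be traced back to cyclicity of $\G$, giving $\Msg=\emptyset$ by the inversion step; one must then argue that the net queue change produced by $\asCom$ inside the cycle is empty, which holds because, in a cyclic scope, the matching output and input for $\asCom$ both belong to the same cycle and therefore to each $\G_i$. I would formalise this via a short auxiliary observation linking cyclicity of the source and of the target of an IComm-transition with the invariance of the queue, after which the coinductive reassembly goes through uniformly. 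Once the cyclicity side-condition is handled, the case split is routine and closes the lemma.
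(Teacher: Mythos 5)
Your overall strategy coincides with the paper's proof of \refToLemma{srgA}: induction on the inference of the transition, base cases discharged by inversion on the balancing rules (this is exactly \refToLemma{wftr}), and for the congruence rules an inversion of the topmost instance of Rule $\rulename{Out}$ (resp.\ $\rulename{In}$), an appeal to the induction hypothesis on the premises of $\rulename{IComm-Out}$ (resp.\ $\rulename{IComm-In}$), and a reassembly with the same rule. Up to that point your proposal is the paper's proof.

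Where you go beyond the paper is the cyclicity side-condition of Rule $\rulename{Out}$ when it is reapplied to conclude $\derSI{}{\G'\parG\Msg'}$; the paper's own proof reapplies the rule without comment. You are right that this is the only non-routine point, but the justification you sketch does not work as stated. You claim that ``the net queue change produced by $\asCom$ inside the cycle is empty''; however $\asCom$ is a \emph{single} communication, and by \refToLemma{keysrA34} every transition changes the queue by exactly one message ($\Msg'\equiv\addMsg{\Msg}{\mq\pp\la\q}$ for an output, $\Msg\equiv\addMsg{\mq\pp\la\q}{\Msg'}$ for an input), so the queue is never invariant. In fact, if $\G$ is cyclic then inversion gives $\Msg=\emptyset$, which rules out $\asCom$ being an input, hence $\asCom$ is an output and $\Msg'\neq\emptyset$; so the side-condition for $\G'$ cannot be met by showing $\Msg'=\emptyset$. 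The argument has to run the other way: one must show that $\G'$ is \emph{not} cyclic in this case, because the transition consumes the first occurrence of $\asCom$ in every branch and thereby breaks the self-similarity of the regular tree (and, symmetrically, that an $\rulename{IComm}$-transition cannot turn a non-cyclic source into a cyclic target). With that correction the reassembly goes through; without it, the ``short auxiliary observation'' you defer to is precisely where the proof would stall.
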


 We are now able to show that transitions preserve well-formedness of \agts.

\begin{lemma}\label{prop:wfa}
  If $\G\parG\Msg$
  is a well formed      \agt\   and
  $\G\parG\Msg\stackred{\beta}\G'\parG\Msg'$, then $\G'\parG\Msg'$ is
  a well formed      \agt\  too.
\end{lemma}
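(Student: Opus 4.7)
The plan is to proceed by induction on the derivation of the transition $\G\parG\Msg\stackred{\beta}\G'\parG\Msg'$, checking the three clauses of Definition~\ref{def:agtA} for $\G'\parG\Msg'$: balancing, projectability of the underlying \sgt, and boundedness. Balancing is already handed to us by Lemma~\ref{srgA}, so the work reduces to projectability and boundedness. The base cases, where the last applied rule is $\rulename{\AsOut}$ or $\rulename{\AsIn}$, are discharged immediately by Lemma~\ref{wftr}.

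For the inductive case of $\rulename{IComm-Out}$, we have $\G = \agtO{\pp}{\q}i I{\la}{\G}$, $\G' = \agtO{\pp}{\q}i I{\la}{\G'}$, with premises $\G_i\parG\addMsg{\Msg}{\mq\pp{\la_i}\q}\stackred{\asCom}\G'_i\parG\addMsg{\Msg'}{\mq\pp{\la_i}\q}$ for every $i\in I$ and the side condition $\pp\notin\play{\asCom}$. First I would verify that each premise source $\G_i\parG\addMsg{\Msg}{\mq\pp{\la_i}\q}$ is well formed: balancing follows by inversion on Rule~$\rulename{Out}$ of Figure~\ref{wfagtA}; projectability follows because the projection clauses for $\G$ are defined in terms of the projections of the $\G_i$ (so the latter must be defined whenever the former are); and boundedness is inherited because each $\G_i$ is a subtree of $\G$. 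The inductive hypothesis then yields well-formedness of every $\G'_i\parG\addMsg{\Msg'}{\mq\pp{\la_i}\q}$.

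To assemble projectability of $\G'$, I would reason by cases on $\pr$. For $\pr=\pp$, $\proj{\G'}{\pp}=\oup{\q}{i}{I}{\la}{\proj{\G'_i}{\pp}}$ is defined since each $\proj{\G'_i}{\pp}$ is. For $\pr=\q$ with $|I|>1$, the projection clause requires that all $\proj{\G'_i}{\q}$ share a common prefix $\procActsS$ ending with $\rcvL{\pp}{\la_i}$; the side condition $\pp\notin\play{\asCom}$ guarantees that the sub-transitions cannot consume this matching input, and Lemma~\ref{keysrA34} applied to each branch shows that $\proj{\G_i}{\q}$ and $\proj{\G'_i}{\q}$ are related by the preorder $\subt$ uniformly across $i$, preserving the required common shape. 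The cases of other $\pr$ are analogous. Boundedness of $\G'$ reduces to bounding the depth at the root, since every subtree of a $\G'_i$ is bounded by the inductive hypothesis. A path through $\G'$ has shape $\CommAs{\pp}{\la_i}{\q}\cdot\comseqA'$ with $\comseqA'\in\IPaths{\G'_i}$, so $\weight(\G',\pp)=1$ and $\weight(\G',\pr)\leq 1+\max_{i\in I}\weight(\G'_i,\pr)$ for $\pr\neq\pp$, finite because $I$ is finite by regularity and each $\weight(\G'_i,\pr)$ is finite by the inductive hypothesis. The case of $\rulename{IComm-In}$ is handled symmetrically and is simpler, since there is no branching at the root and a single continuation to which the inductive hypothesis applies.

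The main obstacle is the projectability step in the $\rulename{IComm-Out}$ case: justifying that the common $\q$-prefix $\procActsS$ across branches survives the internal sub-transitions requires careful use of $\pp\notin\play{\asCom}$ together with Lemma~\ref{keysrA34}, since otherwise a $\q$-action inside a single branch could non-uniformly reshape the projection on $\q$ and destroy the side condition enabling the receiver projection.
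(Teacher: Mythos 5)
Your overall architecture is workable but takes a detour the paper does not need: \refToLemma{keysrA34} is stated for \emph{arbitrary} transitions of well-formed \agts, not just top ones, so the paper simply applies it once to the whole transition $\G\parG\Msg\stackred{\beta}\G'\parG\Msg'$. Part (\ref{keysrA3}) (resp.\ (\ref{keysrA4})) gives $\proj{\G}{\pr}\subt\proj{\G'}{\pr}$ for every non-player $\pr$ and an explicit shape for the player's projection, and definedness of $\proj{\G'}{\pr}$ falls out immediately; boundedness is an easy induction and balancing is \refToLemma{srgA}, exactly as you say. By instead inducting on the derivation and citing \refToLemma{keysrA34} only for the premise transitions $\G_i\parG\addMsg{\Msg}{\mq\pp{\la_i}\q}\stackred{\asCom}\G'_i\parG\cdots$, you take on the burden of re-assembling projectability of the compound type from the branches --- which is precisely the hard content of the (appendix) proof of \refToLemma{keysrA34} itself. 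That is redundant rather than wrong, but it puts all the weight on the one step you acknowledge as the obstacle.

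And at that step your justification has a concrete error. You claim that the side condition $\pp\notin\play{\asCom}$ prevents the sub-transitions from consuming the matching input $\CommAsI{\pp}{\la_i}{\q}$. It does not: the player of that input is $\q$, not $\pp$, so the side condition places no constraint on it. What actually rules this out is the shape of the premise of $\rulename{IComm-Out}$, which forces the message $\mq\pp{\la_i}\q$ to still sit at the tail of the queue after $\asCom$, so $\asCom$ cannot be the input matching the freshly emitted output. Moreover, your appeal to ``related by $\subt$ uniformly across $i$'' only covers the case where $\q$ is \emph{not} the player of $\asCom$; when $\q\in\play{\asCom}$ (which the side condition permits), \refToLemma{keysrA34} instead strips or resolves the head of $\procActsS$ in each branch, and you would need a separate argument that this happens uniformly so that the common-prefix condition on the receiver projection survives. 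Both points are handled inside the proof of \refToLemma{keysrA34}; either repair your inline argument along these lines or, better, invoke that lemma on the top-level transition and drop the induction for the projectability clause.
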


\begin{proof}
Let $\beta=\CommAs\pp{\la}\q$. By \refToLemma{keysrA34}(\ref{keysrA3}) we have that
 $\proj{\G'}{\pr}$ is defined for all $\pr\in\play\G$.
Similarly  for
$\beta=\CommAsI\pp{\la}\q$, using  Lemma~\ref{keysrA34}(\ref{keysrA4}).
The proof that $\weight(\G'',\pr)$    is finite for all $\pr$ and   $\G''$
subtree of $\G'$ is easy by induction on the
  transition rules of \refToFigure{ltsgtAs}. \small
  \\
Finally, from \refToLemma{srgA} we have
  that $\confAs{\G'}{\Msg'}$ is \balanced.
\end{proof}

By virtue of  this lemma,
{\em we will henceforth only consider well-formed \agts}.

We end this section with the expected results of Subject Reduction,
Session Fidelity~\cite{CHY08,CHY16} and
Progress~\cite{DY11,Coppo2016}, which rely as usual on Inversion and
Canonical Form lemmas.

\begin{lemma}[Inversion]\mylabel{lemma:InvAsync}
If $\derN{\Nt\parN\Msg}{\G\parG\Msg}$, then $\PP\subt\proj\G{\pp}$ for all  $\pP{\pp}{\PP}\in\Nt$.
\end{lemma}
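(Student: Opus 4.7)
The proof plan is a direct inversion on the typing rule, which is straightforward since $\rulename{Net}$ is the only typing rule for networks in \refToFigure{fig:typingAs}.

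First, I would observe that any derivation of $\derN{\Nt\parN\Msg}{\G\parG\Msg}$ must end with an application of Rule $\rulename{Net}$, since this is the unique rule producing such a judgement. Hence the network $\Nt$ must have the form $\PiB_{i\in I}\pP{\pp_i}{\PP_i}$ and the premises of $\rulename{Net}$ give us immediately that $\PP_i\subt\proj\GP{\pp_i}$ for every $i\in I$.

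Next, I would connect this with the statement's hypothesis $\pP{\pp}{\PP}\in\Nt$. Recall from the discussion following the definition of networks that $\pP{\pp}{\PP}\in\Nt$ is shorthand for $\Nt\equiv\pP{\pp}{\PP}\parN\Nt'$ with $\PP\neq\inact$. Since the associativity and commutativity of parallel composition leave the indexing set $I$ unchanged up to reordering, and since only processes equal to $\inact$ can be introduced or discarded by structural congruence (via the neutral element $\pP{\pr}{\inact}$ for fresh $\pr$), any component $\pP{\pp}{\PP}\in\Nt$ with $\PP\neq\inact$ must coincide with some $\pP{\pp_i}{\PP_i}$, that is, $\pp=\pp_i$ and $\PP=\PP_i$ for some $i\in I$. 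The premise of $\rulename{Net}$ then yields $\PP\subt\proj\GP{\pp}$, as required.

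The only subtlety, and essentially the sole point worth checking, is that the typing judgement is stable under structural congruence of networks, so that we may freely write $\Nt$ up to $\equiv$ in the statement. This is precisely the reason Rule $\rulename{Net}$ allows the side condition $\play\GP\subseteq\{\pp_i\mid i\in I\}$ rather than equality: participants $\pP{\pr}{\inact}$ with $\pr\notin\play\GP$ may be added or removed without affecting typability, since $\inact\subt\inact=\proj\GP{\pr}$ holds by Rule $\rulename{$\subt$-$\inact$}$ together with the projection clause for participants outside $\play\GP$. Thus no nontrivial induction is needed, and the lemma follows from a single inversion step on $\rulename{Net}$.
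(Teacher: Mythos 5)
Your proof is correct and matches the argument the paper leaves implicit (the lemma is stated without proof, being an immediate inversion on the unique Rule \rulename{Net}). You rightly identify the only point needing care --- that $\pP{\pp}{\PP}\in\Nt$ presupposes $\PP\neq\inact$ and that structural congruence only adds or removes $\inact$ components, so the given component must be one of the $\pP{\pp_i}{\PP_i}$ in the typed product.
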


\begin{lemma}[Canonical Form]\mylabel{lemma:CanAsync}
If  $\derN{\Nt\parN\Msg}{\G\parG\Msg}$ and
$\pp\in\play\G$, then $\pP{\pp}{\PP}\in\Nt$ and $\PP\subt\proj\G{\pp}$.
\end{lemma}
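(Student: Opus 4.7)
The plan is to derive the Canonical Form lemma directly from the typing rule $\rulename{Net}$ by inversion, using the observation that projection on a player always yields a non-inactive process.

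First I would invert the derivation $\derN{\Nt\parN\Msg}{\G\parG\Msg}$. The only applicable rule is $\rulename{Net}$, so $\Nt \equiv \PiB_{i\in I}\pP{\pp_i}{\PP_i}$ with pairwise distinct $\pp_i$, and the derivation yields $\PP_i \subt \proj{\G}{\pp_i}$ for every $i\in I$ together with the containment $\play{\G} \subseteq \{\pp_i \mid i\in I\}$. Since $\pp \in \play{\G}$, this containment gives some $j \in I$ with $\pp = \pp_j$; hence $\PP_j \subt \proj{\G}{\pp}$.

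Second, to conclude $\pP{\pp}{\PP_j}\in\Nt$ in the sense of the abbreviation introduced in \refToSection{sec:calculus}, I must verify that $\PP_j \neq \inact$. Here I would use the auxiliary observation that $\proj{\G}{\pr} = \inact$ iff $\pr \notin \play{\G}$: the ``only if'' direction is immediate from the first clause of \refToFigure{fig:projAsP}, and the converse follows because for a player $\pr$ the other clauses always yield either an output process, an input process, or the projection of a strict subterm of $\G$ in which $\pr$ remains a player (this last case terminates because $\pr$'s depth strictly decreases, by \refToLemma{ddA}, along the branch followed by the projection). So $\proj{\G}{\pp} \neq \inact$. Inspecting the three rules defining $\subt$ in \refToFigure{fig:typingAs}, the only way to have $\inact \subt \Q$ is $\Q = \inact$; hence from $\PP_j \subt \proj{\G}{\pp} \neq \inact$ we obtain $\PP_j \neq \inact$, so indeed $\pP{\pp}{\PP_j}\in\Nt$ and $\PP_j \subt \proj{\G}{\pp}$.

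The argument is essentially routine bookkeeping on top of the Inversion Lemma; the only nontrivial point is the small auxiliary fact that projection on a player is non-inactive, which is where one needs the boundedness assumption built into well-formedness. I would state this auxiliary fact as an explicit remark (or inline it in the proof) and then conclude in one line.
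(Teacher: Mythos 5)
Your proof is correct. The paper states this lemma without proof (it is treated as a routine consequence of Rule \rulename{Net}), and your argument supplies exactly the expected reasoning: inversion of the unique typing rule together with $\play{\G}\subseteq\set{\pp_i\mid i\in I}$ gives the index $j$ with $\pp=\pp_j$ and $\PP_j\subt\proj{\G}{\pp}$. You also correctly identify and discharge the one genuine subtlety, namely that the notation $\pP{\pp}{\PP}\in\Nt$ requires $\PP\neq\inact$: your auxiliary observation that $\proj{\G}{\pr}\neq\inact$ whenever $\pr\in\play{\G}$ (by induction on $\weight(\G,\pr)$, which is finite by the boundedness built into well-formedness, noting that every recursive clause of \refToFigure{fig:projAsP} keeps $\pr$ a player of the subterm and strictly decreases its depth by \refToLemma{ddA}) combined with the shape analysis of the three $\subt$ rules does the job.
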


\begin{theorem}[Subject Reduction]\mylabel{srA}~
\text{~}\\[-13pt]
If $\derN{\Nt\parN\Msg}{\G\parG\Msg}$ and
$\Nt\parN\Msg\stackred{\beta}\Nt'\parN\Msg'$, then
$\G\parG\Msg\stackred\beta\G'\parG\Msg'$ and
  \mbox{$\derN{\Nt'\parN\Msg'}{\G'\parG\Msg'}$.} 
\end{theorem}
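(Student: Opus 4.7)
The plan is to do case analysis on the communication $\beta$, lifting the given network transition to a matching transition of the \agt\ and then reconstructing a typing derivation for the residual. The tools are the Inversion/Canonical Form lemmas (which relate the processes in $\Nt$ to the projections of $\G$), Lemma~\ref{keysrA} (which derives the top-level transition of the \agt\ from the shape of a projection), Lemma~\ref{keysrA34} (which tells us how the other projections evolve under a transition), and Lemma~\ref{prop:wfa} (which guarantees that $\G'\parG\Msg'$ remains well formed so that Rule \rulename{Net} may be re-applied). I will also use transitivity of $\subt$, which is a straightforward coinductive exercise on the rules of \refToFigure{fig:typingAs}.

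\emph{Case $\beta = \CommAs{\pp}{\la_k}{\q}$}: Rule \rulename{Send} forces $\pP{\pp}{\oup{\q}{i}{I}{\la}{\PP}}\in\Nt$ with $k\in I$, and $\Nt' = \Nt[\pp\mapsto\PP_k]$, $\Msg' \equiv \addMsg{\Msg}{\mq{\pp}{\la_k}{\q}}$. The Canonical Form lemma gives $\pp\in\play{\G}$ and $\oup{\q}{i}{I}{\la}{\PP}\subt\proj{\G}{\pp}$; by inspection of Rule \rulename{$\subt$-out}, the projection itself is an output choice of the same shape, $\proj{\G}{\pp} = \oup{\q}{i}{I}{\la}{\Q}$ with $\PP_i \subt \Q_i$ for every $i\in I$. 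Lemma~\ref{keysrA}(\ref{keysrA1}) then yields the type transition $\G\parG\Msg \stackred{\beta} \G_k\parG\Msg'$ together with $\proj{\G_k}{\pp} = \Q_k$, giving the required $\PP_k \subt \proj{\G_k}{\pp}$. For every other player $\pr\in\play{\G_k}\subseteq\play{\G}$, the Canonical Form lemma supplies $\pP{\pr}{\PP_\pr}\in\Nt$ with $\PP_\pr \subt \proj{\G}{\pr}$, and Lemma~\ref{keysrA34}(\ref{keysrA3}) gives $\proj{\G}{\pr}\subt\proj{\G_k}{\pr}$, so transitivity yields $\PP_\pr \subt \proj{\G_k}{\pr}$. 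Rule \rulename{Net} now applies (well-formedness of the new \agt\ is ensured by Lemma~\ref{prop:wfa}) and concludes the case.

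\emph{Case $\beta = \CommAsI{\pp}{\la_k}{\q}$}: Rule \rulename{Rcv} forces $\pP{\q}{\inp{\pp}{j}{J}{\la}{\Q}}\in\Nt$ with $k\in J$ and $\Msg\equiv \addMsg{\mq{\pp}{\la_k}{\q}}{\Msg'}$, while $\Nt' = \Nt[\q\mapsto\Q_k]$. Canonical Form gives $\inp{\pp}{j}{J}{\la}{\Q}\subt\proj{\G}{\q}$, so by Rule \rulename{$\subt$-In} the projection is an input from $\pp$, say $\proj{\G}{\q} = \inp{\pp}{i}{I}{\la}{\Q'}$ with $I\subseteq J$ and $\Q_i\subt\Q'_i$ for all $i\in I$. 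Applying Lemma~\ref{keysrA}(\ref{keysrA2}) with this projection and the queue $\Msg$ (whose head label is $\la_k$) forces $I=\{k\}$ and produces the type transition $\G\parG\Msg\stackred{\beta}\G'\parG\Msg'$ with $\proj{\G'}{\q} = \Q'_k$; hence $\Q_k\subt\proj{\G'}{\q}$. The other players are handled exactly as in the previous case, using Lemma~\ref{keysrA34}(\ref{keysrA4}) and transitivity of $\subt$; Rule \rulename{Net} then closes the case.

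The delicate point, and the only one requiring more than bookkeeping, is Case~2: the queue head $\la_k$ must match the projection's unique input label, and both $k\in J$ (so that \rulename{Rcv} actually applies on the network side) and $I=\{k\}$ (so that the $k$-th branch of the projection aligns with $\Q'_k$) are needed simultaneously. The first is given by \rulename{Rcv}, while the second is precisely the content of Lemma~\ref{keysrA}(\ref{keysrA2}); the two then match through the subtyping $\Q_k\subt\Q'_k$ extracted from Canonical Form. Everything else is a routine gluing of the lemmas cited above.
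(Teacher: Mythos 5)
Your proof is correct and follows essentially the same route as the paper's: case analysis on $\beta$, extraction of the shape of $\proj{\G}{\pp}$ (resp.\ $\proj{\G}{\q}$) via the subtyping rules, Lemma~\ref{keysrA} for the matching type transition, Lemma~\ref{keysrA34} plus transitivity of $\subt$ for the remaining participants, and Rule \rulename{Net} to retype the residual. One small mis-citation: the fact you need, namely $\PP\subt\proj{\G}{\pp}$ for a process already known (from \rulename{Send}/\rulename{Rcv}) to occur in $\Nt$, is the Inversion lemma (\refToLemma{lemma:InvAsync}), not Canonical Form — the latter goes in the opposite direction, from $\pp\in\play{\G}$ to the existence of the located process, and is what the paper uses for Session Fidelity.
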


\begin{proof}
  Let $\beta=\CommAs\pp{\la}\q$.  By Rule $\rulename{Send}$ of
  \refToFigure{fig:asynprocLTS},  $\pP{\pp}{\oup\q{i}{I}{\la}{\PP}}\in\Nt$
  and
  $\pP{\pp}{\PP_k}\in\Nt'$ and
  $\Msg'=\addMsg{\Msg}{\mq\pp{\la_k}\q}$ and $\la=\la_k$ for some
  $k\in I$.  Moreover $\pP\pr\R\in\Nt$ iff $\pP\pr\R\in\Nt'$ for all $\pr\not=\pp$. From \refToLemma{lemma:InvAsync} we get
\begin{enumerate}
\item \label{psra1} $\oup\q{i}{I}{\la}{\PP}\subt\proj\G{\pp}$, which
  implies $\proj\G{\pp}=\oup\q{i}{I}{\la}{\PP'}$ with
  $\PP_i\subt\PP'_i$ for all $i\in I $ from Rule $\rulename{ $\subt$
    -out}$ of \refToFigure{fig:typingAs} , and
\item \label{psra2}  $\R\subt\proj\G{\pr}$ for all  $\pr\not=\pp$
  such that  $\pP{\pr}{\R}\in\Nt$.
  \end{enumerate}
By Lemma~\ref{keysrA}(\ref{keysrA1}) $\G\parG\Msg\stackred{\Comm\pp{\la_k}\q}\G_k\parG\addMsg\Msg{\mq\pp{\la_k}\q}$ and $\proj{\G_k}\pp=\PP_k'$,  which implies \mbox{$\PP_k\subt\proj{\G_k}\pp$.}  By Lemma~\ref{keysrA34}(\ref{keysrA3}) $\proj{\G}\pr\subt\proj{\G_k}\pr$  for all $\pr\not=\pp$.
By transitivity of $\subt$ we have $\R\subt\proj{\G_k}\pr$  for all $\pr\not=\pp$. We can then choose $\G'=\G_k$.\vspace*{1.8mm}
\\
Let $\beta=\CommAsI\pp{\la}\q$.  By Rule $\rulename{Rcv}$ of
  \refToFigure{fig:asynprocLTS},   $\pP{\q}{\inp\pp{j}{J}{\la}{\Q}}\in\Nt$
  and
  $\pP{\q}{\Q_k}\in\Nt'$ and
  $\Msg=\addMsg{\mq\pp{\la_k}\q}{\Msg'}$ and $\la=\la_k$ for some
  $k\in J$.  Moreover $\pP\pr\R\in\Nt$ iff $\pP\pr\R\in\Nt'$ for all $\pr\not=\q$. From \refToLemma{lemma:InvAsync} we get
  \begin{enumerate}
\item \label{psraI1}  $\inp\pp{j}{J}{\la}{\Q}\subt\proj\G{\q}$, which implies
$\proj\G{\q}=\inp\pp{j}{I}{\la}{\Q'}$   with $I\subseteq J$ and $\Q_i\subt\Q'_i$  for all $i\in I$
from Rule $\rulename{ $\subt$ -in}$ of \refToFigure{fig:typingAs}, and
\item \label{psraI2}  $\R\subt\proj\G{\pr}$ for all  $\pr\not=\q$
  such that  $\pP{\pr}{\R}\in\Nt$.
  \end{enumerate}
By Lemma~\ref{keysrA}(\ref{keysrA2}), since $\Msg=\addMsg{\mq\pp{\la_k}\q}{\Msg'}$, we get
 $\G\parG\Msg\stackred{\CommAsI\pp{\la_k}\q}\G_k\parG\Msg'$ and  $I=\set k$
and $\proj{\G_k}\q=\Q_k'$,
 which implies $\Q_k\subt\proj{\G_k}\pp$.
By Lemma~\ref{keysrA34}(\ref{keysrA4}) $\proj{\G}\pr\subt\proj{\G_k}\pr$  for all $\pr\not=\q$.
By transitivity of $\subt$ we have $\R\subt\proj{\G_k}\pr$  for all $\pr\not=\q$. We can then choose $\G'=\G_k$.
\end{proof}

\begin{theorem}[Session Fidelity]\mylabel{sfA}
If $\derN{\Nt\parN\Msg}{\G\parG\Msg}$ and $\G\parG\Msg\stackred\beta\G'\parG\Msg'$, then

\centerline{$\Nt\parN\Msg\stackred\beta\Nt'\parN\Msg'$ and $\derN{\Nt'\parN\Msg'}{\G'\parG\Msg'}$}
\end{theorem}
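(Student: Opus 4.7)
The plan is to mirror Subject Reduction in reverse. The key tool is Lemma~\ref{keysrA34}, which reads off the shape of the projections of $\G$ and $\G'$ directly from any \agt\ transition, so no induction on the derivation of $\G\parG\Msg\stackred\beta\G'\parG\Msg'$ is needed. I would proceed by case analysis on $\beta$: locate the relevant participant process in $\Nt$ via the Canonical Form lemma, invert the preorder $\subt$ to extract its top-level shape, fire the corresponding network transition, and then rebuild a typing derivation for $\Nt'\parN\Msg'$ against $\G'\parG\Msg'$ using Rule $\rulename{Net}$.

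In the output case $\beta = \CommAs\pp\la\q$, Lemma~\ref{keysrA34}(\ref{keysrA3}) gives $\Msg' \equiv \addMsg\Msg{\mq\pp\la\q}$, $\proj\G\pp = \oup\q i I \M \PP$ with $\la = \la_k$, $\proj{\G'}\pp = \PP_k$ for some $k \in I$, and $\proj\G\pr \subt \proj{\G'}\pr$ for every $\pr \neq \pp$. Since $\pp\in\play{\G}$, Canonical Form yields $\pP\pp Q \in \Nt$ with $Q \subt \proj\G\pp$; because the output clause of the preorder is the only rule whose right-hand side is an output process, inversion forces $Q = \oup\q i I \M {Q'}$ with $Q'_i \subt \PP_i$ for all $i\in I$. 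Rule $\rulename{Send}$ then fires $\Nt\parN\Msg \stackred{\CommAs\pp{\la_k}\q} \Nt'\parN\Msg'$, where $\pp$'s updated process is $Q'_k$. Typing of $\Nt'\parN\Msg'$ follows by Rule $\rulename{Net}$: for $\pp$ we have $Q'_k \subt \PP_k = \proj{\G'}\pp$, while for $\pr \neq \pp$ the process is unchanged, and Inversion on the original judgement combined with transitivity of $\subt$ against $\proj\G\pr \subt \proj{\G'}\pr$ supplies the required bound. The input case $\beta = \CommAsI\pp\la\q$ is fully symmetric, using Lemma~\ref{keysrA34}(\ref{keysrA4}), the input clause of the preorder, Rule $\rulename{Rcv}$, and the structural equivalence $\equiv$ to bring $\mq\pp\la\q$ to the head of $\q$'s reading queue.

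The only step requiring genuine attention is the asymmetric behaviour of the preorder on inputs: whereas the output clause preserves the exact set of branches, the input clause permits the subtype (the network process) to offer strictly more branches than the supertype (the projection). Consequently, from $Q \subt \proj\G\q$ with $\proj\G\q$ an input from $\pp$ on $\la$ continuing with $\proj{\G'}\q$, one cannot conclude that $Q$ is itself a singleton input; only that the branch labelled $\la$ is among those offered by $Q$, with a continuation $Q''$ satisfying $Q'' \subt \proj{\G'}\q$. This is exactly what is needed to fire Rule $\rulename{Rcv}$ and to reassemble the typing derivation for $\Nt'\parN\Msg'$, but it is the only point where the argument goes beyond mechanical symbol pushing.
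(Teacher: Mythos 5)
Your proposal is correct and follows essentially the same route as the paper's proof: both rest on Lemma~\ref{keysrA34} to read off the projections, use the Canonical Form lemma and inversion of $\subt$ to extract the top-level shape of the relevant process, fire \rulename{Send}/\rulename{Rcv}, and reassemble the typing via transitivity of $\subt$. Your observation about the input clause allowing extra branches is exactly the point the paper handles by writing $\Q=\Seq{\rcvL\pp\la}{\PP'}+\Q'$ in its input case.
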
\vspace*{-2mm}

\begin{proof}
  Let $\beta=\CommAs\pp{\la}\q$. By
  \refToLemma{keysrA34}(\ref{keysrA3})
  $\Msg'\equiv\addMsg{\Msg}{\mq\pp\la\q}$,
  $\proj\G\pp=\oup\pp{i}{I}{\M}{\PP}$, $\M=\M_k$,
  $\proj{\G'}\pp=\PP_k$ for some $k\in I$ and
  $\proj{\G}\pr\subt\proj{\G'}\pr$ for all
  $\pr\not=\pp$. 
  From \refToLemma{lemma:CanAsync} we get $\Nt\equiv
  \pP{\pp}{\PP}\parN\Nt''$ and
\begin{enumerate}
\item \label{sfcA1} $\PP=\oup\q{i}{I}{\la}{\PP'}$ with $\PP'_i\subt\PP_i$ for all $i\in I$, from Rule $\rulename{ $\subt$ -out}$ of \refToFigure{fig:typingAs}, and
\item \label{sfcA2} $\R\subt\proj\G{\pr}$ for all   $\pP{\pr}{\R}\in\Nt''$.
\end{enumerate}
We can then choose  $\Nt'=\pP{\pp}{\PP'_k}\parN\Nt''$.

\medskip
Let $\beta=\CommAsI\pp{\la}\q$. By
\refToLemma{keysrA34}(\ref{keysrA4})
$\Msg\equiv\addMsg{\mq\pp\la\q}{\Msg'}$,
$\proj\G{ \q }=\Seq{\rcvL\pp\la}{\PP}$,
$\proj{\G'}{ \q }=\PP$ and $\proj{\G}\pr\subt\proj{\G'}\pr$ for
all $\pr\not=\q$.  From \refToLemma{lemma:CanAsync} we get $\Nt\equiv
\pP{\q}{\Q}\parN\Nt''$ and
 \begin{enumerate}
\item \label{sfcA1a} $\Q=\Seq{\rcvL\pp\la}{\PP'}+\Q'$ with $\PP'\subt\PP$, from Rule $\rulename{ $\subt$ -in}$ of \refToFigure{fig:typingAs}, and
\item \label{sfcA2a} $\R\subt\proj\G{\pr}$ for all   $\pP{\pr}{\R}\in\Nt''$.
\end{enumerate}
We can then choose  $\Nt'=\pP{\q}{\PP'}\parN\Nt''$.
\end{proof}

We are now able to prove that in a typable network, every participant
whose process is not terminated may eventually perform an action, and
every message that is stored in the queue is eventually read. This property is generally referred to as progress~\cite{H2016}.


\begin{theorem}[Progress]\mylabel{pr}
 A typable  network $\Nt\parN\Msg$
satisfies progress, namely:\vspace*{-2mm}
\begin{enumerate}
\item\mylabel{pr1} $\pP{\pp}{\PP}\in\Nt$ implies $\Nt\parN\Msg\stackred{\concat{\comseqA}\beta}\Nt'\parN\Msg'$ with $\play\beta=\set{\pp}$;
\item\mylabel{pr2} $\Msg\equiv\addMsg{\mq\pp\la\q}{\Msg_1}$ implies $\Nt\parN\Msg\stackred{\concat{\comseqA}{\CommAsI\pp\la\q}}\Nt'\parN\Msg'$.
\end{enumerate}
\end{theorem}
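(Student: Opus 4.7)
My plan is to reduce the statement to its type-level counterpart via Session Fidelity and then prove the type-level claim by a depth-based induction. By Subject Reduction (\refToTheorem{srA}) and Session Fidelity (\refToTheorem{sfA}), every transition of the well-formed \agt\ $\G\parN\Msg$ can be mirrored in the network $\Nt\parN\Msg$ with typability preserved. So it suffices to establish the corresponding statements at the type level: (1') if $\pp\in\play\G$, then $\G\parN\Msg\stackred{\concat\comseqA\beta}\G'\parN\Msg'$ with $\play\beta=\set\pp$; and (2') if $\Msg\equiv\addMsg{\mq\pp\la\q}{\Msg_1}$, then $\G\parN\Msg\stackred{\concat\comseqA{\CommAsI\pp\la\q}}\G'\parN\Msg'$. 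The implication (1')$\Rightarrow$(1) uses Canonical Form (\refToLemma{lemma:CanAsync}): since the notation $\pP\pp\PP\in\Nt$ assumes $\PP\neq\inact$, we get $\proj\G\pp\neq\inact$ and hence $\pp\in\play\G$.

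For (1'), I argue by induction on $\weight(\G,\pp)$, which is finite since $\G$ is bounded. In the base case $\weight(\G,\pp)=1$, the definition of $\ord$ forces $\pp$ to be the player of the root communication of $\G$: otherwise some trace would have $\pp$ first appearing at position $\geq 2$, contradicting $\sup=1$. If the root is $\agtO\pp\q i I \la\G$, Rule $\rulename{\AsOut}$ provides a top transition with player $\pp$; if it is $\agtI\pr\pp\la{\G''}$, inversion on Rule $\rulename{In}$ of the balancing derivation (\refToFigure{wfagtA}) gives $\Msg\equiv\addMsg{\mq\pr\la\pp}{\Msg'}$, so Rule $\rulename{\AsIn}$ applies. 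In the inductive case $\pp$ is not the root player, so by projectability $\pp$ must occur in every branch of an output root and in the unique subtree of an input root. Performing the corresponding top transition yields a configuration that is well formed by \refToLemma{prop:wfa}, in which $\weight(\cdot,\pp)$ has strictly decreased by \refToLemma{ddA}; the induction hypothesis then supplies the desired trace, which we prepend with the initial transition.

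For (2'), balancing of $\confAs\G\Msg$ together with $\mq\pp\la\q$ being in $\Msg$ forces the existence of a matching input $\CommAsI\pp\la\q$ somewhere in $\G$, so in particular $\q\in\play\G$. Applying (1') to $\q$ yields a trace $\concat\comseqA\beta$ ending in an action with $\play\beta=\set\q$. Since, thanks to the structural equivalence on queues, $\mq\pp\la\q$ remains at the head of the $(\pp,\q)$-channel queue throughout this trace, \refToLemma{keysrA}(\ref{keysrA2}) tells us that if $\beta$ is an input of $\q$ from $\pp$ then necessarily $\beta=\CommAsI\pp\la\q$ and we are done. Otherwise $\beta$ is an output of $\q$ or an input of $\q$ from a participant different from $\pp$, so $\mq\pp\la\q$ is still at the head of its channel queue after the trace, and we iterate on the resulting well-formed configuration.

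The main obstacle is the termination of the iteration in (2'). To establish it I would exploit the side-condition ``if $\agtO\pp\q i I \la \G$ is cyclic then $\Msg=\emptyset$'' in Rule $\rulename{Out}$ of balancing: since the balancing derivation is regular and its queue must be empty at every entry to a cyclic subterm, the message $\mq\pp\la\q$ cannot persist across a cycle boundary and must therefore be consumed within the acyclic prefix of the balancing proof, bounding the number of iterations. Formalising this as a well-founded measure --- equivalently, extracting from the regular balancing witness a ``distance to the matching \rulename{In} application'' that strictly decreases at every non-consuming action of $\q$ --- is the delicate step of the argument; the depth-based induction for (1') goes through much more smoothly.
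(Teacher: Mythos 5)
Your treatment of Statement~(\ref{pr1}) is essentially the paper's proof: the same reduction to a type-level claim via Session Fidelity, the same induction on $\weight(\G,\pp)$, the same use of \refToLemma{ddA} and \refToLemma{prop:wfa} for the inductive step, and the same appeal to balancing to fire Rule $\rulename{\AsIn}$ in the base case. One small imprecision: in the inductive case with root $\agtO\pr\ps i I\la\G$, the claim that ``by projectability $\pp$ must occur in every branch'' does not cover the case $\pp=\ps$ with $\eh I=1$, where the projection on the receiver is just the projection of the single continuation and gives no information; there the paper instead derives $\pp\in\play{\G_1}$ from balancing (the enqueued message must have a matching input in the continuation). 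This is easily repaired but should be said.

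For Statement~(\ref{pr2}) you take a genuinely different route, and it is here that your argument is incomplete. The paper does not iterate Statement~(\ref{pr1}) on $\q$; it defines the \emph{input depth} $\idepth(\G,\CommAsI\pp\la\q)$, shows it is finite because in the regular balancing derivation it is exactly the number of rule applications separating the $\rulename{In}$ step that introduces $\mq\pp\la\q$ from the conclusion, and then performs a single induction in which each step is one top transition $\beta\neq\CommAsI\pp\la\q$ that strictly decreases $\idepth$. Your iteration scheme (repeatedly invoke~(\ref{pr1}') for $\q$, observe via \refToLemma{keysrA}(\ref{keysrA2}) and the structural equivalence on queues that any input of $\q$ from $\pp$ must be the desired one, otherwise go round again) is plausible, but its termination is precisely the content of the theorem for this clause, and you leave it as a sketch. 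The well-founded measure you gesture at --- ``distance to the matching $\rulename{In}$ application in the balancing witness'' --- is exactly the paper's $\idepth$; until you define it and verify that it strictly decreases under every top transition whose label is not $\CommAsI\pp\la\q$ (including transitions performed by participants other than $\q$, which your iteration also generates), the argument has a genuine gap. Once that measure is in place, the cleaner move is to drop the iteration of~(\ref{pr1}') altogether and induct on the measure directly, as the paper does.
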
\vspace*{-2mm}

\begin{proof}
By hypothesis $\derN{\Nt\parN\Msg}{\G\parG\Msg}$ for some $\G$.\medskip

(\ref{pr1}) If $\PP$ is an output
process, then it can  immediately  move.
 Let then $\PP$ be an input process.  From $\pP{\pp}{\PP}\in\Nt$
we get
$\pp \in \play{\G}$ and therefore $\weight(\G,\pp)>0$. Moreover, since
 $\G$  is bounded, it must be $\weight(\G,\pp)<\infty$.
  We prove by induction on $\weight(\G,\pp)$ that
$0 <\weight(\G,\pp)<\infty$
implies $\G\parN\Msg\stackred{\concat{\comseqA}\beta}\G'\parN\Msg'$ with
$\play\beta=\set{\pp}$.
 By Session Fidelity (\refToTheorem{sfA}) it will follow that
$\Nt\parN\Msg\stackred{\concat{\comseqA}\beta}\Nt'\parN\Msg'$.
 Let $d=\weight(\G,\pp)$.

 \medskip\noindent
{\em Case $d=1$}.  Here  $\G=\agtI \q\pp \la {\G'}$.  Since
$\G\parN\Msg$ is  balanced,
$\Msg\equiv\addMsg{\mq\q\la {\pp}}{\Msg'}$ by Rule
$\rulename{In}$ of \refToFigure{wfagtA}. Then
$\G\parN\Msg\stackred{\CommAsI\q\la\pp}\G'\parN\Msg'$
by Rule $\rulename{\AsIn}$ of \refToFigure{ltsgtAs}.

\smallskip\noindent
{\em Case $d>1$}. Here we have either $\G =
\agtO{\pr}{\ps}{i}{I}{\la}{\G}$ with $\pr \neq \pp$ or $\G =
\agtI{\pr}{\ps}{\la}{\G'''}$ with $\ps \neq \pp$.  By \refToLemma{ddA}
this implies $\weight(\G_i,\pp)<d$ for all $i\in I$ in the first case,
and $\weight(\G''',\pp)<d$ in the second case.  Hence in both cases,
by applying Rule $\rulename{\AsOut}$ or Rule $\rulename{\AsIn}$ of
\refToFigure{ltsgtAs}, we get
$\G\parN\Msg\stackred{\beta'}\G''\parN\Msg''$.  Since either $\G''=\G_k$ for some $k\in I$ or $\G''=\G'''$ we get
$\play{\beta'}\neq\set{\pp}$
and  $\weight(\G'',\pp)<d$.   In case $\G$ is a choice of outputs
we get $\pp \in \play{\G''}$ by projectability of $\G$ if $\pp\neq\ps$ and by balancing of $\G\parN\Msg$ if $\pp=\ps$. Thus  $0<\weight(\G'',\pp)<
d < \infty$.
We may then apply induction to get $\G''\parN\Msg''\stackred{\concat{\comseqA}\beta}\G'\parN\Msg'$ with
$\play\beta=\set{\pp}$. Therefore
$\G\parN\Msg\stackred{\concat{\beta'}{\concat{\comseqA}\beta}}\G'\parN\Msg'$
is the required transition sequence.

 \medskip (\ref{pr2})
 Let the {\em input depth} of the input $\CommAsI\pp\la\q$ in $\G$,
notation $\idepth(\G,\CommAsI\pp\la\q)$, be inductively defined
by:\medskip

{\small{
  \centerline{$\begin{array}{lll}
\idepth(\agtO{\pr}{\ps}i I{\la}{\G},\CommAsI\pp\la\q)&=&1+
max _{i\in I}\set{\idepth(\G_i,\CommAsI\pp\la\q)}\\
\idepth(\agtI \pr\ps {\la'} {\G'},\CommAsI\pp\la\q)&=&\begin{cases}
 1     & \text{if }\CommAsI\pp\la\q=\CommAsI\pr{\la'}\ps\\
 \infty& \text{if $\pp=\pr$ and $\q=\ps$ and $\la\neq\la'$}\\
 1+   \idepth(\G',\CommAsI\pp\la\q)  & \text{otherwise}
\end{cases}\\
\idepth(\End,\CommAsI\pp\la\q)&=&\infty
\end{array}
$} } }

\medskip
\noindent By hypothesis $\Msg\equiv\addMsg{\mq\pp\la\q}{\Msg_1}$.

\smallskip
We show that, for all $\Msg$ and $\G$, $\derPI{}{\confAs{\G}{\addMsg{\mq\pp\la\q}{\Msg}}}$ implies that $\idepth(\G,\CommAsI\pp\la\q)$ is finite. The proof is by induction on $\maxP(\G)$ defined as the maximum length of a path from the root of $\G$ to either a leaf (if the path is finite) or the first cyclic global type encountered (if the path is infinite). Since $\G$ is regular every infinite path starting from its root must contain a  cyclic global type, so $\maxP(\G)=n$ for some $n$. \sm
\\
If $n=0$ either $\G=\End$ or $\G$ is cyclic. Therefore $\derPI{}{\confAs{\G}{\addMsg{\mq\pp\la\q}{\Msg}}}$ cannot hold
and the statement is true (Rules $\rulename{\End}$ and $\rulename{Out}$ of \refToFigure{wfagtA} require an empty queue). \sm
\\
If $n>0$, we consider the two possible shapes of $\G$. \sm
\\
Let $\G=\agtI \pr\ps {\la'} {\G'}$. If $\pr=\pp$, $\ps=\q$ and $\la'\neq\la$, then $\derPI{}{\confAs{\G}{\addMsg{\mq\pp\la\q}{\Msg}}}$ cannot hold and the statement is true (Rule $\rulename{In}$ requires a queue starting with $\mq\pp{\la'}\q$). If $\pr=\pp$, $\ps=\q$ and $\la'=\la$, then  $\idepth(\G,\CommAsI\pp\la\q)=1$. Otherwise $\maxP(\G')=n-1$ and, since $\derPI{}{\confAs{\G}{\addMsg{\mq\pp\la\q}{\Msg}}}$,
$\addMsg{\mq\pp\la\q}{\Msg}\equiv\addMsg{\mq\pr{\la'}\ps}{\addMsg{\mq\pp\la\q}{\Msg'}}$ for some $\Msg'$ (Rule $\rulename{In}$).
Moreover, $\derPI{}{\confAs{\G'}{\addMsg{\mq\pp\la\q}{\Msg'}}}$. By induction hypotheses we get that $\idepth(\G',\CommAsI\pp\la\q)$ is finite. Therefore $\idepth(\G,\CommAsI\pp\la\q)$ is~finite. \sm
\\
If $\G=\agtO{\pr}{\ps}i I{\la}{\G}$, then for all $i\in I$, $\maxP(\G_i)<n$ and $\derPI{}{\confAs{\G_i}{\addMsg{\addMsg{\mq\pp\la\q}{\Msg}}{\mq\pr{\la_i}\ps}}}$ (Rule $\rulename{Out}$). By induction hypotheses we get that $\idepth(\G_i,\CommAsI\pp\la\q)$
is finite for all $i\in I$. Therefore $\idepth(\G,\CommAsI\pp\la\q)$ is finite.\sm

 More precisely $\idepth(\G,\CommAsI\pp\la\q)$
is the number of rule applications between the rule which introduces
$\mq\pp\la\q$ and the conclusion in the derivation of
$\derPI{}{\confAs{\G}{\addMsg{\mq\pp\la\q}{\Msg_1}}}$.\sm

We prove by induction on $\idepth(\G,\pp)$ that
$\G\parN\Msg\stackred{\concat{\comseqA}{\CommAsI\pp\la\q}}\G'\parN\Msg'$.
By Session Fidelity  (\refToTheorem{sfA}) it will follow that
$\Nt\parN\Msg\stackred{\concat{\comseqA}{\CommAsI\pp\la\q}}\Nt'\parN\Msg'$.
Let $id=\idepth(\G,\pp)$.

\medskip\noindent
{\em Case $id=1$}.  Here $\G=\agtI \pp\q \la {\G'}$, which implies
$\G\parN\Msg\stackred{\CommAsI\pp\la\q}\G'\parN\Msg_1$  by Rule $\rulename{\AsIn}$ of \refToFigure{ltsgtAs}.

\smallskip\noindent
{\em Case $id>1$}.  As in the proof of Statement (\ref{pr1}),  by applying Rule $\rulename{\AsOut}$ or Rule
$\rulename{\AsIn}$ of \refToFigure{ltsgtAs} we get

\centerline{$\G\parN\Msg\stackred{\beta}\G''\parN\Msg''$}

\vspace*{1.5mm}\noindent where
$\beta \neq \CommAsI{\pp}{\la}{\q}$ and thus  $
\idepth(\G'',{\CommAsI\pp\la\q})<id$.

\medskip By induction $\G''\parN\Msg''\stackred{\concat{\comseqA}{\CommAsI\pp\la\q}}\G'\parN\Msg'$.
We  conclude that
$\G\parN\Msg\stackred{\concat{\beta}{\concat{\comseqA}{\CommAsI\pp\la\q}}}\G'\parN\Msg'$
 is the required transition sequence.
\end{proof}

The proof of \refToTheorem{pr} shows that the desired transition sequences use only Rules $\rulename{\AsOut}$ and  $\rulename{\AsIn}$ and the output choice is arbitrary. Moreover the lengths of these   transition sequences  are bounded by $\weight(\G,\pp)$ and $\idepth(\G,\CommAsI\pp\la\q)$, respectively.

\section{Event structures}\label{sec:eventStr}

We recall the definitions of \emph{Prime Event Structure} (PES)
from~\cite{NPW81} and \emph{Flow Event Structure} (FES)
from~\cite{BC88a}. The class of FESs is more general than that of
PESs, in that it allows for disjunctive causality and does not require
causality to be a transitive relation. As we shall see in
Sections~\ref{sec:process-ES} and~\ref{sec:netA-ES}, PESs are
sufficient to interpret processes, while we need FESs to interpret
networks. The reader is referred to~\cite{BC91} for a comparison of
the various classes of event structures.

This section is borrowed from~\cite{CDG-LNCS19} and therefore it is
also shared with its extended version~\cite{CDG22}.
\begin{definition}[Prime Event Structure] \mylabel{pes} A {\em prime event structure} {\rm (PES)} is a
    tuple
$S=(E,\leq, \gr)$ where:
\begin{enumerate}
\item \mylabel{pes1} $E$ is a denumerable set of events;
\item \mylabel{pes2}    $\leq\,\subseteq (E\times E)$ is a partial order relation,
called the \emph{causality} relation;
\item \mylabel{pes3}  $\gr\subseteq (E\times E)$ is an irreflexive symmetric relation, called the
\emph{conflict} relation, satisfying the property: $\forall e, e', e''\in E: e \grr e'\leq
e''\Rightarrow e \grr e''$ (\emph{conflict hereditariness}).
\end{enumerate}
\end{definition}

\begin{definition}[Flow Event Structure]\mylabel{fes} A {\em flow event structure} {\rm (FES)}
    is a  tuple $S=(E,\prec,\gr)$ where:
    \begin{enumerate}
 \item\mylabel{fes1} $E$ is a denumerable set of events;
\item\mylabel{fes2} $\prec\,\subseteq (E\times E)$ is an irreflexive relation,
called the \emph{flow} relation;
\item\mylabel{fes3} $\gr\subseteq (E\times E)$ is a symmetric relation, called the
\emph{conflict} relation.
\end{enumerate}
\end{definition}
Note that the flow relation is not required to be transitive, nor
acyclic (its reflexive and transitive closure is just a preorder, not
necessarily a partial order).  Intuitively, the flow relation
represents a possible {\sl direct causality} between two events.
Observe also that in a FES the conflict relation is not required to be
irreflexive nor hereditary; indeed, FESs may exhibit self-conflicting
events, as well as disjunctive causality (an event may have
conflicting causes).

Any PES $S = (E , \leq, \gr)$ may be regarded as a FES, with $\prec$
given by $<$ (the strict ordering) 
or by the covering relation of $\leq$.

\vspace{1.6mm}
We now recall the definition of {\sl configuration}\/ for event
structures. Intuitively, a configuration is a set of events having
occurred at some stage of the computation.   Thus, the semantics
of an event structure $S$ is given by its poset of configurations ordered
by set inclusion, where $\ESet_1 \subset \ESet_2$ means that $S$ may evolve
from $\ESet_1$ to $\ESet_2$.
\begin{definition}[PES configuration] \mylabel{configP}
  Let $S=(E,\leq, \gr)$ be a prime event structure. A
   {\em  configuration} of $S$ is a finite
subset $\ESet$ of $E$
such that:
\begin{enumerate}
\itemsep=0.9pt
    \item \mylabel{configP1} $\ESet$ is downward-closed: \ $e'\leq e \in \ESet \, \ \impl \ \ e'\in \ESet$;
  \item \mylabel{configP2} $\ESet$ is conflict-free: $\forall e, e' \in \ESet, \neg (e \gr
e')$.
\end{enumerate}
\end{definition}

The definition of configuration for FESs is slightly more elaborated.
For a subset $\ESet$ of $E$, let $\prec_\ESet$ be the restriction of
the flow relation to $\ESet$ and $ \prec_\ESet^*$ be its transitive
and reflexive closure.

\begin{definition}[FES configuration]
\mylabel{configF}
Let $S=(E,\prec, \gr)$ be a flow event
structure. A {\em configuration} of $S$ is a finite
subset $\ESet$ of $E$
such that:
\begin{enumerate}
\itemsep=0.9pt
\item\mylabel{configF1} $\ESet$ is downward-closed up to conflicts:
\ $e'\prec e \in \ESet, \ e'\notin \ESet\, \  \impl \
\,\exists \, e''\in \ESet.\,\, e'\gr \,e''\prec e$;
\item\mylabel{configF2} $\ESet$ is conflict-free: $\forall e, e' \in \ESet, \neg (e \gr
e')$;
\item\mylabel{configF3} $\ESet$ has no causality cycles: the relation $ \prec_\ESet^*$  is a partial order.
\end{enumerate}
\end{definition}
Condition (\ref{configF2}) is the same as for prime event
structures. Condition (\ref{configF1}) is adapted to account for the more general
-- non-hereditary -- conflict relation.
It states that any event appears in a configuration with a ``complete
set of causes''.  Condition (\ref{configF3}) ensures that any event in a
configuration is actually reachable at some stage of the computation.

If $S$ is a prime or flow event structure, we denote by $\Conf{S}$ its
set of configurations.  Then, the domain of
  configurations of $S$ is defined as follows:
\begin{definition}[ES configuration domain]
\mylabel{configDom}
Let $S$ be a prime or flow event
structure with set of configurations $\Conf{S}
 $. The \emph{domain of configurations} of $S$ is the partially ordered set
$\CD{S} \eqdef (\Conf{S}, \subseteq)$.
\end{definition}
We recall from~\cite{BC91} a useful characterisation for
configurations of FESs, which is based on the notion of proving
sequence, defined as follows:

\begin{definition}[Proving sequence]
  \mylabel{provseq} Given a flow event structure $S=(E,\prec, \gr)$, a
  \emph{proving sequence} in $S$ is a sequence $\Seq{e_1;
    \cdots}{e_n}$ of distinct non-conflicting events (i.e. $i\not=j\
  \impl\ e_i\not=e_j$ and $\neg(e_i\gr e_j)$ for all $i,j$) satisfying: \medskip

  \centerline{$\forall i\leq n\,\forall e \in E \,: \quad e\prec e_i\
    \ \impl\ \ \exists k<i\,. \ \ \text{ either } \ e = e_k\ \text{ or
    } \ e\grr e_k \prec e_i $}
\end{definition}
Note that any prefix of a proving sequence is itself a proving sequence.
\eject

We have the following characterisation of configurations of FESs
in terms of proving sequences.
\begin{proposition}[Representation of configurations as proving
  sequences~\cite{BC91}]
  \mylabel{provseqchar} Given a flow event structure $S=(E,\prec,
  \gr)$, a subset $\ESet$ of $E$ is a configuration of $S$ if and only
  if it can be enumerated as a proving sequence $\Seq{e_1;
    \cdots}{e_n}$.
\end{proposition}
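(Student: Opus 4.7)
The proof proceeds by establishing the two implications separately. The key insight is that a proving sequence is essentially a linear extension of the flow-within-$\ESet$ relation that respects the way ``missing'' causes are compensated for by conflicts with present events.

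For the forward direction (proving sequence $\Rightarrow$ configuration), I would take $\ESet = \{e_1, \ldots, e_n\}$ enumerated as a proving sequence and verify the three configuration conditions in turn. Condition~(\ref{configF2}) (conflict-freeness) is immediate from the definition of proving sequence. For condition~(\ref{configF1}) (downward-closure up to conflicts), given $e' \prec e_i$ with $e' \notin \ESet$, the proving-sequence clause yields some $k<i$ with either $e' = e_k$ (impossible since $e' \notin \ESet$) or $e' \gr e_k \prec e_i$, and this $e_k \in \ESet$ is the required witness. For condition~(\ref{configF3}) (no causality cycles), the plan is to show that the enumeration index gives a linear extension of $\prec_\ESet$: if $e_i \prec e_j$ with both in $\ESet$, the proving-sequence clause applied to $e_j$ yields $k<j$ with either $e_i = e_k$ (so $i = k < j$ by distinctness) or $e_i \gr e_k$, which is ruled out by conflict-freeness. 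Hence $\prec_\ESet^*$ is contained in the strict order induced by the enumeration and is therefore a partial order.

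For the backward direction (configuration $\Rightarrow$ proving sequence), I would use condition~(\ref{configF3}) to pick any linear extension $e_1, \ldots, e_n$ of the partial order $\prec_\ESet^*$ on the finite set $\ESet$, and then verify the proving-sequence property. Conflict-freeness transfers directly from condition~(\ref{configF2}). For the proving-sequence clause, fix $i$ and $e \prec e_i$, and split on whether $e \in \ESet$: if $e = e_k \in \ESet$, then $e_k \prec_\ESet e_i$, so $k < i$ by the linear extension; if $e \notin \ESet$, then condition~(\ref{configF1}) produces $e'' \in \ESet$ with $e \gr e'' \prec e_i$, and writing $e'' = e_k$ gives $e_k \prec_\ESet e_i$, hence $k < i$ again.

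The only delicate step, and the one I expect to require the most care, is the verification of condition~(\ref{configF3}) in the forward direction: one must rule out that $\prec_\ESet^*$ contains a nontrivial cycle by carefully combining the proving-sequence clause with conflict-freeness, as sketched above. Everything else is a direct unwinding of definitions, and the whole argument is essentially a structural correspondence between enumerations respecting $\prec_\ESet$ (modulo conflict witnesses) and proving sequences.
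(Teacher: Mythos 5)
Your proof is correct, but note that the paper does not actually prove this proposition: it is imported verbatim from~\cite{BC91}, so there is no in-paper argument to compare against. Your two directions --- reading off the witness for downward-closure-up-to-conflicts from the proving-sequence clause and showing that the enumeration index linearises $\prec_\ESet$ (whence acyclicity) for one implication, and taking a linear extension of $\prec_\ESet^*$ and case-splitting on $e \in \ESet$ versus $e \notin \ESet$ for the other --- are exactly the standard argument, and every step checks out against Definitions~\ref{configF} and~\ref{provseq}.
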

Since PESs may be viewed as particular FESs, we may use
\refToDef{provseq} and \refToProp{provseqchar} both for the FESs
associated with networks (see Section~\ref{sec:netA-ES}) and for the
PESs associated with asynchronous global types (see
\refToSection{sec:eventsA}).  Note that for a PES
the condition of \refToDef{provseq} simplifies to

\smallskip
\centerline{$\forall i\leq n\,\forall e \in E \,: \quad e < e_i\ \
  \impl\ \ \exists k<i\,. \ \ e = e_k $}

\section{Event structure semantics of processes}\mylabel{sec:process-ES}

In this section, we present our ES semantics for processes and show
that the obtained ESs are PESs.  This semantics coincides with the one
given for processes of synchronous networks in our previous
work~\cite{CDG-LNCS19}. Indeed, a design choice of our ES semantics is
to implement asynchrony at the level of networks, and not at the level
of processes. Thus, processes are agnostic with respect to the
communication mode of the network, and sequentiality between their
actions is always interpreted as causality.

\medskip
A \emph{process event}, namely an event in the ES of a process, is an
occurrence of a send or receive action preceded by its \emph{causal
  history}, i.e., by the sequence of actions that caused that
occurrence in the process. Therefore, different occurrences of the
same send or receive action in the process give rise to different
process events in the associated ES.  Indeed, process events
correspond to \emph{paths} in the syntactic tree of a process.

Our ES semantics for processes will be the basis for defining the ES
semantics for networks in~\refToSection{sec:netA-ES},
which will reflect, as expected, the  asynchronous nature of
communication.

We start by introducing process events, which
are non-empty sequences of  atomic actions $\pi$ as defined
at the beginning of \refToSection{sec:calculus}.

\begin{definition}[Process event]
  \mylabel{proceventP}
{\em Process events} ({\em p-events} for short) $\procev,
  \procev'$ are defined by: \sm

\centerline{$\procev \ \quad ::= \pi ~~ \mid~~
    \pastpref{\pi}{\procev}
      $}
\noindent We denote by $\Procev$ the set of p-events.
\end{definition}
\noindent
 Note the difference with the sequences $\vv\pi$ used in
\refToFigure{fig:projAsP}, where actions are separated by ``;''.

\smallskip
\mylabel{actseq} Let $\actseq$ denote a (possibly empty) sequence of
actions, and $\sqsubseteq$ denote the prefix ordering on such
sequences.  Each p-event $\procev$ may be written either in the form
$\procev = \concat {\pi}{\actseq}$ or in the form $\procev =
\concat{\actseq}{\pi}$. We shall feel free to use any of these forms.
When a p-event is written as $\procev = \concat{\actseq}{\pi} $, then
$\actseq$ may be viewed as the \emph{causal history} of $\procev$,
namely the sequence of actions that must have been executed by the
process for $\procev$ to be able to happen.

\medskip
We define the \emph{action} of a p-event to be
its last atomic action: \sm

 \centerline{$
\act{\concat{\actseq}{\pi}} = \pi$}

\noindent A p-event $\procev$ is an  {\em output p-event} if
$\act{\procev}$ is an output  
and an {\em input p-event}
if $\act{\procev}$ is an input.  

\begin{definition}[Causality and conflict relations on p-events]
\mylabel{procevent-relations}
The \emph{causality} relation $\leq$ and the \emph{conflict} relation
$\gr$ on the set of p-events $\Procev$ are defined by:
\begin{enumerate}
\item \mylabel{ila--esp2}
$\procev \sqsubseteq \procev' \ \impl \ \procev\precP\procev'$;
\item \mylabel{ila--esp3}
  $\pi\neq\pi' \impl
  \concat{\concat{\actseq}{\pi}}{\actseq'}\,\grr\,\,\concat{\concat{\actseq}{\pi'}}{\actseq''}$.
\end{enumerate}
\end{definition}

\begin{definition} [Event structure of a process] \mylabel{esp}
The {\em event structure of process} $\PP$ is the triple \vspace*{1.8mm}

 \centerline{$\ESP{\PP} = (\ES(\PP), \precP_\PP , \gr_\PP)$}
 \noindent where:
\begin{enumerate}
\itemsep=0.95pt
\item \mylabel{ila-esp1} $\ES(\PP) \subseteq \Procev $ is the set of
  sequences of decorations along the nodes and edges of a
  path from the root to an edge in the tree of $\PP$;
\item \mylabel{ila-esp2}
$\precP_\PP$ is the restriction of
  $\precP$ to the set
$\ES(\PP)$;
\item \mylabel{ila-esp3}
$\gr_\PP$ is the restriction of
  $\grr$ to the set
$\ES(\PP)$.
\end{enumerate}
\end{definition}

In the following we shall feel free to drop the subscript in
$\precP_\PP$ and $\gr_\PP$.

Note that the set $\ES(\PP)$ may be denumerable, as shown by the following example.
\begin{example}\mylabel{ex:rec1}
If  $\PP=\sendL{\q}\la;\PP \oplus\sendL{\q}{\la'},$
then\vspace*{2.2mm}

\centerline{$ \ES(\PP)  = \begin{array}[t]{l}
\set{\underbrace{\sendL{\q}\la\cdot\ldots\cdot\sendL{\q}\la}_n \mid
  n\geq 1}  ~~~~ \cup ~~~~
\set{\underbrace{\sendL{\q}\la\cdot\ldots\cdot\sendL{\q}\la}_n\cdot\sendL{\q}{\la'}\mid
  n\geq 0}
\end{array} $}
\end{example}

We conclude this section by  stating  
that the ESs of processes are
PESs. The proof is easy and may be found in~\cite{CDG22}.

\begin{proposition}\mylabel{basta10}
 Let $\PP$ be a process. Then $\ESP{\PP}$ is a prime event
  structure with an empty concurrency relation.
\end{proposition}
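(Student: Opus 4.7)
The plan is to verify the three defining conditions of Definition~\ref{pes} for the triple $\ESP{\PP} = (\ES(\PP), \precP_\PP, \gr_\PP)$ and then to prove that any two distinct events are either causally related or in conflict, which amounts to saying that the concurrency relation is empty. Throughout, I will exploit the fact that the events in $\ES(\PP)$ are nothing but finite sequences of atomic actions, so that both $\precP$ and $\gr$ are completely determined by the common-prefix structure of such sequences.

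First, denumerability of $\ES(\PP)$ follows from regularity of $\PP$ and finiteness of each choice index set $I$: the tree of $\PP$ has finite branching at every node and therefore at most denumerably many root-to-edge paths. Next, that $\precP_\PP$ is a partial order is immediate from Definition~\ref{procevent-relations}(\ref{ila--esp2}), since it coincides with the prefix ordering $\sqsubseteq$ on finite sequences, which is reflexive, antisymmetric and transitive. As for $\gr_\PP$, symmetry follows from the symmetry of $\neq$ in Definition~\ref{procevent-relations}(\ref{ila--esp3}), while irreflexivity holds because a self-conflict would force two distinct actions $\pi \neq \pi$ at the same position.

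The first genuinely structural step is conflict hereditariness. Given $\procev \gr \procev' \precP \procev''$, I would write $\procev = \concat{\concat{\actseq}{\pi}}{\actseq_1}$ and $\procev' = \concat{\concat{\actseq}{\pi'}}{\actseq_2}$ with $\pi \neq \pi'$, use $\procev' \sqsubseteq \procev''$ to obtain $\procev'' = \concat{\procev'}{\actseq_3}$ for some $\actseq_3$, and conclude that $\procev$ and $\procev''$ still diverge at the common prefix $\actseq$ with the same differing actions $\pi, \pi'$, whence $\procev \gr \procev''$ by Definition~\ref{procevent-relations}(\ref{ila--esp3}).

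Finally, for emptiness of the concurrency relation: given distinct $\procev, \procev' \in \ES(\PP)$, if neither is a prefix of the other then they share a longest common prefix $\actseq$ strictly shorter than both, so their next actions $\pi, \pi'$ must differ, witnessing $\procev \gr \procev'$; otherwise one is a prefix of the other and the two are related by $\precP$ or its converse. No step is really hard; the only point requiring some care is, in conflict hereditariness, to reuse the same witness prefix $\actseq$ and the same differing actions $\pi, \pi'$ rather than generating new ones.
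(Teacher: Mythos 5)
Your proof is correct and follows essentially the same route as the paper's: the partial-order properties of $\precP$ are inherited from $\sqsubseteq$, irreflexivity and symmetry of $\gr$ come from Clause~(\ref{ila--esp3}), and conflict hereditariness is shown by reusing the same witness prefix $\actseq$ and differing actions $\pi,\pi'$ when extending $\procev'$ to $\procev''$. The only difference is that you also spell out the longest-common-prefix argument for the emptiness of the concurrency relation, a point the paper's proof leaves implicit; your argument for it is sound.
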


\section{Event structure semantics of networks}\mylabel{sec:netA-ES}

We present now the ES semantics of networks, which is grounded
on that of processes.

\medskip
Network events are simply process events located at some participant
of the network. As we are considering asynchronous communication,
matching output and input events are not paired together to yield a
single network event, but instead they are kept separate, with the
first representing the enqueuing of a message in the queue, and the
second the dequeuing of a message from the queue.  The event structure
of a network has to take into account the fact that the queue may
already contain some messages. In an asynchronous setting, output
events can always happen, provided that all events that causally
precede them have already been executed. Input events, on the other
side, must wait for the expected message to have been enqueued by the
sender.

This asymmetry is reflected in the definition of the {\em narrowing
  function} (\refToDef{nr}), which we use - as in our companion
paper~\cite{CDG22} dealing with synchronous communication - to
restrict the set of potential network events by discarding those that
are not causally well-founded. In the present case, narrowing will
keep all output events (whose predecessors have not been discarded),
while it will keep only those input events that are ``justified'' by
an output event (\refToDef{qgne}), or whose expected message in
already on the queue.  To check that an output event (send of
participant $\pp$ towards $\q$) justifies an input event (receive of
participant $\q$ from $\pp$), since messages sent by $\pp$ must be
read by $\q$ in the order they were sent, we look at the histories of
the two events to check their ``duality'' (Clause~(\ref{c1AbP1})
\refToDef{netaevent-relations}).  As discussed before \refToDef{pswd},
this notion of duality is more flexible than the standard one (which
matches a send in one participant with a receive in the other),
due to the use of a preorder that extends the standard duality check
by allowing output actions to be anticipated over input actions as
in~\cite{Mostrous2009}.

We start by defining the {\em o-trace of a queue} $\Msg$, notation
$\osq\Msg$, which is the sequence of output communications
corresponding to the messages in the queue. We use $\os$ to range over
o-traces.
\begin{definition}[o-trace]\label{qos}
The {\em o-\traceS corresponding to a queue} is defined by \sm
\\
\centerline{
$\osq{\emptyset }= \ee\qquad\osq{\addMsg{\mq\pp{\la}\q}{\Msg}}=
\concat{\CommAs\pp{\la}\q}{\osq{\Msg}}$}
\end{definition}

\noindent O-\tracesS are considered modulo the following equivalence
$\cong$, which mimics the structural equivalence on queues.\sm

\begin{definition}[o-trace equivalence $\cong$]\mylabel{def:permEqA2}
The {\em equivalence} $\cong$ on o-\tracesS is the least equivalence such that \sm

\centerline{$\concat{\concat{\concat{\os}{\CommAs\pp\la\q}}{\CommAs\pr{\la'}\ps}}{\os'}\,\cong\,\,
  \concat{\concat{\concat{\os}{\CommAs\pr{\la'}\ps}}{\CommAs\pp\la\q}}{\os'}
  ~~~\text{if}~~~ \pp\not=\pr$ or $\q\not=\ps$
}
\end{definition}

\noindent
Network events are p-events associated with a participant.\vspace*{-3mm}

\begin{definition}[Network event]
\label{proceventNA} \hbox{}\hfill\vspace*{-7mm}

\begin{enumerate}
\item\label{proceventNA1} {\em Network events} $\netevA,
\netevA'$, also called {\em n-events}, are p-events located at some participant $\pp$,
written $\locevA\pp\os\procev$.
\item\label{proceventNA2}  We define
$\io\netevA=\begin{cases}
\CommAs\pp{\la}\q      & \text{if
}\netevA=\locevA{\pp}\os{\concat\actseq{\sendL\q\M}}
\\
\CommAsI\pp{\M}\q      & \text{if }\netevA=\locevA{\q}\os{\concat{\actseq}{\rcvL\pp\M}}\\
\end{cases}$ \sm

\noindent
and we say that $\netevA$   is an {\em output n-event}
representing the communication $\CommAs\pp{\la}\q$ or an {\em input n-event} representing  the
communication $\CommAsI\pp{\M}\q$,
respectively.
\item We denote by $\DEA$ the set  of n-events.
\end{enumerate}
 \end{definition}

\noindent In order to define the flow relation between an
output n-event $\locevA{\pp}\os{\concat\actseq{\sendL\q\M}}$ and the
matching input n-event $\locevA{\q}\os{\concat{\actseq}{\rcvL\pp\M}}$,
we introduce a duality relation on projections of action sequences, see \refToDef{pswd}.
We first define the projection of
traces on participants, producing action sequences
(\refToDef{pep}(\ref{pep1})), and then the projection of action
sequences on participants, producing sequences of \emph{undirected
  actions} of the form  $!\la$ and $?\la$  (\refToDef{pep}(\ref{pep2})).

\medskip
In the sequel, we will use the symbol $\dagger$ to stand for
either $!$ or $?$. Then
$\dagL{\pp}{\la}$ will stand for either $\sendL{\pp}{\la}$ or
$\rcvL{\pp}{\la}$. Similarly, $\dagL{}{\,\la}$ will stand for either
$\sendL{}{\la}$ or $\rcvL{}{\la}$.

\eject
\hbox{}
\vspace*{-14mm}

\begin{definition}[Projections]\label{pep}  \hbox{}\hfill\vspace*{-7mm}

\begin{enumerate}
\item\label{pep1}
The {\em projection of a trace on a participant} is defined by:\sm

\centerline{
$
\projAP{\emptyseq}\pr= \emptyseq \qquad\qquad
\projAP{(\concat{\beta}{\comseqA})}{\pr} =
\begin{cases}
\concat{\sendL{\q}{\la}}{\projAP{\comseqA}{\pr}}   & \text{if  } \beta
= \CommAs{\pr}{\la}{\q}  \\ 
\concat{\rcvL{\pp}{\la}}{\projAP{\comseqA}{\pr}}   & \text{if  } \beta
=  \CommAsI{\pp}{\la}{\pr}\\ 
  \projAP{\comseqA}{\pr}  & \text{otherwise}
\end{cases}
$}
\item\label{pep2}The {\em projection of an action sequence on a participant} is defined by:\sm

\centerline{
$
\projs{\emptyseq}\pr= \emptyseq \qquad\qquad
\projs{(\concat{\pi}{\actseq})}{\pr} =
\begin{cases}
\concat{\dagL{}{\,\la}}{\projs{\actseq}{\pr}}   & \text{if  } \pi =
\dagL{\pr}{\la} \\
  \projs{\actseq}{\pr}  & \text{otherwise}
\end{cases}
$}
\end{enumerate}
\end{definition}
We use $\preEv$ to range over sequences of output actions and
 $\bms$ to range over sequences of undirected actions.

\medskip
 We now introduce a variant of the
 standard duality relation  on sequences of undirected
 actions.  This relation is meant to compare the sequences of actions
 of two participants communicating with each other,  in order to
 check that they match with each other.  In a synchronous setting,
 these sequences would be  required to be  in the standard
 symmetric duality relation $\Join$,  as  defined
 in Clause~(\ref{def:bowtie})  of \refToDef{pswd},  with an
 input  of  a label matching an output  of  the same label and
 viceversa. In our asynchronous setting,  the duality relation
 needs to reflect the fact, first observed in
 \cite{Mostrous2009}, that it is  ``better'' to anticipate
 outputs. To this 
 end,
 in Clause~(\ref{def:precsim}) of
 \refToDef{pswd}  we define a partial order on sequences of
 actions
 whereby a sequence
is better  than (i.e., less than or equal to) another one if it
anticipates outputs over inputs.
Finally,
to compare the sequences of actions of two participants communicating
with each other,
 in
Clause~(\ref{def:weak-duality}) of
\refToDef{pswd}  we define our weak duality relation
$_{\precapprox}\!\Join_{\,\succapprox}$,
a symmetric relation 
that relates two sequences which are both less than or equal to
two sequences related by $\Join$.
In this way, we  allow outputs to be anticipated in both sequences.

 \begin{definition} [Partial order and duality relations on undirected action sequences]
 \mylabel{pswd}
The three relations  $\Join$, $\precapprox$ and $_{\precapprox}\!\Join_{\,\succapprox}$
on undirected action sequences
are defined as follows:
\begin{enumerate}
\item \mylabel{def:bowtie}
 The  {\em duality}  relation $\Join$  on
undirected action sequences
is defined
   by: \sm

\centerline{$
\dualA{\ee}{\ee} \qquad\qquad\qquad
\dualA\bms{\bms'}\quad \impl\quad  \dualA{{\sendL{}{\la}} . \bms}
{ { \rcvL{}{\la}} . \bms'} \text {and }
\dualA{{\rcvL{}{\la}} . \bms}
{ { \sendL{}{\la}} . \bms'}   $}

 \item\mylabel{def:precsim}   The {\em partial order} relation  $\precapprox$  on undirected action sequences
 is defined
   as the smallest partial order such that: \sm

\centerline{$
     \pastpref{\pastpref\bms{\sendL{}{\la}}}{\pastpref{\rcvL{}{\la'}}{\bms'}}\precapprox
     \pastpref{\pastpref\bms{\rcvL{}{\la'}}}{\pastpref{\sendL{}{\la}}{\bms'}}
     $}
 \item    \mylabel{def:weak-duality}
   The  {\em weak duality}  relation $_{\precapprox}\!\Join_{\,\succapprox}$  on
undirected action sequences
is defined    by: \sm

\centerline{$
\dualprecsim{\bms_1}{\bms_2} \qquad\text{if} \qquad
\dualA{\bms'_1}{\bms'_2}$  \ for
  some $\bms'_1, \bms'_2$ such that $\bms_1 \precapprox
\bms'_1$ and $\bms_2 \precapprox \bms'_2$}
\end{enumerate}
\end{definition}
For example
$\pastpref{\sendL{}{\la_1}}{\pastpref{\sendL{}{\la_3}}{\pastpref{\rcvL{}{\la_2}}{\rcvL{}{\la_4}}}}
\precapprox\pastpref{\sendL{}{\la_1}}{\pastpref{\rcvL{}{\la_2}}{\pastpref{\sendL{}{\la_3}}{\rcvL{}{\la_4}}}}$
and
$\pastpref{\sendL{}{\la_2}}{\pastpref{\sendL{}{\la_4}}{\pastpref{\rcvL{}{\la_1}}{\rcvL{}{\la_3}}}}
\precapprox\pastpref{\rcvL{}{\la_1}}{\pastpref{\sendL{}{\la_2}}{\pastpref{\rcvL{}{\la_3}}{\sendL{}{\la_4}}}}$
and
$\pastpref{\sendL{}{\la_1}}{\pastpref{\rcvL{}{\la_2}}{\pastpref{\sendL{}{\la_3}}{\rcvL{}{\la_4}}}}\Join\pastpref{\rcvL{}{\la_1}}{\pastpref{\sendL{}{\la_2}}{\pastpref{\rcvL{}{\la_3}}{\sendL{}{\la_4}}}}$
imply
$\dualprecsim{\pastpref{\sendL{}{\la_1}}{\pastpref{\sendL{}{\la_3}}{\pastpref{\rcvL{}{\la_2}}{\rcvL{}{\la_4}}}}}{\pastpref{\sendL{}{\la_2}}{\pastpref{\sendL{}{\la_4}}{\pastpref{\rcvL{}{\la_1}}{\rcvL{}{\la_3}}}}}$.

\smallskip
We may now define the flow and conflict relations on n-events.  Notably the flow relation is parametrised on  an  o-trace representing the queue.

\begin{definition}[$\os$-flow and conflict relations on n-events]
\mylabel{netaevent-relations}
The \emph{$\os$-flow} relation $\precN$ and the \emph{conflict} relation
$\gr$ on the set of n-events $\DEA$ are defined by:
\begin{enumerate}
\item\mylabel{c1A}
\begin{enumerate}
\item\mylabel{c1Aa} $\procev  <  \procev' \impl  \locevA{\pp}\os{\procev}\precN \locevA{\pp}{\os}{\procev'}$;
\item\mylabel{c1AbP1}
$\dualprecsim{\projs{(\concat{\projAP{\os}{\pp}}{{\actseq}})}{\q}}{\projs{(\concat{\projAP{\os}{\q}}{{\actseq''}})}{\pp}}$
and $\projs{(\concat{\actseq'}{\rcvL\pp\la})} \pp\precapprox  \projs{(\concat{\concat{\actseq''}{\rcvL\pp\la}}{\preEv})}\pp$ for some $\actseq''$ and $\preEv$
$\impl \locevA\pp\os{\concat\actseq{\sendL\q\la}}\precN\locevA\q{\os}{\concat{\actseq'}{\rcvL\pp\la}}$;
\end{enumerate}
\item\mylabel{c2A}
  $ \procev \grr \procev' \impl  \locevA{\pp}\os{\procev}\grr \locevA{\pp}{\os}{\procev'}$.
\end{enumerate}
 \end{definition}

 \noindent
 Clause~(\ref{c1Aa}) defines flows within the same ``locality'' $\pp$,
 which we call \emph{local flows}, while Clause~(\ref{c1AbP1})
 defines flows between different localities, which
 we call \emph{cross-flows}: these are flows between an output
of $\pp$ towards $\q$
and the corresponding input of $\q$ from $\pp$.
 The condition
 in Clause~(\ref{c1AbP1}) expresses a sort of
  relaxed duality  between the
 history of the output and the history of the input: the intuition is
 that if $\q$ has some outputs towards $\pp$
 occurring in $\actseq'$, namely before its input $\rcvL\pp\la$,
 then when checking for duality these outputs
can be moved after $\rcvL\pp\la$, namely in $\preEv$,
because $\q$ does not need to wait until $\pp$
 has consumed these outputs to perform its input $\rcvL\pp\la$.  This
 condition can be seen at work in Examples~\ref{sync-async-characteristic-example-double} and ~\ref{asynchronous-network-with-choice}.

\medskip
 The reason for parametrising the flow relation with an o-trace $\os$
 is that the cross-flow relation depends on $\os$, which in the FES of
 a network $\Nt\parN\Msg$ will be  the image
 through  the mapping $\sf otr$ (see Definition \ref{qos})  of the queue
 $\Msg$.

\medskip
 For example, we have a cross-flow $\netevA \precN \netevA' $
 between the following n-events \sm

 \centerline{$\netevA = {\locevA\pp{\os}{\pastpref{\pastpref{\rcvL{\pr}{\la_4}}{\rcvL{\q}{\la_3}}}{\sendL{\q}{\la}}}}{}
  \, \precN \,
   {\locevA\q{\os}{\pastpref{\sendL{\pp}{\la'}}{\pastpref{\pastpref{\rcvL{\pp}{\la_1}}{\rcvL{\pp}{\la_2}}}{\rcvL{\pp}{\la}}}}}
   = \netevA' $}
   where \sm

 \centerline{$\os=\concat{\concat{\CommAs\pp{\la_1}\q}{\CommAs\pp{\la_2}\q}}{\concat{\CommAs\q{\la_5}\ps}{\CommAs\q{\la_3}\pp}}$ }

\sm since in this case
 $\actseq=\pastpref{\rcvL{\pr}{\la_4}}{\rcvL{\q}{\la_3}}$ and
$\actseq'=\pastpref{\sendL{\pp}{\la'}}{\pastpref{\rcvL{\pp}{\la_1}}{\rcvL{\pp}{\la_2}}}$,
and thus,  taking $\actseq'' =
 \concat{\rcvL{\pp}{\la_1}}{\rcvL{\pp}{\la_2}}$ and  $\preEv=
 \sendL{\pp}{\la'}$, we  obtain \sm

 \centerline{
$\projs{(\concat{\pro\os\pp}{\actseq})}{\q} =
\; \, \concat{\concat{!\la_1}{!\la_2}}{?\la_3} \precapprox \;
\,
{\concat{?\la_3}{\concat{!\la_1}{!\la_2}}} \ \Join \ {\concat{!\la_3}{\concat{?\la_1}{?\la_2}}}
= \projs{(\concat{\pro\os\q}{\actseq''})}{\pp}$
}
and \sm

\centerline{
 $\projs{(\concat{\actseq'}{\rcvL\pp\la})} \pp = \; \,
 \concat{!\la'}{\concat{\concat{?\la_1}{?\la_2}}{?\la}}
\precapprox \; \,
\concat{\concat{\concat{?\la_1}{?\la_2}}{?\la}}{!\la'} =
 \projs{(\concat{\concat{\actseq''}{\rcvL\pp\la}}{\preEv})}\pp$
}

\medskip\noindent
 When $\netevA=\locevA\pp\os\te \precN \locevA\q\os{\te'}=\netevA'$ and $\pp
\neq \q$, then by definition $\rho$ is an output and $\rho'$ is an
input. In this case we say that the output $\rho$ \emph{$\os$-justifies} the
input $\rho'$, or symmetrically that the input $\rho'$ is
  $\os$-justified by the output $\rho$.
An input   n-event may also be justified by a message in the queue.
Both justifications are formalised by the following definition.\vspace*{-2mm}

\begin{definition}[Justifications of n-events]\label{qgne} \hbox{}\hfill\vspace*{-7mm}

\begin{enumerate}
\item\label{qgne1}The input   n-event $\netevA$ is {\em $\os$-justified} by
  the output   n-event $\netevA'$ if  $\netevA'\precN \netevA$
   and they are located at different participants.
\item\label{qgne2}
 The input  n-event
$\netevA=\locevA\q{(\concat{\concat\os{\CommAs\pp\la\q}}{\os'})}{\concat{\actseq}{\rcvL\pp\la}}$
is {\em $\os$-queue-justified} if there exists $\os'$ such that
$\concat{\os'}{\CommAs\pp\la\q}$ is a prefix of $\os$ (modulo $\cong$) and
$\locevA{\pp}{}{\concat{(\projAP{\os'}{\pp})}{\sendL\q\la}}\precNL{\epsilon} \netevA$.
\end{enumerate}
\end{definition}
\noindent
The condition
$\locevA{\pp}{}{\concat{(\projAP{\os'}{\pp})}{\sendL\q\la}}\precNL{\epsilon}
\netevA$ ensures that the inputs from $\pp$ in $\actseq$ will consume exactly the
messages from $\pp$ to $\q$ in the queue $\os'$.  For example, if
$\os=\concat{\CommAs\pp{\la}\q}{\CommAs\pp{\la}\q}$, then both
$\locevA\q {\os}{\pastpref{\sendL{\pp}{\la'}}{\rcvL{\pp}{\la}}}$ and
$\locevA\q
{\os}{\pastpref{\pastpref{\sendL{\pp}{\la'}}{\rcvL{\pp}{\la}}}{\rcvL{\pp}{\la}}}$
are $\os$-queue-justified.  On the other hand, if
$\os=\CommAs\pp{\la}\q$, then $\locevA\q {\os}{\pastpref{\sendL{\pp}{\la'}}{\rcvL{\pp}{\la}}}$ is
$\os$-queue-justified, but
$\locevA\q
{\os}{\pastpref{\pastpref{\sendL{\pp}{\la'}}{\rcvL{\pp}{\la}}}{\rcvL{\pp}{\la}}}$  is not
$\os$-queue-justified.

\medskip
To define the set of n-events associated with a network, we filter the
set of all its potential n-events by keeping only
\begin{itemize}
\item those n-events whose constituent p-events have all their
  predecessors appearing in some other n-event of the network and
\item those input n-events that are either queue-justified or justified by
output n-events of the network.
 \end{itemize}

\begin{definition}[Narrowing]\label{nr}
Given a set $E$ of n-events and an o-trace $\os$, we define the {\em narrowing} of
 $E$ with respect to $\os$ (notation $\nr {E,\os}$)
as the greatest fixpoint of the function $f_{E,\os}$ on sets of
n-events defined by: \sm

\centerline{$\begin{array}{lll} f_{E,\os}(X) &= &\{\netevA\in E \mid \netevA=
     \locevA{\pp}\os{\concat{\procev}{\pi}}
     ~\Rightarrow
     \locevA{\pp}{\os}\procev\in X\text { and }\\
    &&~~ (\,\netevA \text{ is an input n-event~$\Rightarrow$~
      $\netevA$ is either $\os$-queue-justified \qquad\quad}\\&&\hfill  \text{or $\os$-justified by some }\netevA'\in
     X \,) \}
    \end{array} $}
    \end{definition}

    \noindent
Thus, $\nr{E,\os}$ is the greatest set $X\subseteq E$
such that $X = f_{E,\os}(X)$.

\medskip
Note that we could not have taken $\nr{E,\os}$ to be the least
fixpoint of $f_{E,\os}$ rather than its greatest fixpoint.
Indeed, the least fixpoint of $f_{E,\os}$
would  be the empty set.

 It is easy to verify that  the  n-events  which
are discarded
by the narrowing while their local predecessors are not  discarded
must be input events. More precisely:

\begin{fact}\label{f}
 If $\netevA\in E$ and $\netevA\not\in \nr{E,\os}$ and either
$\netevA=\locevA\pp{}{\pi}$ or $\netevA=\locevA\pp{}{\procev\cdot\pi}$
with $\locevA\pp{}\procev\in \nr{E,\os}$,
then $\netevA$ is an input event.\vspace*{-1mm}
\end{fact}

We have now enough machinery to define the ES of networks.

\begin{definition}[Event structure of a network]
  \mylabel{netev-relationsA}
The {\em event structure of the
    network} $\Nt\parallel\Msg$
  is the triple: \sm

  \centerline{$\ESNA{\Nt\parallel\Msg} = (\GEA(\Nt\parallel\Msg),
    \precN_{\Nt\parallel\Msg} , \grr_{\Nt\parallel\Msg})$}

  \noindent where   $\os=\osq{\Msg}$ and

\begin{enumerate}
\item\mylabel{netev-relations1A} $\GEA(\Nt\parallel\Msg)
  =\nr{\DE(\Nt),\os}$,  where
  $\DE(\Nt)=\set{\locevA{\pp}{\os}{\procev}\sep \procev \in
    \ESA(\PP)\text{ with } \pP\pp\PP\in\Nt}$;
\item\mylabel{c1AA}$\precN_{\Nt\parallel\Msg}$ is the restriction of $\precN$ to the set $\GEA(\Nt\parallel\Msg)$;
\item\mylabel{c2AA} $\gr_{\Nt\parallel\Msg}$ is the restriction of $\gr$ to the set $\GEA(\Nt\parallel\Msg)$.
\end{enumerate}
\end{definition}

The following example shows how the operation of narrowing
prunes the set of potential n-events of a network ES. It also
illustrates the interplay between the two conditions in the definition of
narrowing.

\begin{example}\label{n}
 Consider the network $\Nt\parN\emptyset$,
  where   $\Nt=\pP\pp{\rcvL\q\la;\sendL\pr{\la'}}\parN\pP\pr{\rcvL\pp{\la'}}$.
The set of potential  n-events of $\ESNA{\Nt\parallel\emptyset}$ is
$\set{\locevAS\pp\ee{\rcvL\q\la},\locevAS\pp\ee{\rcvL\q\la;\sendL\pr{\la'}},\locevAS\pr\ee{\rcvL\pp{\la'}}}$. The
 n-event $\locevAS\pp\ee{\rcvL\q\la}$ is cancelled, since it is neither
 $\emptyset$-queue-justified nor $\emptyset$-justified  by another  n-event of  the ES.
Then $\locevAS\pp\ee{\rcvL\q\la;\sendL\pr{\la'}}$ is cancelled since it
lacks  its  predecessor $\locevAS\pp\ee{\rcvL\q\la}$. Lastly
$\locevAS\pr\ee{\rcvL\pp{\la'}}$ is cancelled, since it is neither
$\emptyset$-queue-justified nor $\emptyset$-justified by another n-event of  the ES.
Notice that  $\locevAS\pp\ee{\rcvL\q\la;\sendL\pr{\la'}}$ would have $\emptyset$-justified
$\locevAS\pr\ee{\rcvL\pp{\la'}}$, if it had not been cancelled.
We conclude that $\GEA(\Nt\parallel\emptyset)=\emptyset$.
\end{example}

The following two examples illustrate the definitions given in this section.

\begin{example}
\mylabel{sync-async-characteristic-example-double}
Consider  the ES associated with the network  $\Nt \parallel \emptyset$,
with
\[
\centerline{$\Nt=\pP{\pp}{\sendL{\q}\la;\rcvL{\q}{\la'};\sendL{\q}\la;\rcvL{\q}{\la'}
  } \parN \pP{\q}{\sendL{\pp}{\la'};\rcvL{\pp}\la ;
    \sendL{\pp}{\la'};\rcvL{\pp}\la } $}
\]
The  n-events  of $\ESNA{\Nt\parallel\emptyset}$ are:\sm

 \centerline{$\begin{array}{lcl}
\netevA_1 = \locev{\pp}{\sendL{\q}\la} & \qquad & \netevA'_1 = \locev{\q}{\sendL{\pp}\la'}\\[-0.5pt]
 \netevA_2 = \locev{\pp}{\concat{\sendL{\q}\la}{\rcvL{\q}\la'}}
 & \qquad & \netevA'_2 = \locev{\q}{ \concat{\sendL{\pp}\la'}{\rcvL{\pp}\la}}\\[-0.5pt]
\netevA_3 = \locev{\pp}{
  \concat{\sendL{\q}\la}{\concat{\rcvL{\q}\la'}{\sendL{\q}\la}}}
&\qquad & \netevA'_3 = \locev{\q}{ \concat{\sendL{\pp}\la'}{\concat{\rcvL{\pp}\la}{\sendL{\pp}\la'}}}\\[-0.5pt]
\netevA_4 = \locev{\pp}{
  \concat{\sendL{\q}\la}{\concat{\rcvL{\q}\la'}{\concat{\sendL{\q}\la}{\rcvL{\q}\la'}}}}
& \qquad &  \netevA'_4 = \locev{\q}{
  \concat{\sendL{\pp}\la'}{\concat{\rcvL{\pp}\la}{\concat{\sendL{\pp}\la'}{\rcvL{\pp}\la}}}}
\end{array}
$}\sm

\noindent The $\ee$-flow relation is given by  the cross-flows  $ \netevA_1
\precE \netevA'_2 , \netevA_3 \precE \netevA'_4 , \netevA'_1 \precE
\netevA_2 , \netevA'_3 \precE \netevA_4 $,  as well as by the
local flows
$ \netevA_i\precE \netevA_j$ and $ \netevA'_i \precE \netevA'_j\,$ for all $i,j$
such that $i \in \set{1,2,3}$, $j \in \set{2,3,4}$ and $i < j$.  The
conflict relation is empty.

\medskip
The configurations of $\ESNA{\Nt\parallel\emptyset}$ are:\sm

 \centerline{$\begin{array}{l}
 \set{\netevA_1}\quad \set{\netevA'_1}\quad \set{\netevA_1,\netevA'_1}
\quad \set{\netevA_1,\netevA'_1,\netevA_2}\quad
  \set{\netevA_1,\netevA'_1,\netevA'_2} \quad
 \set{\netevA_1,\netevA'_1,\netevA_2, \netevA'_2} \\[-0.5pt]
\set{\netevA_1,\netevA'_1,\netevA_2,\netevA_3} \quad \set{\netevA_1,\netevA'_1,\netevA'_2,\netevA'_3}
  \quad \set{\netevA_1,\netevA'_1,\netevA_2, \netevA'_2, \netevA_3} \quad
  \set{\netevA_1,\netevA'_1, \netevA_2, \netevA'_2,\netevA'_3} \\[-0.5pt]
  \set{\netevA_1,\netevA'_1, \netevA_2, \netevA'_2, \netevA_3, \netevA'_3} \quad
  \set{\netevA_1,\netevA'_1, \netevA_2, \netevA'_2, \netevA_3,
    \netevA'_3, \netevA_4} \\ \set{\netevA_1,\netevA'_1, \netevA_2, \netevA'_2, \netevA_3,
    \netevA'_3, \netevA'_4} \quad \set{\netevA_1,\netevA'_1, \netevA_2, \netevA'_2, \netevA_3,
    \netevA'_3, \netevA_4, \netevA'_4}
 \end{array}
$}\vspace{1.8mm}

\noindent
The network  $\Nt \parallel \emptyset$  can evolve in two steps to the network: \vspace{1.5mm}

\centerline{
  $\Nt' \parallel \Msg' =\pP{\pp}{\rcvL{\q}{\la'};\sendL{\q}\la;\rcvL{\q}{\la'} } \parN
  \pP{\q}{\rcvL{\pp}\la ; \sendL{\pp}{\la'};\rcvL{\pp}\la } \parallel \addMsg{\mq\pp{\la}\q}{\mq\q{\la'}\pp}$}

\vspace*{1.6mm} The   n-events of $\ESNA{\Nt'\parallel\Msg'}$ are: \vspace*{1.2mm}

\centerline{$\begin{array}{lcl}
 \netevA_5 = \locev{\pp}{{\rcvL{\q}\la'}}
 & \qquad & \netevA'_5 = \locev{\q}{{\rcvL{\pp}\la}}\\[-0.5pt]
\netevA_6 = \locev{\pp}{{\concat{\rcvL{\q}\la'}{\sendL{\q}\la}}}
&\qquad & \netevA'_6 = \locev{\q}{{\concat{\rcvL{\pp}\la}{\sendL{\pp}\la'}}}\\[-0.5pt]
\netevA_7 = \locev{\pp}{{\concat{\rcvL{\q}\la'}{\concat{\sendL{\q}\la}{\rcvL{\q}\la'}}}}
& \qquad &  \netevA'_7 = \locev{\q}{{\concat{\rcvL{\pp}\la}{\concat{\sendL{\pp}\la'}{\rcvL{\pp}\la}}}}
\end{array}
$} \sm

\noindent  Let  $\os=\concat{\CommAs\pp\la\q}{\CommAs\q{\la'}\pp}$.
 The $\os$-flow relation is given by the cross-flows $ \netevA_6 \precN
\netevA'_7$ , $\netevA'_6 \precN \netevA_7$,  and by the
local flows
$ \netevA_i \precN \netevA_j$ and $ \netevA'_i
\precN \netevA'_j\,$ for all $i,j$ such that $i \in \set{5,6}$, $j \in
\set{6,7}$ and $i < j$.  The  input   n-events $\netevA_5$ and $\netevA'_5$,
 which are the only ones without causes,  are $\os$-queue-justified.  The conflict relation is empty.

\medskip
The network $\Nt' \parallel \Msg'$ can evolve in five steps to the network: \sm
\[\centerline{
  $\Nt'' \parallel \Msg'' =
  \pP{\q}{\rcvL{\pp}\la } \parallel \mq\pp{\la}\q$}
  \]
The only  n-event  of $\ESNA{\Nt''\parallel\Msg''}$ is $\locev{\q}{\rcvL{\pp}\la}$.
\end{example}


\begin{example}
\mylabel{asynchronous-network-with-choice}
Let  $\Nt=   \pP{\pp}{\sendL{\q}\la_1;\sendL{\pr}{\la} \oplus \sendL{\q}\la_2;\sendL{\pr}{\la}
  } \parN   \pP{\q}{\rcvL{\pp}{\la_1} +     \rcvL{\pp}{\la_2}} \parN   \pP{\pr}{\rcvL{\pp}{\la}} $.
 The  n-events  of $\ESNA{\Nt\parallel\emptyset}$ are: \vspace*{-1.6mm}

 \centerline{$\begin{array}{lcl}
\netevA_1 = \locev{\pp}{ \sendL{\q}\la_1} & \qquad & \netevA'_1 = \locev{\q}{ \rcvL{\pp}\la_1}\\[-0.5pt]
 \netevA_2 = \locev{\pp}{ \sendL{\q}\la_2}
 & \qquad & \netevA'_2 = \locev{\q}{\rcvL{\pp}\la_2}\\[-0.5pt]
\netevA_3 =  \locev{\pp}{ \concat{ \sendL{\q}\la_1}{\sendL{\pr}\la}}
&\qquad & \netevA''_1 = \locev{\pr}{ \rcvL{\pp}\la}\\[-0.5pt]
\netevA_4 = \locev{\pp}{\concat{ \sendL{\q}\la_2}{\sendL{\pr}\la}}
& \qquad &
\end{array}
$}

\medskip\noindent The $\ee$-flow relation is given by the local flows
$ \netevA_1 \precE \netevA_3$, $\netevA_2 \precE \netevA_4$, and by
the cross-flows $ \netevA_1 \precE \netevA'_1$,
$\netevA_2 \precE \netevA'_2$, $ \netevA_3 \precE \netevA''_1$,
$ \netevA_4 \precE \netevA''_1$.  The conflict relation is given by
$ \netevA_1 \grr \netevA_2$, $ \netevA_1 \grr \netevA_4$,
$ \netevA_2 \grr \netevA_3$, $ \netevA_3 \grr \netevA_4$ and
$\netevA'_1 \grr \netevA'_2$. Notice that $\netevA_3$ and $\netevA_4$
are conflicting causes of $\netevA''_1$.
\refToFigure{fig:network-FES}(a) in Section~\ref{sec:resultsA}
illustrates this event structure.  The configurations are \sm

\centerline{$\begin{array}{l} \set{\netevA_1}\qquad
    \set{\netevA_1, \netevA_3}  \qquad \set{\netevA_1,\netevA'_1}
    \qquad \set{\netevA_1,\netevA_3,\netevA'_1}\qquad
 \set{\netevA_1,\netevA_3,\netevA''_1} \qquad \set{\netevA_1,\netevA_3,\netevA'_1,\netevA''_1}\\
    \set{\netevA_2} \qquad  \set{\netevA_2,\netevA_4}  \qquad \set{\netevA_2,\netevA'_2} \qquad
    \set{\netevA_2,\netevA_4, \netevA'_2}\qquad   \set{\netevA_2,\netevA_4,\netevA''_1}
    \qquad \set{\netevA_2,\netevA_4,\netevA'_2,\netevA''_1}
 \end{array}
 $}

 \medskip  The network $\Nt \parallel \Msg$ can evolve in one step to the network: \medskip

\centerline{ $\Nt' \parallel \Msg' = \pP{\pp}{\sendL{\pr}{\la}} \parN
  \pP{\q}{\rcvL{\pp}{\la_1} + \rcvL{\pp}{\la_2}} \parN
  \pP{\pr}{\rcvL{\pp}{\la}} \parN{\mq\pp{\la_1}\q}$}

\medskip\noindent
 The  n-events of $\ESNA{\Nt'\parallel\Msg'}$ are $
\netevA_5 = \locev{\pp}{\sendL{\pr}{\la}}$, $  \netevA'_3
 = \locev{\q}{ \rcvL{\pp}{\la_1}}$ and $  \netevA''_2
 = \locev{\pr}{ \rcvL{\pp}{\la}}$. Let
$\os=\CommAs\pp{\la_1}\q$.  The $\os$-flow relation is given by the
cross-flow $\netevA_5 \precN  \netevA''_2 $.   Notice
that  the input n-event $ \netevA'_3 $ is
$\os$-queue-justified,  and that there is no n-event  corresponding to the
branch $ \rcvL{\pp}{\la_2}$ of $\q$, since such an
n-event would not be $\os$-queue-justified. Hence the conflict relation is
empty.  The configurations are

\centerline{$\begin{array}{l} \set{\netevA_5}\qquad \set{
      \netevA'_3 }\qquad  \set{\netevA_5, \netevA'_3
    }  \qquad \set{\netevA_5, \netevA''_2
      }\qquad \set{\netevA_5, \netevA'_3 ,
      \netevA''_2 }
  \end{array}
$}
\end{example}
It is easy to show that the ESs of networks are FESs.
\begin{proposition}\mylabel{nfA}
Let $\Nt\parN\Msg$ be a network. Then $\ESNA{\Nt\parN\Msg}$ is a flow event
 structure.
\end{proposition}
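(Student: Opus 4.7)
The plan is to verify directly the three clauses of Definition \ref{fes} for $\ESNA{\Nt\parN\Msg}$. Denumerability of the event set $\GEA(\Nt\parN\Msg)$ is immediate: for each located process $\pP\pp\PP\in\Nt$, the set $\ESA(\PP)$ is denumerable (processes are regular, hence have denumerably many p-events), so $\DE(\Nt)$ is a finite union of denumerable sets, and $\GEA(\Nt\parN\Msg)=\nr{\DE(\Nt),\os}\subseteq\DE(\Nt)$ is therefore denumerable too.

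For irreflexivity of $\precN_{\Nt\parN\Msg}$, it suffices to show that the full relation $\precN$ on $\DEA$ is irreflexive, since restriction preserves this property. I would split on the two generating clauses of Definition \ref{netaevent-relations}. For local flows produced by Clause~(\ref{c1Aa}), we have $\locevA\pp\os\procev\precN\locevA\pp\os{\procev'}$ only when $\procev<\procev'$; since the strict prefix order on p-events is irreflexive, the two n-events are distinct. For cross-flows produced by Clause~(\ref{c1AbP1}), the source is located at some $\pp$ and ends with $\sendL\q\la$, while the target is located at $\q$ and ends with $\rcvL\pp\la$; since sends and receives are necessarily between distinct participants, $\pp\ne\q$, and hence the two n-events have different localities and are distinct.

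For symmetry of $\gr_{\Nt\parN\Msg}$, again it suffices to check symmetry of $\gr$ on $\DEA$. By Clause~(\ref{c2A}) of Definition \ref{netaevent-relations}, $\locevA\pp\os\procev\grr\locevA\pp\os{\procev'}$ holds exactly when $\procev\grr\procev'$ and the two n-events share the same locality $\pp$ and the same o-trace $\os$. Symmetry then follows from the symmetry of conflict on p-events, which is part of Proposition~\ref{basta10}.

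The proof is essentially a direct unpacking of the relevant definitions; there is no real obstacle, since the only subtlety is observing that cross-flows always relate n-events at different participants, which is built into the form of the sender and receiver n-events appearing in Clause~(\ref{c1AbP1}). Note also that we do not need to verify the stronger axioms of a PES (antisymmetry, transitivity, or conflict hereditariness), which in fact fail in general for network ESs, as illustrated by Example~\ref{asynchronous-network-with-choice} where $\netevA''_1$ has two conflicting causes $\netevA_3$ and $\netevA_4$.
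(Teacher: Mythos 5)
Your proof is correct and follows essentially the same route as the paper's: irreflexivity of $\precN$ is checked separately for local flows (strict prefix order) and cross-flows (distinct localities $\pp\neq\q$), and symmetry of $\gr$ is inherited from conflict on p-events. The extra remarks on denumerability and on why the PES axioms are not needed are harmless additions beyond what the paper records.
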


\begin{proof}
 Let $\os=\osq{\Msg}$.  The relation $\precN$ is irreflexive since:
\begin{enumerate}
\item $ \procev  <
\procev'$ implies $\locevA{\pp}\os{\procev}\not= \locevA{\pp}{\os}{\procev'}$; \vspace*{-1mm}
\item $\pp\not=\q$ implies $\locevA\pp\os{\concat\actseq{\sendL\q\la}}\not=\locevA\q{\os}{\concat{\actseq'}{\rcvL\pp\la}}$.
\end{enumerate}
Symmetry   of the conflict relation between n-events follows from the corresponding
property of conflict between p-events.
\end{proof}

In the remainder of this section we show that projections of n-event
configurations give p-event configurations.  We start by formalising
the projection function of n-events to p-events and showing that it is
downward surjective.

\begin{definition}[Projection of n-events to p-events]
\label{def:proj-network-n-event}The {\em projection} function $\projnet{\pp}{\cdot}$ is defined by: \sm
\\
\centerline{$
\projnet{\pp}{\netevA}=\begin{cases}
\procev     & \text{if }\netevA=\locevA\pp\os\procev\\
 \textit{undefined}   & \text{otherwise}
\end{cases}
$}
The projection function $\projnet{\pp}{\cdot}$ is extended to sets of
n-events in the obvious way:

\vspace*{1.8mm}
\centerline{$\projnet{\pp}{X}  = \set{ \procev \mid \exists\netevA \in X~. ~ \projnet{\pp}{\netevA} = \procev} $}
\end{definition}

\begin{proposition}[Downward surjectivity of projection]
\label{prop:down-ontoA}
Let \sm

\centerline{$\pP{\pp}{\PP}\in\Nt$  and   $\ESNA{\Nt\parallel\Msg}=(\GEA(\Nt\parallel\Msg),
    \precN , \grr)$ and $\ESP{\PP} = (\ES(\PP), \precP_\PP , \gr_\PP)$}

\medskip\noindent
    Then the partial function $\projnetfun{\pp}:\GEA(\Nt\parallel\Msg)
\rightarrow \ES(\PP)$ is downward surjective.
\end{proposition}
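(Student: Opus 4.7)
The plan is to show that, whenever $\netevA \in \GEA(\Nt\parallel\Msg)$ and $\procev \in \ES(\PP)$ satisfy $\procev \precP_\PP \projnet{\pp}{\netevA}$, the n-event $\netevA' = \locevA{\pp}{\os}{\procev}$, with $\os=\osq{\Msg}$, itself belongs to $\GEA(\Nt\parallel\Msg)$ and projects to $\procev$. This is the unique natural candidate, since by Definition~\ref{def:proj-network-n-event} only n-events located at $\pp$ project to a p-event, and the projection simply forgets the location and the o-trace.

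First I would unfold the hypothesis. By Definition~\ref{def:proj-network-n-event}, $\projnet{\pp}{\netevA}$ is defined exactly when $\netevA = \locevA{\pp}{\os}{\procev'}$ for some $\procev'\in\ES(\PP)$, in which case $\projnet{\pp}{\netevA}=\procev'$. By Definition~\ref{procevent-relations}(\ref{ila--esp2}), the relation $\precP_\PP$ on the PES $\ESP{\PP}$ is (the strict part of) the prefix ordering $\sqsubseteq$ on p-events, so $\procev \precP_\PP \procev'$ yields $\procev \sqsubseteq \procev'$ with $\procev \neq \procev'$. Moreover every prefix of a decoration of a root-to-edge path in the tree of $\PP$ is itself such a decoration, so all intermediate p-events between $\procev$ and $\procev'$ lie in $\ES(\PP)$ by Definition~\ref{esp}(\ref{ila-esp1}).

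The main step is then to exploit the greatest-fixpoint characterisation of the narrowing operator of Definition~\ref{nr}. Since $\GEA(\Nt\parallel\Msg)=\nr{\DE(\Nt),\os}$ is a fixpoint of $f_{\DE(\Nt),\os}$, whenever an n-event of the form $\locevA{\pp}{\os}{\concat{\procev''}{\pi}}$ lies in $\nr{\DE(\Nt),\os}$, so does its local predecessor $\locevA{\pp}{\os}{\procev''}$. I would then proceed by induction on $\lengthEv{\procev'} - \lengthEv{\procev}$: the base case $\lengthEv{\procev'}-\lengthEv{\procev}=1$ is immediate from this fixpoint clause applied once; for the inductive step I would write $\procev' = \concat{\procev''}{\pi}$, obtain $\locevA{\pp}{\os}{\procev''} \in \nr{\DE(\Nt),\os}$ by the fixpoint clause, and apply the induction hypothesis to $\procev \sqsubseteq \procev''$.

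I do not anticipate a serious obstacle: the argument is essentially a one-line unfolding of the greatest-fixpoint definition of narrowing, together with the prefix-closure of $\ES(\PP)$. The only point that requires a brief comment is that the input-justification clause in the definition of $f_{\DE(\Nt),\os}$ plays no role in the construction of $\netevA'$: we only need the local causal-predecessor clause, which is exactly what guarantees that taking prefixes of the $\procev$-component keeps us inside the narrowed set, regardless of whether the intermediate n-events are inputs or outputs.
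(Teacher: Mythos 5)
Your proposal is correct and takes essentially the same approach as the paper: the paper's own proof is a one-sentence appeal to the fact that $\GEA(\Nt\parallel\Msg)$ is the narrowing of the set of n-events $\locevA\pp\os\procev$ with $\procev\in\ES(\PP)$, i.e.\ precisely the local-predecessor closure of the fixpoint that you spell out via your induction on prefix length. Your added observations (prefix-closure of $\ES(\PP)$, irrelevance of the input-justification clause) are accurate elaborations of that same argument.
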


\begin{proof}
  Follows immediately from the fact that $\GEA(\Nt\parallel\Msg)$ is
  the narrowing of a set of n-events $\locevA\pp\os\procev$ with
  $\os=\osq\Msg$ and $\pP{\pp}{\PP}\in\Nt$ and $\procev\in\ES(\PP)$.
\end{proof}

The operation of narrowing on network events makes sure that
each configuration of the ES of a network projects down to
configurations of the ESs of the component processes.

\begin{proposition}[Projection preserves configurations]
\mylabel{p:pc}
Let $\pP{\pp}{\PP}\in\Nt$.  If $\ESet \in \Conf{\ESN{\Nt\parallel\Msg}}$,
then $\projnet{\pp}{\ESet} \in \Conf{\ESP{\PP}}$.
\end{proposition}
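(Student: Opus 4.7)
The plan is to verify the two defining conditions of a PES configuration (Definition \ref{configP}) for $\projnet{\pp}{\ESet}$ in $\ESP{\PP}$: conflict-freeness and downward-closedness. Throughout, write $\os = \osq{\Msg}$ and note that by definition $\projnet{\pp}{\ESet} = \{\procev \mid \locevA{\pp}{\os}{\procev} \in \ESet\}$.

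Conflict-freeness is the easier direction. Suppose, towards a contradiction, that $\procev_1, \procev_2 \in \projnet{\pp}{\ESet}$ with $\procev_1 \gr \procev_2$ in $\ESP{\PP}$. Then $\locevA{\pp}{\os}{\procev_1}, \locevA{\pp}{\os}{\procev_2} \in \ESet$, and by clause (\ref{c2A}) of Definition \ref{netaevent-relations} lifting conflict of p-events to conflict of the corresponding n-events, these two n-events are in conflict, contradicting that $\ESet$ is conflict-free (Definition \ref{configF}, Clause \ref{configF2}).

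Downward-closedness is the main step, and is where the gap between the FES condition ``downward-closed up to conflicts'' and the stricter PES condition has to be bridged. Suppose $\procev' \leq \procev \in \projnet{\pp}{\ESet}$. I may assume $\procev' < \procev$, i.e.\ $\procev' \sqsubsetneq \procev$; set $\netevA = \locevA{\pp}{\os}{\procev}$ and $\netevA' = \locevA{\pp}{\os}{\procev'}$. Since the process tree of $\PP$ is closed under prefixes, $\procev' \in \ES(\PP)$, so $\netevA' \in \DE(\Nt)$. Moreover, iterating the fixpoint equation $\nr{E,\os} = f_{E,\os}(\nr{E,\os})$ from $\netevA \in \ESet \subseteq \nr{E,\os}$ shows that every proper prefix of $\netevA$ is also in $\nr{E,\os}$, hence $\netevA' \in \GEA(\Nt \parallel \Msg)$. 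By clause (\ref{c1Aa}) of Definition \ref{netaevent-relations}, $\netevA' \precN \netevA$.

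Now apply the FES downward-closure-up-to-conflicts property of $\ESet$ (Definition \ref{configF}, Clause \ref{configF1}): either $\netevA' \in \ESet$, in which case we are done, or there exists $\netevA'' \in \ESet$ with $\netevA' \gr \netevA'' \precN \netevA$. In the latter case, by clause (\ref{c2A}) of Definition \ref{netaevent-relations}, $\netevA''$ must be located at $\pp$, i.e.\ $\netevA'' = \locevA{\pp}{\os}{\procev''}$ with $\procev' \gr \procev''$ in $\ESP{\PP}$. Since $\ESP{\PP}$ is a PES (Proposition \ref{basta10}), conflict is hereditary along $\leq$, so $\procev' \gr \procev''$ together with $\procev' \leq \procev$ yields $\procev \gr \procev''$. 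Applying clause (\ref{c2A}) once more gives $\netevA \gr \netevA''$, with both $\netevA$ and $\netevA''$ in $\ESet$, contradicting conflict-freeness of $\ESet$. Hence $\netevA' \in \ESet$, i.e.\ $\procev' \in \projnet{\pp}{\ESet}$, as required.

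The only subtle point is the downward-closure argument: the FES condition permits ``missing'' causes provided they are witnessed by a conflicting event already in $\ESet$, so one must rule out this alternative. The key ingredient for doing so is the conflict hereditariness of the PES $\ESP{\PP}$ (Proposition \ref{basta10}), combined with the lifting of process-event conflict to network-event conflict given by clause (\ref{c2A}). Everything else is bookkeeping over the definitions of projection and narrowing.
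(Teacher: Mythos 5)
Your proof is correct and follows essentially the same route as the paper's: conflict-freeness via Clause (\ref{c2A}) of Definition \ref{netaevent-relations}, and downward-closure by lifting $\procev' < \procev$ to $\netevA' \precN \netevA$ in the network FES and then ruling out the ``conflicting witness'' alternative of Definition \ref{configF}(\ref{configF1}) using conflict hereditariness of p-events. The only cosmetic difference is that you re-derive the needed fact that all prefixes of $\netevA$ survive narrowing directly from the fixpoint equation, whereas the paper packages this as Proposition \ref{prop:down-ontoA} (downward surjectivity) and cites it.
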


\begin{proof}
Let $\ESet \in \Conf{\ESNA{\Nt\parallel\Msg}}$ and $\GSet = \projnet{\pp}{\ESet}$. We want to show that $\GSet \in
\Conf{\ESP{\PP}}$, namely that $\GSet$ satisfies Conditions (\ref{configP1})
and (\ref{configP2}) of \refToDef{configP}.
\begin{enumerate}
\item[(\ref{configP1})] \emph{Downward-closure.} Let $\procev
  \in \GSet$. Since $\GSet = \projnet{\pp}{\ESet}$, there
  exists $\netevA\in \ESet$ such that
  $\netevA=\locevA\pp\os\procev$. Suppose $\procev' <
  \procev$.
   From \refToProp{prop:down-ontoA}\
  there exists $\netevA'\in \GEA(\Nt\parallel\Msg)$ such that
  $\netevA'=\locevA\pp\os{\procev'}$.   Let $\os=\osq{\Msg}$. By \refToDef{netaevent-relations}(\ref{c1Aa})   we have then $\netevA' \precN \netevA$. Since $\ESet$ is
  left-closed up to conflicts, we know that either $\netevA' \in \ESet$
  or there exists $\netevA''\in \ESet$ such that
  $\netevA'' \grr \netevA'$ and $\netevA''\prec \netevA$. We examine the
  two cases in turn:
\begin{itemize}
\itemsep=0.9pt
\item $\netevA' \in \ESet$. Then, since $\procev'=
  \projnet{\pp}{\netevA'}$, we have $\procev'\in
  \projnet{\pp}{\ESet} =\GSet$ and we are done.
\item $\exists\netevA''\in \ESet \, . \ \netevA'' \grr \netevA'$ and
  $\netevA'' \prec \netevA$.  From $\netevA'' \grr \netevA'$ we get $\netevA''=\locevA\pp\os{\procev''}$ and $\procev'' \grr \procev'$. This implies $\procev'' \grr \procev$. By  \refToDef{netaevent-relations}(\ref{c2A})  this implies $\netevA \grr \netevA'$, contradicting
  the hypothesis that $\ESet$ is conflict-free. So this case is impossible.
  \end{itemize}
    \item[(\ref{configP2})] \emph{Conflict-freeness.} Ad absurdum, suppose
  there exist $\procev, \procev' \in \GSet$ such that $\procev \grr
  \procev'$. Then, since $\GSet = \projnet{\pp}{\ESet}$, there must
  exist $\netevA, \netevA' \in \ESet$ such that
  $\netevA=\locevA\pp\os\procev$ and
  $\netevA'=\locevA\pp\os{\procev'}$. By   \refToDef{netaevent-relations}(\ref{c2A})   this implies $\netevA \grr \netevA'$, contradicting
  the hypothesis that $\ESet$ is conflict-free.
   \end{enumerate}\vspace*{-4.6mm}
    \end{proof}
 Notice that there are configurations of $\Conf{\ESP{\PP}}$ which cannot be obtained by projecting configurations of $\Conf{\ESN{\Nt\parallel\Msg}}$ in spite of the condition $\pP\pp\PP\in\Nt$. A simple example is $\pP\pp{\q?\la}\parN\emptyset$.

\section{Event structure semantics of  asynchronous types}
\mylabel{sec:eventsA}

We define now the ES semantics of \agts, which is based on particular
traces.  In the ES of an \agt, as in the ES of a network, an event
represents a particular occurrence of an input or output
communication, preceded by its causal history.  However, while in the
ES of a network an event is a located process event, and the
causal history of an input or output action is its \emph{local
  history} within the participant 
where it is located,
in the ES of an
\agt~the causal history of a communication is its \emph{global
  history}, which may include communications from other
participants.  Recall that an asynchronous type is a global type
coupled with a queue.  Then,
the global history of a communication is obtained by taking
the trace labelling
the path that leads to that communication in the tree of the
global type, and removing from it all the communications that do not cause the
last one. 
A trace of this kind, with the property that all its communications
cause a subsequent communication in the trace, will be called
\emph{pointed}.  Moreover, since a communication may have concurrent
causes, whose order in the computation is irrelevant, such pointed
traces will be considered up to permutation of 
concurrent communications. So far, the treatment is very similar to
the one proposed for the synchronous case in~\cite{CDG22}.
However, asynchrony introduces additional subtleties in the definition
of causality and concurrency among communications. Indeed, two
communications will be causally related not only when they have the
same player, but also when one of them is an output $\pp\q!\la$ and
the other is the matching input $\pp\q?\la$
(\refToDef{def:matching}). Moreover, this matching relation is
affected by the presence of the queue (since an input can match either
a message in the queue or an output in its trace within the global
type), so it has to be computed relatively to a prefixing output trace
that represents the queue. This gives rise to a notion of
well-formedness for traces (\refToDef{def:WFnew}) that reflects the
balancing condition for asynchronous types. Similarly, the concurrency
relation between adjacent communications (\refToDef{swap}) and the
resulting permutation equivalence (\refToDef{def:permEqA}), as well as
the notion of pointedness (\refToDef{pcsA}), will have to be defined
relatively to a prefixing output trace.

To sum up, the events in the ES of an asynchronous type will be pairs
made of an output trace representing the queue of the type (taken up
to an equivalence that reflects the structural congruence on queues,
see~\refToDef{def:permEqA2}), and of a trace (taken up to
permutation equivalence, see~\refToDef{def:permEqA}) that is pointed
with respect to the queue output trace (\refToDef{def:pm}) and which
is obtained from a trace of the global type by removing only the
communications that are not causes of the last one (\refToDef{egA}).

\medskip
Although the events of an asynchronous type ES have a more complex
and indirect definition than the events of a network ES,
they have two importants benefits
with respect to the latter:
\begin{itemize}
\itemsep=0.9pt
\item 
 the relations of causality and conflict are very simple to define on
 them (\refToDef{ageo});
\item they do not raise well-foundedness issues, since they
  are extracted from paths in the tree of the global type by removing
  only the unnecessary communications (\refToDef{egA}).
\end{itemize}

For \tracesS $\comseqA$, as given in \refToDef{tra}, we use\ the
following notational conventions:
\begin{itemize}
\itemsep=0.9pt
\item We denote by
  $\at{\comseqA}{i}$ the $i$-th
  element of $\comseqA$, $i > 0$.
\item If $i \leq j$, %
  we define $\range{\comseqA}{i}{j} = \at{\comseqA}{i} \cdots
  \at{\comseqA}{j}$ to be the sub\traceS of $\comseqA$ consisting of
  the $(j-i+1)$ elements starting from the $i$-th one and ending with
  the $j$-th one.  If $i> j$, we define $\range{\comseqA}{i}{j}$ to be
  the empty \traceS $\ee$.
\end{itemize}
If not otherwise stated we assume that $\comseqA$ has $n$ elements, so
$\comseqA=\range{\comseqA}{1}{n}$.

\medskip
In the \tracesS  appearing in events,  we want to require
that every input matches a corresponding output. This is checked using
the \emph{multiplicity of $\pp\q\dagger$ in $\comseqA$},
defined by induction as follows: \vspace*{2mm}

\label{dagger-shorthand}
\centerline{$ \begin{array}{l}
\mult{\pp\q\dagger}{\ee}  = 0 \qquad \qquad
\mult{\pp\q\dagger}{\concat{\beta}{\comseqA}}  =
\begin{cases}
\mult{\pp\q\dagger}{\comseqA} + 1 & \text {if } \beta=\pp\q\!\dagger\!\la \\
\mult{\pp\q\dagger}{\comseqA}  & \text {otherwise }
\end{cases}
\end{array}
$}
\noindent where $\dagger\in\set{!,?}$ (as in \refToDef{pep}).
\eject

 An input of $\q$ from $\pp$ matches a
preceding output from $\pp$ to $\q$ in a \traceS if it has the same
label $\la$ and the number of inputs from $\pp$ to $\q$
in the subtrace before the given input is equal to the number of outputs from $\pp$ to
$\q$ in the subtrace before the given output.
This is formalised using the above multiplicity  notion  and the  positions
of communications in \traces.

\begin{definition} [Matching]\mylabel{def:j}\mylabel{def:matching}
The input
  $\at{\comseqA}{j}=\CommAsI\pp\la\q$ {\em matches} the output
  $\at{\comseqA}{i}=\CommAs\pp\la\q$ in $\comseqA$, dubbed $\ct i
  \comseqA j$, if  $ i < j $ and
 $\mult{\pp\q!}{\linebreak\range{\comseqA}{1}{i-1}} =
  \mult{\pp\q?}{\range{\comseqA}{1}{j-1}}$.
\end{definition}
For example, if
$\comseqA=\CommAs\pp\la\q;\CommAs\pp\la\q;\CommAs\pp\la\q;\CommAsI\pp\la\q;\CommAsI\pp\la\q$,
then $\ct 1 \comseqA {\, 4}$ and $\ct 2\comseqA {\,5}$, while no input matches
the output at position $3$, denoted by $\neg(\ct 3 \comseqA {\,4})$ and $\neg(\ct 3  \comseqA {\, 5})$.
 Similarly, if
$\comseqA=\CommAs\pp\la\q;\CommAs\pp{\la'}\q;\CommAsI\pp{\la''}\q;\CommAsI\pp{\la'}
\q$, then $\ct 2 \comseqA {\,4}$, while no output is matched by the input
at position $3$.

\medskip
As mentioned earlier, o-\tracesS will be used to represent queues and
general \tracesS are paths in \sgt\ trees.  We want to define an
equivalence relation on general traces, which allows us to exchange
the order of adjacent communications when this order is not
essential. This is the case if the communications have different
players and in addition they are not matching according to
\refToDef{def:matching}.  However, the matching relation must also
take into account the fact that some outputs are already on the queue.
So we will consider well-formedness with respect to a prefixing
o-\trace. We proceed as follows:
\begin{itemize}
\itemsep=0.85pt
\item we start with well-formed \tracesS (\refToDef{def:WFnew});
\item we define the swapping relation $\swap{}{}\os$ which
 allows two communications to be interchanged in a trace
$\comseqA$,
when these communications are independent in
the trace $\concat{\os}{\comseqA}$
  (\refToDef{swap});
\item then we show that $\swap{}{}\os$ preserves $\os$-well-formedness
(\refToLemma{prop:congSim0});
\item finally we define the equivalence $\approx_\os$  on
  $\os$-well-formed traces
  (\refToDef{def:permEqA}).
\end{itemize}

In a well-formed \traceS each input must have a corresponding output.
A matching input/output pair corresponds to a communication in the
standard global types of \cite{CHY16}. So, if we find an input at some
position in a trace, the corresponding output must already occur at
some earlier position in the trace.  We also introduce a notion of
well-formedness with respect to a prefixing \trace, where the prefix
represents communications that have already occurred. \vspace*{-2.5mm}

\begin{definition} [Well-formedness]\mylabel{def:WFnew}\hbox{}\hfill\vspace*{-7mm}

\begin{enumerate}
\item \mylabel{def:WFnew11}
A \traceS $\comseqA$  is {\em well formed} if  every input
 matches an output in $\comseqA$. \vspace*{-1mm}
\item \mylabel{def:WFnew12} A \traceS $\comseqA$ is
$\comseqA'$-{\em well formed} if $\concat{\comseqA'}{\comseqA}$ is
well formed. \vspace*{-2mm}
    \end{enumerate}
\end{definition}
As an example, the \traceS
$\comseqA=\concat{\concat{\CommAs\pp\la\q}{\CommAs\pp{\la'}\q}}{\CommAsI\pp{\la'}\q}$
is not well formed since  the input $\CommAsI\pp{\la'}\q$ in the
third position  does not match
the output $\CommAs\pp{\la'}\q$ in
the second position,  i.e.,  $\neg(\ct 2 \comseqA {\,3})$.
On the
other hand, $\comseqA$ is $\CommAs\pp{\la'}\q$-well formed, since
$\concat{\CommAs\pp{\la'}\q}{\comseqA}$ is well formed given that the
input $\CommAsI\pp{\la'}\q$ in the fourth position
 matches  the
output $\CommAs\pp{\la'}\q$ in the first position,  i.e.,
 $\ct 1 {\concat{\CommAs\pp{\la'}\q}\comseqA} 4$.\vspace{1.8mm}

 Notice that any o-\traceS is well formed and any well-formed \traceS
 of length 1 must be an output. A well-formed \traceS of length 2 can
 consist of either two outputs or an output followed by the matching
 input.
%
 Note also that, if in \refToDef{def:WFnew}(\ref{def:WFnew12})
 the trace $\tau'$ is an o-trace, then it may be viewed as
 representing a queue, and therefore  in this case the
 matching between an input in $\tau$ and an output in $\tau'$
 is akin to the balancing condition of Rule~\rulename{In} in
 \refToFigure{wfagtA} for asynchronous types.

\begin{definition}[Swapping]\label{swap}
Let $\comseqA$ be $\os$-well formed. We say that  $\comseqA$ {\em $\os$-swaps to} 
$\comseqA'$, 
notation $\swap{\comseqA}{\comseqA'}{\os}$,  if \sm

\centerline{
$
\begin{array}{l}
\comseqA=\concat{\concat{\concat{\range{\comseqA}{ 1}{i-1}}{\beta}}{\beta'}}{\comseqA''}\qquad
\comseqA'=\concat{\concat{\concat{\range{\comseqA}{ 1 }{i-1}}{\beta'}}{\beta}}{\comseqA''}\quad\text{and}\\
\play{\beta}\cap\play{\beta'} = \emptyset
\quad\text{and} \quad\neg(\ct {i+ \cardin{\os}}{\os\cdot\comseqA} {\,i+1+ \cardin{\os}})
\end{array}
$
 }
\end{definition}
\noindent
For instance, if $\os = \CommAs\pp\la\q$ and
$\comseqA=\concat{\CommAs\pp\la\q}{\CommAsI\pp{\la}\q}$, then
$\comseqA$ $\os$-swaps to $\comseqA'
=\concat{\CommAsI\pp{\la}\q}{\CommAs\pp\la\q}$ because the input in
$\comseqA$ matches the output in $\os$  and therefore it is independent from the output in  $\comseqA$.
\begin{lemma}
  \mylabel{prop:congSim0} If $\comseqA$ is $\os$-well formed and
  $\swap{\comseqA}{\comseqA'}{\os}$, then $\comseqA'$ is $\os$-well
  formed too.
\end{lemma}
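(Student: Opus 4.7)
The plan is to verify that every input in $\concat{\os}{\comseqA'}$ matches some output, by a case analysis that tracks how swapping $\beta$ and $\beta'$ affects positions and prefix multiplicities. Set $m = \cardin{\os}$, so in $\concat{\os}{\comseqA}$ positions $m+i$ and $m+i+1$ hold $\beta$ and $\beta'$ respectively, while in $\concat{\os}{\comseqA'}$ these two positions are swapped; all other positions agree pointwise. Consequently, prefix multiplicities in the two sequences can only differ at length $m+i$: specifically, $\range{\concat{\os}{\comseqA}}{1}{m+i-1} = \range{\concat{\os}{\comseqA'}}{1}{m+i-1}$, whereas $\range{\concat{\os}{\comseqA}}{1}{m+i}$ ends with $\beta$ and $\range{\concat{\os}{\comseqA'}}{1}{m+i}$ ends with $\beta'$; for every $k-1 \geq m+i+1$, both prefixes agree as multisets.

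Next, I would pick an arbitrary input $\gamma$ occurring at some position $k$ in $\concat{\os}{\comseqA'}$ and, starting from its match in $\concat{\os}{\comseqA}$ (which exists by $\os$-well-formedness of $\comseqA$), exhibit a matching output at the corresponding position in $\concat{\os}{\comseqA'}$. The easy cases are $k \leq m+i-1$ or $k \geq m+i+2$ with a match at $k' \leq m+i-1$ or $k' \geq m+i+2$: here the prefixes involved are identical in both sequences, so the matching multiplicity identity transfers verbatim. The genuinely interesting subcases are: (i) $\gamma \in \{\beta,\beta'\}$ sits at a swap position; (ii) the matching output lies at a swap position while $\gamma$ lies beyond.

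In subcase (i), the hypothesis $\neg(\ct{m+i}{\os\cdot\comseqA}{m+i+1})$ rules out the possibility that $\beta,\beta'$ are matched with one another, so the match of the swapped input must lie at some $k' \leq m+i-1$; the LHS multiplicity $\mult{\pp\q!}{\range{\cdot}{1}{k'-1}}$ is then identical in both sequences. For the RHS we must compare $\mult{\pp\q?}{\range{\concat{\os}{\comseqA}}{1}{m+i}}$ with $\mult{\pp\q?}{\range{\concat{\os}{\comseqA'}}{1}{m+i-1}}$ (or the symmetric pair), and the discrepancy reduces to whether the swap-partner contributes a $\pp\q?$ action. This is precisely where $\play{\beta}\cap\play{\beta'}=\emptyset$ is used: the player of the input $\gamma=\pp\q?\la$ is $\q$, which belongs to $\play{\beta}$ or $\play{\beta'}$ as appropriate, so $\q$ cannot appear as player of the other element, ruling out that element being of the form $\pp'\q?\la''$. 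Subcase (ii) is dual: the player of the output witness at the swap position is $\pp$, disjointness of players forces the other swapped element not to be an output $\pp\q'!\la''$, so the multiplicity $\mult{\pp\q!}{\cdot}$ is insensitive to the swap.

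The core of the argument is routine, but the main obstacle is the careful bookkeeping: there are four non-trivial subcases, and in each one must identify precisely which prefix changes and invoke $\play{\beta}\cap\play{\beta'}=\emptyset$ with the correct polarity (player of $\gamma$ versus player of its match) to conclude that the relevant $\pp\q\dagger$-count is preserved. Once these verifications are lined up, the conclusion that every input of $\concat{\os}{\comseqA'}$ matches an output — hence that $\comseqA'$ is $\os$-well formed — follows immediately from \refToDef{def:WFnew}.
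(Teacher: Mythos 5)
Your proof is correct and follows essentially the same route as the paper's: the match of a swapped input cannot be its swap-partner because of the hypothesis $\neg(\ct {i+ \cardin{\os}}{\os\cdot\comseqA} {\,i+1+ \cardin{\os}})$, and the relevant $\pp\q\dagger$-multiplicities are insensitive to the swap because $\play{\beta}\cap\play{\beta'}=\emptyset$ forbids the swap-partner from being a communication with the same player. Your bookkeeping is in fact slightly more exhaustive than the paper's, since you also explicitly treat the inputs occurring after the swapped pair whose matching output sits at one of the two swapped positions, a case the paper leaves implicit.
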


\begin{proof}
  Let $\comseqA =\concat{\concat{\concat{\range{\comseqA}{ 1}{ i-1}}{\beta}}{\beta'}}{\comseqA_1}$
  and $\comseqA' =
  \concat{\concat{\concat{\range{\comseqA}{ 1 }{ i-1}}{\beta'}}{\beta}}{\comseqA_1}$.
  We want to prove that $\concat{\os}{\comseqA'}$ is well formed. To
  this end, we will show that if $\beta$ or $\beta'$ is an input, then
  it matches an output that occurs in the prefix
  $
  \range{(\concat{\os}{\comseqA})}{ 1 }{i-1 + \cardin{\os}}$ of
  $\concat{\os}{\comseqA'}$. Note that it must be $\beta \neq\beta'$,
  since by hypothesis $\play{\beta}\cap\play{\beta'} =
  \emptyset$.  \sm

  Suppose $\beta'$ is an input. Since $\comseqA$ is $\os$-well formed,
  $\beta'$ matches an output in
  $\concat{\range{(\concat{\os}{\comseqA})}{ 1 }{i-1 +
      \cardin{\os}}}{\beta}$.  This output cannot be $\beta$, since
by hypothesis  $\neg(\ct {i+ \cardin{\os}}{\os\cdot\comseqA} {\,i+1+
  \cardin{\os}})$.
Hence $\beta'$ matches an output which occurs in the prefix
$\range{(\concat{\os}{\comseqA})}{ 1 }{i-1 + \cardin{\os}}$ of
$\concat{\os}{\comseqA'}$. \sm

Suppose now  $\beta=\CommAsI\pp\la\q$ and
$\mult{\pp\q?}{\range{(\concat{\os}{\comseqA})}{ 1 }{i-1 + \cardin{\os}}} =
m$.  Since $\comseqA$ is $\os$-well formed, $\beta$ matches an output
$\at{(\concat{\os}{\comseqA})}{j} = \CommAs\pp\la\q$ in the prefix
$\range{(\concat{\os}{\comseqA})}{ 1 }{i-1 + \cardin{\os}}$ of
$\concat{\os}{\comseqA}$.
Then \mbox{$1\leq j < i + \cardin{\os}$}
and $\mult{\pp\q!}{\range{(\concat{\os}{\comseqA})}{ 1 }{j-1 +
    \cardin{\os}}} = m$.
Since $\beta \neq \beta'$,  we get  also
\mbox{$\mult{\pp\q?}{\range{(\concat{\os}{\comseqA'})}{ 1 }{i+
    \cardin{\os}}} = m$.}   Then $\beta$ matches
$\at{(\concat{\os}{\comseqA})}{j}$ also in $\concat{\os}{\comseqA'}$.
\end{proof}
From the previous lemma  and the observation that, if $\comseqA$
is $\os$-well formed and $\comseqA'$ is obtained by swapping the
$i$-th and $(i+1)$-th element of $\comseqA$,
then
$\play{\at{\comseqA'}{i}}\cap\play{\at{\comseqA'}{i+1}}=\emptyset$ and\linebreak
$\neg(\ct {i + \cardin{\os} }{\os\cdot\comseqA'} {i+1 + \cardin{\os}
})$,
we deduce that the swapping relation is symmetric. This allows us
to define $\approx_{\os}$ as the equivalence relation induced by  the swapping relation.

\begin{definition}[Equivalence  $\approx_{\os}$ on $\os$-well-formed traces]\mylabel{def:permEqA}
  The   {\em equivalence}  $\approx_{\os}$ on
  $\os$-well-formed \tracesS is the reflexive and transitive closure
  of $\swap{}{}{\os}$.%
\end{definition}
Observe that for o-\tracesS all the equivalences
$\approx_{\os}$ collapse to $\approx_{\ee}$ and $\approx_{\ee}
\,\subset\, \cong$, where $\cong$ is  the  o-trace  equivalence
 given in~\refToDef{def:permEqA2}. Indeed, it should be clear
 that $\approx_{\ee} \,\subseteq\, \cong$. To show $\approx_{\ee}
 \,\neq \,\cong$, consider  $\os =
 \concat{\CommAs\pp{\la}\q}{\CommAs\pp{\la'}\pr}$ and $\os' =
 \concat{\CommAs\pp{\la'}\pr}{\CommAs\pp{\la}\q}$. Then $\os\cong
 \os'$ but $ \os \not\approx_{\ee} \os'$.   This agrees
 with the fact that o-traces represent messages in queues, while
general traces
represent future communication actions.

\medskip
Another constraint that we want to impose on \tracesS in order to
build events is that each communication must be a cause of at least
one of those that follow it. This happens when:
\begin{itemize}
\item either the two communications have the same player,
 in which case  we say that the first communication is required  in the
  trace  (\refToDef{def:required});\sm
\item or  the first communication is an output and the second is the
  matching input.
\end{itemize}
We call pointedness the property of a \traceS in which each
communication, except the last one, satisfies one of the two
conditions above.  Like well-formedness, also pointedness is
parameterised on traces.

\medskip
We first define required communications.

\begin{definition}[Required communication]
\mylabel{def:required}
We say that
$\at{\comseqA}{i}$ is {\em required in $\comseqA$},
notation $\oks i  \comseqA$,
  if  $\play{\at{\comseqA}{i}} \subseteq
  \play{\range{\comseqA}{(i+1)}{n}}$, where $n=\cardin\comseqA$.
\end{definition}\noindent
Note that by definition the last element $\at{\comseqA}{n}$ is
not required in $\comseqA$.

\begin{definition}[Pointedness]\mylabel{pcsA}
\mylabel{pcsA1}
The \traceS $\comseqA$ is {\em $\comseqA'$-pointed} if $\comseqA$
is $\comseqA'$-well formed and for all $i$,
$1\leq i<n$,  one of the following holds:
\begin{enumerate}
\item\mylabel{pcsA11} either $\oks i {\comseqA}$\vspace*{-1mm}
 \item\mylabel{pcsA12} or $\ct {i +
     \cardin{\comseqA'}}{\concat{\comseqA'}\comseqA}{j +
     \cardin{\comseqA'}}$  for some $j>i$.
  \end{enumerate} \vspace*{-4mm}
 \end{definition}

\noindent Observe that the two conditions of the above definition are reminiscent
of the two kinds of causality - local flow and cross-flow - discussed for
network events in \refToSection{sec:netA-ES} (\refToDef{netaevent-relations}).  Indeed, Condition (\ref{pcsA11})
holds if $\at{\comseqA}{i}$ is a local cause of some
$\at{\comseqA}{j}$, $j > i$, while Condition (\ref{pcsA12})
holds if $\at{\comseqA}{i}$ is a cross-cause of some
$\at{\comseqA}{j}$, $j > i$.

\smallskip
Note also that the conditions of \refToDef{pcsA} must be satisfied
only by every $\at{\comseqA}{i}$ with $i < n$, thus they hold
vacuously for any single communication and for the empty \trace.
 This does not imply that a single-communication trace $\comseqA$
is $\comseqA'$-pointed for any $\comseqA'$, since to this end
$\comseqA$ also needs to be $\comseqA'$-well formed. For instance, the
trace $\CommAsI{\q}{\la}{\pp}$ is not $\ee$-well formed nor
$\CommAs{\pp}{\la}{\q}$-well formed (beware not to confuse
$\CommAsI\q\la\pp$ with $\CommAsI\pp\la\q$).   If $\comseqA =
\concat{\concat{\comseqA_1}{\beta}}{\beta'}$ is $\comseqA'$-pointed,
then either $\play{\beta}=\play{\beta'}$ or $\beta'$ matches $\beta$
in $\concat{\comseqA'}{\comseqA}$, i.e.,
$\ct{\eh{\comseqA_1}+1+\eh{\comseqA'}}{\concat{\comseqA'}{\comseqA}}{\eh{\comseqA_1}+2+\eh{\comseqA'}}$.
Also, if a \traceS $\comseqA$ is $\comseqA'$-pointed for some
$\comseqA'$, we know that each communication in $\comseqA$ must be
executed before the last one.  Indeed, the reader familiar with
ESs will have noticed that pointed traces are very similar in spirit
to ES \emph{prime configurations}.\vspace*{-2mm}

\begin{example}
  Let $\os=\CommAs{\pp}{\la}{\q}\cdot\CommAs{\pr}{\la}{\q}$ and
  $\comseqA=\CommAs{\pp}{\la}{\q}\cdot\CommAsI{\pp}{\la}{\q}\cdot\CommAsI{\pr}{\la}{\q}$.
  The \traceS $\comseqA$ is not $\os$-pointed, since the output
  $\CommAs{\pp}{\la}{\q}$ in $\comseqA$ is not matched by any input in
  $\os\cdot\comseqA$ (the input $\CommAsI{\pp}{\la}{\q}$ in $\comseqA$
  matches the output $\CommAs{\pp}{\la}{\q}$ in $\os$) and it is not
  required in $\comseqA$ because its player $\pp$ is neither  the player
  of $\CommAsI{\pp}{\la}{\q}$ nor the  player of
  $\CommAsI{\pr}{\la}{\q}$. So the condition of \refToDef{pcsA} is not
  satisfied for the output $\CommAs{\pp}{\la}{\q}$ in $\comseqA$.
  Instead the \traceS
  $\comseqA'=\CommAsI{\pp}{\la}{\q}\cdot\CommAsI{\pr}{\la}{\q}$ is
  $\os$-pointed, as well as the \traceS
  $\comseqA''=\CommAsI{\pr}{\la}{\q}\cdot\CommAsI{\pp}{\la}{\q}$.\vspace*{-2mm}
\end{example}

Pointedness is preserved by suffixing.

\begin{lemma}\mylabel{suffix-pointedness}
If $\comseqA$ is $\comseqA'$-pointed and $\comseqA =
\concat{\comseqA_1}{\comseqA_2}$, then $\comseqA_2$ is $\concat{\comseqA'}{\comseqA_1}$-pointed. \vspace*{-2mm}
\end{lemma}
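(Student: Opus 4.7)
The plan is to verify that $\comseqA_2$ satisfies both the well-formedness condition and the pointedness condition of \refToDef{pcsA} relative to the trace $\concat{\comseqA'}{\comseqA_1}$.

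First I would handle well-formedness. Since $\comseqA$ is $\comseqA'$-well formed, $\concat{\comseqA'}{\comseqA}$ is well formed. By associativity of concatenation, $\concat{\comseqA'}{\comseqA} = \concat{(\concat{\comseqA'}{\comseqA_1})}{\comseqA_2}$, so $\comseqA_2$ is $\concat{\comseqA'}{\comseqA_1}$-well formed.

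Next I would verify the pointedness condition. Fix $i$ with $1 \leq i < \cardin{\comseqA_2}$, and let $k = \cardin{\comseqA_1} + i$. Then $1 \leq k < \cardin{\comseqA}$ and $\at{\comseqA_2}{i} = \at{\comseqA}{k}$. Applying the hypothesis that $\comseqA$ is $\comseqA'$-pointed at position $k$, I get two cases. If $\oks{k}{\comseqA}$, then $\play{\at{\comseqA}{k}} \subseteq \play{\range{\comseqA}{k+1}{\cardin{\comseqA}}}$; translating indices by $-\cardin{\comseqA_1}$ (which preserves the suffix since $\range{\comseqA}{k+1}{\cardin{\comseqA}} = \range{\comseqA_2}{i+1}{\cardin{\comseqA_2}}$), this gives $\oks{i}{\comseqA_2}$. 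Otherwise there is some $j > k$ with $\ct{k + \cardin{\comseqA'}}{\concat{\comseqA'}{\comseqA}}{j + \cardin{\comseqA'}}$; since $j > k \geq \cardin{\comseqA_1} + 1$, setting $j' = j - \cardin{\comseqA_1}$ yields $j' > i$ with $j' \leq \cardin{\comseqA_2}$. Using associativity again, $i + \cardin{\concat{\comseqA'}{\comseqA_1}} = k + \cardin{\comseqA'}$ and $j' + \cardin{\concat{\comseqA'}{\comseqA_1}} = j + \cardin{\comseqA'}$, and the underlying trace $\concat{(\concat{\comseqA'}{\comseqA_1})}{\comseqA_2}$ equals $\concat{\comseqA'}{\comseqA}$, so the matching condition transfers verbatim to $\ct{i + \cardin{\concat{\comseqA'}{\comseqA_1}}}{\concat{(\concat{\comseqA'}{\comseqA_1})}{\comseqA_2}}{j' + \cardin{\concat{\comseqA'}{\comseqA_1}}}$.

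There is no real obstacle here: both clauses are defined in terms of positions in traces, and the operation of moving a prefix $\comseqA_1$ from $\comseqA$ into the context is an associativity rewrite that preserves all indices relative to the underlying concatenated trace. The only mild care required is the bookkeeping that $j > k$ forces $j > \cardin{\comseqA_1}$, so that $j'$ is a valid position in $\comseqA_2$, and the observation that the matching relation $\propto$ depends only on the underlying trace and absolute positions, both of which are invariant under the associativity rewrite.
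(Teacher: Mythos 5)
Your proof is correct and follows essentially the same route as the paper, which simply declares the result ``immediate'' from the associativity $\concat{(\concat{\comseqA'}{\comseqA_1})}{\comseqA_2} = \concat{\comseqA'}{(\concat{\comseqA_1}{\comseqA_2})}$ and the fact that $\comseqA_2$ is a suffix of $\comseqA$. You have merely spelled out the index bookkeeping that the paper leaves implicit.
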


\begin{proof}
  Immediate, since
  $\concat{(\concat{\comseqA'}{\comseqA_1})}{\comseqA_2} =
  \concat{\comseqA'}{(\concat{\comseqA_1}{\comseqA_2})}$ and
  $\comseqA_2$ is a suffix of $\comseqA$ and therefore its elements
  are a subset of those of $\comseqA$.
\end{proof}

Note on the other hand that if $\comseqA$ is $\comseqA'$-pointed and $\comseqA' =
\concat{\comseqA'_1}{\comseqA'_2}$, then it is not true that
$\concat{\comseqA'_2}{\comseqA}$ is
$\comseqA'_1$-pointed, because in this case the set of elements
of  $\concat{\comseqA'_2}{\comseqA}$ is a superset of that of
$\comseqA$.
For instance, if $\comseqA'_1 = \ee$, $\comseqA'_2 = \CommAs\pp\la\q$
and $\comseqA = \concat{\CommAs\pr{\la'}\ps}{\CommAsI\pr{\la'}\ps}$,
then
$\concat{\comseqA'_2}{\comseqA}$ is not $\comseqA'_1$-pointed.

A useful property of  $\os$-pointedness
is that it is preserved by the
equivalence $\approx_\os$, which does not change\ the rightmost
communication in $\os$-pointed \traces. We use $\last{\comseqA}$ to
denote the last communication of $\comseqA$.

\begin{lemma}\mylabel{prop:congSimP}
  \mylabel{prop:congSimP1} Let $\comseqA$ be $\os$-pointed and
  $\comseqA\approx_{\os} \comseqA'$. Then $\comseqA'$ is $\os$-pointed
  and $\last{\comseqA'}=\last{\comseqA}$.
\end{lemma}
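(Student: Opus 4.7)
\medskip
\noindent\textbf{Proof plan.}
Since $\approx_\os$ is the reflexive transitive closure of $\swap{}{}{\os}$, it is enough by an easy induction on the number of swaps to treat a single swap step $\swap{\comseqA}{\comseqA'}{\os}$; I will concentrate on this case. Write $\comseqA=\concat{\concat{\concat{\range{\comseqA}{1}{i-1}}{\beta}}{\beta'}}{\comseqA''}$ and $\comseqA'=\concat{\concat{\concat{\range{\comseqA}{1}{i-1}}{\beta'}}{\beta}}{\comseqA''}$, with $\play{\beta}\cap\play{\beta'}=\emptyset$ and $\neg(\ct{i+\cardin{\os}}{\os\cdot\comseqA}{i+1+\cardin{\os}})$. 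By \refToLemma{prop:congSim0}, $\comseqA'$ is $\os$-well formed, so it remains to prove pointedness and preservation of $\last{}$.

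\medskip
\noindent\textbf{Preservation of the last element.}
I first show that the tail $\comseqA''$ cannot be empty; thus $\last{\comseqA}=\last{\comseqA''}=\last{\comseqA'}$. Suppose toward a contradiction that $\comseqA''=\ee$, so $n=i+1$ and $\beta'$ sits at the last position. Applying $\os$-pointedness of $\comseqA$ at position $i$ gives either $\play{\beta}\subseteq\play{\beta'}$ (impossible, since $\play{\beta}$ is a nonempty singleton and is disjoint from $\play{\beta'}$) or $\ct{i+\cardin{\os}}{\os\cdot\comseqA}{j+\cardin{\os}}$ for some $j>i$; the only possible value is $j=i+1$, contradicting the swap hypothesis $\neg(\ct{i+\cardin{\os}}{\os\cdot\comseqA}{i+1+\cardin{\os}})$. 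So $\comseqA''\neq\ee$ and the last element is preserved.

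\medskip
\noindent\textbf{Pointedness of $\comseqA'$.}
For each $k$ with $1\leq k<n$ I verify Condition (\ref{pcsA11}) or (\ref{pcsA12}) of \refToDef{pcsA1} for $\comseqA'$, splitting on the position of $k$ relative to the swap. For $k<i$ or $k>i+1$ the element $\at{\comseqA'}{k}$ coincides with $\at{\comseqA}{k}$, the set of players of the suffix is unchanged, and for matchings the multi-set of elements in any prefix of length $\geq i+1+\cardin{\os}$ is the same in $\os\cdot\comseqA$ and $\os\cdot\comseqA'$, so multiplicities are preserved; the only delicate sub-case is when $k<i$ and the witness $j$ in $\comseqA$ equals $i$ or $i+1$, in which case one rebinds $j$ to $i+1$ (resp.\ $i$) in $\comseqA'$ and verifies the multiplicity equation, using that the swapped partner is not of sort $\pp\q\dagger$ (otherwise $\play{\beta}\cap\play{\beta'}$ would contain either $\pp$ or $\q$, violating the swap condition). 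For $k=i$ in $\comseqA'$ (so $\at{\comseqA'}{i}=\beta'$) I use the pointedness of $\comseqA$ at $i+1$ (which applies because $\comseqA''\neq\ee$): if $\oks{i+1}{\comseqA}$ then $\play{\beta'}\subseteq\play{\range{\comseqA}{i+2}{n}}\subseteq\play{\range{\comseqA'}{i+1}{n}}$; if $\ct{i+1+\cardin{\os}}{\os\cdot\comseqA}{j+\cardin{\os}}$ for some $j>i+1$ then the matching input at position $j$ is unaffected, and the relevant multiplicity equation transfers to $\comseqA'$ (again using that $\beta$ has a different player from $\beta'$ and hence is not of sort $\pp\q!$). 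For $k=i+1$ in $\comseqA'$ (so $\at{\comseqA'}{i+1}=\beta$) I use the pointedness of $\comseqA$ at $i$, noting that the swap hypothesis forbids the matching witness from being $i+1$, so any such witness $j$ satisfies $j>i+1$ and is therefore unaffected; disjointness of players takes care of the required-case.

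\medskip
\noindent\textbf{Main obstacle.}
The only genuinely delicate part is the bookkeeping for the matching condition when one of the swapped communications lies strictly inside the prefix used to count multiplicities: one must show that swapping $\beta$ and $\beta'$ does not change the relevant $\mult{\pp\q!}{\cdot}$ or $\mult{\pp\q?}{\cdot}$ counts. This ultimately follows from the disjointness $\play{\beta}\cap\play{\beta'}=\emptyset$, which rules out the swap partner contributing to the same $\pp\q\dagger$ multiplicity, together with the fact that position intervals large enough to contain both $i$ and $i+1$ see the same multi-set of elements before and after the swap.
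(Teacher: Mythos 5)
Your proof is correct and follows essentially the same route as the paper's: induction on the number of swaps, a contradiction argument showing the tail after the swapped pair is nonempty (hence the last element is preserved), and a position-by-position verification of pointedness using player-disjointness for the required-case and the non-matching swap condition plus multiplicity preservation for the matching case. You are in fact somewhat more explicit than the paper about the bookkeeping for positions outside $\set{i,i+1}$ and about why the $\mult{\pp\q\dagger}{\cdot}$ counts are unaffected, but this is a refinement of the same argument rather than a different approach.
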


\begin{proof}
  Let $\comseqA\approx_{\os} \comseqA'$. By
  \refToDef{def:permEqA} $\comseqA'$ is obtained
  from $\comseqA$ by  $m$  swaps of adjacent communications. The proof
  is by induction on the number $m$ of swaps.

\medskip\noindent
 {\it Case $m=0$}.  The result is obvious.

\smallskip\noindent
 {\it Case $m>0$}. In this case there is   $\comseqA_1$ obtained from
  $\comseqA$
  by $m-1$ swaps of adjacent communications and
there are $\beta, \beta', \comseqA_2$ such that\sm

\centerline{$\begin{array}{c}\comseqA_1 =
      \concat{\concat{\concat{\range{\comseqA_1}{ 1 }{i-1}}{\beta}}{\beta'}}{\comseqA_2}
\,\approx_{\os}\,\,
     \concat{\concat{\concat{\range{\comseqA_1}{ 1 }{i-1}}{\beta'}}{\beta}}{\comseqA_2}
      = \comseqA'\\ \text{and }
\play{\beta}\cap\play{\beta'} = \emptyset
\text{ and }\neg(\ct
      {i + \cardin{\os} }{\os\cdot\comseqA_1} {\, i+1 + \cardin{\os} })
\end{array}$
}

\medskip\noindent
By induction hypothesis $\comseqA_1$ is $\os$-pointed and
$\last{\comseqA_1}=\last{\comseqA}$.

\medskip
To show that $\comseqA'$ is $\os$-pointed, observe that
$\play{\beta}\cap\play{\beta'} = \emptyset$ implies: \vspace*{1.6mm}

\centerline{$
\begin{array}{l}
\play{\beta}\subseteq \play{\beta'}\cup\play{\comseqA_2} \ \Leftrightarrow\
 \play{\beta}\subseteq\play{\comseqA_2}\\
\play{\beta'}\subseteq\play{\comseqA_2}\ \Leftrightarrow\
 \play{\beta'}\subseteq\play{\beta}\cup\play{\comseqA_2}
\end{array}
$} \vspace*{1.6mm}

\noindent From this we deduce $\oks{i}{\comseqA_1} \iff \oks{i}{\comseqA'}$ and
$\oks{i+1}{\comseqA_1} \iff \oks{i+1}{\comseqA'}$, so if both
$\at{\comseqA_1}{i}$ and $\at{\comseqA_1}{i+1}$
are required in $\comseqA_1$ we are done.

\medskip
Otherwise, suppose that $\,\ct {i + \cardin{\os} }{\os\cdot\comseqA_1}
{\, j + \cardin{\os} }$ where either $\oks{j}{\comseqA_1}$ or
$j=n$. If $\oks{j}{\comseqA_1}$ then also $\oks{j}{\comseqA'}$, as
we just saw. Now, $j$ cannot be $ i+1$ since by hypothesis $\neg(\ct
{i + \cardin{\os} }{\os\cdot\comseqA_1} {\, i+1 + \cardin{\os} })$.
This implies $\ct{i+1 + \cardin{\os}}{\os\cdot\comseqA'}{j +
  \cardin{\os}}$.  Similarly we can show that $\ct {i+1 + \cardin{\os}
} {\os\cdot\comseqA_1} {j + \cardin{\os} }$ implies $\ct {i +
  \cardin{\os}} {\os\cdot\comseqA'} {j + \cardin{\os} }$.
Therefore $\comseqA'$ is $\os$-pointed.

\medskip
To show that $\last{\comseqA}=\last{\comseqA'}$, assume ad absurdum
that $ \comseqA_2 =\epsilon$.  Then
$\concat{\concat{\range{\comseqA_1}{ 1 }{i-1}}{\beta}}{\beta'}$ is
$\os$-pointed and thus, as observed after \refToDef{pcsA}, we have
either $\play{\beta} \cap\play{\beta'} \neq \emptyset$ or $\ct {i +
  \cardin{\os}}{\os\cdot\comseqA_1} {\,i+1 + \cardin{\os}}$. In both
cases $\beta$ and $\beta'$ cannot be swapped. So it must be
$\comseqA_2\neq\epsilon$.
\end{proof}

 We now relate \agts\ with pairs of o-\tracesS and \traces.

\begin{lemma}
\mylabel{prop:wellForG}
If   $\derSI{}{ \G\parN\Msg }$   and $\os=\osq{\Msg}$ and
$\comseqA\in\FPaths{\G}$, then
 ${\comseqA}$ is
$\os$-well formed.
\end{lemma}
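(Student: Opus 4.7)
The plan is to proceed by induction on the length of the trace $\comseqA\in\FPaths{\G}$, which is non-empty by definition of $\FPaths{\G}$. Write $\comseqA=\concat{\beta}{\comseqA'}$ with $\comseqA'$ possibly empty; the case split is on whether $\beta$ is an output or an input. In either case the shape of $\G$ is forced by $\comseqA\in\FPaths{\G}$, so I can apply inversion on $\derSI{}{\G\parN\Msg}$ using the corresponding rule of \refToFigure{wfagtA}.

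Suppose first that $\beta=\CommAs\pp{\la_k}\q$, so $\G=\agtO{\pp}{\q}i I{\la}{\G}$ with $k\in I$. Inversion of Rule $\rulename{Out}$ yields $\derSI{}{\G_k\parN\addMsg{\Msg}{\mq\pp{\la_k}\q}}$, and since $\osq{\addMsg{\Msg}{\mq\pp{\la_k}\q}}=\concat{\os}{\CommAs\pp{\la_k}\q}$, the induction hypothesis on $\comseqA'$ gives that $\concat{(\concat{\os}{\CommAs\pp{\la_k}\q})}{\comseqA'}=\concat{\os}{\comseqA}$ is well formed; the degenerate sub-case $\comseqA'=\ee$ is immediate because $\concat{\os}{\comseqA}$ then contains no input. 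Suppose instead that $\beta=\CommAsI\pp{\la}\q$, so $\G=\agtI{\pp}{\q}{\la}{\G'}$; inversion of Rule $\rulename{In}$ yields $\Msg\equiv\addMsg{\mq\pp{\la}\q}{\Msg'}$ and $\derSI{}{\G'\parN\Msg'}$. I would then fix the representative $\os\cong\concat{\CommAs\pp{\la}\q}{\osq{\Msg'}}$, so that $\concat{\os}{\comseqA}$ becomes $\concat{\CommAs\pp{\la}\q}{\concat{\osq{\Msg'}}{\concat{\CommAsI\pp{\la}\q}{\comseqA'}}}$. The input $\beta$ itself matches the initial $\CommAs\pp{\la}\q$, since both sides of the matching equation of \refToDef{def:matching} evaluate to $0$: no $\pp\q!$ precedes position $1$, and $\osq{\Msg'}$ is an output-only sequence.

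The main obstacle, and the only step of the argument that involves any explicit computation, is to transport the matchings already witnessed by the induction hypothesis on $\concat{\osq{\Msg'}}{\comseqA'}$ to the enlarged sequence $\concat{\CommAs\pp{\la}\q}{\concat{\osq{\Msg'}}{\concat{\CommAsI\pp{\la}\q}{\comseqA'}}}$. For any input $\at{\comseqA'}{j'}$ paired with its matching output in $\concat{\osq{\Msg'}}{\comseqA'}$ and involving a pair $(\pr,\ps)$, both the count $\mult{\pr\ps!}{\text{prefix before the output}}$ and the count $\mult{\pr\ps?}{\text{prefix before the input}}$ shift by the same amount under the enlargement (by $1$ if $(\pr,\ps)=(\pp,\q)$ and by $0$ otherwise), so the equality required by \refToDef{def:matching} is preserved. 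The same multiplicity argument applied to a single adjacent swap of outputs with distinct sender--receiver pairs shows that well-formedness of $\concat{\os}{\comseqA}$ is invariant under the equivalence $\cong$ on $\os$, which legitimises the choice of representative used in the input case.
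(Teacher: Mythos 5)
Your proof is correct and follows essentially the same route as the paper's: induction on the trace, inversion of the \balancing\ rules \rulename{Out} and \rulename{In}, and in the input case a relocation of the matchings of $\concat{\osq{\Msg'}}{\comseqA'}$ into $\concat{\os}{\comseqA}$ — the paper does this by tracking the shifted positions ($j{+}1$ vs.\ $j{+}2$) where you track the equal shifts of the two multiplicities, which amounts to the same computation. Your explicit remark that well-formedness is invariant under $\cong$ on the o-trace prefix is a small extra care the paper leaves implicit.
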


\begin{proof}
We prove  by induction on $\comseqA$ that $\derSI{}{ \G\parN\Msg }$ implies that
 $\concat\os{\comseqA}$  is well formed.

\vspace*{1.5mm}\noindent
 {\it Case $\comseqA=\beta$}.   If $\beta$ is an output the result is obvious. If $\beta=\CommAsI\pp{\la}\q$, by  Rule $\rulename{In}$
of \refToFigure{wfagtA}, we get $\Msg\equiv\addMsg{\mq\pp{\la}\q}{\Msg'}$. Therefore $\os=\concat{\CommAs\pp{\la}\q}{\os'}$
and $\concat{\os}{\beta}$ is well formed.

\vspace*{1.5mm}\noindent
 {\it Case $\comseqA=\beta\cdot\comseqA'$ with $\comseqA'\in\FPaths{\G'}$.}   If $\beta=\CommAs\pp{\la}\q$, then
$\G=\agtO{\pp}{\q}{i}{I}{\la}{\G}$ and $\la=\la_k$ and $\G'=\G_k$ for some $k\in I$.
From $\derSI{}{ \G\parN\Msg }$ and  Rule $\rulename{Out}$ of \refToFigure{wfagtA}, we get
  $\derSI{}{\G'\parN\addMsg\Msg{\mq\pp{\la}\q}}$. By induction hypothesis on $\comseqA'$, the trace
   $\concat{\osq{\addMsg\Msg{\mq\pp{\la}\q}}}{\comseqA'}$ is well formed. So since
 $\osq{\addMsg\Msg{\mq\pp{\la}\q}}=\os\cdot \CommAs\pp{\la}\q$ we get
that $\os\cdot\comseqA$ is well formed.

\medskip\noindent
  If  $\beta=\CommAsI\pp{\la}\q$, then
   $\G=\agtI \pp\q \la {\G'}$.  From $\derSI{}{ \G\parN\Msg }$ and Rule $\rulename{In}$ of
  \refToFigure{wfagtA}, we get $\Msg\equiv\addMsg{\mq\pp{\la}\q}\Msg'$ and
  $\derSI{ \G\parN\Msg }{\G'\parN\Msg'}$.  Let $\os'=\osq{\Msg'}$. Then
  $\os\cong\CommAs\pp{\la}\q\cdot\os'$. By induction hypothesis on
  $\comseqA'$ the \traceS $\concat{\os'}{\comseqA'}$ is well formed.  We
  want now to show that also the \traceS\
$\comseqA '' =
  \concat{\os}{\comseqA} =
  \concat{\concat{\CommAs{\pp}{\la}{\q}}{\os'}}
  {\concat{\CommAsI{\pp}{\la}{\q}}{\comseqA'}}$
is well formed, namely that in $\comseqA''$ every input  matches an output.
Note that the first input in $\comseqA''$ is  $\at{\comseqA''}{|\os|+1}
  = \CommAsI{\pp}{\la}{\q}$.   This input matches the output
  $\at{\comseqA''}{ 1 } = \CommAs{\pp}{\la}{\q}$.  For inputs
  $\at{\comseqA''}{i}$
with  $i>\cardin{\os}+1$, we know that $\at{\comseqA''}{i} =
\at{(\concat{\os'}{\comseqA'})}{i - 2}$, where
$\at{(\concat{\os'}{\comseqA'})}{i - 2}$ matches  some output
$\at{(\concat{\os'}{\comseqA'})}{j}$ in $\concat{\os'}{\comseqA'}$.
Then $\at{\comseqA''}{i}$ matches $\at{\comseqA''}{j+1}$ if $j \leq
\cardin{\os'}$ and $\at{\comseqA''}{j+2}$ otherwise. This proves that
$ \concat{\os}{\comseqA}$ is well \linebreak formed.\vspace{-2mm}
\end{proof}

We have now enough machinery to define events of \agts, which are
equivalence classes of pairs whose  first elements are  o-\trace s $\os$
(representing queues) and whose second elements are \trace s
$\comseqA$ (representing paths in  the \sgt\ components of the
\agts).   The traces $\os$ and $\comseqA$ are considered respectively
modulo $\cong$ and modulo $\approx_{\os}$.  The \traceS
 ${\comseqA}$  is $\os$-well formed,  reflecting the
 \balancing\ of asynchronous types.
The
communication represented by an event is the last communication of
$\comseqA$.\vspace*{-3mm}

\begin{definition}[Type event]\label{def:glEventA}\hbox{}\hfill\vspace*{-7mm}

\begin{enumerate}
\item\label{def:glEventA1} The {\em equivalence} $\sim$ on
  pairs $\Commpair\os\comseqA$, where $\comseqA \neq\ee$ is
$\os$-pointed,  is the least equivalence such that  \vspace*{1.8mm}

  \centerline{$\Commpair\os\comseqA\sim\Commpair{\os'}{\comseqA'}$ if
    $\os\cong\os'$ and $\comseqA\approx_{\os}{\comseqA'}$}

\item\label{def:glEventA2}\label{def:glEventA3} A  \emph{type event (t-event)}  $\delta =
\eqA\os\comseqA$ is the equivalence class of the pair
$\Commpair\os\comseqA$.
The communication of $\delta$, notation  $\io\delta$, is
  defined to be  $\last{\comseqA }$.
\item\label{def:glEventA4} We denote by  $\GEa$  the set of t-events.\vspace*{-2mm}
\end{enumerate}
\end{definition}
Notice that the function $\sf i/o$ can be applied both to an n-event
(\refToDef{proceventNA}(\ref{proceventNA2})) and to a t-event
(\refToDef{def:glEventA}(\ref{def:glEventA3})). In all cases the
result is a communication.

\medskip
Given an o-trace $\os$ and an arbitrary trace $\comseqA$, we want to
build a t-event $\eqA\os{\comseqA'}$ (\refToDef{def:pf}).  To this aim
we scan $\comseqA$ from right to left and remove all and only the
communications $\at{\comseqA}{i}$ which  make $\comseqA$ violate
the $\os$-pointedness property.

\begin{definition}[Trace  filtering]\mylabel{def:trace-filtering}\mylabel{def:pm}
The {\em filtering of $\concat\comseqA{\comseqA'}$ by $\os$ with cursor at $\comseqA$,} denoted by $\filt\comseqA{\comseqA'}$,
is defined by induction on $\comseqA$ as follows:  \smallskip

\centerline{$\filt{\ee}{\comseqA'}=\comseqA'\qquad\qquad
\filt{(\concat{\comseqA''}{\beta})}{\comseqA'} =\begin{cases}
\filt{\comseqA''}{(\concat{\beta}{\comseqA'})}     & \text{if $\concat\beta{\comseqA'}$ is $(\concat\os{\comseqA''})$-pointed}  \\[2pt]
\filt{\comseqA''}{\comseqA'}   & \text{otherwise}
\end{cases}
$}
\end{definition}
For example
$\filtP{\concat{\CommAsI\pp\la\q}{\CommAsI\q\la\pp}}{\CommAs\pp\la\q}\ee=\filtP{\CommAsI\pp\la\q}{\CommAs\pp\la\q}\ee=\filtP\ee{\CommAs\pp\la\q}{\CommAsI\pp\la\q}={\CommAsI\pp\la\q}$.
The resulting trace can also be empty,
 in case the last communication is an input and
$\concat{\comseqA}{\comseqA'}$ is not $\os$-well formed. For
instance,
$\filtP{\CommAsI\q\la\pp}{\CommAs\pp\la\q}\ee=\filtP\ee{\CommAs\pp\la\q}{\ee}=\ee$
because $\CommAsI{\q}{\la}{\pp}$ is not $\CommAs{\pp}{\la}{\q}$-well formed.
It is easy to verify that $\filt\comseqA{\comseqA'}$ is a subtrace of
$\concat\comseqA{\comseqA'}$,  and that if $\comseqA$ is $\os$-pointed, then
$\filt{\comseqA}{\ee} = \comseqA$.

\begin{definition}[t-event of a pair]\mylabel{def:pf}
Let $\comseqA\not=\ee$ be $\os$-well formed.
  The {\em  t-event} generated by $\os$ and $\comseqA$, notation
 $\point(\os,\comseqA)$, is defined to be $\point(\os,\comseqA)
 =\eqA{\os}{\filt{\comseqA}\ee}$.
\end{definition}

Hence  the trace of the event $\point(\os,\comseqA)$ is the
filtering of $\comseqA$ by $\os$ with cursor at the end of $\comseqA$.
This definition is sound  since $\os\cong\os'$ implies  $\filt{\comseqA}{\comseqA'}=\filtP{\comseqA}{\os'}{\comseqA'}$.
 Moreover the communication of $\point(\os,\comseqA)$ is the last communication of $\comseqA$.

\begin{lemma}\label{gl}
 If $\point(\os,\comseqA)$ is defined, then $\filt{\comseqA}\ee\not=\ee$ and $\io{\point(\os,\comseqA)}=\last{\filt\comseqA\ee}=\last\comseqA$.
\end{lemma}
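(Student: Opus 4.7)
The plan is to unfold the definitions of $\point$ and of trace filtering, showing that the last element $\beta_n$ of $\comseqA$ is always retained by the filtering process. Write $\comseqA = \concat{\range{\comseqA}{1}{n-1}}{\beta_n}$ where $n = \cardin{\comseqA} \geq 1$. By \refToDef{def:pm}, the very first step of computing $\filt{\comseqA}{\ee}$ inspects whether the singleton trace $\concat{\beta_n}{\ee} = \beta_n$ is $(\concat{\os}{\range{\comseqA}{1}{n-1}})$-pointed; if so, $\beta_n$ is pushed to the front of the keep-part and will appear last in the final result.

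The key observation is that pointedness of a single communication reduces to well-formedness. Indeed, by \refToDef{pcsA}, a trace of length $1$ is $\comseqA'$-pointed iff it is $\comseqA'$-well formed, since the universal quantification over $1 \le i < n$ is vacuous. Thus I only need to verify that $\beta_n$ is $(\concat{\os}{\range{\comseqA}{1}{n-1}})$-well formed, i.e. that $\concat{\concat{\os}{\range{\comseqA}{1}{n-1}}}{\beta_n}$ is well formed. But this trace equals $\concat{\os}{\comseqA}$, whose well-formedness is exactly the hypothesis that $\comseqA$ is $\os$-well formed. Hence $\beta_n$ is retained in the first filtering step, which proves $\filt{\comseqA}{\ee} \neq \ee$ and moreover guarantees that the last element of $\filt{\comseqA}{\ee}$ is $\beta_n$, since all subsequent filtering steps only prepend (or skip) communications and never alter the tail of the keep-part.

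To conclude, apply \refToDef{def:pf} and \refToDef{def:glEventA}(\ref{def:glEventA3}): we have $\point(\os,\comseqA) = \eqA{\os}{\filt{\comseqA}{\ee}}$ and $\io{\eqA{\os}{\comseqA''}} = \last{\comseqA''}$, so $\io{\point(\os,\comseqA)} = \last{\filt{\comseqA}{\ee}}$. Combined with the previous paragraph this yields $\io{\point(\os,\comseqA)} = \last{\filt{\comseqA}{\ee}} = \beta_n = \last{\comseqA}$, as required.

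The proof is essentially routine once the right observation is made; no step poses a real obstacle. The only mild subtlety worth flagging explicitly in the write-up is the reduction of pointedness to well-formedness on singleton traces, which immediately disposes of the case analysis on whether $\beta_n$ is an output or an input — the input case being the only non-trivial one, since a trailing input needs its matching output to already sit in $\concat{\os}{\range{\comseqA}{1}{n-1}}$, which is precisely what $\os$-well-formedness of $\comseqA$ provides.
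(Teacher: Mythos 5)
Your proof is correct and follows essentially the same route as the paper's: both decompose $\comseqA$ as a prefix followed by its last communication $\beta$, observe that a single communication is $(\concat{\os}{\comseqA'})$-pointed exactly when it is $(\concat{\os}{\comseqA'})$-well formed (the pointedness conditions being vacuous for length~$1$), and derive this well-formedness directly from the hypothesis that $\concat{\os}{\comseqA}$ is well formed, so that the first filtering step retains $\beta$ and all later steps leave the tail untouched. Your explicit remark that subsequent filtering steps only prepend to (or skip past) the keep-part is a welcome bit of added care that the paper leaves implicit.
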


\begin{proof}
  Let $\comseqA\not=\ee$ be $\os$-well formed and
  $\comseqA=\concat{\comseqA'}\beta$. Then $\beta$ is
  $(\concat\os{\comseqA'})$-well formed by~\refToDef{def:WFnew}. This
  implies that $\beta$ is $(\concat\os{\comseqA'})$-pointed
  by~\refToDef{pcsA1}, and thus
  $\filt\comseqA\ee=\filt{(\concat{\comseqA'}{\beta})}\ee=
  \filt{\comseqA'}\beta$. This gives
  $\io{\point(\os,\comseqA)}=\last{\filt\comseqA\ee}=\last\comseqA$.
\end{proof}

Since the o-\tracesS in the t-events of  an   \agt\   correspond to the 
queue, we define the causality and conflict relations only between
t-events with the same o-\traces.  Causality is then simply prefixing
of \traces\  modulo $\approx_\os$,  while conflict is induced by the conflict relation
on the p-events obtained by projecting the traces on participants (\refToDef{pep}(\ref{pep1})).

\begin{definition}[Causality and conflict relations on t-events] \label{ageo}
The {\em causality} relation $\precP$ and the {\em conflict} relation $\gr$ on the set of t-events $\GEa$ are defined by:
\begin{enumerate}
\item\mylabel{ageo1} $\eqA\os{\comseqA}\precP\eqA{\os}{\comseqA'}$ if
$\comseqA'\approx_\os\concat{\comseqA}{\comseqA_1}$ for some  $\comseqA_1$;
\item\mylabel{ageo2}  $\eqA\os{\comseqA}\,\grr \,\eqA{\os}{\comseqA'}$
  if  $\projAP{\comseqA}\pp \grr \projAP{\comseqA'}\pp$ for some $\pp$.
 \end{enumerate}
\end{definition}
\noindent
Concerning Clause~(\ref{ageo1}), note that the relation $\precP$
is able to express cross-causality as well as local
causality, thanks to the hypothesis of $\os$-well formedness of $\tau$
in any t-event  $\eqA\os\comseqA$.
Indeed, this hypothesis implies that, whenever $\comseqA$  ends by an
input $\CommAsI{\pp}{\la}{\q}$, then the matched
output
$\CommAs{\pp}{\la}{\q}$ must appear either in $\os$, in which case
the output has already occurred, or at some position $i\,$ in
$\comseqA$.
In the latter case, the t-event
$\point(\os, \range{\comseqA}{1}{i})$,
which represents the output $\CommAs{\pp}{\la}{\q}$,  is such that
  $\point(\os, \range{\comseqA}{1}{i}) \precP \eqA\os\comseqA$.

\medskip
As regards Clause~(\ref{ageo2}), note that if
  $\comseqA\approx_{\os}{\comseqA'}$, then
  $\projAP{\comseqA}\pp=\projAP{\comseqA'}\pp$ for all $\pp$, because
  $\approx_{\os}$ does not swap communications with the same player.
  Hence, conflict is well defined, since it does not depend on the
  trace chosen in the equivalence class.  The condition
  $\projAP{\comseqA}\pp \grr \projAP{\comseqA'}\pp$ states that
  participant $\pp$ does the same actions in both traces up to some
  point, after which it performs two different actions in $\comseqA$
  and $\comseqA'$.

\medskip
We get the events of  an   \agt\ $\G\parN\Msg$ by applying the
function $\point$ to the pairs made of the o-trace  representing   the queue $\Msg$
and a trace in the tree of $\G$.  \refToLemma{prop:wellForG} and
\refToDef{def:pf} ensure that $\point$ is defined.  We then build the
ES associated with  an  \agt\ $\G\parN\Msg$ as follows.

\begin{definition}[Event structure of an  \agt]
  \mylabel{egA}
The {\em
    event structure of the   \agt} $\G\parN\Msg$ is the triple  \vspace*{1.6mm}

 \centerline{$\ESGA{\G\parN\Msg} = (\EGGA(\G\parN\Msg), \precP_{\G\parN\Msg} , \grr_{\G\parN\Msg})$}

 \vspace*{1.6mm} \noindent where:
\begin{enumerate}
\itemsep=0.95pt
\item \mylabel{eg1AP}
    $\EGGA(\G\parN\Msg) =
  \{\point(\os,\comseqA) \mid \os=\osq\Msg ~\&~
  \comseqA\in\FPaths{\G}\}$;
\item\mylabel{eg2A} $\precP_{\G\parN\Msg}$ is the restriction of
  $\precP$ to the set $\EGGA(\G\parN\Msg)$;
\item\mylabel{eg3A} $\gr_{\G\parN\Msg}$ is the restriction of $\gr$
  to the set $\EGGA(\G\parN\Msg)$.
\end{enumerate}
\end{definition}

\begin{example}\label{last}
The network of \refToExample{asynchronous-network-with-choice} can be typed by the asynchronous type $\G\parN\emptyset$ with $\G = \agtOS \pp\q {\la_1}{\agtIS \pp\q {\la_1}; \agtOS{\pp}{\pr}{\la}{\agtIS
  \pp\pr{\la}}}~\GlSyB~\agtOS \pp\q {\la_2}{\Seq{\agtIS \pp\q
    {\la_2}}{\agtOS{\pp}{\pr}{\la}{\agtIS \pp\pr{\la}}}}$. The
t-events of $\ESGA{\G\parN\emptyset}$ are:
 \[\centerline{$\begin{array}{lcl}
\globevA_1 = \eqA{\ee}{\pp\q!\la_1} & \qquad & \globevA'_1 = \eqA{\ee}{\pp\q!\la_1\cdot \pp\q?\la_1}
\\
 \globevA_2 = \eqA{\ee}{\pp\q!\la_2}
 & \qquad & \globevA'_2 = \eqA{\ee}{\pp\q!\la_2\cdot \pp\q?\la_2}\\
\globevA_3 =  \eqA{\ee}{\pp\q!\la_1\cdot \pp\pr!\la}
&\qquad & \globevA''_1 = \eqA{\ee}{\pp\q!\la_1\cdot \pp\pr!\la \cdot\pp\pr?\la}\\
\globevA_4 = \eqA{\ee}{\pp\q!\la_2\cdot \pp\pr!\la}
& \qquad & \globevA''_2 =  \eqA{\ee}{\pp\q!\la_2\cdot \pp\pr!\la \cdot\pp\pr?\la}
\end{array}$}\]
\noindent The causality relation is given by $\globevA_1\precP\globevA_3$,
$\globevA_1\precP\globevA'_1$, $\globevA_2\precP\globevA_4$, $\globevA_2\precP\globevA'_2$,
$\globevA_3\precP\globevA''_1$, $\globevA_4\precP\globevA''_2$,
$\globevA_1\precP\globevA''_1$, $\globevA_2\precP\globevA''_2$.
The conflict relation
is given by $\globevA_1\grr\globevA_2$ and all the conflicts inherited
from it. \refToFigure{fig:type-PES}(b) in Section~\ref{sec:resultsA} illustrates this event structure.
\end{example}

\medskip
The following example shows that, due to  the fact that \sgts\ are not able to represent concurrency explicitly,
two forking traces in the tree representation of $\G$ do not necessarily give rise to two conflicting events
in  $\ESGA{\G\parN\Msg}$.

\begin{example}\mylabel{ex:cila}
Let $\G=\CommAs\pp{\la}\q; (\pr \ps !\la_1 ;\CommAsI{\pp}{\la}{\q} ;\CommAsI{\pr}{\la_1}{\ps}
 \GlSyB\pr \ps !\la_2;  {\CommAsI{\pp}{\la}{\q}};\CommAsI{\pr}{\la_2}{\ps})$.
Then
$\ESGA{\G\parN\emptyset}$ contains the t-event
$\eqA{\ee}{\concat{\CommAs\pp{\la}\q}{\CommAsI\pp{\la}\q}}$
generated by the two forking
traces in $\FPaths{\G}$:
\[
 \concat{\CommAs{\pp}{\la}{\q}}{\concat{\CommAs{\pr}{\la_1}{\ps}}{\CommAsI{\pp}{\la}{\q}}}
\qquad\qquad\qquad
  \concat{\CommAs{\pp}{\la}{\q}}{\concat{\CommAs{\pr}{\la_2}{\ps}}{\CommAsI{\pp}{\la}{\q}}}
\]
\noindent Note on the other hand that if we replace $\pr$ by $\q$ in $\G$, namely if we consider the
\sgt\ $\G'=\CommAs\pp{\la}\q; (\q \ps ! \la_1 ;\CommAsI{\pp}{\la}{\q} ;\CommAsI{\q}{\la_1}{\ps}
\GlSyB \q \ps ! \la_2;  {\CommAsI{\pp}{\la}{\q}};\CommAsI{\q}{\la_2}{\ps})$, then
$\ESGA{\G'\parN\emptyset}$ contains $\comoccA=\eqA{\ee}{\concat{\concat{\CommAs{\pp}{\la}{\q}}{\CommAs{\q}{\la_1}{\ps}}}{\CommAsI{\pp}{\la}{\q}}}$ and
$\comoccA'=\eqA{\ee}{\concat{\concat{\CommAs{\pp}{\la}{\q}}{\CommAs{\q}{\la_2}{\ps}}}{\CommAsI{\pp}{\la}{\q}}}$.
Here $\comoccA\grr \comoccA'$ because
\[
\projAP{(\concat{\concat{\CommAs{\pp}{\la}{\q}}{\CommAs{\q}{\la_1}{\ps}}}{\CommAsI{\pp}{\la}{\q}}
  )}{\q}= \concat{\sendL{\ps}{\la_1}}{\rcvL{\pp}{\la}} \,\grr\, \concat{\sendL{\ps}{\la_2}}{\rcvL{\pp}{\la}}
=
\projAP{(\concat{\concat{\CommAs{\pp}{\la}{\q}}{\CommAs{\q}{\la_2}{\ps}}}{\CommAsI{\pp}{\la}{\q}}
  )}{\q}
\]
\noindent  So, here the two occurrences of
$\CommAsI{\pp}{\la}{\q}$ in the type are represented by two distinct events that are in conflict.\vspace*{-1mm}
\end{example}

We end this section by showing that the obtained ES is a PES.

\begin{proposition}\mylabel{basta12A}
Let  $\G\parN\Msg$  be an    \agt.
Then  $\ESGA{\G\parN\Msg}$  is a prime event structure.
\end{proposition}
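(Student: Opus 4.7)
The plan is to verify, one by one, the three conditions in Definition~\ref{pes} for $\ESGA{\G\parN\Msg}$: denumerability of $\EGGA(\G\parN\Msg)$, that $\precP_{\G\parN\Msg}$ is a partial order, and that $\gr_{\G\parN\Msg}$ is an irreflexive, symmetric and hereditary conflict. The key observation that will be used throughout is that if $\comseqA \approx_\os \comseqA'$, then for every participant $\pr$ we have $\projAP{\comseqA}{\pr} = \projAP{\comseqA'}{\pr}$, since the swaps generating $\approx_\os$ only exchange adjacent communications with disjoint players.

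For denumerability I would simply observe that $\IPaths{\G}$ is a countable set of finite traces and $\osq{\Msg}$ is fixed, so the set $\EGGA(\G\parN\Msg)$ is at most countable. For reflexivity of $\precP_{\G\parN\Msg}$, take $\comseqA_1 = \ee$ in Definition~\ref{ageo}(\ref{ageo1}). For transitivity, from $\comseqA' \approx_\os \comseqA \cdot \comseqA_1$ and $\comseqA'' \approx_\os \comseqA' \cdot \comseqA_2$ I would derive $\comseqA'' \approx_\os \comseqA \cdot (\comseqA_1 \cdot \comseqA_2)$ by a simple compositional argument (the swaps witnessing $\comseqA' \approx_\os \comseqA \cdot \comseqA_1$ lift to swaps inside $\comseqA' \cdot \comseqA_2$). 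For antisymmetry, if $\eqA\os{\comseqA_1} \precP \eqA\os{\comseqA_2}$ and $\eqA\os{\comseqA_2} \precP \eqA\os{\comseqA_1}$, then $\cardin{\comseqA_1} = \cardin{\comseqA_2}$ and hence the witnessing suffixes are empty, so $\comseqA_1 \approx_\os \comseqA_2$, i.e.\ the two t-events coincide.

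For the conflict $\gr_{\G\parN\Msg}$, symmetry is inherited directly from the symmetry of $\grr$ on p-events (Definition~\ref{procevent-relations}(\ref{ila--esp3})). For irreflexivity, note that $\eqA\os{\comseqA} \gr \eqA\os{\comseqA}$ would require $\projAP{\comseqA}{\pr} \gr \projAP{\comseqA}{\pr}$ for some $\pr$, which is impossible since $\gr$ on p-events is irreflexive (Proposition~\ref{basta10}). The key step is hereditariness: assume $\delta \gr \delta' \precP \delta''$ with $\delta = \eqA\os{\comseqA}$, $\delta' = \eqA\os{\comseqA'}$, $\delta'' = \eqA\os{\comseqA''}$. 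From $\delta \gr \delta'$ we get some $\pr$ with $\projAP{\comseqA}{\pr} \gr \projAP{\comseqA'}{\pr}$, so by Definition~\ref{procevent-relations}(\ref{ila--esp3}) there are $\pi \neq \pi'$ and sequences $\actseq, \actseq', \actseq''$ with $\projAP{\comseqA}{\pr} = \concat{\concat{\actseq}{\pi}}{\actseq'}$ and $\projAP{\comseqA'}{\pr} = \concat{\concat{\actseq}{\pi'}}{\actseq''}$. From $\delta' \precP \delta''$ we have $\comseqA'' \approx_\os \concat{\comseqA'}{\comseqA_1}$ for some $\comseqA_1$; by the above observation on projections, $\projAP{\comseqA''}{\pr} = \concat{\projAP{\comseqA'}{\pr}}{\projAP{\comseqA_1}{\pr}} = \concat{\concat{\actseq}{\pi'}}{\concat{\actseq''}{\projAP{\comseqA_1}{\pr}}}$. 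Applying Definition~\ref{procevent-relations}(\ref{ila--esp3}) again yields $\projAP{\comseqA}{\pr} \gr \projAP{\comseqA''}{\pr}$, hence $\delta \gr \delta''$, as required.

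The only mildly delicate point is the transitivity step, where one must be careful that pasting together the swap sequences witnessing the two $\approx_\os$ equivalences produces swaps only between communications with disjoint players, and that the $\os$-pointedness of $\comseqA''$ is preserved (which follows from Lemma~\ref{prop:congSimP}). The hereditariness argument instead reduces cleanly to the corresponding property for p-events once the invariance of participant-projections under $\approx_\os$ has been noted.
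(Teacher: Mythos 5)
Your proposal is correct and follows essentially the same route as the paper's proof: reflexivity, transitivity and antisymmetry of $\precP$ are derived from the length-preservation and concatenation properties of $\approx_\os$, and irreflexivity, symmetry and hereditariness of $\gr$ are inherited from the conflict on p-events via the invariance of participant projections under $\approx_\os$, with your hereditariness computation matching the paper's verbatim. The only (harmless) differences are that you spell out denumerability and irreflexivity explicitly and phrase antisymmetry via trace lengths rather than via the composite equivalence $\concat{\concat{\comseqA}{\comseqA_1}}{\comseqA_2}\approx_\os\comseqA$, which amounts to the same observation.
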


\begin{proof}
  We show that $\precP$ and $\gr$ satisfy Properties (\ref{pes2}) and
  (\ref{pes3}) of \refToDef{pes}.  Reflexivity and transitivity of
  $\precP$ follow easily from the properties of concatenation and the
  properties of the two equivalences in
  Definitions~\ref{def:permEqA2} and \ref{def:permEqA}.  As for
  antisymmetry note that, by Clause (\ref{ageo1}) of \refToDef{ageo},
  if $\eqA\os{\comseqA}\,\leq \,\eqA{\os}{\comseqA'}$ and
  $\eqA{\os}{\comseqA'}\,\leq \,\eqA{\os}{\comseqA}$, then
  $\concat{\comseqA}{\comseqA_1}\approx_{\os}{\comseqA'}$ and
  $\concat{\comseqA'}{ \comseqA_2}\approx_{\os}\comseqA$ for some
  $\comseqA_1$ and $ \comseqA_2$.
  Hence $\concat{\concat{\comseqA}{\comseqA_1}}{ \comseqA_2}
  \approx_\os \comseqA$,
  which implies $\comseqA_1= \comseqA_2=\emptyseq$, i.e. ${\comseqA}\approx_\os{\comseqA'}$.

\medskip
  The conflict between t-events inherits irreflexivity, symmetry and
  hereditariness from the conflict between p-events. In particular,
  for hereditariness, suppose that $\eqA{\os}{\comseqA}\grr
  \eqA{\os}{\comseqA'}\precP \eqA{\os}{\comseqA''}$. Then
  $\comseqA''\approx_{\os}\concat{\comseqA'}{\comseqA_1}$ for some
  ${\comseqA_1}$ and
  $\projAP{\comseqA''}\pp=\projAP{(\concat{\comseqA'}{\comseqA_1})}\pp=\concat{(\projAP{\comseqA'}\pp)}{(\projAP{\comseqA_1}\pp)}\grr\projAP{\comseqA}\pp$
  since $\projAP{\comseqA'}\pp\grr\projAP{\comseqA}\pp$. \vspace*{-9mm}
\end{proof}

\begin{figure}[!ht]
\begin{center}
$\Nt = \pP\pp{\q!\la_1 ; \pr!\la\oplus \q!\la_2 ;
  \pr!\la} \parN\pP\q{\pp?\la_1+
  \pp?\la_2}\parN\pP\pr{\pp?\la}\vspace*{-2mm}$
\setlength{\unitlength}{1mm}
\scalebox{0.9}{
\begin{picture}(100,70)
\put(24,60){{\small $\locevAS{\pp}{\ee}{\,\q!\la_1}$}}
\put(20,35){{{\small $\locevAS{\pp}{\ee}{\,\q!\la_1\cdot\pr!\la}$}}}
\put(47,10){{{\small $\locevAS{\pr}{\ee}{\pp?\la}$}}}
\put(50,1){(a)}
\put(67,60){{\small $\locevAS{\pp}{\ee}{\,\q!\la_2}$}}
\put(65,35){{{\small $\locevAS{\pp}{\ee}{\,\q!\la_2\cdot\pr!\la}$}}}
\put(7,47){{\small $\locevAS{\q}{\ee}{\pp?\la_1}$}}
\put(85,47){{\small $\locevAS{\q}{\ee}{\pp?\la_2}$}}
\put(50,60){{$\gr$}}
\put(50,35){{$\gr$}}
\put(50,47.5){{$\gr$}}
\thicklines
\linethickness{0.3mm}
\put(32,57){\vector(0,-1){17}}
\put(72,57){\vector(0,-1){17}}
\multiput(21,47)(1,0){9}{\bf{$\cdot$}}
\multiput(34,47)(1,0){15}{\bf{$\cdot$}}
\multiput(55,47)(1,0){15}{\bf{$\cdot$}}
\multiput(38,60)(1,0){10}{\bf{$\cdot$}}
\multiput(55,60)(1,0){10}{\bf{$\cdot$}}
\multiput(40,35)(1,0){8}{\bf{$\cdot$}}
\multiput(55,35)(1,0){8}{\bf{$\cdot$}}
\multiput(74,47)(1,0){9}{\bf{$\cdot$}}
\multiput(55,50)(1.5,1){7}{\bf{$\cdot$}}
\multiput(38.5,38.5)(1.5,1){7}{\bf{$\cdot$}}
\multiput(39,56)(1.5,-1){7}{\bf{$\cdot$}}
\multiput(54,45)(1.5,-1){7}{\bf{$\cdot$}}
\put(33,31){\vector(1,-1){15}}
\put(71,31){\vector(-1,-1){15}}
\put(24,57){\vector(-1,-1){5.5}}
\put(80,57){\vector(1,-1){5.5}}
\end{picture} }

\vspace{0.4cm}

$\G = \agtOS \pp\q {\la_1}{\agtIS \pp\q {\la_1}; \agtOS{\pp}{\pr}{\la}{\agtIS
  \pp\pr{\la}}}~\GlSyB~\agtOS \pp\q {\la_2}{\Seq{\agtIS \pp\q
    {\la_2}}{\agtOS{\pp}{\pr}{\la}{\agtIS \pp\pr{\la}}}}\vspace*{-2mm}$
\setlength{\unitlength}{1mm}
\scalebox{0.9}{
\begin{picture}(100,70)
\put(24,60){{\small $\eqA{\ee}{\pp\q!\la_1}$}}
\put(20,35){{{\small $\eqA{\ee}{\pp\q!\la_1; \pp\pr!\la}$}}}
\put(16,10){{{\small $\eqA{\ee}{\pp\q!\la_1; \pp\pr!\la ;
      \pp\pr?\la}$}}}
\put(60,10){{{\small $\eqA{\ee}{\pp\q!\la_2; \pp\pr!\la ; \pp\pr?\la}$}}}
\put(65,60){{\small $ \eqA{\ee}{\pp\q!\la_2}$}}
\put(50,1){(b)}
\put(61,35){{{\small $\eqA{\ee}{\pp\q!\la_2; \pp\pr!\la}$}}}
\put(-5,47){{\small $\eqA{\ee}{\pp\q!\la_1; \pp\q?\la_1}$}}
\put(85,47){{\small $\eqA{\ee}{\pp\q!\la_2; \pp\q?\la_2}$}}
\put(50,60){{$\gr$}}
\thicklines
\linethickness{0.3mm}
\put(32,57){\vector(0,-1){17}}
\put(72,57){\vector(0,-1){17}}
\put(32,32){\vector(0,-1){17}}
\put(72,32){\vector(0,-1){17}}
\multiput(40,60)(1,0){9}{\bf{$\cdot$}}
\multiput(54,60)(1,0){9}{\bf{$\cdot$}}
\put(24,57){\vector(-1,-1){5.5}}
\put(80,57){\vector(1,-1){5.5}}
\end{picture} }
\end{center} \vspace*{-7mm}
\caption{
(a) ~FES of 
$\Nt\parN\emptyset$ in
  \refToExample{asynchronous-network-with-choice}. \qquad
(b) ~PES of 
$\G\parN\emptyset$ in
\refToExample{last}.
}\vspace*{-2mm}
\mylabel{fig:network-FES} \mylabel{fig:type-PES}
\end{figure}

\section{Equivalence of the two  event structure semantics}\label{sec:resultsA}

In the previous two sections, we defined the ES semantics of networks
and types, respectively.  As expected, the FES of a network is not
isomorphic to the PES of its type, unless the former is itself a
PES. As an example, consider the network FES pictured in
\refToFigure{fig:network-FES}(a) (where the arrows represent the flow
relation) and its type PES pictured in
\refToFigure{fig:type-PES}(b) (where the arrows represent the covering
relation of causality and inherited conflicts are not shown).
The rationale is that events in the network FES record the \emph{local
  history} of a communication, while events in the type FES record its
\emph{global causal history}, which contains more information. Indeed,
while the network FES may be obtained from the type PES simply by
projecting each t-event on the player of its last communication, the
inverse construction is not as direct: essentially, one needs to
construct the configuration domain of the network FES, and from this,
by selecting the complete prime configurations according to the
classic construction of \cite{NPW81}, retrieve the type PES.  To show
that this is indeed the type PES, however, we would need to rely on
well-formedness properties of the network FES, namely on semantic
counterparts of the well-formedness properties of types. We will not
follow this approach here. Instead, we will compare the FESs of networks
and the PESs of their types at a more operational level, by looking at
the configuration domains they generate.

\medskip
In the rest of this section we establish our main theorem for typed
networks, namely the isomorphism between the configuration domain of
the FES of the network and the configuration domain of the PES of its
\agt.  To prove the various results leading to this theorem, we will
largely use the characterisation of configurations as proving
sequences, given in \refToProp{provseqchar}. Let us briefly sketch how
these results are articulated.

The proof of the isomorphism is grounded on the Subject Reduction
Theorem (\refToTheorem{srA}) and the Session Fidelity Theorem
(\refToTheorem{sfA}). These theorems state that if
$\derN{\Nt\parN\Msg}{\G\parN\Msg}$, then
$\Nt\parN\Msg\stackred\comseqA\Nt'\parN\Msg'$ if and only if
$\G\parN\Msg\stackred\comseqA\G'\parN\Msg'$, and in both directions
\mbox{$\derN{\Nt'\parN\Msg'}{\G'\parN\Msg'}$.} We can then relate the ESs of
networks and \agts\ by connecting them through the \tracesS of
their transition sequences, and by taking into account the queues by
means of the mapping ${\sf otr}$ given by \refToDef{qos}. This is
achieved as follows.

\medskip
If $\Nt\parN\Msg\stackred\comseqA$ and $\osq\Msg=\os$, then the
function ${\sf nec}$ (\refToDef{necA}) applied to $\os$ and $\comseqA$
gives a proving sequence in $\ESNA{\Nt\parallel \Msg}$
(\refToTheorem{uf10A}).  Vice versa, if
$\Seq{\netevA_1;\cdots}{\netevA_n}$ is a proving sequence in
$\ESNA{\Nt\parallel\Msg}$, then
$\Nt\parallel\Msg\stackred\comseqA\Nt'\parallel\Msg'$, where
$\comseqA=\io{\netevA_1}\cdots\io{\netevA_n}$ and ${\sf i/o}$ is the
mapping given in \refToDef{proceventNA}(\ref{proceventNA2})
(\refToTheorem{uf12A}).

Similarly, if $\G\parN\Msg\stackred\comseqA\G'\parN\Msg'$ and
$\osq\Msg=\os$, then the function  ${\sf tec}$  (\refToDef{gecA})
applied to $\os$ and $\comseqA$ gives a proving sequence in
$\ESGA{\G\parallel \Msg}$ (\refToTheorem{lemma:keyA1}).  Lastly, if
$\Seq{\comoccA_1;\ldots}{\comoccA_n}$ is a proving sequence in
$\ESGA{\G\parN\Msg}$, then
$\G\parG\Msg\stackred\comseqA\G'\parG\Msg'$, where
$\comseqA=\concat{\concat{\io{\comoccA_1}}\ldots}{\io{\comoccA_n}}$
and ${\sf i/o}$ is the mapping given in
\refToDef{def:glEventA}(\ref{def:glEventA3})
(\refToTheorem{lemma:keyA2}).

\medskip
 It is then natural to split this section in three subsections: the
 first establishing the relationship between network transition
 sequences and proving sequences of their event structure, the second
 doing the same for \agts\ and finally a third subsection in
 which the isomorphism between the two configuration domains is proved
 relying on these relationships.

 \subsection{Transition sequences of networks and proving
   sequences of their ESs}

 We start by showing how network communications affect n-events in the
 associated ES.  To this aim we define two partial operators
 $\blacklozenge$ and $\lozenge$, which applied to a communication
 $\beta$ and an n-event $\netevA$ yield another n-event $\netevA'$
 (when defined).  The intuition is that $\netevA'$ represents the
 event $\netevA$ as it  will be  after the communication
 $\beta$, or as it  was  before the communication $\beta$,
 respectively. So, in particular, if $\set\pp=\play{\beta}$ and $\netevA$ is not located at
 $\pp$,
 it will  remain unchanged under both mappings  $\blacklozenge$ and
 $\lozenge$.   We shall now explain in more detail
 how these operators work.

\medskip
The operator $\blacklozenge$, when applied to $\beta$ and $\netevA$,
yields the n-event $\netevA'$ obtained from $\netevA$ after executing
the communication $\beta$, if this event exists. We call
$\postA{\netevA}{\beta}$ the \emph{residual} of $\netevA$ after
$\beta$.  So, if $\beta=\CommAs\pp{\M}\q$ and $\netevA$ is located at $\pp$ and its p-event starts with
the action $\sendL\q\la$, then the p-event of $\netevA'$ is obtained
by removing this action, provided the result is still a p-event (this
will  not be the case
if the p-event of $\netevA$ is a simple action);
otherwise, the operation is not defined.
If $\beta=\CommAsI\pp{\M}\q$ and  $\netevA$ is
located at $\q$ and its p-event starts with the action $\rcvL\pp\la$,
the p-event of $\netevA'$ is obtained by removing $\rcvL\pp\la$, if
possible; otherwise, the operation is not defined.

\medskip
The operator $\lozenge$, when applied to $\beta$ and $\netevA$, yields
the n-event $\netevA'$ obtained from $\netevA$ before executing the
communication $\beta$. We call $\preA{\netevA}{\beta}$ the
\emph{retrieval} of $\netevA$ before $\beta$.  So, if
$\beta=\CommAs\pp{\M}\q$ and $\netevA$ is located at $\pp$, the
p-event of $\netevA'$ is obtained by adding $\sendL\q\la$ in front of
the p-event of $\netevA$.  If $\beta=\CommAsI\pp{\M}\q$ and $\netevA$
is located at $\q$, the p-event of $\netevA'$ is obtained by adding
$\rcvL\pp\la$ in front of the p-event of $\netevA$.  We use the
projection $ \projAP{\comseqA}{\pr}$ of a trace on a
participant given in \refToDef{pep}(\ref{pep1}).\vspace*{-2mm}

\begin{definition}[Residual and retrieval of an n-event with respect to a communication]\mylabel{def:PostPre}\hbox{}\hfill\vspace*{-7mm}

\begin{enumerate}
\item\mylabel{def:Post}
The {\em residual of an n-event $\locevA{\pr}\os{\procev}$ after a communication $\beta$} is defined by\sm

 \centerline{
$\postPA{\locevA{\pr}\os{\procev}}{\beta}= \locevA{\pr}{ (\concat{\os}{\CommAs\pp{\M}\q})}{{\procev'}}\quad$ if
 $\procev=\concat{(\projAP\beta{\pr})}{\procev'}$
 }\sm
 \item\mylabel{def:Pre}
 The {\em retrieval of an  n-event $\locevA{\pr}\os{\procev}$ before a communication $\beta$} is defined by\sm

 \centerline{$\prePA{\locevA{\pr}\os{\procev}}{\beta}=
      \locevA{\pr}{} \concat{(\projAP\beta{\pr})}{\procev} $}
\end{enumerate}
\end{definition}
\noindent  Notice that in Clause (\ref{def:Post}) of the above definition
$\procev'\not=\ee$, see \refToDef{proceventP}.   So
$\postPA{\locevA{\pr}\os{\procev}}{\beta}$ is not defined if $\set\pr=\play{\beta}$ and
 either  $\procev$ is just an atomic action or $\projAP\beta{\pr}$ is not the first action of $\procev$.

\medskip Observe also that the
operators $\blacklozenge$ and $\lozenge$ preserve the communication of
n-events, namely\sm

\centerline{$
\io{\postA{\netevA}{\beta}} = \io{\preA{\netevA}{\beta}} =\io{\netevA}$}

\noindent  Residual and retrieval are inverse of each other.
\begin{lemma}\mylabel{prop:prePostNetA}
\begin{enumerate}
\itemsep=0.9pt
\item \mylabel{ppn1A} If $\postA{\netevA}{\beta}$ is defined, then $\prePA{\postA{\netevA}{\beta}}{\beta}=\netevA$.
\item \mylabel{ppn1bA}   $\postPA{\preA{\netevA}{\beta}}{\beta}=\netevA$.
\end{enumerate}
\end{lemma}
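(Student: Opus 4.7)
The plan is to prove both clauses by direct unfolding of Definition~\ref{def:PostPre}. The operators $\blacklozenge$ and $\lozenge$ act on the p-event component of an n-event in exactly opposite ways: $\blacklozenge$ strips the leading projection $\projAP{\beta}{\pr}$ from the p-event (when present) and extends the o-trace, whereas $\lozenge$ prepends $\projAP{\beta}{\pr}$ to the p-event and adjusts the o-trace in the reverse direction. So the composition in either order should syntactically cancel, and the proof is essentially a one-step calculation in each case.

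For Clause~(\ref{ppn1A}), I would write $\netevA = \locevA{\pr}{\os}{\procev}$ and assume $\postA{\netevA}{\beta}$ is defined. By Definition~\ref{def:Post} this forces the decomposition $\procev = \concat{(\projAP{\beta}{\pr})}{\procev'}$ with $\procev'$ still a nonempty sequence (recall Definition~\ref{proceventP}), so that\\ \centerline{$\postA{\netevA}{\beta} = \locevA{\pr}{\concat{\os}{\CommAs{\pp}{\M}{\q}}}{\procev'}$.}
Applying Definition~\ref{def:Pre} to this n-event prepends $\projAP{\beta}{\pr}$ to $\procev'$, yielding $\locevA{\pr}{\os}{\concat{(\projAP{\beta}{\pr})}{\procev'}}$. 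By the decomposition hypothesis this last p-event coincides with $\procev$, so we recover $\netevA$.

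For Clause~(\ref{ppn1bA}), which has no precondition, I would start from an arbitrary $\netevA = \locevA{\pr}{\os}{\procev}$. Applying Definition~\ref{def:Pre} unconditionally produces $\preA{\netevA}{\beta} = \locevA{\pr}{\os}{\concat{(\projAP{\beta}{\pr})}{\procev}}$. The resulting p-event is now in the precise shape that makes $\blacklozenge$ applicable with the witnessing decomposition $\procev' = \procev$: hence Definition~\ref{def:Post} gives $\postA{\preA{\netevA}{\beta}}{\beta} = \locevA{\pr}{\os}{\procev} = \netevA$, as desired.

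The only subtle point to be careful about is the bookkeeping on the o-trace component: one must check that the o-trace added by $\blacklozenge$ is precisely the one removed by (or reconstructed through) $\lozenge$, and vice versa. Beyond that, there is no real obstacle, and neither induction nor any nontrivial property of n-events is needed; the lemma follows purely from the inverse shapes of the two clauses of Definition~\ref{def:PostPre}.
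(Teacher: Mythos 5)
Your proof is correct and is exactly the direct unfolding of Definition~\ref{def:PostPre} that the paper leaves implicit (the lemma is stated there without an explicit proof, as an immediate consequence of the definitions). The only minor remark is that the ``subtle point'' you flag about o-trace bookkeeping is essentially vacuous: an n-event is identified by its participant and its p-event (the o-trace adjustments are handled separately by the mappings $\mapBl{\beta}{\os}$ and $\mapWh{\beta}{\os}$ of Definition~\ref{def:pQueue}), so the cancellation on the p-event component is all that needs to be checked.
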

 The residual and retrieval operators on n-events  are mirrored by
 (partial) mappings on o-traces, which it is handy to
 define explicitly.

 \begin{definition}\label{def:pQueue} The {\em partial mappings}
   $\mapBl{\beta}{\os}$ and $\mapWh{\beta}{\os}$ are defined by:
 \begin{enumerate}
 \itemsep=0.9pt
  \item \label{def:pQueue2}$\mapBl{\CommAs\pp\la\q}\os=\concat{\os}{\CommAs\pp\la\q}$ and  $\mapBl{\CommAsI\pp\la\q}\os=\os'$ if $\os\cong\concat{\CommAs\pp\la\q}{\os'}$;
\item \label{def:pQueue1}  $\mapWh{\CommAs\pp\la\q}\os=\os'$ if $\os\cong\concat{\os'}{\CommAs\pp\la\q}$
and
 $\mapWh{\CommAsI\pp\la\q}\os=\concat{\CommAs\pp\la\q}{\os}$.
\end{enumerate}
  \end{definition}
  \noindent
   It is easy to verify that if $\mapBl{\beta}{\os}$ is defined, then  $\mapWh{\beta}{\mapBl{\beta}{\os}}\cong\os$, and  if $\mapWh{\beta}{\os}$ is defined, then $\mapBl{\beta}{\mapWh{\beta}{\os}}\cong\os$.

\medskip
    We can show that the operators $\blacktriangleright$ and $\triangleright$ applied to a communication $\beta$ modify the queues in the same way as the (forward or backward) execution of $\beta$ would do in the underlying network.

 \begin{lemma}\mylabel{defpost}
 If $\Nt\parallel \Msg\stackred\beta\Nt'\parallel \Msg'$, then $\mapBl{\beta}\osq\Msg\cong\osq{\Msg'}$ and $\mapWh{\beta}\osq{\Msg'}\cong\osq{\Msg}$.
\end{lemma}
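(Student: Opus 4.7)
The plan is to proceed by a straightforward case analysis on the transition rule applied in $\Nt\parallel \Msg\stackred\beta\Nt'\parallel \Msg'$. By \refToFigure{fig:asynprocLTS}, there are only two rules, $\rulename{Send}$ and $\rulename{Rcv}$, so the analysis splits into an output case and an input case, each reducing to a direct computation using \refToDef{qos} (the $\osq{\cdot}$ function) and \refToDef{def:pQueue} (the $\mapBl{\beta}{\os}$ and $\mapWh{\beta}{\os}$ mappings).

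For $\beta=\CommAs{\pp}{\la_k}{\q}$: Rule $\rulename{Send}$ gives $\Msg'\equiv\addMsg{\Msg}{\mq{\pp}{\la_k}{\q}}$. Applying $\osq{\cdot}$ and using the second clause of \refToDef{qos} (or rather the recursive extension of its first clause to the right), I obtain $\osq{\Msg'}\cong\concat{\osq{\Msg}}{\CommAs{\pp}{\la_k}{\q}}$, which by \refToDef{def:pQueue}(\ref{def:pQueue2}) is precisely $\mapBl{\beta}{\osq{\Msg}}$. For the reverse, $\mapWh{\beta}{\osq{\Msg'}}$ is well-defined because $\osq{\Msg'}$ ends (up to $\cong$) with $\CommAs{\pp}{\la_k}{\q}$, and it yields $\osq{\Msg}$ by \refToDef{def:pQueue}(\ref{def:pQueue1}).

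For $\beta=\CommAsI{\pp}{\la_k}{\q}$: Rule $\rulename{Rcv}$ requires $\Msg\equiv\addMsg{\mq{\pp}{\la_k}{\q}}{\Msg'}$, so $\osq{\Msg}\cong\concat{\CommAs{\pp}{\la_k}{\q}}{\osq{\Msg'}}$. By \refToDef{def:pQueue}(\ref{def:pQueue2}), $\mapBl{\beta}{\osq{\Msg}}=\osq{\Msg'}$, and by \refToDef{def:pQueue}(\ref{def:pQueue1}), $\mapWh{\beta}{\osq{\Msg'}}=\concat{\CommAs{\pp}{\la_k}{\q}}{\osq{\Msg'}}\cong\osq{\Msg}$.

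The only subtlety, and arguably the only ``obstacle'', is that the LTS of \refToFigure{fig:asynprocLTS} operates modulo the structural equivalence $\equiv$ on queues, while the target relation in the statement is the o-trace equivalence $\cong$. This is handled by observing that $\osq{\cdot}$ preserves equivalences: if $\Msg_1\equiv\Msg_2$, then $\osq{\Msg_1}\cong\osq{\Msg_2}$, which is immediate from the defining clause of $\equiv$ on queues and the clause of $\cong$ in \refToDef{def:permEqA2}. With this remark, both the forward and backward directions are obtained directly, and no induction on the queue structure is needed beyond the structural equivalence rewriting.
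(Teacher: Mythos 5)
Your proposal is correct and follows essentially the same route as the paper's proof: a two-case analysis on whether $\beta$ is an output (Rule $\rulename{Send}$) or an input (Rule $\rulename{Rcv}$), followed by direct computation with \refToDef{qos} and \refToDef{def:pQueue}. Your explicit remark that $\osq{\cdot}$ maps the structural equivalence $\equiv$ on queues to the o-trace equivalence $\cong$ is a point the paper uses only implicitly, and making it explicit is a small but harmless improvement.
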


\begin{proof}
From $\Nt\parallel
  \Msg\stackred\beta\Nt'\parallel \Msg'$ we get $\Msg'\equiv
  \addMsg{\Msg}{\mq\pp{\la}\q}$  if $\beta=\CommAs{\pp}{\la}{\q}$
   and $\Msg\equiv
  \addMsg{\mq\pp{\la}\q}{\Msg'}$  if
  $\beta=\CommAsI{\pp}{\la}{\q}$.
 In the first case, we have $\osq{\Msg'} \cong
\concat{\osq{\Msg}}{\CommAs{\pp}{\la}{\q}} \cong \mapBl{\beta}\osq\Msg$,
whence also $\mapWh{\beta}{\osq{\Msg'}} \cong
\mapWh{\beta}{\mapBl{\beta}\osq\Msg} \cong \osq\Msg$.
In the second case, we have $\osq{\Msg} \cong
\concat {\CommAs{\pp}{\la}{\q}}{\osq{\Msg'}} \cong
\mapWh{\beta}\osq{\Msg'}$,
whence also $\mapBl{\beta}{\osq{\Msg}} \cong
\mapBl{\beta}{\mapWh{\beta}\osq{\Msg'}} \cong \osq{\Msg'}$.
\end{proof}

  The residual and retrieval operators preserve the  $\os$-flow
   and conflict relations.  For the flow relation the  parametrising  o-traces
  are obtained by the previously defined mappings.

\begin{lemma}\mylabel{prop:prePostNetArel}\hbox{}\vspace*{-5mm}

\begin{enumerate}
\itemsep=0.9pt
\item \mylabel{ppn2A}  If  $\netevA\precN \netevA'$ and $\postA{\netevA}{\beta}$ and
  $\postA{\netevA'}{\beta}$ and $\mapBl{\beta}\os$ are defined,\\ then
  $\postA{\netevA}{\beta}\precNL{\mapBl{\beta}\os} \postA{\netevA'}{\beta}$.
\item  \mylabel{ppn2bA}  If  $\netevA\precNL{\os} \netevA'$ and $\mapWh{\beta}\os$ is defined,
then $\preA{\netevA}{\beta}\precNL{\mapWh{\beta}\os} \preA{\netevA'}{\beta}$.
\item \mylabel{ppn3A} If  $\netevA\grr
  \netevA'$ and both $\postA{\netevA}{\beta}$ and $\postA{\netevA'}{\beta}$
  are defined, then $\postA{\netevA}{\beta}\grr\postA{\netevA'}{\beta}$.
\item \mylabel{ppn3bA}  If  $\netevA\grr \netevA'$,  then $\preA{\netevA}{\beta}\grr
  \preA{\netevA'}{\beta}$.
 \end{enumerate}
\end{lemma}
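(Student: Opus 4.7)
The plan is to prove each of the four clauses separately, splitting each according to the structure of Definition~\ref{netaevent-relations}. Clauses (3) and (4), on conflict preservation, should follow almost directly from the definition of conflict on p-events. By Definition~\ref{netaevent-relations}(\ref{c2A}), two conflicting n-events share their location $\pp$ and queue $\os$, with conflicting underlying p-events. For retrieval (clause~4), prepending the common action $\projAP\beta\pp$ to both p-events preserves the divergence pattern witnessing their conflict, as prescribed by Definition~\ref{procevent-relations}(\ref{ila--esp3}). For residual (clause~3), the definedness of both $\postA\netevA\beta$ and $\postA{\netevA'}\beta$ forces $\projAP\beta\pp$ to be a common initial action of both p-events, so the point of divergence lies strictly later and is preserved after stripping this prefix.

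For the flow clauses (1) and (2), I would split on the two cases of Definition~\ref{netaevent-relations}(\ref{c1A}). The local flow case (\ref{c1Aa}) is almost as straightforward: both events sit at the same participant and $\procev < \procev'$ is a strict prefix relation, which is preserved by both operations (under the definedness assumption for residual, which again supplies a common initial prefix). The substance of the proof lies in the cross-flow case (\ref{c1AbP1}). Here I would write $\netevA = \locevA\pp\os{\concat\actseq{\sendL\q\la}}$ and $\netevA' = \locevA\q\os{\concat{\actseq'}{\rcvL\pp\la}}$, with the weak duality witnessed by some $\actseq''$ and $\preEv$, and perform a case analysis on $\play\beta$. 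When $\beta$ involves neither $\pp$ nor $\q$, both the histories and the projections $\projAP{\mapBl\beta\os}\pp$, $\projAP{\mapBl\beta\os}\q$ are unaltered, so the conditions transfer verbatim. When $\play\beta = \{\pp\}$, the history $\actseq$ loses its first action $\projAP\beta\pp$, but the identity $\concat{\projAP{\mapBl\beta\os}\pp}{\actseq_1} = \concat{\projAP\os\pp}{\actseq}$, which holds by associativity of concatenation, shows that the first component of the weak duality is unchanged. The symmetric subcase $\play\beta = \{\q\}$ is treated analogously. Clause~(2), for retrieval, is then handled by a dual argument using the inverse queue map $\mapWh\beta$, exploiting the mutual-inverse property of Lemma~\ref{prop:prePostNetA}.

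The main obstacle will be handling the second half of the weak duality condition, namely $\projs{(\concat{\actseq'}{\rcvL\pp\la})}\pp \precapprox \projs{(\concat{\concat{\actseq''}{\rcvL\pp\la}}{\preEv})}\pp$. When $\beta$'s other participant coincides with $\pp$ or $\q$, the projection of $\mapBl\beta\os$ onto $\q$ gains an $\rcvL\pp{\la'}$-type action that is not present on the left, and the witnesses $\actseq''$ and $\preEv$ must be adjusted accordingly to absorb or restore this action. The verification reduces to manipulations of $\precapprox$ (which freely reorders outputs past inputs within a sequence) combined with associativity of concatenation, but it requires patient bookkeeping across the four possible shapes of $\beta$ (output or input, directed to or from the other participant of interest). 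Once this case analysis is carried out, each of the four clauses follows.
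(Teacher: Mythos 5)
Your overall plan coincides with the paper's proof: clauses (3) and (4) by stripping or prepending the common action $\projAP{\beta}{\pp}$ from conflicting p-events, and clauses (1) and (2) by splitting local flow from cross-flow and then analysing $\play\beta$ against $\set{\pp,\q}$. Your treatment of (3), (4), of local flow, and of the case where $\beta$ touches neither $\pp$ nor $\q$ is correct and matches the paper.

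The gap is in the cross-flow case with $\play\beta=\set\pp$. The identity $\concat{\projAP{(\mapBl{\beta}{\os})}{\pp}}{\actseq_1}=\concat{\projAP{\os}{\pp}}{\actseq}$ that you invoke holds only when $\beta$ is an output $\CommAs{\pp}{\la'}{\pr}$: there $\mapBl{\beta}{\os}=\concat{\os}{\beta}$ and the stripped action $\sendL{\pr}{\la'}$ reappears at the end of the queue projection. When $\beta$ is an input $\CommAsI{\pr}{\la'}{\pp}$, one has $\projAP{(\mapBl{\beta}{\os})}{\pp}=\projAP{\os}{\pp}$ (the deleted queue message $\CommAs{\pr}{\la'}{\pp}$ never contributed to $\pp$'s projection) while $\actseq$ still loses its leading $\rcvL{\pr}{\la'}$, so the concatenation genuinely changes and your identity fails. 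Note also that the delicate point sits in the \emph{first} half of the weak-duality condition, the one mentioning $\os$, not the second half, which involves only $\actseq'$, $\actseq''$ and $\preEv$. If $\pr\neq\q$ the lost action disappears under $\Rsh\,\q$ and nothing more is needed; but if $\pr=\q$ the $\pp$-side loses a $?\la'$ while the $\q$-side loses the leading $!\la'$ contributed by $\CommAs{\q}{\la'}{\pp}$ to $\projAP{\os}{\q}$, and to conclude that $_{\precapprox}\!\Join_{\,\succapprox}$ survives this simultaneous deletion one needs the extra observation — absent from your sketch — that $\projs{(\projAP{(\mapBl{\beta}{\os})}{\pp})}{\q}$ contains only outputs, so the cancelled $!\la'/?\la'$ is precisely a matched pair of the duality. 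This is the one step of the proof that is not mere bookkeeping, and it is where your sketch, as written, would not go through.
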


We now define the total function $\sf nec$, which yields sequences of
n-events starting from
 a trace.
 The definition makes use of  the projection given in \refToDef{pep}(\ref{pep1}).\sm

\begin{definition}[n-events from \traces]
  \mylabel{necA} 
  We define
  the  {\em sequence of n-events corresponding  to
the trace $\comseqA$}
   by\sm

  \centerline{$\necA{\comseqA}=\Seq{\netevA_1;\cdots}{\netevA_n}$}
 \noindent  where

\centerline{
$\netevA_i=
\locevA{\pp_i}{\os}{\procev_i}~~~~\text {  if  } ~~~~ \set{\pp_i} = \play{\at\comseqA{i}}$ and $\procev_i =
\projtau{\range\comseqA1{i}}{\pp_i} $
}
\end{definition}

\noindent    It is immediate to see that, if $\comseqA=\CommAs{\pp}{\la}{\q}$ or
  $\comseqA=\CommAsI{\pp}{\la}{\q}$, then
   $\necA{\comseqA}$  consists only of the n-event
  $\locevA{\pp}{\os}{\sendL{\q}{\la}}$ or of the n-event
  $\locevA{\q}{\os}{\rcvL{\pp}{\la}}$, respectively, because
  $\range\comseqA1{1} = \at{\comseqA}{1}$.

\medskip
We show now that two n-events
appearing in the sequence generated from a given  trace $\comseqA$
cannot be in conflict. Moreover, from
 $\necA{\comseqA}$
we can recover $\comseqA$ by means of the
function $\sf i/o$ of \refToDef{proceventNA}(\ref{proceventNA2}).

\begin{lemma}\mylabel{eion} Let $ \necA{\comseqA}
  =\Seq{\netevA_1;\cdots}{\netevA_n}$.
\begin{enumerate}
\itemsep=0.9pt
\item\mylabel{eion1}  If $1\leq k,l\leq n$, then $\neg (\netevA_k \grr \netevA_l)$;\vspace{-0.5mm}
\item\mylabel{eion2} $\at\comseqA{i} = \io{\netevA_i}\,$  for all $i$,  $1\leq i\leq n$.
\end{enumerate}
  \end{lemma}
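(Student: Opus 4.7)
The plan is to prove the two statements separately, with Statement~(\ref{eion2}) serving as a useful stepping stone for Statement~(\ref{eion1}).

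For Statement~(\ref{eion2}), I would unfold the definition of $\necA{\comseqA}$. By construction, $\netevA_i = \locevA{\pp_i}{\os}{\procev_i}$ where $\procev_i = \projtau{\range\comseqA{1}{i}}{\pp_i}$ and $\set{\pp_i} = \play{\at\comseqA{i}}$. By \refToDef{pep}(\ref{pep1}), projecting a trace on its player appends the corresponding atomic action: since $\pp_i$ is the (unique) player of $\at\comseqA{i}$, the projection $\projtau{\at\comseqA{i}}{\pp_i}$ is $\sendL{\q}{\la}$ if $\at\comseqA{i} = \CommAs{\pp_i}{\la}{\q}$, and $\rcvL{\pp}{\la}$ if $\at\comseqA{i} = \CommAsI{\pp}{\la}{\pp_i}$. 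Thus $\procev_i$ ends in $\projtau{\at\comseqA{i}}{\pp_i}$, so by \refToDef{proceventNA}(\ref{proceventNA2}) we get $\io{\netevA_i} = \at\comseqA{i}$.

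For Statement~(\ref{eion1}), I would argue by cases on whether the two events are located at the same participant. By \refToDef{netaevent-relations}(\ref{c2A}), the conflict relation between n-events requires them to be at the same location, so if $\pp_k \neq \pp_l$ then $\neg(\netevA_k \grr \netevA_l)$ holds trivially. Otherwise, let $\pp_k = \pp_l = \pp$, and assume without loss of generality that $k \leq l$. Then $\range\comseqA{1}{k}$ is a prefix of $\range\comseqA{1}{l}$, and since the projection $\projtau{\cdot}{\pp}$ is monotonic with respect to the prefix ordering, we have $\procev_k \sqsubseteq \procev_l$. By \refToDef{procevent-relations}(\ref{ila--esp2}) this yields $\procev_k \precP \procev_l$, and the conflict relation on p-events as defined in \refToDef{procevent-relations}(\ref{ila--esp3}) requires two different actions $\pi \neq \pi'$ to occur at the same position of two p-events, which is impossible when one is a prefix of the other. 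Hence $\neg(\procev_k \grr \procev_l)$ and thus $\neg(\netevA_k \grr \netevA_l)$.

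Neither statement presents a real obstacle: both reduce to straightforward unfolding of the definitions. The only mild care needed is for Statement~(\ref{eion2}), where one should verify that the projection produces a non-empty sequence ending in the correct atomic action, which is immediate from the fact that $\pp_i$ is by construction the player of $\at\comseqA{i}$.
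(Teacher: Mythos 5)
Your proof is correct and follows essentially the same route as the paper's: Statement~(2) by unfolding \refToDef{necA} and \refToDef{proceventNA}(\ref{proceventNA2}), and Statement~(1) by case-splitting on whether the two events share a location, observing that co-located events have prefix-ordered p-events (hence causally related, not conflicting). The paper states this more tersely, but the content is identical.
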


  \begin{proof} (\ref{eion1})  Let $\netevA_i=\locevA{\pp_i}{\os}{\procev_i}$ for all $i$,  $1\leq i\leq n$. If  $\pp_k\not=\pp_l$,  then  $\netevA_k$ and  $\netevA_l$ cannot be  in conflict. If  $\pp_k=\pp_l$,  then by~\refToDef{necA} either  $\procev_k  < \procev_l$ or $\procev_k  <  \procev_l$.  So in all cases  we have $\neg (\netevA_k \grr \netevA_l)$.\sm

  (\ref{eion2}) Immediate from~\refToDef{necA}.
  \end{proof}
  The following lemma relates the operators $\blacklozenge$ and  $\lozenge$ with the mapping $\sf
  nec$. This will be handy for the proof of
  \refToTheorem{uf10A}.

    \begin{lemma}\label{knec-ila}
  \begin{enumerate}
   \item\label{knec-ila2}
   Let $\comseqA=\concat\beta{\comseqA'}$. If
   $\necA{\comseqA}=\Seq{\netevA_1;\cdots}{\netevA_n}$ and
   $\necA{\comseqA'}=\Seq{\netevA'_2;\cdots}{\netevA'_n}$, then
   $\postA{\netevA_i}\beta={\netevA'_i}\,$ for  all $i$,   $2\leq i\leq n$.
  \item\label{knec-ila1}   Let $\comseqA=\concat\beta{\comseqA'}$. If
   $\necA{\comseqA}=\Seq{\netevA_1;\cdots}{\netevA_n}$ and
   $\necA{\comseqA'}=\Seq{\netevA'_2;\cdots}{\netevA'_n}$, then
   $\preA{\netevA'_i}\beta={\netevA_i}\,$ for  all $i$, $2\leq i\leq n$.
   \end{enumerate}
    \end{lemma}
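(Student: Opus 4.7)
The plan is to unfold the definitions of $\necA$, trace projection, and the residual/retrieval operators, and then verify both equalities by a direct symbolic calculation for each index $i$ with $2 \le i \le n$. For part~(\ref{knec-ila2}), I will first observe that, because $\comseqA = \concat{\beta}{\comseqA'}$, the $i$-th element of $\comseqA$ coincides with the $(i-1)$-th element of $\comseqA'$ whenever $i \ge 2$. Hence, writing $\netevA_i = \locevA{\pp_i}{\os}{\procev_i}$ and $\netevA'_i = \locevA{\pp'_i}{\mapBl{\beta}{\os}}{\procev'_i}$ via Definition~\ref{necA}, the participants agree ($\pp_i = \pp'_i$; call this common player $\pr$), and it only remains to compare $\procev_i$ with $\procev'_i$.

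Next, I will establish the auxiliary fact that trace projection distributes over concatenation, $\projAP{(\concat{\beta}{\comseqA''})}{\pr} = \concat{(\projAP{\beta}{\pr})}{\projAP{\comseqA''}{\pr}}$, by a direct case analysis on Definition~\ref{pep}(\ref{pep1}). This holds regardless of whether $\pr \in \play{\beta}$, since in the contrary case $\projAP{\beta}{\pr} = \ee$. Applied with $\comseqA'' = \range{\comseqA'}{1}{i-1}$, using the identity $\range{\comseqA}{1}{i} = \concat{\beta}{\range{\comseqA'}{1}{i-1}}$, this yields
\[
\procev_i \;=\; \projtau{\range{\comseqA}{1}{i}}{\pr} \;=\; \concat{(\projAP{\beta}{\pr})}{\procev'_i}.
\]
Because $i \ge 2$, the subtrace $\range{\comseqA'}{1}{i-1}$ is non-empty and contains $\at{\comseqA'}{i-1} = \at{\comseqA}{i}$, whose player is $\pr$; therefore $\procev'_i$ is a bona fide (non-empty) p-event. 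By Definition~\ref{def:PostPre}(\ref{def:Post}), $\postA{\netevA_i}{\beta}$ is defined and equal to $\locevA{\pr}{\mapBl{\beta}{\os}}{\procev'_i} = \netevA'_i$, proving part~(\ref{knec-ila2}).

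For part~(\ref{knec-ila1}), I will simply invoke Lemma~\ref{prop:prePostNetA}(\ref{ppn1A}): since part~(\ref{knec-ila2}) guarantees that $\postA{\netevA_i}{\beta}$ is defined and equals $\netevA'_i$, the inverse relationship between residual and retrieval gives $\preA{\netevA'_i}{\beta} = \preA{\postA{\netevA_i}{\beta}}{\beta} = \netevA_i$, as required.

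The proof is essentially a bookkeeping argument, so there is no single hard step. The only minor subtlety will be tracking the implicit queue parameter in Definition~\ref{necA}: the statement of the lemma tacitly assumes that $\necA{\comseqA}$ is computed with some queue $\os$ while $\necA{\comseqA'}$ is computed with $\mapBl{\beta}{\os}$, i.e.\ the queue one obtains after executing $\beta$ (cf.\ Lemma~\ref{defpost}). Once this convention is made explicit, the queue labels on both sides of $\postA{\netevA_i}{\beta} = \netevA'_i$ match automatically, and the argument above goes through without further complication.
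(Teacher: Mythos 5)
Your proof is correct and follows essentially the same route as the paper's: part~(1) by unfolding \refToDef{necA}, using the identity $\range{\comseqA}{1}{i} = \concat{\beta}{\range{\comseqA'}{1}{i-1}}$ and the fact that projection distributes over concatenation to write $\procev_i = \concat{(\projAP{\beta}{\pp_i})}{\procev'_i}$ with $\procev'_i$ non-empty, then applying \refToDef{def:PostPre}(\ref{def:Post}); part~(2) by composing part~(1) with the residual/retrieval inversion of \refToLemma{prop:prePostNetA}. Your explicit remark about the implicit o-trace parameter ($\os$ for $\necA{\comseqA}$ versus $\mapBl{\beta}{\os}$ for $\necA{\comseqA'}$) matches the convention the paper adopts silently in its own proof.
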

\eject
    \begin{proof}
    (\ref{knec-ila2})   Note that $\at\comseqA{i} = \at{\comseqA'}{i-1}$ for all $i$, $2\leq
  i\leq n$.   Then we can assume  $\netevA_i=\locevA{\pp_i}{\os}{\procev_i}$ for all $i$, $1\leq i\leq n$
  and $\netevA'_i= \locevA{\pp_i}{\os'}{\procev'_i}$ for all $i$, $2\leq i\leq
  n$. By~\refToDef{necA} $\procev_i =
  \projtau{\range{\comseqA}1{i}}{\pp_i} =
  \projtau{(\concat\beta{\range{\comseqA'}1{i -1}})}{\pp_i} $ for  all $i$,
  $1\leq i\leq n$ and $\procev'_i =
  \projtau{\range{\comseqA'}1{i-1}}{\pp_i} $ for all $i$, $2\leq i\leq n$. Then
   we get $\postA{\netevA_i}\beta  =
  \postA{(\locevA{\pp_i}{(\mapWh{\beta}{\os'})}{\concat{\projtau{\beta}{\pp_i}}{\procev'_i}})}\beta
  = \locevA{\pp_i}{\os}{\procev'_i} = {\netevA'_i}$    for all $i$, $2\leq
  i\leq n$.

  \medskip (\ref{knec-ila1}) From Point 
  (\ref{knec-ila2}) and
\refToLemma{prop:prePostNetA}(\ref{ppn2A}).
\end{proof}

\noindent We end this subsection with the two theorems for networks discussed at
the beginning of the whole section.  We first show that two n-events
which $\os$-justify the same n-event of the same network must be in
conflict.
\begin{lemma}\label{gr}
  Let $\netevA, \netevA_1,\netevA_2\in\GEA(\Nt\parN\Msg)$,
   $\os=\osq\Msg$  and
  $\netevA_i\precN \netevA\,$  for $i\in\set{1,2}$,  where
   each  $\netevA_i\precN \netevA\,$ is
  derived  by Clause (\ref{c1AbP1}) of
  \refToDef{netaevent-relations}. Then $\netevA_1\gr\netevA_2$.
\end{lemma}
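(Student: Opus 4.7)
The plan is first to unfold Clause~(\ref{c1AbP1}) of \refToDef{netaevent-relations} for each $\netevA_i\precN\netevA$. This forces $\netevA = \locevA\q\os{\concat{\actseq'}{\rcvL\pp\la}}$ and $\netevA_i = \locevA\pp\os{\concat{\actseq_i}{\sendL\q\la}}$ for common $\pp,\q,\la$, with certain witnesses $\actseq''_i, \preEv_i$ satisfying the two conditions of the clause. In particular, $\netevA_1$ and $\netevA_2$ are located at the same participant $\pp$ and their p-events both end with the atomic action $\sendL\q\la$. Combining Clause~(\ref{c2A}) of \refToDef{netaevent-relations} with Clause~(\ref{ila--esp3}) of \refToDef{procevent-relations}, the claim $\netevA_1\gr\netevA_2$ reduces, under the implicit assumption $\netevA_1\neq\netevA_2$, to showing that neither of their p-events is a prefix of the other.

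To this end, I would argue by contradiction, assuming without loss of generality that $\concat{\actseq_1}{\sendL\q\la}$ is a proper prefix of $\concat{\actseq_2}{\sendL\q\la}$; since both p-events end with $\sendL\q\la$, this forces $\actseq_2 = \actseq_1\cdot\sendL\q\la\cdot\vec\pi$ for some $\vec\pi$, and in particular $\actseq_2$ contains strictly more occurrences of $\sendL\q\la$ than $\actseq_1$. The contradiction will come from showing that the two cross-flow conditions pin down the number of $\sendL\q\la$ in $\actseq_i$ to the same value for $i=1,2$.

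The first step is to observe that in the condition $\projs{(\concat{\actseq'}{\rcvL\pp\la})}\pp\precapprox\projs{(\concat{\concat{\actseq''_i}{\rcvL\pp\la}}{\preEv_i})}\pp$, the relation $\precapprox$ preserves the multiset of undirected actions and, since $\preEv_i$ consists only of output actions, $\projs{\preEv_i}\pp$ contributes no $?$-action; therefore the number of $\rcvL\pp\la$ in $\actseq''_i$ equals the number of $\rcvL\pp\la$ in $\actseq'$, independently of $i$. The second step exploits the weak duality condition $\dualprecsim{\projs{(\concat{\projAP\os\pp}{\actseq_i})}\q}{\projs{(\concat{\projAP\os\q}{\actseq''_i})}\pp}$: both $\precapprox$ and $\Join$ preserve action counts, with $\Join$ matching $!\la$ on one side with $?\la$ on the other; furthermore, $\projAP\os\q$ is a sequence of output actions only (since $\os$ is an o-trace and the receiver case of \refToDef{pep}(\ref{pep1}) is never triggered), so $\projs{(\projAP\os\q)}\pp$ contributes no $?$-action. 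Together these give that the number of $\sendL\q\la$ in $\actseq_i$ equals the number of $\rcvL\pp\la$ in $\actseq'$ minus the number of $\CommAs\pp\la\q$-messages in $\os$ (the latter coming from $\projs{(\projAP\os\pp)}\q$); this value is independent of $i$, contradicting the prefix assumption.

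The main obstacle is not conceptual but the careful bookkeeping of label counts through the three relations $\precapprox$, $\Join$ and $\dualprecsim$ and through the queue-related projections, in particular verifying the two ``no-$?$-contribution'' facts for $\preEv_i$ and for $\projAP\os\q$. The symmetric subcase where the p-event of $\netevA_2$ is a prefix of that of $\netevA_1$ is handled by exchanging the roles of $\netevA_1$ and $\netevA_2$ in the same count argument.
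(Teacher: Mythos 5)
Your proposal is correct and follows essentially the same route as the paper's proof: unfold Clause~(\ref{c1AbP1}) for both $\netevA_1,\netevA_2$, use the weak-duality condition and the $\precapprox$ condition to pin down the number of occurrences of $\sendL\q\la$ in $\actseq_i$ to a value independent of $i$ (the paper counts via $n_i=n_i'$ and $n=n_i'$; you additionally make the queue contribution from $\projAP\os\pp$ explicit, which is if anything slightly more careful), and then contradict the assumption that one p-event is a proper prefix of the other. No gaps.
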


\begin{proof}
  Clause (\ref{c1AbP1}) of \refToDef{netaevent-relations} prescribes
  $\netevA=\locevA\q{\os}{\concat{\actseq}{\rcvL\pp\la}}$,
  $\netevA_i=\locevA\pp{\os}{\concat{\actseq_i}{\sendL\q\la}}$ and \vspace*{1.6mm}

  \centerline{(*)
    $\dualprecsim{\projs{(\concat{\projAP{\os}{\pp}}{{\actseq_i}})}{\q}}{\projs{(\concat{\projAP{\os}{\q}}{{\actseq'_i}})}{\pp}}$}
  \centerline{(**) $\projs{(\concat{\actseq}{\rcvL\pp\la})}
    \pp\precapprox
    \projs{(\concat{\concat{\actseq'_i}{\rcvL\pp\la}}{\preEv_i})}\pp$} \vspace*{1.6mm}

  \noindent for some $\actseq'_i$ and $\preEv_i$, where $i\in\set{1,2}$. Let $n$
  be the number of occurrences of $\rcvL\pp\la$ in $\actseq$, $n_i$ be
  the number of occurrences of $\sendL\q\la$ in $\actseq_i$ and $n_i'$
  be the number of occurrences of $\rcvL\pp\la$ in $\actseq_i'$. From
  (*) we get $n_i=n'_i$ and from (**) we get $n=n_i'$ for
  $i\in\set{1,2}$. Then $n_i=n_j$ for $\set{i,j}=\set{1,2}$. Assume
  ad absurdum $\netevA_i\precN \netevA_j$  for some $i, j \in
  \set{1,2}, \, i \neq j$. Then $\netevA_i\precN \netevA_j$ is
  derived  by Clause (\ref{c1Aa}) of
  \refToDef{netaevent-relations}, thus $\concat{\actseq_i}{\sendL\q\la} \sqsubset
\concat{\actseq_j}{\sendL\q\la}$, that is
$\concat{\actseq_i}{\sendL\q\la} \sqsubseteq \actseq_j$.
This means that $\actseq_j$ contains at least one more occurrence of $\sendL\q\la$
than $\actseq_i$, namely $n_i < n_j$,
which is
 a contradiction.
\end{proof}

\begin{theorem}\mylabel{uf10A}
If $\Nt\parallel \Msg\stackred\comseqA\Nt'\parallel \Msg'$,
then  $\necA{\osq \Msg,\comseqA}$ is a proving sequence in  the event structure   $\ESNA{\Nt\parallel \Msg}$.
\end{theorem}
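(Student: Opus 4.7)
I would prove the theorem by induction on the length $n=\cardin\comseqA$. The base case $n=0$ is immediate, since the empty sequence trivially satisfies \refToDef{provseq}. For the inductive step, decompose $\comseqA=\concat{\beta}{\comseqA'}$ and split the transition as $\Nt\parN\Msg \stackred{\beta} \Nt_1\parN\Msg_1 \stackred{\comseqA'} \Nt'\parN\Msg'$. Write $\necA{\osq\Msg,\comseqA}=\Seq{\netevA_1;\ldots}{\netevA_n}$ and $\necA{\osq{\Msg_1},\comseqA'}=\Seq{\netevA'_2;\ldots}{\netevA'_n}$; then by \refToLemma{knec-ila}(\ref{knec-ila1}) combined with \refToLemma{defpost}, we have $\netevA_i=\preA{\netevA'_i}{\beta}$ for $2\leq i\leq n$. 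The induction hypothesis gives that $\Seq{\netevA'_2;\ldots}{\netevA'_n}$ is a proving sequence in $\ESNA{\Nt_1\parN\Msg_1}$, which must be lifted back along the retrieval operator $\lozenge$.

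Next I would verify the three conditions of \refToDef{provseq} for $\Seq{\netevA_1;\ldots}{\netevA_n}$. For \emph{membership} in $\GEA(\Nt\parN\Msg)$, the event $\netevA_1$ is handled directly from the enabled transition: if $\beta$ is an output, $\netevA_1$ is a singleton output event whose underlying process sits in $\Nt$ by Rule $\rulename{Send}$; if $\beta=\CommAsI\pp\la\q$, then $\Msg\equiv\addMsg{\mq\pp\la\q}{\Msg'}$ by Rule $\rulename{Rcv}$, so $\netevA_1$ is $\osq\Msg$-queue-justified. For $i\geq 2$, I would set up a correspondence between events of the two networks via $\lozenge/\blacklozenge$, using \refToLemma{prop:prePostNetA} and the observation that the retrieval of any event in the narrowing $\nr{\DE(\Nt_1),\osq{\Msg_1}}$ lies in $\nr{\DE(\Nt),\osq\Msg}$ (via \refToLemma{prop:prePostNetArel} for preservation of flows, and the fact that queue-justifications are preserved by \refToDef{def:pQueue}). \emph{Distinctness and conflict-freeness} of the $\netevA_i$ is immediate from \refToLemma{eion}(\ref{eion1}) together with the length information in \refToDef{necA}.

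The main work is the \emph{proving-sequence condition}. For $i=1$, I would show $\netevA_1$ has no $\prec$-predecessors: if $\netevA_1$ is an output (singleton p-event), there are no local predecessors and cross-flow only targets input events. If $\netevA_1=\locevA{\q}{\osq\Msg}{\rcvL\pp\la}$ is an input, any candidate cross-flow predecessor via Clause~(\ref{c1AbP1}) of \refToDef{netaevent-relations} must satisfy $\projs{(\rcvL\pp\la)}\pp\precapprox\projs{(\concat{\concat{\actseq''}{\rcvL\pp\la}}{\preEv})}\pp$; since $\precapprox$ preserves length, this forces $\projs{\actseq''}{\pp}=\projs{\preEv}{\pp}=\epsilon$, and plugging this into the first duality condition using $\projAP{\osq\Msg}{\pp}$ starting with $\sendL\q\la$ yields a contradiction. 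For $i\geq 2$ and $e\precN\netevA_i$, I would split on whether $\postA{e}{\beta}$ is defined: if yes, \refToLemma{prop:prePostNetArel}(\ref{ppn2A}) gives $\postA{e}{\beta}\precN\netevA'_i$ in $\ESNA{\Nt_1\parN\Msg_1}$, the IH gives $k$ with either $\postA{e}{\beta}=\netevA'_k$ (hence $e=\netevA_k$ by \refToLemma{prop:prePostNetA}(\ref{ppn1A})) or $\postA{e}{\beta}\grr\netevA'_k\precN\netevA'_i$, which lifts back via \refToLemma{prop:prePostNetArel}(\ref{ppn2bA}) and~(\ref{ppn3bA}); if $\postA{e}{\beta}$ is undefined, then $e$ is located at the player of $\beta$ but its p-event does not start with $\projAP{\beta}{\pp}$, so $e\grr\netevA_1$, and a sub-induction on the cause structure of $\netevA_i$ (using \refToLemma{gr} when two outputs conflictingly justify the same input) produces the required $\netevA_k$.

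\textbf{Main obstacle.} The delicate point is the case for $i\geq 2$ when $\postA{e}{\beta}$ is undefined and $e\precN\netevA_i$ is a cross-flow: here $e\grr\netevA_1$ but one needs an alternative cause $\netevA_k\precN\netevA_i$ (either the queue-justifying message or another output event in the prefix) conflicting with $e$. Extracting such a $\netevA_k$ requires combining \refToLemma{gr} with the queue-justification/event-justification structure baked into the narrowing, and is the technically most intricate part of the argument.
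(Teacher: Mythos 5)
Your overall strategy coincides with the paper's: induction on the trace, peeling off the first communication $\beta$, relating the two event sequences through the residual/retrieval operators via \refToLemma{knec-ila} and \refToLemma{defpost}, establishing membership by showing that retrieval preserves the narrowing, getting conflict-freeness from \refToLemma{eion}, and checking the proving-sequence condition by a case split on whether $\postA{\netevA}{\beta}$ is defined. The base case, including your duality/length argument showing that a bare input event can have no cross-cause, is sound and in fact more explicit than the paper's.

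There are, however, two concrete gaps in the verification of the proving-sequence condition for $i\geq 2$. First, when $\postA{\netevA}{\beta}$ is defined you immediately feed $\postA{\netevA}{\beta}\precN\netevA'_i$ to the induction hypothesis; but \refToDef{provseq} only quantifies over events of the FES, and residuals of events of $\GEA(\Nt\parallel\Msg)$ need not survive the narrowing of the successor network (this is precisely the failure of $\blacklozenge$ recorded in \refToRem{r}). So the sub-case $\postA{\netevA}{\beta}\notin\GEA(\Nt_1\parallel\Msg_1)$ must be treated separately: if the flow $\netevA\precN\netevA_i$ is local, narrowing would expel $\netevA'_i$ as well, a contradiction; if it is a cross-flow, one must instead use the proving-sequence property of the successor ES to find the justifier $\netevA'_k$ of the input $\netevA'_i$, pull it back to $\netevA_k$, and conclude $\netevA\grr\netevA_k$ by \refToLemma{gr}. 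Second, the case you flag as the main obstacle ($\postA{\netevA}{\beta}$ undefined and $\netevA$ a cross-cause of $\netevA_i$) is not settled by a ``sub-induction on the cause structure'': the crux is to prove that the residual input $\netevA'_i$ cannot be queue-justified in $\Nt_1\parallel\Msg_1$ --- which requires a multiplicity/duality argument comparing the inputs recorded in the two candidate histories against the outputs in the queue --- so that it must have a cross-cause in $\GEA(\Nt_1\parallel\Msg_1)$; retrieving that cross-cause produces an event of the prefix located at the same participant as $\netevA$ but with a different first action, hence in conflict with $\netevA$ and still a cause of $\netevA_i$. Until these two sub-cases are filled in, the inductive step is incomplete.
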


\begin{proof} 
  The proof is by  induction on ${\comseqA}$. Let $\os=\osq \Msg$.

 \vspace*{1.4mm} \noindent {\it Case $\comseqA=\beta$.}  Assume first that
  $\beta=\CommAs{\pp}{\la}{\q}$.  From
  $\Nt\parallel\Msg\stackred\beta\Nt'\parallel\Msg'$ we get
  $\pP{\pp}{\oup\q{i}{I}{\la}{\PP}}\in\Nt$ with $\la=\la_k$ for some
  $k\in I$. Thus $\pP{\pp}{\PP_k}\in\Nt'$ and $\Msg'\equiv
  \addMsg{\Msg}{\mq\pp{\la}\q}$.  By \refToDef{esp}(\ref{ila-esp1})
  $\sendL{\q}{\la}\in \ES(\oup\q{i}{I}{\la}{\PP})$. By
  \refToDef{netev-relationsA}(\ref{netev-relations1A})
  $\locevA{\pp}{\os}{\sendL{\q}{\la}}\in\GEA(\Nt\parallel\Msg)$.  By
  Definition \ref{necA} $\,\necA{\beta}= \netevA_1 =
  \locevA{\pp}{\os}{\sendL{\q}{\la}}$. Clearly, $\netevA_1$ is a
  proving sequence in $\ESNA{\Nt\parallel\Msg}$, since $\netevA \precN
  \netevA_1$ would imply $\netevA = \locevA{\pp}{\os}{\procev}$ for
  some $\procev$ such that $\procev < \sendL{\q}{\la}$, which is not
  possible.

  \vspace*{1.6mm}
  Assume now that $\beta=\CommAsI{\pp}{\la}{\q}$. In this case we get
  $\pP{\q}{\inp\pp{i}{I}{\la}{\Q}}\in\Nt$ with $\la=\la_k$ for some
  $k\in I$. Thus $\pP{\q}{\Q_k}\in\Nt'$ and $\Msg \equiv
  \prefMsg{\Msg'}{\mq\pp{\la}\q}$. With a similar reasoning as in the
  previous case, we obtain $\,\necA{\beta}= \netevA_1 =
  \locevA{\q}{\os}{\rcvL{\pp}{\la}}$.  Since $\os \cong
  \concat{\CommAs{\pp}{\la}{\q}}{\os'}$, where $\os' = \osq{\Msg'}$,
  it is immediate to see that $\netevA_1$ is $\os$-queue-justified.
  As in the previous case,
  there is no event $\netevA$ in $\GEA(\Nt\parallel\Msg)$ such that $\netevA\precN\netevA_1$, and thus $\netevA_1$ is a proving sequence in $\ESNA{\Nt\parallel\Msg}$.

  \vspace{1.5mm}\noindent
  {\it Case $\comseqA=\concat{\beta}{\comseqA'}$ with
    $\comseqA'\not=\ee$.}
  In this case, from $\Nt\parallel\Msg\stackred{\comseqA}\Nt'\parallel\Msg'$ we get  \vspace*{1.6mm}

  \centerline{$\Nt\parallel\Msg\stackred{\beta} \Nt''\parallel
    \Msg''\stackred{\comseqA'}\Nt'\parallel\Msg'$}

  \medskip \noindent for some   $\Nt'',\Msg''$. Let $\os'\!=\osq{\Msg''}$.  By \refToLemma{defpost}
  $\os=\mapWh\beta{\os'}$.  Let $\necA{\comseqA}
  \!=\Seq{\netevA_1;\cdots}{\netevA_{n}}$  and $\;\necA{\comseqA'}
  \!=\!\Seq{\netevA'_2;\cdots }{\netevA'_{n}}$.$\,$By   induction
  $\necA{\comseqA'}$ is a proving sequence in $\ESNA{\Nt''\parallel
    \Msg''}$.  By$\,$\refToLemma{knec-ila}(\ref{knec-ila1})
    \eject

 \noindent $\,\preA{\netevA'_j}{\beta}=\netevA_j$ for all $j$, $2\leq j \leq
  n$.  We show that $\netevA_j\in\GEA(\Nt\parallel\Msg)$ for all $j$,
  $2\leq j\leq n$.  Let ad absurdum $k$ $(2\leq k\leq n)$ be the
  minimum index such that $\netevA_k\not\in\GEA(\Nt\parallel\Msg)$. By
  Fact \ref{f}, $\netevA_k$ should be an input which is not
  $\os$-queue justified and $\preA{\netevA'}{\beta}$ should be
  undefined for all $\netevA'$ $\os'$-justifying $\netevA'_k$. Since
  $\netevA_k'\in\GEA(\Nt''\parallel\Msg'')$, either $\netevA'_k$ is
  $\os'$- queue-justified or $\netevA'_k$ is $\os'$-justified by some
  output, which must be an event $\netevA'_l$ for some $l<k$, $2\leq
  l\leq n$ given that $\Seq{\netevA'_2;\cdots }{\netevA'_{n}}$ is a
  proving sequence.  In the first case $\netevA_k$ is $\os$-
  queue-justified. In the second case, we get
  $\preA{\netevA'_l}\beta\in\GEA(\Nt\parallel\Msg)$ since $l<k$.  So,
  in both cases we reach a contradiction.  Finally, from the proof of
  the base case we know that $\netevA_1 =
  \locevA{\pp}{\os}{\projtau{\beta}{\pp}} \in\GEA(\Nt\parallel\Msg)$
  where $\set\pp=\play\beta$.

 \medskip    What is left to show is that $\Seq{\netevA_1;\cdots}{\netevA_n}$ is
  a proving sequence in $\ESNA{\Nt\parallel\Msg}$.  By
  \refToLemma{eion}(\ref{eion1}) no two events in this sequence can be
  in conflict.  Let $\netevA\in\GEA(\Nt\parallel\Msg)$ and
  $\netevA\precN \netevA_h$ for some $h$, $1\leq h\leq n$.  As argued
  in the base case, this implies $h > 1$.  We distinguish two cases,
  depending on
  whether $\postA\netevA\beta$ is defined or not.

\vspace*{1.8mm}
   If $\postA\netevA\beta$ is defined, let $\netevA' =
  \postA{\netevA}{\beta}$.   \sm
  \\
  If  $\netevA' \in\GEA(\Nt''\parallel \Msg'')$, then  by
  \refToLemma{prop:prePostNetArel}(\ref{ppn2A}) we have $\netevA'
  \precNL {\os'}{\postA{\netevA_h}{\beta}}$. By
  \refToLemma{knec-ila}(\ref{knec-ila2})
  $\,\postA{\netevA_j}{\beta}=\netevA'_j$ for all $j$, $2\leq j \leq
  n$.  Thus we have $\netevA'\precNL{\os'}{\netevA'_h}$.
  Since $\necA{\comseqA'}$ is a proving sequence in
  $\ESNA{\Nt''\parallel \Msg''}$, by \refToDef{provseq} there is $l<h$
  such that either $\netevA' =\netevA'_l$ or $\netevA'
  \grr\netevA'_l\prec\netevA'_h$.  In the first case we have $\netevA=
  \preA{\netevA'}{\beta} =\preA{\netevA'_l}{\beta}=\netevA_l$.  In the
  second case, from $\netevA' \grr\netevA'_l$ we deduce $\netevA
  \grr\netevA_l$ by \refToLemma{prop:prePostNetArel}(\ref{ppn3bA}),
  and from $\netevA'_l\prec^{\os'}\netevA_h'$ we deduce $\netevA_l
  \precN
  \netevA_h$ by \refToLemma{prop:prePostNetArel}(\ref{ppn2bA}).

  \sm
   If  instead  
  $\netevA' \not\in\GEA(\Nt''\parallel \Msg'')$, we distinguish two
  cases according to  whether  $\netevA \precN \netevA_h$
   is deduced  by Clause (\ref{c1Aa}) or by Clause
  (\ref{c1AbP1}) of \refToDef{netaevent-relations}.  If $\netevA
  \precN\netevA_h$ by Clause (\ref{c1Aa}) of
  \refToDef{netaevent-relations}, then $\netevA'
  \prec^{\os'}\netevA'_h$ again by Clause (\ref{c1Aa}) of
  \refToDef{netaevent-relations} as proved in
  \refToLemma{prop:prePostNetArel}(\ref{ppn2A}). Then $\netevA'
  \not\in\GEA(\Nt''\parallel \Msg'')$ implies
  $\netevA'_h\not\in\GEA(\Nt''\parallel \Msg'')$ by narrowing, so this
  case is impossible.  If $\netevA \precN\netevA_h$ by Clause
  (\ref{c1AbP1}) of \refToDef{netaevent-relations}, then $\netevA_h$
  is an input and $\netevA$ $\os$-justifies $\netevA_h$. Then also
  $\netevA_h'$ is an input and by definition of proving sequence there
  is $\netevA'_k$ for some $k, 2\leq k\leq n$ which $\os'$-justifies
  $\netevA_h'$. Then $\netevA_k$ $\os$-justifies $\netevA_h$ by
  \refToLemma{prop:prePostNetArel}(\ref{ppn2bA}). Since $\netevA$ and
  $\netevA_k$ both $\os$-justify $\netevA_h$ we get  $\netevA
  \grr\netevA_k$ by \refToLemma{gr}.

  \vspace*{1.8mm}
  If $\postA\netevA\beta$ is undefined, then by
  \refToDef{def:PostPre}(\ref{def:Post}) either $\netevA=\netevA_1$ or
  $\netevA=\locevA{\pp}{\os}{\pi\cdot \actseq}$ with
  $\pi\not=\projtau{\beta}{\pp}$, which implies
  $\netevA\grr\netevA_1$.  In the first case we are done. So, suppose
  $\netevA\grr\netevA_1$. Let $\pi' =\projtau{\beta}{\pp}$.  Since
  $\rho$ and $\rho_1$ are  n-events  in $\GEA(\Nt\parallel\Msg)$, we may
  assume $\pi=\sendL{\q}{\la}$ and $\pi'=\sendL{\q}{\la'}$ and
  therefore $\beta=\CommAs{\pp}{\la'}{\q}$. Indeed, we know that
  $\play{\beta} = \set\pp$, and $\beta$ cannot be an input
  $\CommAsI{\q}{\la'}{\pp}$ since in this case there should be
  $\netevA_0=\locevA{\pp}{\os}{\rcvL{\q}{\la}} \in
  \GEA(\Nt\parallel\Msg)$ by narrowing, and the two input n-events
  $\netevA_0$ and $\netevA_1=\locevA{\pp}{\os}{\rcvL{\q}{\la'}}$ could
  not be both
   $\os$-queue-justified.   
  Note that $\netevA$
  cannot be a local cause of $ \netevA_h$,  i.e
  $\netevA\precN\netevA_h$ cannot hold by Clause (\ref{c1Aa}) of
  \refToDef{netaevent-relations},  because
  $\netevA_h=\locevA{\pp}{\os}{\pi\cdot\actseq\cdot\procev}$ would
  imply $\netevA_h\grr\netevA_1$, contradicting what said above.
  Therefore $\netevA$ is a cross-cause of $\netevA_h$,  i.e
  $\netevA\precN\netevA_h$ holds by Clause (\ref{c1AbP1}) of
  \refToDef{netaevent-relations},  so $\netevA=
  \locevA{\pp}{\os}{\pi\cdot\actseq'\cdot\sendL{\pr}{\la''}}$ and
  $\netevA_h=\locevA{\pr}{\os}{\actseq''\cdot\rcvL{\pp}{\la''}}$.  We
  know that $\netevA_h=\preA{\netevA'_h}{\beta}$.  By
  \refToDef{def:PostPre}(\ref{def:Pre}) we have
  $\netevA'_h=\locevA{\pr}{(\os\cdot\beta)}{\actseq''\cdot\rcvL{\pp}{\la''}}$,
  because $\pr$ is the receiver of a message sent by $\pp$ and thus by
  construction $\pr \neq \pp$. 
  Since $\netevA'_h$ is an input n-event in
  $\GEA(\Nt''\parallel\Msg'')$, it must either be justified by the
  queue $\os\cdot\beta$ or have a cross-cause in
  $\GEA(\Nt''\parallel\Msg'')$. Since $\netevA_h$ is not  $\os$-queue-justified
  (because $\netevA\precN \netevA_h$), the only way
  for $\netevA'_h$ to be $\os\cdot\beta$-queue-justified   would be that
  $\CommAs{\pp}{\la''}{\pr} = \beta$, that is $\pr = \q$ and $\la'' =
  \la'$, and that
  $(*)~~\dualprecsim{(\projs{\projAP{\os}\pp)}\q}{\projs{\actseq_0}\pp}$
  and
  $\projs{(\actseq''\cdot\rcvL{\pp}{\la'})}{\pp}\precapprox\projs{(\actseq_0\cdot\rcvL{\pp}{\la'}\cdot\preEv)}{\pp}$
  for some $\actseq_0$ and $\preEv$, see \refToDef{qgne}.  This means
  that $\projs{\actseq_0}{\pp}$ is the subsequence of
  $\projs{\actseq''}{\pp}$ obtained by keeping all and only its
  inputs.
Now, if
$\netevA'_h=\locevA{\q}{(\os\cdot\beta)}{\actseq''\cdot\rcvL{\pp}{\la'}}$,
then $\netevA_h=\locevA{\q}{\os}{\actseq''\cdot\rcvL{\pp}{\la'}}$.
Since
$\netevA=\locevA{\pp}{\os}{\sendL{\q}{\la}\cdot\actseq'\cdot\sendL{\q}{\la'}}$
is a cross-cause of $\netevA_h$, we have
$(**)~\dualprecsim{\projs{(\concat{\projAP{\os}{\pp}}{\sendL{\q}{\la}\cdot{\actseq'}})}{\q}}{\projs{(\concat{\projAP{\os}{\q}}{{\actseq_1\cdot\preEv'}})}{\pp}}$
and $\projs{(\actseq''\cdot\rcvL{\pp}{\la'})}{\pp}\precapprox
\projs{(\actseq_1\cdot\rcvL{\pp}{\la'}\cdot\preEv')}{\pp}$ for some
$\actseq_1$ and $\preEv'$, see Clause (\ref{c1AbP1}) of
\refToDef{netaevent-relations}.
It follows that the inputs in $\projs{\actseq_1}{\pp}$
coincide with the inputs in $\projs{\actseq''}{\pp}$ and thus with
those in $\projs{\actseq_0}{\pp}$. From (*) we know that all inputs in
$\projs{\actseq_0}{\pp}$ match some output in
$\projs{(\projAP{\os}\pp)}\q$. Therefore no input in
$\projs{(\concat{\projAP{\os}{\q}}{{\actseq_1\cdot\preEv'}})}{\pp}$
can match the output $\sendL{\q}{\la}$ in
$\projs{(\concat{\projAP{\os}{\pp}}{\sendL{\q}{\la}\cdot{\actseq'}})}{\q}$,
contradicting (**).
Hence $\netevA'_h$ must have a cross-cause in
$\GEA(\Nt''\parallel\Msg'')$.  Let $\netevA'$ be such a cross-cause.
Then $\netevA'=
\locevA{\pp}{(\os\cdot\beta)}{\actseq_2\cdot\sendL{\pr}{ \la''}}$ for some $\actseq_2$.
Since $\necA{\comseqA'}$ is a proving sequence in
$\ESNA{\Nt''\parallel \Msg''}$, by \refToDef{provseq} there is $l<h$
such that either $\netevA' =\netevA'_l$ or $\netevA'
\grr\netevA'_l\prec\netevA'_h$. In the first case
$\preA{\netevA'}{\beta}
=\preA{\netevA'_l}{\beta}=\netevA_l\prec\netevA_h$,  and
$(\preA{\netevA'}{\beta})\grr\netevA$ because $\preA{\netevA'}{\beta}
= \locevA{\pp}{\os}{\pi' \cdot
  \actseq_2\cdot\sendL{\pr}{ \la''}}$.
In the second case,  let $\netevA'_l=\locevA{\pp}{(\os\cdot\beta)}{\procev}$ for some $\procev$. From $\netevA'
\grr\netevA'_l\prec\netevA'_h$ we derive
$\preA{\netevA'}{\beta} \grr \preA{\netevA'_l}{\beta} \prec \preA{\netevA'_h}{\beta}$
by \refToLemma{prop:prePostNetArel}(\ref{ppn3bA}) and (\ref{ppn2bA}).
This implies $\netevA_l = \preA{\netevA'_l}{\beta} = \locevA{\pp}{\os}{\pi'
  \cdot \procev}$. Hence
$\netevA \grr \netevA_l \prec \netevA_h$.
  \end{proof}

  \begin{theorem}\mylabel{uf12A}
If  $\Seq{\netevA_1;\cdots}{\netevA_n}$ is a proving sequence in $\ESNA{\Nt\parallel\Msg}$, then $\Nt\parallel\Msg\stackred\comseqA\Nt'\parallel\Msg'$ where $\comseqA=\io{\netevA_1}\cdots\io{\netevA_n}$.
\end{theorem}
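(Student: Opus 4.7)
I would prove this by induction on the length $n$ of the proving sequence, mirroring (and essentially inverting) the argument in the proof of \refToTheorem{uf10A}.

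The base case $n=0$ is trivial, yielding the empty trace $\comseqA = \varepsilon$. For the inductive step, let $\Seq{\netevA_1;\cdots}{\netevA_{n+1}}$ be a proving sequence in $\ESNA{\Nt\parN\Msg}$ with $\os = \osq\Msg$. Set $\beta_1 = \io{\netevA_1}$. The plan has two main parts:

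\textbf{Firing $\netevA_1$.} I must show that $\Nt\parN\Msg \stackred{\beta_1} \Nt_1 \parN \Msg_1$ for some $\Nt_1, \Msg_1$. By \refToDef{provseq} applied at $i=1$, the event $\netevA_1$ has no predecessors under $\precN$. If $\netevA_1 = \locevA\pp\os{\sendL\q\la}$ is an output event, then by \refToDef{netev-relationsA}(\ref{netev-relations1A}) there is $\pP\pp\PP \in \Nt$ with $\sendL\q\la \in \ES(\PP)$, so $\PP$ has an output branch towards $\q$ with label $\la$ and Rule $\rulename{Send}$ fires, giving $\Msg_1 \equiv \addMsg{\Msg}{\mq\pp\la\q}$. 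If instead $\netevA_1 = \locevA\q\os{\rcvL\pp\la}$ is an input event, then since $\netevA_1$ is in the narrowed set $\GEA(\Nt\parN\Msg)$, by \refToDef{nr} it must be either $\os$-queue-justified or $\os$-justified by some output event of the ES; the latter would give a $\precN$-predecessor of $\netevA_1$, contradicting the proving sequence property. Hence $\netevA_1$ is $\os$-queue-justified, so $\Msg \equiv \addMsg{\mq\pp\la\q}{\Msg_1}$ modulo $\equiv$ and Rule $\rulename{Rcv}$ fires.

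\textbf{Reducing the tail to the inductive hypothesis.} Let $\os_1 = \osq{\Msg_1}$, which by \refToLemma{defpost} equals $\mapBl{\beta_1}\os$. Consider the residuals $\netevA'_i = \postA{\netevA_i}{\beta_1}$ for $2 \leq i \leq n+1$. I need to show these are all defined, lie in $\GEA(\Nt_1 \parN \Msg_1)$, and form a proving sequence there. Definedness of each $\netevA'_i$ follows from the absence of conflict with $\netevA_1$: for any $\netevA_i$ located at $\play{\beta_1} = \{\pp\}$, the first action of its p-event must agree with $\projtau{\beta_1}{\pp}$, else Clause (\ref{ila--esp3}) of \refToDef{procevent-relations} would force $\netevA_i \grr \netevA_1$. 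Membership in $\GEA(\Nt_1 \parN \Msg_1)$ follows from the fact that the residualized p-events lie in the process ESs of the new network (for the acting participant, the residual strips off the prefix $\projtau{\beta_1}{\pp}$; for the others, the p-events are unchanged), together with preservation of justifications: any input $\netevA'_i$ that was $\os$-queue-justified through the message consumed by $\beta_1$ must have been $\netevA_1$ itself, which is excluded; any other queue justification or cross-justification persists or is adjusted by applying $\mapBl{\beta_1}{\cdot}$ to the queue annotation. The flow predecessors condition for the tail to be a proving sequence follows from \refToLemma{prop:prePostNetArel}(\ref{ppn2A}) and (\ref{ppn3A}): any $\netevA \precN_1 \netevA'_i$ in the new ES lifts, via $\lozenge$, to some $\preA{\netevA}{\beta_1}$ which is a $\precN$-predecessor of $\netevA_i$ in the original ES, and hence is handled by the original proving sequence (either equal to some earlier $\netevA_k$, whose residual $\netevA'_k$ then realises the condition, or in conflict with such an $\netevA_k$, whence $\netevA \grr \netevA'_k$ by \refToLemma{prop:prePostNetArel}(\ref{ppn3A})).

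\textbf{Conclusion.} Applying the induction hypothesis to the proving sequence $\Seq{\netevA'_2;\cdots}{\netevA'_{n+1}}$ in $\ESNA{\Nt_1 \parN \Msg_1}$ yields a transition sequence $\Nt_1 \parN \Msg_1 \stackred{\comseqA'} \Nt' \parN \Msg'$ with $\comseqA' = \io{\netevA'_2}\cdots\io{\netevA'_{n+1}}$. Since $\io{\postA{\netevA}{\beta_1}} = \io{\netevA}$ by \refToDef{def:PostPre}(\ref{def:Post}), this trace equals $\io{\netevA_2}\cdots\io{\netevA_{n+1}}$, so prepending $\beta_1$ gives the required $\Nt\parN\Msg \stackred{\comseqA} \Nt'\parN\Msg'$ with $\comseqA = \io{\netevA_1}\cdots\io{\netevA_{n+1}}$. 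The main obstacle is the justification tracking in the second part: ensuring that input events in the tail remain justified after firing $\beta_1$, in particular that no residual $\netevA'_i$ loses its only witness of justification because that witness was precisely $\netevA_1$ or the queue slot consumed by it. This is handled by exploiting the proving sequence property, which guarantees that any such witness must appear among the preceding events, and in particular among the events already available as residuals or as the acting event $\netevA_1$ itself.
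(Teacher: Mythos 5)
Your proof follows the same route as the paper's: induction on $n$, fire $\io{\netevA_1}$ (using minimality plus narrowing to show the p-event of $\netevA_1$ is a single action, and queue-justification for the input case), residualise the tail with $\blacklozenge$, show the residuals form a proving sequence in $\ESNA{\Nt_1\parN\Msg_1}$, and apply the induction hypothesis. Most of the steps you sketch (definedness of residuals via conflict with $\netevA_1$, membership via the minimal-counterexample/Fact~\ref{f} argument, preservation of $\io{\cdot}$) are exactly the paper's.

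There is, however, one genuine gap in your verification that the tail is a proving sequence. Given $\netevA\in\GEA(\Nt_1\parN\Msg_1)$ with $\netevA\prec^{\os_1}\netevA'_i$, you lift it to $\preA{\netevA}{\beta_1}\precN\netevA_i$ and then say it is ``handled by the original proving sequence.'' But \refToDef{provseq} only constrains predecessors that belong to the event set $\GEA(\Nt\parN\Msg)$, and it is not automatic that $\preA{\netevA}{\beta_1}$ survives the narrowing of the original network: the retrieval operator is defined purely syntactically, independently of membership. The paper must therefore split on whether $\preA{\netevA}{\beta_1}\in\GEA(\Nt\parN\Msg)$; in the negative case it further distinguishes local flow (where narrowing forces $\netevA_i$ itself out of the event set, a contradiction) from cross-flow (where it produces a second $\os$-justifier of the input $\netevA_i$ among the $\netevA_k$'s and invokes \refToLemma{gr} to derive the required conflict $\netevA\grr\netevA'_k$). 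Your argument as written silently assumes membership; as \refToRem{r} observes, one \emph{can} prove that retrieval along an executed transition preserves membership, which would make your shortcut legitimate, but that auxiliary fact is itself nontrivial and you neither prove it nor route around it. A smaller omission of the same kind: you do not check that the residuals $\netevA'_2,\ldots,\netevA'_{n+1}$ are pairwise non-conflicting (needed for a proving sequence), though this follows directly from \refToLemma{prop:prePostNetArel}(\ref{ppn3bA}) applied to the retrievals.
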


\begin{proof}
The proof is by induction on the length $n$ of the proving sequence. Let $\os=\osq\Msg$.

\medskip\noindent  {\it Case $n=1$.}  Let $\io{\netevA_1}=\beta$ where
 $\beta=\CommAsI{\pp}{\la}{\q}$. The proof for
 $\beta=\CommAs{\pp}{\la}{\q}$ is similar and simpler.  By
 \refToDef{netev-relationsA}(\ref{netev-relations1A})
 $\netevA_1=\locevA\q\os{\concat\actseq{\rcvL\pp\la}}$.
Note that it must be
 $\actseq=\ee$, since otherwise we would have
$\locevA\q\os{\actseq} \in \GEA(\Nt\parallel\Msg)$
by narrowing,  where
$\locevA\q\os{\actseq} \precN \netevA_1$ by
\refToDef{netaevent-relations}(\ref{c1A})(a), contradicting the
hypothesis that $\netevA_1$ is minimal.  Moreover, $\netevA_1$ cannot
be $\os$-justified by an  output n-event  $\netevA \in
\GEA(\Nt\parallel\Msg)$, because  this would imply  $\netevA
\precN \netevA_1$,
contradicting again  the minimality of $\netevA_1$. Hence,
by \refToDef{netev-relationsA}(\ref{netev-relations1A}) $\netevA_1
 = \locevA\q\os{\rcvL\pp\la}$ must be $\os$-queue-justified, which
means that $\os \cong \concat{\CommAs\pp\la\q}{\os'}$. Thus $\Msg
\equiv \addMsg{\mq\pp{\la}\q}{\Msg'}$,  where $\osq{\Msg'}=
\os'$.  By  \refToDef{esp}(\ref{ila-esp1})  and
\refToDef{netev-relationsA}(\ref{netev-relations1A}) we have
$\Nt\equiv\pP{\q}{\inp\pp{i}{I}{\la}{\Q}}\parN\Nt_0$ where $\la_k =
\la$ for some $k\in I$.  We may then conclude that
$\Nt\parallel\Msg\stackred\beta \pP{\q}{\Q_k} \parN
\Nt_0\parallel\Msg' = \Nt'\parallel\Msg' $.

\smallskip\noindent
{\it Case $n>1$.}
 Let $\io{\netevA_1}=\beta$ and $\Nt\parallel\Msg\stackred\beta \Nt''\parallel\Msg''$
be the corresponding transition as obtained from the base case. Let  $\os'=\osq{\Msg''}$. By \refToLemma{defpost} $\os'=\mapBl{\beta}\os$.
We show that $\postA{\netevA_j}{\beta}$ is defined for all $j$,  $2\leq
j\leq n$. If $\postA{\netevA_k}\beta$ were undefined
for some $k$, $2\leq k\leq n$, then by
\refToDef{def:PostPre}(\ref{def:Post}) either $\netevA_k=\netevA_1$ or
$\netevA_k=\locevA{\pp}{\os}{\pi\cdot\zeta}$ where
$\set\pp=\play\beta$ and $\pi\not=\projtau{\beta}{\pp}$, which implies
$\netevA_k\grr\netevA_1$. So both cases are impossible.   Thus we
may define $\netevA'_j = \postA{\netevA_j}{\beta}$ for all $j$, $2\leq
j\leq n$. We show that $\netevA'_j\in\GEA(\Nt''\parallel\Msg'')$ for
all $j$, $2\leq j\leq n$.  Let ad absurdum $k$ $(2\leq k\leq n)$ be
the minimum index such that
$\netevA'_k\not\in\GEA(\Nt''\parallel\Msg'')$. By Fact~\ref{f},
$\netevA'_k$ should be an input  which is  not $\os'$-queue
justified and $\postA{\netevA'}{\beta}$ should be undefined  for
 all $\netevA'$ $\os$-justifying  $\netevA_k'$.
 Since $\netevA_k\in\GEA(\Nt\parallel\Msg)$, either $\netevA_k$ is
 $\os$-queue justified or $\netevA_k$ is $\os$-justified by some
 output, which must be an event $\netevA_l$ for some $l<k$, $2\leq
 l\leq n$ given that $\netevA_1, \ldots, \netevA_n$ is a proving
 sequence.  In the first case $\netevA_k'$ is $\os'$-queue
 justified. In the second case we get
 $\postA{\netevA_l}\beta\in\GEA(\Nt''\parallel\Msg'')$ since $l<k$.
 So in both cases we reach a contradiction.

 \medskip
 We show that $\netevA'_2 ;\cdots ;\netevA'_n$ is a proving sequence
 in $\ESNA{\Nt''\parallel\Msg''}$. By
 \refToLemma{prop:prePostNetA}(\ref{ppn1A}) $\netevA_j =
 \preA{\netevA_j'}{\beta}$ for all $j$, $2\leq j\leq n$. Then by
 \refToLemma{prop:prePostNetArel}(\ref{ppn3bA}) no two n-events in the
 sequence $\netevA'_2 ;\cdots ;\netevA'_n$ can be in
 conflict.

  \vspace*{1.6mm}
 Let $\netevA\in\GEA(\Nt''\parallel\Msg'')$ and $\netevA \prec^{\os'}
 \netevA'_h$ for some $h$, $2\leq h\leq n$.  Let $\netevA'=
 \preA\netevA\beta$.  By
 \refToLemma{prop:prePostNetArel}(\ref{ppn2bA})
 $\preA\netevA\beta\precN \preA{\netevA'_h}{\beta} =\netevA_h$.
 Therefore $\netevA' \precN\netevA_h$. If
 $\netevA'\in\GEA(\Nt\parallel\Msg)$,
since $\Seq{\netevA_1;\cdots}{\netevA_n}$ is a proving sequence in
$\ESNA{\Nt\parallel\Msg}$, by \refToDef{provseq} there is $l<h$ such
that either $\netevA'=\netevA_l$ or
$\netevA'\grr\netevA_l\prec\netevA_h$. In the first case,
by \refToLemma{prop:prePostNetA}(\ref{ppn1bA}) we get
$\netevA=\postA{\netevA'}{\beta}=\postA{\netevA_l}{\beta} =
\netevA'_l$. In the second case, by
\refToLemma{prop:prePostNetArel}(\ref{ppn2A}) and (\ref{ppn3A}) we get
$\netevA \grr \netevA'_l \precN \netevA'_h$.  If
$\netevA'\not\in\GEA(\Nt\parallel\Msg)$ we distinguish two cases
according to  whether  $\netevA \prec^{\os'} \netevA_h'$
 is deduced  by
Clause (\ref{c1Aa}) or Clause (\ref{c1AbP1}) of
\refToDef{netaevent-relations}.  If $\netevA \prec^{\os'}\netevA_h'$
by Clause (\ref{c1Aa}) of \refToDef{netaevent-relations}, then
$\netevA'\prec^{\os}\netevA_h$ again by Clause (\ref{c1Aa}) of
\refToDef{netaevent-relations} as proved in
\refToLemma{prop:prePostNetArel}(\ref{ppn2A}). Then
$\netevA\not\in\GEA(\Nt\parallel \Msg)$ implies
$\netevA_h\not\in\GEA(\Nt\parallel \Msg)$ by narrowing, so this case
is impossible.  If $\netevA \prec^{\os'}\netevA_h'$ by Clause
(\ref{c1AbP1}) of \refToDef{netaevent-relations}, then $\netevA_h'$ is
an input and $\netevA$ $\os'$-justifies $\netevA_h'$. Then also
$\netevA_h$ is an input and by definition of proving sequence there is
$\netevA_k$ for some $k< h, 2\leq k\leq n$ which $\os$-justifies
$\netevA_h$. Then $\netevA_k'$ $\os'$-justifies $\netevA_h'$ by
\refToLemma{prop:prePostNetArel}(\ref{ppn2bA}). Since $\netevA$
and $\netevA_k'$ both $\os'$-justify $\netevA_h'$ we get
$\netevA \grr\netevA_k'$ by \refToLemma{gr}.

\medskip
We have shown that $\netevA'_2 ;\cdots ;\netevA'_n$ is a proving
sequence in the event structure   $\ESNA{\Nt''\parallel\Msg''}$. By induction
$\Nt''\parallel\Msg''\stackred{\comseqA'}\Nt'\parallel\Msg'$ where
$\comseqA'=\io{\netevA'_2}\cdots\io{\netevA'_n}$.  Since
$\io{\netevA'_j} = \io{\netevA_j}$ for all $j, 2\leq j\leq n$, we have
$\comseqA = \concat{\beta}{\comseqA'}$.
Hence  $\Nt\parallel\Msg\stackred{\beta}
\Nt''\parallel\Msg''\stackred{\comseqA'}\Nt'\parallel\Msg'$ is the
required  transition sequence.
\end{proof}

 \begin{remark}\label{r} We can show that if $\Nt\parallel
  \Msg\stackred\beta\Nt'\parallel \Msg'$ and
  $\netevA\in\GEA(\Nt'\parallel \Msg')$, then  we get
  $\preA\netevA\beta\in\GEA(\Nt\parallel \Msg)$.
 The use of this property  would simplify the proof of \refToTheorem{uf12A}, since we
  would avoid to consider the case
  $\preA\netevA\beta\not\in\GEA(\Nt\parallel \Msg)$. Instead,
  the fact that
  $\Nt\parallel \Msg\stackred\beta\Nt'\parallel \Msg'$ and
  $\netevA\in\GEA(\Nt\parallel \Msg)$ and $\postA\netevA\beta$
  is  defined
   does  not imply $\postA\netevA\beta\in\GEA(\Nt'\parallel \Msg')$. An
  example is \vspace*{1.6mm}

  \centerline{$\begin{array}{l}
\pP\pp{\q?\la}\parN\pP\q{\pr!\la_1;\pp!\la\oplus\pr!\la_2}\parN\pP\pr{\q?\la_1+\q?\la_2}
\parN\emptyset \stackred{\q\pr!\la_2}\\
\pP\pp{\q?\la}\parN\pP\q\inact\parN\pP\pr{\q?\la_1+\q?\la_2}
\parN\mq{\q}{\la_2}\pr
\end{array}
$} \vspace*{1.6mm}

\noindent with  $\beta = \CommAs{\q}{\la_2}{\pr}$ and $\netevA
=\locevA\pp{}{\q?\la}$.  Our choice is justified both by the
shortening of the whole proofs and by the uniformity between the
proofs of Theorems \ref{uf10A} and \ref{uf12A}.
\end{remark}

 \subsection{Transition sequences of asynchronous types and proving
   sequences of their ESs}

We introduce two operators  $\bullet$ and  $\circ$ for t-events,
which play the same role as the operators $\blacklozenge$ and $\lozenge$
 for n-events.  In defining these operators we
must make sure that, in the resulting t-event $\eqA{\os'}{\comseqA'}$,
the  trace  $\comseqA'$ is $\os'$-pointed, see
 \refToDef{def:glEventA}(\ref{def:glEventA1}) and (\ref{def:glEventA2}).

Let us start with the formal definition, and then we shall explain it in detail.\vspace*{-2mm}

 \begin{definition}[Residual  and retrieval of  a t-event with respect to a
 communication]\mylabel{def:PostPreGlA}\hbox{}\hfill \vspace*{-6mm}
   \begin{enumerate}
  \item\mylabel{def:PostPreGl1A} The {\em residual of a t-event $\eqA\os\comseqA$ after a communication $\beta$}  is defined by:

\centerline{$
\postGA{\eqA\os\comseqA}\beta=\begin{cases}
\eqA{\mapBl{\beta}{\os}}{\comseqA'}      & \text{if }  \comseqA\approx_\os\concat{\beta}{\comseqA'}
\text { with }\comseqA'\not=\ee\\
\eqA{\mapBl{\beta}{\os}}{\comseqA}      & \text{if }  \play\beta\not\subseteq\play{\comseqA}
\end{cases}$}\vspace{-1mm}

\item\mylabel{def:PreGl1A} The {\em retrieval of a t-event $\eqA\os\comseqA$ before a
  communication $\beta$} is defined by:

 \centerline{$
\cauA{\beta}{\eqA{\os}\comseqA}=\begin{cases}
\eqA{\mapWh{\beta}{\os}}{\beta\cdot\comseqA}     & \text{if }  \beta\cdot\comseqA
\text { is $\mapWh{\beta}{\os}$-pointed }\\
\eqA{\mapWh{\beta}{\os}}{\comseqA}      &\text{if } \play\beta\not\subseteq\play{\comseqA}
\end{cases}$
}
  \end{enumerate}
\end{definition}
Note that the operators $\bullet$ and $\circ$ preserve the
communication of t-events, namely $
\io{\postGA{\delta}{\beta}} =\linebreak \io{\preGA{\delta}{\beta}} = \io{\delta}$,
and transform the o-trace using the operators $\blacktriangleright$ and $\triangleright$,
see
\refToDef{def:pQueue}.  
We now explain the transformation of the trace $\comseqA$.

\noindent Consider first the case of $\postGA{\eqA\os\comseqA}{\beta}$. If  the communication $\beta$ can be brought to the head of the
 trace $\comseqA$ using the equivalence $\approx_{\os}$, we obtain the residual of $\eqA\os\comseqA$ after
$\beta$ by
removing the message $\beta$ from the head of the  trace, provided this does
not result in the empty trace (otherwise, the residual is
undefined).  Then, letting $\os' =\mapBl\beta\os$, it is easy to see that the  trace
 $\comseqA'$ is $\os'$-pointed,  since it is a suffix of
$\comseqA = \concat{\beta}{\comseqA'}$ which is $\os$-pointed (see
\refToLemma{suffix-pointedness}). On the other hand, if $\play\beta\not\subseteq\play{\comseqA}$, then
the residual of $\eqA\os\comseqA$ after
$\beta$ is simply obtained by 
leaving the trace unchanged. In this case, letting again $\os' =
\mapBl\beta\os$, the $\os'$-pointedness of $\comseqA$
follows immediately from its $\os$-pointedness.
For instance, consider the t-event
$\eqA{\CommAs\pp{\la'}\pr}{\CommAsI\pp{\la'}\pr}$ where
$\os = \CommAs\pp{\la'}\pr$ and $\comseqA =
\CommAsI\pp{\la'}\pr$. Observe that $\pp$
 occurs in $\comseqA$,  but $\pp
\notin \play{\comseqA}$. Then we have
$\postGA{\eqA{\CommAs\pp{\la'}\pr}{\CommAsI\pp{\la'}\pr}}{\CommAs\pp\la\q}
= \eqA{\concat{\CommAs\pp{\la'}\pr}{\CommAs\pp\la\q}}{\CommAsI\pp{\la'}\pr}$.

\medskip
Next, consider the definition of $\cauA{\beta}{\eqA{\os}\comseqA}$. The resulting  trace
 will be the prefixing of $\comseqA$ by $\beta$ if it is $\mapWh{\beta}{\os}$-pointed. Otherwise the resulting  trace is
 $\comseqA$ if $\play\beta$ is not a player of $\comseqA$.
   For instance, for the t-event
$\eqA{\CommAs\pp\la\q}{\CommAsI\pp{\la}\q}$, where  $\os =\CommAs\pp\la\q$  and
$\comseqA = \CommAsI\pp\la\q$, we have $\pp\notin\play{\comseqA}$, but
$\ct {1}{\concat{\CommAs\pp\la\q}{\CommAsI\pp\la\q }}{\,2}$, thus
$\cauA{\CommAs\pp\la\q}{\eqA{\CommAs\pp\la\q}{\CommAsI\pp{\la}\q}} =
\eqA{\ee}{\concat{\CommAs\pp\la\q}{\CommAsI\pp{\la}\q}}$. On the other
hand, for the t-event
$\eqA{\CommAs\pp\la\q}{\concat{\CommAs\pr{\la'}\ps}{\CommAsI\pr{\la'}\ps}}$,
where  $\os\! =\!\CommAs\pp\la\q$  and $\comseqA\! =\!
\concat{\CommAs\pr{\la'}\ps}{\CommAsI\pr{\la'}\ps}$, we have
$\pp\!\notin\!\play{\comseqA}$ and \mbox{$\neg (\ct
{1}{\concat{\CommAs\pp\la\q}{\concat{\CommAs\pr{\la'}\ps}{\CommAsI\pr{\la'}\ps}}}{2})$}
and \mbox{$\neg (\ct
{1}{\concat{\CommAs\pp\la\q}{\concat{\CommAs\pr{\la'}\ps}{\CommAsI\pr{\la'}\ps}}}{3})$},
so
$\cauA{\CommAs\pp\la\q}{\eqA{\CommAs\pp\la\q}{\concat{\CommAs\pr{\la'}\ps}{\CommAsI\pr{\la'}\ps}}}
= \eqA{\ee}{\concat{\CommAs\pr{\la'}\ps}{\CommAsI\pr{\la'}\ps}}$.

\medskip
\refToLemma{prop:prePostGlA}
is the analogous of \refToLemma{prop:prePostNetA} as regards the first
two statements.  The remaining two statements establish some
commutativity properties of the mappings $\bullet$ and $\circ$ when
applied to two communications with different players. These properties
rely on the corresponding commutativity properties for the mappings
$\mapBl{}{}$ and $\mapWh{}{}$ on o-traces, given in ~\refToLemma{tr}.
Note that
these properties are needed for $\bullet$ and $\circ$ whereas they
were not needed for $\blacklozenge$ and $\lozenge$, because the Rules
\rulename{IComm-Out} and \rulename{IComm-In} of \refToFigure{ltsgtAs}
allow transitions to occur inside \agts, whereas the LTS for networks
only allows transitions for top-level communications.  In fact
Statements (\ref{ppg5}) and (\ref{ppg6A}) of
\refToLemma{prop:prePostGlA} are used in the proof of
\refToLemma{keybeta-bis}.

\begin{lemma}\mylabel{tr}
Let $\play{\beta_1}\cap\play{\beta_2}=\emptyset$.
\begin{enumerate}
\item \mylabel{tr2}
If both
  $\mapBl{\beta_2}{\os}$ and $\mapBl{\beta_2}{({\mapWh{\beta_1}{\os}})}$ are defined,
then
$\mapWh{\beta_1}{{(\mapBl{\beta_2}{\os})}}\cong
 \mapBl{\beta_2}{({\mapWh{\beta_1}{\os}})}$.
 \item  \mylabel{tr1} If both $\mapWh{\beta_1}{\os}$ and
  $\mapWh{\beta_2}{\os}$ are defined,
then
$\mapWh{\beta_1}{{(\mapWh{\beta_2}{\os})}}$ is defined and
$\mapWh{\beta_1}{{(\mapWh{\beta_2}{\os})}}\cong
 \mapWh{\beta_2}{({\mapWh{\beta_1}{\os}})}$.
\end{enumerate}
\end{lemma}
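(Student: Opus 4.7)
The plan is to prove both statements by a uniform case analysis on whether each of $\beta_1,\beta_2$ is an output or an input, giving four subcases per statement. The key observation underlying every case is that $\play{\beta_1}\cap\play{\beta_2}=\emptyset$ forces the senders (for outputs) and receivers (for inputs) to differ pairwise, so whenever the corresponding output messages end up adjacent in an o-trace they may be swapped via the equivalence $\cong$ of \refToDef{def:permEqA2}.

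For Statement~(\ref{tr2}), I would unfold the clauses of \refToDef{def:pQueue} to express each operator in terms of a witness decomposition of $\os$. In the both-outputs case, $\mapBl{\beta_2}\os=\os\cdot\beta_2$ and $\mapWh{\beta_1}\os=\os'$ with $\os\cong\os'\cdot\beta_1$; then
$$\mapWh{\beta_1}(\os\cdot\beta_2)\cong\mapWh{\beta_1}(\os'\cdot\beta_1\cdot\beta_2)\cong\mapWh{\beta_1}(\os'\cdot\beta_2\cdot\beta_1)=\os'\cdot\beta_2=\mapBl{\beta_2}(\mapWh{\beta_1}\os),$$
the middle $\cong$ using distinct players. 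The three remaining subcases (output/input, input/output, input/input) proceed analogously, using the second definedness hypothesis ``$\mapBl{\beta_2}(\mapWh{\beta_1}\os)$ is defined'' to extract a compatible witness decomposition that lines up the two iterated operators.

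For Statement~(\ref{tr1}), the same four-way analysis applies. When both $\beta_i$ are inputs, both operators prepend $\overline{\beta_i}$, so the result reduces to swapping two adjacent output messages with distinct players via $\cong$. When both are outputs, definedness gives $\os\cong\os_i\cdot\beta_i$ for $i=1,2$; from these two decompositions and the distinct-players assumption one extracts $\os\cong\os''\cdot\beta_1\cdot\beta_2\cong\os''\cdot\beta_2\cdot\beta_1$, whence both iterated removals produce $\os''$ (this also witnesses definedness of the outer operator). The mixed subcases combine a back-removal with a front-prepending, and these commute trivially because the prepended output of one operator cannot serve as the witness for the removal of the other (different players).

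The main obstacle I expect is tracking the $\cong$-classes carefully, in particular ensuring that each nested operator application still has an available witness after the preceding operator has acted. In Statement~(\ref{tr2}) this is largely absorbed by the explicit definedness hypothesis; in Statement~(\ref{tr1}) it follows from commuting witness decompositions via $\cong$ together with the distinct-players assumption, which is precisely what also establishes that $\mapWh{\beta_1}(\mapWh{\beta_2}\os)$ is defined as claimed.
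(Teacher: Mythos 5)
Your proposal is correct and follows essentially the same route as the paper's proof: a four-way case analysis on whether each $\beta_i$ is an output or an input, unfolding \refToDef{def:pQueue} into witness decompositions of $\os$ (with the second definedness hypothesis in part~(\ref{tr2}) supplying the compatible decomposition in the case where $\beta_1$ is an output and $\beta_2$ an input) and commuting adjacent messages via $\cong$ thanks to the disjoint-players assumption. One small caveat: in the mixed case of part~(\ref{tr1}), disjoint players do not by themselves prevent $\overline{\beta_1}$ and $\beta_2$ from being equal messages on the same channel, but the commutation still holds because definedness of $\mapWh{\beta_2}{\os}$ already guarantees a trailing occurrence of $\beta_2$ in $\os$, so prepending $\overline{\beta_1}$ cannot change the outcome of the removal.
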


\begin{lemma}\mylabel{prop:prePostGlA}
\begin{enumerate}
\item \mylabel{ppg0b}
\mylabel{ppg2A}\mylabel{ppg1}If  $\postGA{\comoccA}{\beta}$ is defined,
then
$\preGA{(\postGA{\comoccA}{\beta})}{\beta}=\comoccA$.
\item \mylabel{ppg0a} \mylabel{ppg1bA} \mylabel{ppg1aA}   If  $\preGA{\comoccA}{\beta}$ is defined,
then
$\postGA{(\preGA{\comoccA}{\beta})}{\beta}=\comoccA$.
\item \mylabel{ppg5} If both
  $\postGA{\comoccA}{\beta_2}$,
  $\postGA{(\preGA{\comoccA}{\beta_1})}{\beta_2}$ are defined, and
  $\play{\beta_1}\cap\play{\beta_2}=\emptyset$, then
$\preGA{(\postGA{\comoccA}{\beta_2})}{\beta_1}=\postGA{(\preGA{\comoccA}{\beta_1})}{\beta_2}$.
\item  \mylabel{ppg6A} If both $\preGA{\comoccA}{\beta_1}$,  $\preGA{\comoccA}{\beta_2}$ are defined,  and  $\play{\beta_1}\cap\play{\beta_2}=\emptyset$,  then $\preGA{(\preGA{\comoccA}{\beta_2})}{\beta_1}$ is defined and
$\preGA{(\preGA{\comoccA}{\beta_2})}{\beta_1}=\preGA{(\preGA{\comoccA}{\beta_1})}{\beta_2}$.
\end{enumerate}
\end{lemma}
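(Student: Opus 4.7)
My plan is to prove each of the four statements by case analysis on which clause of Definition \ref{def:PostPreGlA} applies, using throughout that the o-trace operators $\blacktriangleright$ and $\triangleright$ are mutually inverse (as noted after Definition \ref{def:pQueue}) and that $\approx_\os$ preserves $\os$-pointedness (Lemma \ref{prop:congSimP}).

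For statement (1), let $\comoccA = \eqA{\os}{\comseqA}$. If the first clause of $\bullet$ applies, then $\comseqA \approx_\os \concat{\beta}{\comseqA'}$ with $\comseqA' \neq \ee$, so $\postGA{\comoccA}{\beta} = \eqA{\mapBl{\beta}{\os}}{\comseqA'}$. Since $\comseqA$ is $\os$-pointed by definition of t-event and $\comseqA \approx_\os \concat{\beta}{\comseqA'}$, Lemma \ref{prop:congSimP} ensures that $\concat{\beta}{\comseqA'}$ is $\os$-pointed too. Because $\mapWh{\beta}{\mapBl{\beta}{\os}} \cong \os$, the first clause of $\circ$ applies to $\eqA{\mapBl{\beta}{\os}}{\comseqA'}$, returning $\eqA{\os}{\concat{\beta}{\comseqA'}} = \eqA{\os}{\comseqA} = \comoccA$. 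If instead the second clause of $\bullet$ applies, then $\play{\beta} \not\subseteq \play{\comseqA}$ and $\postGA{\comoccA}{\beta} = \eqA{\mapBl{\beta}{\os}}{\comseqA}$; the same condition $\play{\beta} \not\subseteq \play{\comseqA}$ then makes the second clause of $\circ$ applicable, yielding $\eqA{\mapWh{\beta}{\mapBl{\beta}{\os}}}{\comseqA} = \eqA{\os}{\comseqA} = \comoccA$. Statement (2) is symmetric: if the first clause of $\circ$ applies, then $\concat{\beta}{\comseqA}$ is $\mapWh{\beta}{\os}$-pointed, so the head $\beta$ can be read off and the first clause of $\bullet$ recovers $\comoccA$; if the second clause of $\circ$ applies, the hypothesis $\play{\beta} \not\subseteq \play{\comseqA}$ activates the second clause of $\bullet$ and returns $\comoccA$.

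For statement (3), write $\comoccA = \eqA{\os}{\comseqA}$ and expand the definitions along the four possible combinations of clauses of $\bullet$ (applied to $\beta_2$) and $\circ$ (applied to $\beta_1$). The o-trace component is handled uniformly by Lemma \ref{tr}(\ref{tr2}), which gives $\mapWh{\beta_1}{\mapBl{\beta_2}{\os}} \cong \mapBl{\beta_2}{\mapWh{\beta_1}{\os}}$ under the disjointness hypothesis $\play{\beta_1} \cap \play{\beta_2} = \emptyset$. For the trace component, the disjointness ensures that if one side performs the transformation that strips a leading $\beta_2$ from $\concat{\beta_2}{\comseqA'} \approx_\os \comseqA$, the other side —  which prepends $\beta_1$ — acts on a prefix disjoint in players from $\beta_2$, so the prepending and stripping commute (the swapping steps required to bring $\beta_2$ to the head of $\concat{\beta_1}{\comseqA}$ are the same as those required for $\comseqA$ itself, since $\play{\beta_1} \cap \play{\beta_2} = \emptyset$ allows $\beta_1$ and $\beta_2$ to be swapped freely). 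Pointedness of the resulting trace on each side follows from Lemma \ref{prop:congSimP} combined with Lemma \ref{suffix-pointedness}. The cases where one or both clauses are the ``second'' (domain‑preserving) clauses are easier: disjointness guarantees that the condition $\play{\beta_i} \not\subseteq \play{\comseqA}$ is preserved by the other operation on traces.

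Statement (4) is proved in the same spirit, using Lemma \ref{tr}(\ref{tr1}) for the o-trace and analysing the two clauses of $\circ$. Disjointness of players lets us prepend $\beta_1$ and $\beta_2$ in either order: if $\concat{\beta_i}{\comseqA}$ is $\mapWh{\beta_i}{\os}$-pointed, then, because $\beta_1, \beta_2$ have no shared player, one checks by Definition \ref{pcsA} that $\concat{\beta_j}{\concat{\beta_i}{\comseqA}}$ is $\mapWh{\beta_j}{\mapWh{\beta_i}{\os}}$-pointed; the alternative second clauses are again closed under the disjointness condition. The main obstacle I expect is precisely this careful verification that $\os$-pointedness is preserved across the commuted operations in statements (3) and (4); although conceptually it follows from ``disjoint players may be permuted freely,'' turning this into a formal argument requires tracking, for each communication whose pointedness must be re‑certified, whether its witness is a local-cause condition (required in the suffix) or a cross-cause match, and checking that the relevant $\ct{\cdot}{\cdot}{\cdot}$ positions shift correctly under the commuting transformation.
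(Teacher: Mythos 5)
Your treatment of parts (1) and (2) is sound and matches the paper's in substance (the paper obtains them in one line from \refToLemma{scb}, which packages the same case analysis into the filtering operator). For parts (3) and (4), however, you reproduce the architecture of the paper's proof --- case analysis on the clauses of $\bullet$ and $\circ$, with \refToLemma{tr} handling the o-trace components --- but stop exactly where its content lies. The paper's proof of (3) consists essentially of two non-trivial equivalences (that $\concat{\beta_1}{\comseqA}$ is $\os'$-pointed iff it is $\os_1$-pointed, and that $\concat{\beta_1}{\comseqA'}$ is $\os'$-pointed iff $\concat{\beta_1}{\concat{\beta_2}{\comseqA'}}$ is $\os_1$-pointed, for the relevant queues $\os_1$, $\os'$), and its proof of (4) of four analogous facts about swapping and pointedness under the commuted queues. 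You explicitly name this verification as ``the main obstacle'' and defer it, so what you have is a plan rather than a proof.

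Moreover, the one shortcut you do offer is incorrect: disjointness of players does \emph{not} allow $\beta_1$ and $\beta_2$ ``to be swapped freely''. By \refToDef{swap}, a swap additionally requires that the second communication not match the first in the queue-prefixed trace, and two communications with disjoint players can perfectly well form a matching output/input pair --- $\CommAs{\pp}{\la}{\q}$ has player $\pp$ while its matching input $\CommAsI{\pp}{\la}{\q}$ has player $\q$. Ruling out this matching is a genuine step of the paper's argument (Fact (a) in the proof of part (4)), and it uses the definedness of both retrievals to pin down the shape of the o-traces involved. Likewise, the delicate cases of the pointedness-transfer equivalences are precisely those where $\beta_1=\CommAs{\pp}{\la}{\q}$ is an output and $\beta_2$ is an input $\CommAsI{\pr}{\la'}{\ps}$ with $\ps=\q$: disjoint players but a shared participant, so that adding or removing $\overline{\beta_2}$ from the queue changes the multiplicities that determine whether $\beta_1$ is matched by an input in the trace. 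These cases must be excluded or handled explicitly (the paper derives $\q\neq\ps$ from the pointedness hypotheses of the case at hand); an appeal to ``disjoint players commute'' silently assumes them away.
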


\noindent
The next lemma shows that the residual and retrieval operators on
t-events preserve causality and that the retrieval operator preserves
conflict.  It is the analogous of \refToLemma{prop:prePostNetArel},
but without the statement corresponding to
\refToLemma{prop:prePostNetArel}(\ref{ppn3A}),
which is true but not required for later results.  The difference is
due to the fact that ESs of networks are FESs, while those of
\agts\ are PESs. This appears clearly when looking at the proof of
\refToTheorem{uf12A} which uses
\refToLemma{prop:prePostNetArel}(\ref{ppn3A}), while 
that of
\refToTheorem{lemma:keyA2} does not need the corresponding property.

\begin{lemma}\mylabel{prop:prePostGlArel}
\begin{enumerate}
\item \mylabel{ppg4bA} If $\comoccA_1  < \comoccA_2$
 and  both $\postGA{\comoccA_1}{\beta}$, $\postGA{\comoccA_2}{\beta}$ are defined,
 then
  $\postGA{\comoccA_1}{\beta}  <   \postGA{\comoccA_2}{\beta}$.
  \item  \mylabel{ppg4aA} If $\comoccA_1 < \comoccA_2$ and
$\preGA{\comoccA_1}{\beta}$ is defined, then
$\preGA{\comoccA_1}{\beta}  <  \preGA{\comoccA_2}{\beta}$.
 \item  \mylabel{ppg-conflictA-a} If $\comoccA_1\grr\comoccA_2$ and   both $\preGA{\comoccA_1}{\beta}$,  $\preGA{\comoccA_2}{\beta}$ are defined,  then
 $\preGA{\comoccA_1}{\beta}\grr \preGA{\comoccA_2}{\beta}$.
\end{enumerate}
\end{lemma}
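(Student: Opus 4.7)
The plan is to prove the three statements by case analysis on the clauses of \refToDef{def:PostPreGlA}, throughout writing $\comoccA_i = \eqA\os{\comseqA_i}$. By \refToDef{ageo}(\ref{ageo1}), $\comoccA_1 < \comoccA_2$ gives a trace $\comseqA_0$ with $\comseqA_2 \approx_\os \comseqA_1 \cdot \comseqA_0$, forcing $\play{\comseqA_1} \subseteq \play{\comseqA_2}$; by \refToDef{ageo}(\ref{ageo2}), $\comoccA_1 \grr \comoccA_2$ gives a participant $\pp$ with $\projAP{\comseqA_1}\pp \grr \projAP{\comseqA_2}\pp$, so both projections are non-empty and $\pp \in \play{\comseqA_1} \cap \play{\comseqA_2}$. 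These two structural facts repeatedly control the case analysis.

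For (\ref{ppg4bA}), I consider the four combinations of clauses (1)--(2) from the definition of $\bullet$. The combination where $\comoccA_1$ uses (1) (i.e.\ $\comseqA_1 \approx_\os \beta \cdot \comseqA_1'$) while $\comoccA_2$ uses (2) (i.e.\ $\play\beta \not\subseteq \play{\comseqA_2}$) is ruled out, since then $\beta$ occurs in $\comseqA_1$ and hence $\play\beta \subseteq \play{\comseqA_1} \subseteq \play{\comseqA_2}$. In each remaining case I transport the equivalence $\comseqA_2 \approx_\os \comseqA_1 \cdot \comseqA_0$ through $\beta$ to obtain a corresponding equivalence at $\os' = \mapBl\beta\os$. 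The pivotal tool is a left-cancellation property of the swap relation: if $\beta \cdot \comseqA \approx_\os \beta \cdot \comseqA'$, then $\comseqA \approx_{\mapBl\beta\os} \comseqA'$. This reduces $\os$-swaps in the outer equivalence to $\os'$-swaps in the inner one, using that matching positions in $\os \cdot \beta \cdot \comseqA$ shift by exactly one with respect to those in $(\mapBl\beta\os) \cdot \comseqA$. When the player of $\beta$ is disjoint from $\comseqA_1$, I additionally commute $\beta$ through $\comseqA_1$ so that the same left-cancellation applies.

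For (\ref{ppg4aA}), the subtlety is that the hypothesis only asserts definedness of $\preGA{\comoccA_1}\beta$, so I must first establish definedness of $\preGA{\comoccA_2}\beta$. The precondition that $\mapWh\beta\os$ be defined depends only on $\os$, hence is inherited. If $\preGA{\comoccA_1}\beta$ uses clause (a), then $\beta \cdot \comseqA_1$ is $\mapWh\beta\os$-pointed; using \refToLemma{prop:congSimP} and the observation that extending a pointed trace on the right preserves pointedness --- because every formerly-last element acquires a successor in $\comseqA_0$, and the new elements inherit their pointedness witnesses from the $\os$-pointedness of $\comseqA_2$ --- I deduce that $\beta \cdot \comseqA_2$ is also $\mapWh\beta\os$-pointed. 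If $\preGA{\comoccA_1}\beta$ only uses clause (b), then $\play\beta \not\subseteq \play{\comseqA_1}$, and either $\play\beta \not\subseteq \play{\comseqA_2}$ (clause (b) applies to $\comoccA_2$) or the missing player occurs in $\comseqA_0$, in which case a direct check shows that clause (a) applies to $\comoccA_2$. Once definedness is secured, the same left-cancellation argument as in (\ref{ppg4bA}), now run in the forward direction, delivers $\preGA{\comoccA_1}\beta < \preGA{\comoccA_2}\beta$.

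For (\ref{ppg-conflictA-a}), I perform a case analysis on the clauses defining $\preGA{\comoccA_i}\beta$. When both use clause (a), the retrieved traces are $\beta \cdot \comseqA_i$, whose projections on $\pp$ are $(\projAP\beta\pp) \cdot (\projAP{\comseqA_i}\pp)$; prepending the common prefix $\projAP\beta\pp$ preserves the p-event conflict by \refToDef{procevent-relations}(\ref{ila--esp3}). When both use clause (b), the traces are unchanged and the conflict at $\pp$ persists verbatim. In the mixed case, say $\comoccA_1$ uses (b), then $\play\beta \not\subseteq \play{\comseqA_1}$, so if the conflict witness $\pp$ equalled $\play\beta$ we would have $\projAP{\comseqA_1}\pp = \epsilon$, contradicting non-emptiness; hence $\pp \neq \play\beta$ and the two projections differ only in their $\comseqA_i$ parts, so the conflict transfers directly. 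The main obstacle throughout is the left-cancellation lemma and, more broadly, the comparison between $\approx_\os$ and $\approx_{\mapBl\beta\os}$ (or $\approx_{\mapWh\beta\os}$): the swap relation is matching-sensitive, so a swap legitimate at one parameter may fail at the other, and one must carefully track how matching positions translate under $\blacktriangleright$ and $\triangleright$.
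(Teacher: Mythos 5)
Your proposal is correct and follows essentially the same route as the paper's proof: a case analysis on which clause of the residual/retrieval definition applies to each event, with the (1,2) combination excluded by player inclusion, the key step being the cancellation of $\beta$ at the head of $\approx_\os$-equivalent traces (which the paper performs inline rather than isolating as a lemma), and the mixed case of the conflict statement resolved by showing the witness participant differs from $\play\beta$. Your extra care in establishing definedness of $\preGA{\comoccA_2}{\beta}$ in part (2) fills in a step the paper only asserts, but does not change the argument.
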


We show now  that the operator $\bullet$ starting from t-events of $\G \parN \Msg$ builds t-events of \agts\ whose \sgts\
are subtypes of $\G$ composed in parallel with the  queues
given by the
\balancing\
of \refToFigure{wfagtA}.
Symmetrically,
$\circ$ builds t-events of an
\agt\ $\G \parN \Msg$ from t-events of the immediate subtypes of
$\G$ composed in parallel with the
 queues
given by the
\balancing\
of \refToFigure{wfagtA}.


\begin{lemma}\mylabel{lemma:subev}\mylabel{keybeta}
\begin{enumerate}
 \item\mylabel{lemma:subev3}  If $\comoccA\in \EGGA( \agtO{\pp}{\q}i  I{\la}{\G}\parN\Msg)$ and
$\postGA\comoccA{\CommAs\pp{\la_k}\q}$ is defined,
then

 \centerline{$\postGA\comoccA{\CommAs\pp{\la_k}\q}\in
   \EGGA(\G_k\parN\addMsg\Msg{\mq\pp{\la_k}\q})$ where $k\in I$.}
 \item\mylabel{lemma:subev4} If $\comoccA\in \EGGA( \agtI \pp\q \la \G \parN\addMsg{\mq\pp\la\q}\Msg)$  and $\postGA\comoccA{\CommAsI\pp\la\q}$ is defined, then
 $\postGA\comoccA{\CommAsI\pp\la\q}\in\EGGA(\G\parN\Msg)$.
 \item\mylabel{lemma:subev1}
 If $\comoccA\in\EGGA(\G\parN\addMsg\Msg{\mq\pp{\la}\q})$,
then

\centerline{$\cauA{\CommAs\pp{\la}\q}\comoccA\in \EGGA(
  \agtO{\pp}{\q}i I{\la}{\G}\parN\Msg)$  where $\la = \la_k$ and $\G =
  \G_k$ for some $k\in I$.}
 \item\mylabel{lemma:subev2} If $\comoccA\in\EGGA(\G\parN\Msg)$, then $\cauA{\CommAsI\pp\la\q}\comoccA\in \EGGA( \agtI \pp\q \la \G \parN\addMsg{\mq\pp\la\q}\Msg)$.
\end{enumerate}
\end{lemma}
The operators
$\bullet$ and $\circ$ modify t-events in the same way as the transitions in the
LTS would do.  This is formalised and proved in the following lemma.
Notice
that $\lozenge$  enjoys  this property, while
$\blacklozenge$  does not,  see Remark \ref{r}.

\begin{lemma}\mylabel{lemma:subev-bis}\mylabel{keybeta-bis}
  Let $\G\parG\Msg\stackred\beta \G'\parG\Msg'$. Then $\osq\Msg
  \cong\mapWh\beta{\osq{\Msg'}}$ and
\begin{enumerate}
\item\mylabel{keybeta5} if $\comoccA\in\EGGA(\G\parN\Msg)$ and
  $\postGA{\comoccA}{\beta}$ is defined, then
  $\postGA{\comoccA}{\beta}\in\EGGA(\G'\parN\Msg')$;
\item\mylabel{keybeta4} if $\comoccA\in\EGGA(\G'\parN\Msg')$, then
$\preGA{\comoccA}{\beta}\in\EGGA(\G\parN\Msg)$.
\end{enumerate}
\end{lemma}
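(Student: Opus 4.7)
My plan is to proceed by induction on the derivation of $\G\parG\Msg\stackred\beta \G'\parG\Msg'$ using the rules of \refToFigure{ltsgtAs}. I would first dispatch the auxiliary claim $\osq\Msg\cong\mapWh\beta{\osq{\Msg'}}$: for Rule \rulename{\AsOut}, $\Msg'\equiv\addMsg\Msg{\mq\pp{\la_k}\q}$, so $\osq{\Msg'}\cong\osq\Msg\cdot\CommAs\pp{\la_k}\q$ and \refToDef{def:pQueue}(\ref{def:pQueue1}) gives $\mapWh{\beta}{\osq{\Msg'}}\cong\osq\Msg$; Rule \rulename{\AsIn} is symmetric. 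In the inductive cases the outer-queue change coincides with the inner-queue change (up to the extra message $\mq\pp{\la_i}\q$ in Rule \rulename{IComm-Out}), and the claim lifts from the induction hypothesis, invoking \refToLemma{tr}(\ref{tr2}) to commute $\mapWh\beta$ with the queue extension.

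The base cases for Parts 1 and 2 are immediate from \refToLemma{keybeta}: for Rule \rulename{\AsOut} with $\G=\agtO{\pp}{\q}i I{\la}{\G_i}$ and $\beta=\CommAs\pp{\la_k}\q$, Part 1 is \refToLemma{keybeta}(\ref{lemma:subev3}) and Part 2 is \refToLemma{keybeta}(\ref{lemma:subev1}); Rule \rulename{\AsIn} uses \refToLemma{keybeta}(\ref{lemma:subev4}) and (\ref{lemma:subev2}) respectively.

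The substantive work lies in the inductive step. Consider Rule \rulename{IComm-Out} with $\G=\agtO{\pp}{\q}i I{\la}{\G_i}$, $\G'=\agtO{\pp}{\q}i I{\la}{\G_i'}$, inner transitions $\G_i\parG\addMsg{\Msg}{\mq\pp{\la_i}\q}\stackred\beta\G_i'\parG\addMsg{\Msg'}{\mq\pp{\la_i}\q}$ and side condition $\pp\notin\play\beta$. Let $\gamma_j=\CommAs\pp{\la_j}\q$ denote the initial communication of the branch $j\in I$ from which $\comoccA$ originates. For Part 1, given $\comoccA\in\EGGA(\G\parG\Msg)$ with $\postGA{\comoccA}{\beta}$ defined, the idea is to factor
\[
\postGA{\comoccA}{\beta} \;=\; \preGA{\bigl(\postGA{(\postGA{\comoccA}{\gamma_j})}{\beta}\bigr)}{\gamma_j},
\]
where \refToLemma{keybeta}(\ref{lemma:subev3}) sends $\postGA{\comoccA}{\gamma_j}$ into $\EGGA(\G_j\parG\addMsg{\Msg}{\mq\pp{\la_j}\q})$, the induction hypothesis applied to the inner transition moves $\postGA{(\postGA{\comoccA}{\gamma_j})}{\beta}$ into $\EGGA(\G_j'\parG\addMsg{\Msg'}{\mq\pp{\la_j}\q})$, and \refToLemma{keybeta}(\ref{lemma:subev1}) lifts the result back to $\EGGA(\G'\parG\Msg')$. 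The displayed identity follows from \refToLemma{prop:prePostGlA}(\ref{ppg5}), since $\play{\gamma_j}=\{\pp\}$ is disjoint from $\play\beta$, combined with the inverse identity \refToLemma{prop:prePostGlA}(\ref{ppg0b}). Part 2 is proved dually: apply \refToLemma{keybeta}(\ref{lemma:subev3}) to descend into branch $j$ of $\G'$, use the induction hypothesis for $\preGA{\cdot}{\beta}$ in the middle, lift back via \refToLemma{keybeta}(\ref{lemma:subev1}), and invoke \refToLemma{prop:prePostGlA}(\ref{ppg6A}) to commute the two $\circ$ operations. Rule \rulename{IComm-In} is handled analogously using parts (\ref{lemma:subev4}) and (\ref{lemma:subev2}) of \refToLemma{keybeta}, with $\gamma_j$ replaced by the initial input $\CommAsI\pp\la\q$ and relying on $\q\notin\play\beta$.

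The principal obstacle is to handle the degenerate configurations in which $\postGA{\comoccA}{\gamma_j}$ (or its retrieval analogue in Part 2) is undefined even though $\postGA{\comoccA}{\beta}$ (or $\preGA{\comoccA}{\beta}$) is, so the three-step factorisation breaks down. This happens, for instance, when the trace of $\comoccA$ reduces under filtering to the singleton $\gamma_j$, so that removing $\gamma_j$ would leave the empty trace. In such cases I would verify the conclusion directly from \refToDef{egA} and \refToDef{def:PostPreGlA}, using $\pp\notin\play\beta$ to show that the trace of $\postGA{\comoccA}{\beta}$ is unchanged and remains a valid path in $\FPaths{\G'}$, together with the already-established relation $\osq\Msg\cong\mapWh\beta{\osq{\Msg'}}$ to adjust the queue component.
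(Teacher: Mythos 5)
Your proposal follows essentially the same route as the paper's proof: induction on the derivation, base cases discharged by Lemma~\ref{keybeta}, and the inductive step for \rulename{IComm-Out}/\rulename{IComm-In} handled by the descend--apply-IH--lift factorisation justified through Lemma~\ref{prop:prePostGlA}(\ref{ppg5}) and (\ref{ppg6A}), with the degenerate case (trace collapsing to the single guard communication) treated directly, exactly as in the paper. The only detail you gloss over is the verification that $\postGA{(\postGA{\comoccA}{\gamma_j})}{\beta}$ is actually defined before invoking the induction hypothesis in Part~1 (the paper does this by a short four-way case analysis on the shape of the filtered trace), but this is a routine check within the same strategy rather than a structural gap.
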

The function $\sf tec$, which builds a sequence of t-events
corresponding to a pair $\pair\os\comseqA$, is simply defined applying
the function $\sf ev$ to $\os$ and to the prefixes of $\comseqA$.

\begin{definition}[t-events from  pairs of o-traces and traces]
  \mylabel{gecA}
Let
 $\comseqA \neq \ee$ be $\os$-well formed.
We define the {\em sequence of global events
corresponding to $\os$ and $\comseqA$} by \vspace*{1.6mm}

  \centerline{$\gecA{\os,\comseqA}=\Seq{\comoccA_1;\cdots}{\comoccA_n}$}  \vspace*{1.6mm}

  \noindent where    $\comoccA_i=    \point(\os,\range\comseqA 1 i )$
   for all $i$, $1\leq i\leq n$.
  \end{definition}

 The following lemma establishes the soundness
  of the above definition.

 \begin{lemma}
\mylabel{owf-trace-properties}
If  $\comseqA \neq \ee$ is $\omega$-well formed,  then:
\begin{enumerate}
\item \mylabel{otp1} $\range\comseqA{1}{i}$
is $\omega$-well formed for all $i$, $1\leq i\leq n$;
\item \mylabel{otp2}
    $\point(\os,\range\comseqA{1}{i})$ is defined and $\io{\point(\os,\range\comseqA{1}{i})}
 = \at{\comseqA}{i} $ for all $i$,  $1\leq i\leq n$.
\end{enumerate}
\end{lemma}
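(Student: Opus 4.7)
My plan is to prove the two statements in order, since (2) will rely on (1) together with Lemma~\ref{gl} and Definition~\ref{def:pf}.

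For statement (\ref{otp1}), I would unfold the definition of $\omega$-well formedness: $\comseqA$ is $\omega$-well formed iff $\concat{\omega}{\comseqA}$ is well formed, i.e., every input in $\concat{\omega}{\comseqA}$ is matched by a preceding output according to Definition~\ref{def:matching}. The crucial observation is that the matching relation $\ct{h}{\concat{\omega}{\comseqA}}{\,j}$ depends only on positions $h$ and $j$ with $h<j$ and on the multiplicities of outputs and inputs in the strict prefixes $\range{(\concat{\omega}{\comseqA})}{1}{h-1}$ and $\range{(\concat{\omega}{\comseqA})}{1}{j-1}$. Hence, if we fix $i$, $1\leq i\leq n$, and consider the prefix $\concat{\omega}{\range{\comseqA}{1}{i}}$ of $\concat{\omega}{\comseqA}$, then any input $\at{(\concat{\omega}{\comseqA})}{j}$ that occurs at a position $j\leq \cardin{\omega}+i$ is matched by an output at a position $h<j$ which also lies in that prefix; and both multiplicities involved are computed on initial segments that are already contained in the prefix. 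Therefore the matching witnesses transfer verbatim from $\concat{\omega}{\comseqA}$ to $\concat{\omega}{\range{\comseqA}{1}{i}}$, which proves that $\range{\comseqA}{1}{i}$ is $\omega$-well formed.

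For statement (\ref{otp2}), note first that for every $i$, $1\leq i\leq n$, the prefix $\range{\comseqA}{1}{i}$ is non-empty and, by (\ref{otp1}), $\omega$-well formed. Therefore the hypothesis of Definition~\ref{def:pf} is met, and $\point(\os,\range{\comseqA}{1}{i})=\eqA{\os}{\filt{\range{\comseqA}{1}{i}}{\ee}}$ is defined. I would then invoke Lemma~\ref{gl}, which under the same hypothesis guarantees that $\filt{\range{\comseqA}{1}{i}}{\ee}\neq\ee$ and $\io{\point(\os,\range{\comseqA}{1}{i})}=\last{\range{\comseqA}{1}{i}}=\at{\comseqA}{i}$.

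There is no real obstacle here: statement (\ref{otp1}) is a direct consequence of the fact that the matching condition of Definition~\ref{def:matching} only looks backwards, so it is preserved by prefixing, and statement (\ref{otp2}) then follows immediately by combining Definition~\ref{def:pf} with Lemma~\ref{gl}. The only minor care is the bookkeeping of positions: one must be careful that positions in $\concat{\omega}{\range{\comseqA}{1}{i}}$ line up with the corresponding positions in $\concat{\omega}{\comseqA}$ (they do, since we are only truncating a suffix of $\comseqA$ and leaving $\omega$ intact), so that the multiplicity equalities witnessing matching in the larger trace remain valid in the prefix.
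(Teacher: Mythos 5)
Your proposal is correct and follows essentially the same route as the paper: part (\ref{otp1}) is established by observing that the matching condition of \refToDef{def:matching} only looks backwards and is therefore preserved under truncating a suffix, and part (\ref{otp2}) then follows from part (\ref{otp1}) together with \refToDef{def:pf} and \refToLemma{gl}. The only difference is that you spell out the bookkeeping that the paper compresses into the word ``immediate''.
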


\begin{proof}
  The proof of (\ref{otp1}) is immediate since by
  Definitions~\ref{def:matching} and~\ref{def:WFnew} every prefix of
  an $\omega$-well formed trace is $\omega$-well formed.  Fact
  (\ref{otp2}) follows from Fact (\ref{otp1}), \refToDef{def:pf}
  and \refToLemma{gl}.
\end{proof}

As for the function $\sf nec$ (\refToLemma{eion}), the t-events in a
sequence generated by  the function $\sf tec$  are not in conflict, and we can
retrieve $\comseqA$ from $\gecA{\os,\comseqA}$ by using the function
$\sf i/o$ given in \refToDef{def:glEventA}(\ref{def:glEventA3}).

 \begin{lemma}\mylabel{eiog}
 Let   $\comseqA \neq \ee$ be
$\os$-well formed and  $\gecA{\os,\comseqA}=\Seq{\comoccA_1;\cdots}{\comoccA_n}$.
\begin{enumerate}
\itemsep=0.9pt
\item\mylabel{eiog1}  If $1\leq k,l\leq n$, then $\neg (\comoccA_k \grr \comoccA_l)$;
\item\mylabel{eiog2}  $\at\comseqA{i} = \io{\comoccA_i} \,$ for all $i$,  $1\leq i\leq n$.
\end{enumerate}
  \end{lemma}
   The following lemma,  together
  with \refToLemma{owf-trace-properties}, ensures  that
  $\gecA{\os,\comseqA}$ is defined when $\os =
  \osq\Msg$ and
  $\G\parG\Msg\stackred\comseqA\G'\parG\Msg'$.

\begin{lemma}\mylabel{lemma:Gred-trace-owf}
If $\G\parG\Msg\stackred\comseqA\G'\parG\Msg'$ and
 $\os = \osq\Msg$, then $\comseqA$ is $\omega$-well formed.
\end{lemma}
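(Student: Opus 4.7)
The plan is to proceed by induction on the length $n$ of the trace $\comseqA$. For $n=0$ we have $\comseqA = \ee$ and well-formedness of $\concat{\os}{\ee} = \os$ is trivial since o-traces contain only outputs. For $n = 1$, write $\comseqA = \beta$; if $\beta$ is an output, $\concat{\os}{\beta}$ has no inputs and is well formed; if $\beta = \CommAsI{\pp}{\la}{\q}$, then \refToLemma{keysrA34}(\ref{keysrA4}) gives $\Msg \equiv \addMsg{\mq\pp\la\q}{\Msg'}$, hence $\os \cong \concat{\CommAs\pp\la\q}{\osq{\Msg'}}$, so the unique input of $\concat{\os}{\beta}$ matches the first output of $\os$.

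For the inductive step, decompose $\comseqA = \concat{\beta}{\comseqA'}$ and factor the transition as $\G\parN\Msg \stackred{\beta} \G_1\parN\Msg_1 \stackred{\comseqA'} \G'\parN\Msg'$. First I would establish a type-level analogue of \refToLemma{defpost}: by induction on the derivation of $\G\parN\Msg \stackred{\beta} \G_1\parN\Msg_1$ using the rules of \refToFigure{ltsgtAs} (or directly from \refToLemma{keysrA34}), one obtains $\osq{\Msg_1} \cong \mapBl{\beta}{\osq\Msg}$. Concretely, if $\beta$ is an output then $\Msg_1 \equiv \addMsg{\Msg}{\mq\pp\la\q}$, and if $\beta = \CommAsI\pp\la\q$ then $\Msg \equiv \addMsg{\mq\pp\la\q}{\Msg_1}$. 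The induction hypothesis on $\comseqA'$ yields that $\concat{\osq{\Msg_1}}{\comseqA'}$ is well formed.

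It then remains to transfer well-formedness from $\concat{\osq{\Msg_1}}{\comseqA'}$ to $\concat{\os}{\concat{\beta}{\comseqA'}}$. If $\beta$ is an output, this is immediate since $\concat{\os}{\concat{\beta}{\comseqA'}} = \concat{\concat{\os}{\beta}}{\comseqA'}\cong\concat{\osq{\Msg_1}}{\comseqA'}$. If $\beta = \CommAsI\pp\la\q$, then $\concat{\os}{\concat{\beta}{\comseqA'}} \cong \concat{\concat{\CommAs\pp\la\q}{\osq{\Msg_1}}}{\concat{\CommAsI\pp\la\q}{\comseqA'}}$; the newly appearing input $\CommAsI\pp\la\q$ matches the first output $\CommAs\pp\la\q$ (since the $\pp\q$-multiplicity counts coincide: no inputs or outputs of kind $\pp\q$ precede it between position $1$ and $|\osq{\Msg_1}|+1$ other than those in $\osq{\Msg_1}$, which contain no inputs); every later input in $\comseqA'$ matches, by the induction hypothesis, an output in $\concat{\osq{\Msg_1}}{\comseqA'}$, and this matching transports to a matching in $\concat{\os}{\comseqA}$ after a uniform position shift by $1$ (a shift of $|\CommAs\pp\la\q|=1$).

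The main obstacle is the bookkeeping in the input case of the inductive step: one must verify that the prefix multiplicities $\mult{\pp\q!}{\cdot}$ and $\mult{\pp\q?}{\cdot}$ that witnessed the matching in $\concat{\osq{\Msg_1}}{\comseqA'}$ remain in agreement when the prefix $\osq{\Msg_1}$ is replaced by $\concat{\CommAs\pp\la\q}{\osq{\Msg_1}}$ and the suffix $\comseqA'$ is extended on the left by $\CommAsI\pp\la\q$. This amounts to the same positional shift argument used in the proof of \refToLemma{prop:wellForG}; concretely, for an input $\at{\comseqA'}{i}$ matching an output at position $j$ of $\concat{\osq{\Msg_1}}{\comseqA'}$, the same input occupies position $i + |\os| + 1$ in $\concat{\os}{\comseqA}$ and matches the output at position $j+1$ if $j \leq |\osq{\Msg_1}|$ or at position $j+2$ otherwise, with multiplicities preserved in both cases.
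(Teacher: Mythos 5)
Your proof is correct and follows essentially the same route as the paper's: induction on the length of the trace, peeling off the first communication, tracking the queue via \refToLemma{keysrA34}, and handling the input case with the same positional-shift/multiplicity argument already used in the proof of \refToLemma{prop:wellForG}. The only difference is that you spell out the shift bookkeeping in the inductive input case more explicitly than the paper does, which is a harmless (indeed welcome) elaboration.
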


The following lemma mirrors \refToLemma{knec-ila}.

\begin{lemma}\label{kgec}
\begin{enumerate}
\item\label{kgec2}
   Let $\comseqA=\concat\beta{\comseqA'}$ and
   $\os'=\mapBl{\beta}{\os}$. If
   $\gecA{\os,{\comseqA}}=\Seq{\comoccA_1;\cdots}{\comoccA_n}$ and
   $\gecA{\os',{\comseqA'}}=\Seq{\comoccA'_2;\cdots}{\comoccA'_n}$, then
   $\postGA{\comoccA_i}\beta={\comoccA'_i}\,$  for all $i$,   $2\leq i\leq n$.
   \item\label{kgec1}
   Let $\comseqA=\concat\beta{\comseqA'}$ and
   $\os=\mapWh{\beta}{\os'}$. If
   $\gecA{\os,{\comseqA}}=\Seq{\comoccA_1;\cdots}{\comoccA_n}$ and
   $\gecA{\os',{\comseqA'}}=\Seq{\comoccA'_2;\cdots}{\comoccA'_n}$, then
   $\preGA{\comoccA'_i}\beta={\comoccA_i}\,$  for all $i$,   $2\leq i\leq n$.
   \end{enumerate}
    \end{lemma}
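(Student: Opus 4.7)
The plan is to reduce both statements to direct applications of \refToLemma{prop:relPrePost}, after unfolding the definition of $\sf tec$ given in \refToDef{gecA}. Specifically, writing $\comseqA = \concat{\beta}{\comseqA'}$, for each $i$ with $2 \leq i \leq n$ we have the crucial identity $\range{\comseqA}{1}{i} = \concat{\beta}{\range{\comseqA'}{1}{i-1}}$, and the suffix $\range{\comseqA'}{1}{i-1}$ is non-empty. By \refToDef{gecA}, $\comoccA_i = \point(\os, \range{\comseqA}{1}{i}) = \point(\os, \concat{\beta}{\range{\comseqA'}{1}{i-1}})$ and $\comoccA'_i = \point(\os', \range{\comseqA'}{1}{i-1})$.

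For part (\ref{kgec2}), I would apply \refToLemma{prop:relPrePost}(\ref{prop:relPrePost4}) to the non-empty trace $\range{\comseqA'}{1}{i-1}$: since $\mapBl{\beta}{\os} = \os'$ is defined by hypothesis, the lemma yields $\postGA{\point(\os, \concat{\beta}{\range{\comseqA'}{1}{i-1}})}{\beta} = \point(\os', \range{\comseqA'}{1}{i-1})$, which is precisely $\postGA{\comoccA_i}{\beta} = \comoccA'_i$. For part (\ref{kgec1}), symmetrically, I would apply \refToLemma{prop:relPrePost}(\ref{prop:relPrePost1}) using that $\mapWh{\beta}{\os'} = \os$ is defined by hypothesis, obtaining $\cauA{\beta}{\point(\os', \range{\comseqA'}{1}{i-1})} = \point(\os, \concat{\beta}{\range{\comseqA'}{1}{i-1}}) = \point(\os, \range{\comseqA}{1}{i})$, which gives $\preGA{\comoccA'_i}{\beta} = \comoccA_i$.

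The only subtle point is ensuring that the t-events we are manipulating are actually defined, i.e. that $\range{\comseqA'}{1}{i-1}$ is $\os'$-well formed and $\range{\comseqA}{1}{i}$ is $\os$-well formed so that the invocation of $\sf ev$ makes sense. This follows from the standing assumption that $\gecA{\os, \comseqA}$ and $\gecA{\os', \comseqA'}$ are defined together with \refToLemma{owf-trace-properties}(\ref{otp1}), which guarantees that every prefix of an $\os$-well-formed trace is $\os$-well formed. I do not expect any genuine obstacle here: the lemma is essentially bookkeeping expressing that $\sf tec$ commutes with the residual/retrieval operators on t-events in the same way that pointwise application of $\point$ commutes with trace concatenation by $\beta$, and all the real content has already been isolated in \refToLemma{prop:relPrePost}.
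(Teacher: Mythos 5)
Your proposal is correct and follows essentially the same route as the paper: unfold \refToDef{gecA} to write $\comoccA_i = \point(\os,\concat{\beta}{\range{\comseqA'}{1}{i-1}})$ and $\comoccA'_i = \point(\os',\range{\comseqA'}{1}{i-1})$, then invoke \refToLemma{prop:relPrePost} for both the residual and retrieval cases (the paper derives part~(2) from part~(1) together with the retrieval statement, but this is the same machinery). Your explicit check that $\range{\comseqA'}{1}{i-1}\neq\ee$ for $i\geq 2$ and that the relevant prefixes are well formed is a welcome bit of care that the paper leaves implicit.
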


\noindent We end this subsection with the two theorems for \agts\ discussed
at the beginning of the whole section, which relate the transition
sequences of an \agt\ with the proving sequences of the associated
PES.

\begin{theorem}\mylabel{lemma:keyA1}
If $\G\parG\Msg\stackred\comseqA\G'\parG\Msg'$, then
 $\gecA{\osq\Msg,\comseqA}$ is a proving sequence in  the event structure   $\ESGA{\G\parN\Msg}$.
\end{theorem}

\begin{proof}
  Let $\os=\osq\Msg$. By \refToLemma{lemma:Gred-trace-owf}
  $\,\comseqA$ is $\os$-well formed. Then
  by \refToLemma{owf-trace-properties} $\gecA{\os,\comseqA}$ is defined
  and by \refToDef{gecA}
  $\gecA{\os,\comseqA}=\Seq{\comoccA_1;\cdots}{\comoccA_n}$, where
  $\comoccA_i= \point(\os,\range{\comseqA}{1}{i})$ for all $i$,  $1\leq i\leq
  n$.  We proceed by induction on $\comseqA$.

\medskip\noindent
 {\it Case $\comseqA =\beta$.}
 In this case,  $\gecA{\os,\beta}= \comoccA_1 =
\point(\os,\beta)$.
 By \refToDef{def:pf}
  we have
$ 
 \point(\os,\beta) =
 \eqA{\os}{\filt{\beta}{\ee}}$.
By \refToDef{def:pm}
  $\eqA{\os}{\filt{\beta}{\ee}} = \eqA{\os}\beta$ since $\beta$ is $\os$-well formed.\sm

 We use now a further induction on the inference of the transition  $\G\parG\Msg\stackred{\beta}\G'\parG\Msg'$, see Figure~\ref{ltsgtAs}.

\vspace{1.6mm}\noindent
{\it Base  Subcases.} The rule applied is  \rulename{Ext-Out} or \rulename{Ext-In}.  Therefore
$\beta\in\FPaths{\G}$.
 By \refToDef{egA}(\ref{eg1AP}) this implies $\point(\os,\beta)\in \EGGA( \G\parG\Msg)$.

\smallskip\noindent
{\it Inductive  Subcases.}  If the last applied Rule is \rulename{IComm-Out}, then $\G=
\agtO{\pp}{\q}i I{\la}{\G}$ and $\G'=\agtO{\pp}{\q}i I{\la}{\G'}$ and
$\G_i\parG\addMsg{\Msg}{\mq\pp{\la_i}\q}\stackred\asCom\G_i'
\parG\addMsg{\Msg'}{\mq\pp{\la_i}\q}$
for all $i \in I $ and $\pp\not\in\play\beta$. \linebreak
We have
$\osq{\addMsg\Msg{\mq\pp{\la_i}\q}} =
\concat\os{\CommAs\pp{\la_i}\q}$.  By induction we get
$\gecA{\concat\os{\CommAs\pp{\la_i}\q},\beta} = \comoccA'_i =
\eqA{\concat{\os}{\CommAs\pp{\la_i}\q}}{\beta} \in\EGGA(
\G_i\parG\addMsg{\Msg}{\mq\pp{\la_i}\q})$.  By
Lemma~\ref{lemma:subev}(\ref{lemma:subev1})
$\preGA{\comoccA'_i}{\CommAs\pp{\la_i}\q}\in\EGGA( \G\parG\Msg)$.
Now, from $\pp  \notin  \play{\beta}$ it follows that $\CommAs{\pp}{\la_i}{\q}$
is not a local cause of $\beta$, namely $\neg(\oks{1}{\concat{\CommAs{\pp}{\la_i}{\q}}{\beta}})$.
From \refToLemma{lemma:Gred-trace-owf}   $\beta$ is
\mbox{$\os$-well}-formed. So, if $\beta$ is an input,
 its matched output
 must be in $\os$.
Hence $\CommAs{\pp}{\la_i}{\q}$
is not a cross-cause of $\beta$,
namely $\neg(\ct{1 + \cardin{\os}}{\concat{\os}{\concat{\CommAs{\pp}{\la_i}{\q}}{\beta}}}{\,2 + \cardin{\os}})$.
Therefore $\concat{\CommAs\pp{\la_i}\q}\beta$ is not $\os$-pointed.
By \refToDef{def:PostPreGlA}(\ref{def:PreGl1A}) we get
$\preGA{\comoccA'_i}{\CommAs\pp{\la_i}\q} =\eqA{\os}\beta =
\comoccA_1$.  We conclude again that $\comoccA_1\in\EGGA(
\G\parG\Msg)$ and clearly $\comoccA_1$ is a proving sequence in
$\ESGA{\G\parN\Msg}$ since
$\beta$ has no
proper prefix.

 \medskip
If the last applied Rule is \rulename{IComm-In} the proof is
similar.

\medskip\noindent
{\it Case $\comseqA=\concat{\beta}{\comseqA'}$ with
  $\comseqA'\not=\ee$. }  From
$\G\parG\Msg\stackred{\comseqA}\G'\parG\Msg'$ we get
$\G\parG\Msg\stackred{\beta}\G''\parG\Msg''\stackred{\comseqA'}\G'\parG\Msg'$
for some $\G''$, $\Msg''$.  Let $\os'=\osq{\Msg''}$.
By~\refToLemma{lemma:Gred-trace-owf}
$\comseqA'$ is $\os'$-well formed.
Thus $\gecA{\os',\comseqA'}$ is defined
by \refToLemma{owf-trace-properties}.
Let
$\gecA{\os',\comseqA'}=\Seq{\comoccA'_2;\cdots}{\comoccA'_{n}}$.
By induction $\gecA{\os',\comseqA'}$ is a proving sequence in
$\ESGA{\G''\parG\Msg''}$.
By \refToLemma{kgec}(\ref{kgec1}) $\comoccA_j=\preGA{\comoccA_j'}\beta$ for all $j$, $2\leq j\leq n$.
By \refToLemma{keybeta-bis}(\ref{keybeta4}) this implies $\comoccA_j\in\EGGA(\G\parG\Msg)$ for all $j$, $2\leq j\leq n$.  From the proof of the base case we know that $\comoccA_1 =\eqA{\os}\beta \in\EGGA(\G\parG\Msg)$.
     What is left to show  is  that $\gecA{\os,\comseqA}$ is a proving sequence
in $\ESGA{\G\parG\Msg}$.  By \refToLemma{eiog}(\ref{eiog1}) no two
events in this sequence can be in
conflict.

\medskip
Let $\comoccA\in \EGGA( \G\parG\Msg)$ and $\comoccA< \comoccA_k$ for
some $k$, $1\leq k\leq n$.  Note that this implies $j>1$. If
$\postGA{\comoccA}{\beta}$ is undefined, then by \refToDef{def:PostPreGlA}(\ref{def:PostPreGl1A}) either $\comoccA=\comoccA_1$ or
$\comoccA=\eqA\os{\comseqA}$ with
$\comseqA\not\approx_\os\concat\beta{\comseqA'}$ and
$\play\beta\subseteq\play\comseqA$.
In the first case we are done. In the second case
$\projAP\comseqA{\play\beta}\grr \projAP\beta{\play\beta}$,
which implies $\comoccA_1\grr\comoccA$.
Since $\comoccA < \comoccA_k$ and conflict is hereditary, it follows
that $\comoccA_1\grr\comoccA_k$, which contradicts what said above.
Hence this second case is not possible.
If $\postGA{\comoccA}{\beta}$ is defined,
 by \refToLemma{keybeta-bis}(\ref{keybeta5})
$\postGA{\comoccA}{\beta}\in\EGGA(\G''\parG\Msg'')$ and by
\refToLemma{prop:prePostGlArel}(\ref{ppg4bA})
$\postGA{\comoccA}{\beta}  <   \postGA{\comoccA_k}{\beta}$.  Let
$\comoccA' = \postGA{\comoccA}{\beta}$. By
\refToLemma{kgec}(\ref{kgec2})
$\postGA{\comoccA_j}{\beta}=\comoccA'_j$ for all $j$, $2\leq j\leq n$.
Thus we have $\comoccA' < \comoccA'_k$.  Since $\gecA{\os',\comseqA'}$
is a proving sequence in $\ESGA{\G''\parG\Msg''}$, by
\refToDef{provseq} there is $h<k$ such that $
\comoccA' =\comoccA_h'$.  By
\refToLemma{prop:prePostGlA}(\ref{ppg0b}) we derive $\comoccA=
\preGA{\comoccA'}\beta=\preGA{\comoccA_h'}{\beta} =\comoccA_h$.
\end{proof}

\begin{theorem}\mylabel{lemma:keyA2}
  If $\Seq{\comoccA_1;\ldots}{\comoccA_n}$ is a proving sequence in $\ESGA{\G\parN\Msg}$, then $\G\parG\Msg\stackred\comseqA\G'\parG\Msg'$ where
  $\comseqA=\concat{\concat{\io{\comoccA_1}}\ldots}{\io{\comoccA_n}}$.
\end{theorem}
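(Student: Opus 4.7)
I will prove the theorem by induction on the length $n$ of the proving sequence, following the overall structure of \refToTheorem{uf12A} for networks but adapting it to the richer LTS for \agts.

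\textbf{Base case $n=1$.} Here $\comoccA_1$ must be minimal in $\ESGA{\G\parN\Msg}$. Write $\comoccA_1=\point(\os,\comseqA_1)$ with $\os=\osq\Msg$ and $\comseqA_1\in\FPaths\G$, and let $\beta=\io{\comoccA_1}=\last{\comseqA_1}$. I need to exhibit a transition $\G\parN\Msg\stackred\beta\G'\parN\Msg'$. I would argue by a secondary induction on the position of $\beta$ along $\comseqA_1$ (equivalently, on the structural depth at which the communication $\beta$ sits in $\G$). If $\beta$ is the root communication of $\G$, then Rule \rulename{Ext-Out} or Rule \rulename{Ext-In} applies directly; in the input subcase, the needed message $\mq\pp\la\q$ is present on top of $\Msg$ by balancing and $\os$-well-formedness of $\comseqA_1$ (\refToLemma{prop:wellForG}). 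Otherwise, let $\beta_0$ be the root of $\G$; I claim $\play{\beta_0}\cap\play\beta=\emptyset$ and $\beta_0$ is not the output matched by $\beta$, so that Rule \rulename{IComm-Out} or \rulename{IComm-In} applies and the result follows from the inductive hypothesis applied to the direct subtype of $\G$. The claim uses the minimality of $\comoccA_1$: if $\beta_0$ were either a local or a cross-cause of $\beta$, then $\point(\os,\beta_0)=\eqA\os{\beta_0}$ would be a well-defined t-event in $\EGGA(\G\parN\Msg)$ strictly below $\comoccA_1$, contradicting minimality.

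\textbf{Inductive step $n>1$.} Let $\beta=\io{\comoccA_1}$. By the base-case argument (applied to the restriction of the proving sequence to $\comoccA_1$), there is a transition $\G\parN\Msg\stackred\beta\G''\parN\Msg''$, and by \refToLemma{defpost} and Session Fidelity $\osq{\Msg''}\cong\mapBl\beta\os$. For $2\le j\le n$, I claim that $\postGA{\comoccA_j}\beta$ is defined. If not, inspection of \refToDef{def:PostPreGlA}(\ref{def:PostPreGl1A}) shows that either $\comoccA_j=\eqA\os\beta=\comoccA_1$ (contradicting distinctness in a proving sequence) or $\play\beta\subseteq\play{\comseqA_j}$ without $\comseqA_j\approx_\os\concat\beta{\comseqA}$, in which case $\projAP{\comseqA_j}{\play\beta}\grr\projAP{\beta}{\play\beta}$ and hence $\comoccA_1\grr\comoccA_j$, contradicting conflict-freeness. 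Set $\comoccA'_j=\postGA{\comoccA_j}\beta$; by \refToLemma{keybeta-bis}(\ref{keybeta5}) we have $\comoccA'_j\in\EGGA(\G''\parN\Msg'')$, and $\io{\comoccA'_j}=\io{\comoccA_j}$ since $\bullet$ preserves the communication.

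It remains to check that $\Seq{\comoccA'_2;\ldots}{\comoccA'_n}$ is a proving sequence in $\ESGA{\G''\parN\Msg''}$. The events are pairwise distinct and non-conflicting: by \refToLemma{prop:prePostGlA}(\ref{ppg1}) we recover $\comoccA_j=\preGA{\comoccA'_j}\beta$, so any equality or conflict among the $\comoccA'_j$ would lift back to the $\comoccA_j$ by \refToLemma{prop:prePostGlArel}(\ref{ppg-conflictA-a}). For the left-closure condition, let $\comoccA\in\EGGA(\G''\parN\Msg'')$ with $\comoccA\precP\comoccA'_h$ for some $h$. Set $\comoccA^\circ=\preGA\comoccA\beta$; by \refToLemma{keybeta-bis}(\ref{keybeta4}) $\comoccA^\circ\in\EGGA(\G\parN\Msg)$, and by \refToLemma{prop:prePostGlArel}(\ref{ppg4aA}) $\comoccA^\circ\precP\preGA{\comoccA'_h}\beta=\comoccA_h$. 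By the proving-sequence property of the original sequence, either $\comoccA^\circ=\comoccA_1$ or $\comoccA^\circ=\comoccA_l$ for some $l<h$ with $l\ge 2$. The first case is impossible because $\postGA{\comoccA^\circ}\beta=\postGA{\comoccA_1}\beta$ would have to be defined and equal to $\comoccA$, whereas $\postGA{\comoccA_1}\beta$ is undefined (the trace of $\comoccA_1$ is just $\beta$). In the second case, $\postGA{\comoccA^\circ}\beta=\comoccA$ by \refToLemma{prop:prePostGlA}(\ref{ppg0a}) and this equals $\postGA{\comoccA_l}\beta=\comoccA'_l$, as required. Applying the induction hypothesis yields $\G''\parN\Msg''\stackred{\comseqA'}\G'\parN\Msg'$ with $\comseqA'=\io{\comoccA'_2}\cdots\io{\comoccA'_n}=\io{\comoccA_2}\cdots\io{\comoccA_n}$, whence $\G\parN\Msg\stackred{\concat\beta{\comseqA'}}\G'\parN\Msg'$.

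\textbf{Main obstacle.} The crux is the base case: the LTS for \agts\ includes the inductive rules \rulename{IComm-Out} and \rulename{IComm-In}, so producing a transition for $\io{\comoccA_1}$ requires descending into $\G$ while verifying their side conditions. In particular, for \rulename{IComm-Out} one must show that the bypassed output is not matched by $\io{\comoccA_1}$; this is exactly what the minimality of $\comoccA_1$ (ruling out both local-cause and cross-cause predecessors) delivers, but the argument must be carried out uniformly along the descent, using the boundedness of $\G$ to ensure termination.
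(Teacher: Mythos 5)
Your inductive step for $n>1$ is essentially the paper's own argument: the same definedness check for $\postGA{\comoccA_j}\beta$ via conflict with $\comoccA_1$, the same appeal to \refToLemma{keybeta-bis}, \refToLemma{prop:prePostGlA} and \refToLemma{prop:prePostGlArel}, and the same lifting of causes back through $\circ$ (your explicit exclusion of the case $\comoccA^\circ=\comoccA_1$ is in fact slightly more careful than the paper). The base case also follows the paper's strategy of descending into $\G$ with \rulename{IComm-Out}/\rulename{IComm-In}, justified by the minimality of $\comoccA_1$.

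However, there is a genuine gap in how you set up the descent. You induct on ``the position of $\beta$ along $\comseqA_1$'', a measure attached to one chosen trace, and you then say the result ``follows from the inductive hypothesis applied to the direct subtype of $\G$'' --- singular. But when the root of $\G$ is an output choice $\agtO{\pr}{\ps}{i}{I}{\la}{\G}$ with $|I|>1$, Rule \rulename{IComm-Out} demands the premise transition $\G_i\parG\addMsg{\Msg}{\mq\pr{\la_i}\ps}\stackred\beta\G_i'\parG\addMsg{\Msg'}{\mq\pr{\la_i}\ps}$ for \emph{every} $i\in I$, not just for the branch your trace $\comseqA_1$ happens to traverse. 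A per-trace position gives you no inductive hypothesis for the other branches, and indeed $\beta$ may sit at different depths in different branches. This is why the paper inducts on $\weight(\G,\play\beta)$: it is finite by boundedness and, by \refToLemma{ddA}, strictly decreases in \emph{all} immediate subtrees whenever the root's player differs from $\play\beta$. To make your argument go through you must (i) replace your measure by $\weight(\G,\play\beta)$ (or prove your ``structural depth'' is the supremum over all traces, which amounts to the same thing), and (ii) before invoking the inductive hypothesis on a branch, establish that the residual $\postGA{\comoccA_1}{\CommAs\pr{\la_i}\ps}$ (resp.\ $\postGA{\comoccA_1}{\CommAsI\pr{\la'}\ps}$) is actually a t-event of $\EGGA(\G_i\parN\addMsg{\Msg}{\mq\pr{\la_i}\ps})$ --- this is \refToLemma{lemma:subev}, which your sketch never invokes but without which the inductive hypothesis has nothing to apply to.
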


\begin{proof}
The proof is by induction on the length $n$ of the proving sequence.  Let $\os=\osq\Msg$.

\medskip\noindent
 {\it Case $n=1$.}   Let $\io{\comoccA_1}=\beta$.  Since $\comoccA_1$
is the first event of a proving sequence, it can have no causes,
so it must be $\comoccA_1 = \eqclass{\os,\beta}$.  We show this case by induction on $d=\weight(\G,\play\beta)$.

\vspace*{1.6mm}\noindent
 {\it  Subcase  $d=1$.}   If $\beta = \CommAs\pp\la\q$ we
have $\G=\agtO\pp\q i I \la \G$ with $ \la_k = \la$ for some
$k \in I$.  We deduce
$\G\parG\Msg\stackred\beta\G_k\parG\addMsg{\Msg}{\mq\pp{\la}\q}$ by
applying Rule \rulename{Ext-Out}.  If $\beta = \CommAsI\pp\la\q$ we
have $\G = \agtI \pp\q {\la} {\G'}$.
Since $\G\parG\Msg$ is well formed, by Rule \rulename{In} of
\refToFigure{wfagtA} we get $\Msg\equiv\addMsg{\mq\pp{\la}\q}{\Msg'}$.
We deduce
$\G\parG\Msg\stackred\beta\G'\parG\Msg'$ by applying
Rule \rulename{Ext-In}.

\vspace*{1.6mm}\noindent
{\it  Subcase  $d>1$.}
We are in one of the two situations:
\begin{enumerate}
\itemsep=0.9pt
\item\label{imp1} $\G= \agtO\pr\ps i I \la \G$ with $\pr  \notin   \play{\beta}$;
\item\label{imp2} $\G=\agtI \pr\ps {\la'} {\G''}$ with
 $\ps  \notin  \play{\beta}$.
\end{enumerate}
In  situation (\ref{imp1}),  
$\pr \notin \play{\beta}$ implies  that
$\postGA{\comoccA_1}{\CommAs\pr{\la_i}\ps}$  is  defined for
all $i\in I$ by \refToDef{def:PostPreGlA}(\ref{def:PostPreGl1A}).  By
\refToLemma{keybeta}(\ref{lemma:subev3})
$\postGA{\comoccA_1}{\CommAs\pr{\la_i}\ps}\in\EGGA(\G_i\parallel\addMsg{\Msg}{\mq\pp{\la_i}\q})$
for all $i\in I$.  \refToLemma{ddA}(\ref{ddA1}) implies
$\weight(\G,\play\beta)>\weight(\G_i,\play\beta)$ for all $i\in I$.
By induction hypothesis we have \linebreak
$\G_i\parG\addMsg{\Msg}{\mq\pp{\la_i}\q}\stackred\asCom\G_i' \parG\addMsg{\Msg'}{\mq\pp{\la_i}\q}$
for all $i\in I$.  Then we may apply Rule \rulename{IComm-Out} to
deduce

\vspace*{1.6mm}
\centerline{$\agtO\pr\ps i I \la \G\parG\Msg\stackred\beta
  \agtO\pr\ps i I \la {\G'}\parG\Msg'$}

\vspace*{1.6mm}\noindent
In  situation  (\ref{imp2}),  
since $\G\parG\Msg$ is well formed we get
$\Msg\equiv\addMsg{\mq\pr{\la'}\ps}{\Msg''}$ by Rule \rulename{In} of
\refToFigure{wfagtA}. Hence
$\os\cong\concat{\CommAs\pr{\la'}\ps}{\os'}$. This and $\ps \notin
\play{\beta}$ imply  that
$\postGA{\comoccA_1}{\CommAsI\pr{\la'}\ps}$  is  defined by
\refToDef{def:PostPreGlA}(\ref{def:PostPreGl1A}).  By
\refToLemma{keybeta}(\ref{lemma:subev4})
$\postGA{\comoccA_1}{\CommAsI\pr{\la'}\ps}\in\EGGA(\G''\parallel\Msg'')$.
\refToLemma{ddA}(\ref{ddA2}) gives
$\weight(\G,\play\beta)>\weight(\G'',\play\beta)$. By induction
hypothesis $\G''\parG\Msg''\stackred\asCom\G'''\parG\Msg'''$.
Then we may apply Rule
\rulename{IComm-In} to deduce

\vspace*{1.6mm}
\centerline{$\agtI \pr\ps {\la'}
  {\G''}\parG\addMsg{\mq\pr{\la'}\ps}{\Msg''}\stackred\beta\agtI
  \pr\ps {\la'} {\G'''}\parG\addMsg{\mq\pr{\la'}\ps}{\Msg'''}$}

\vspace*{1.8mm}\noindent
 {\it Case $n>1$.}
Let $\io{\comoccA_1} = \beta$, and $\G\parG\Msg\stackred\beta\G''\parG\Msg''$
be the corresponding transition as obtained from the base case.
We show that $\postGA{\comoccA_j}{\beta}$ is defined for all $j$, $2\leq j\leq n$.
If $\postGA{\comoccA_k}{\beta}$ were undefined
for some $k$, $2\leq k\leq n$, then by
\refToDef{def:PostPreGlA}(\ref{def:PostPreGl1A}) either
$\comoccA_k=\comoccA_1$ or $\comoccA_k=\eqA\os{\comseqA}$ with
$\comseqA\not\approx_\os\concat\beta{\comseqA'}$ and
$\play\beta\subseteq\play\comseqA$. In the second case
$\projAP\beta{\play\beta}\grr\projAP\comseqA{\play\beta}$, which
implies $\comoccA_k\grr\comoccA_1$.  So both cases are impossible.  If
$\postGA{\comoccA_j}{\beta}$ is defined, by
\refToLemma{keybeta-bis}(\ref{keybeta5}) we may define $\comoccA'_j =
\postGA{\comoccA_j}{\beta}\in\EGGA(\G''\parG\Msg'')$ for all $j$,
$2\leq j\leq n$. We show that $\comoccA'_2 ;\cdots ;\comoccA'_n$ is a
proving sequence in $\ESGA{\G''\parallel\Msg''}$.  By
\refToLemma{prop:prePostGlA}(\ref{ppg0b}) $\comoccA_j =
\preGA{\comoccA'_j}{\beta}$ for all $j$, $2\leq j\leq n$. Then by
\refToLemma{prop:prePostGlArel}(\ref{ppg-conflictA-a}) no two
events in this sequence can be in conflict.

\medskip
Let $\comoccA\in\EGGA(\G''\parG\Msg'')$ and $\comoccA< \comoccA'_h$
for some $h$, $2\leq h\leq n$.  By
\refToLemma{keybeta-bis}(\ref{keybeta4}) $\preGA{\comoccA}\beta$ and
$\preGA{\comoccA'_h}{\beta}$ belong to $\EGGA(\G\parG\Msg)$. By
\refToLemma{prop:prePostGlArel}(\ref{ppg4aA})
$\preGA{\comoccA}\beta<\preGA{\comoccA'_h}{\beta}$.
By \refToLemma{prop:prePostGlA}(\ref{ppg2A})
$\preGA{\comoccA'_h}{\beta}=\comoccA_h$.
Let $\comoccA' = \preGA{\comoccA}\beta$.
Then
$\comoccA' < \comoccA_h$ implies, by \refToDef{provseq}
and the fact that $\ESGA{\G\parG\Msg}$ is a PES, that there is $l<h$
such that $\comoccA'
=\comoccA_l$.  By \refToLemma{prop:prePostGlA}(\ref{ppg1bA})
we get $\comoccA=\postGA{\comoccA'}{\beta} = \postGA{\comoccA_l}\beta
= \comoccA'_l$. \sm

We have shown that $\comoccA'_2 ;\cdots ;\comoccA'_n$ is a proving
sequence in
$\ESGA{\G''\parallel\Msg''}$. By induction we get \linebreak
$\G''\parG\Msg''\stackred{\comseqA'}\G'\parG\Msg'$ where $\comseqA' =
\concat{\concat{\io{\comoccA'_2}}\ldots}{\io{\comoccA'_n}}$.  Let
$\comseqA=\concat{\concat{\io{\comoccA_1}}\ldots}{\io{\comoccA_n}}$. Since
$\io{\comoccA'_j} = \io{\comoccA_j}$ for all $j, 2\leq j\leq n$, we
have $\comseqA\! =\! \concat{\beta}{\comseqA'}$.  Therefore
\mbox{$\G\!\parG\Msg\stackred\beta\G''\parG\Msg'' \stackred{\comseqA'}\G'\!\parG\Msg'$}
 is the required transition sequence.
\end{proof}

\subsection{Isomorphism}

 We are finally able to
show that the ES interpretation of a network is equivalent, when the
session is typable, to the ES interpretation of its \agt.

\medskip
To prove our main theorem, we will also use the following separation
result from~\cite{BC91} (Lemma 2.8 p. 12). Recall from
\refToSection{sec:eventStr} that $\Conf{S}$ denotes the set of
configurations of $S$.
\begin{lemma}[Separation~\cite{BC91}]
\mylabel{separation}
Let $S=(E,\prec, \gr)$ be a flow event structure and $\ESet, \ESet' \in \Conf{S}$
be such that $\ESet \subset \ESet'$.
Then there exist $e \in \ESet'\backslash \ESet$ such that $\ESet \cup \set{e} \in \Conf{S}$.
\end{lemma}
We may now establish the isomorphism between the domain of
configurations of the FES of a typable network and the domain of
configurations of the PES of its \agt. In the proof of
this result, we will use the characterisation of configurations as
proving sequences, 
 given  in \refToProp{provseqchar}.  We will also
take the freedom of writing $\Seq{{\netevA_1};\cdots}{\netevA_n} \in
\Conf{\ESNA{\Nt\parG\Msg}}$ to mean that
$\Seq{{\netevA_1};\cdots}{\netevA_n}$ is a proving sequence such that
$\set{\netevA_1, \ldots, \netevA_n} \in \Conf{\ESNA{\Nt\parG\Msg}}$,
and similarly for $\Seq{{\comoccA_1};\cdots}{\comoccA_n} \in
\Conf{\ESGA{\RG\parG\Msg}}$.
\begin{theorem}\mylabel{isoA}
 If $\derN{\Nt\parG\Msg}{\RG\parG\Msg}$, then $\CD{\ESNA{\Nt\parG\Msg}} \simeq \CD{\ESGA{\RG\parG\Msg}}$.
\end{theorem}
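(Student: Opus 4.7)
The plan is to establish the isomorphism by defining two mutually inverse, monotone maps $\phi : \Conf{\ESNA{\Nt\parG\Msg}} \to \Conf{\ESGA{\RG\parG\Msg}}$ and $\psi$ going the other way, using the bridge provided by iterated Subject Reduction (\refToTheorem{srA}) and Session Fidelity (\refToTheorem{sfA}). Given $\ESet \in \Conf{\ESNA{\Nt\parG\Msg}}$, pick by \refToProp{provseqchar} any enumeration as a proving sequence $\Seq{\netevA_1;\cdots}{\netevA_n}$. By \refToTheorem{uf12A}, this yields a network transition sequence $\Nt\parG\Msg\stackred{\comseqA}\Nt'\parG\Msg'$ where $\comseqA=\io{\netevA_1}\cdots\io{\netevA_n}$. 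Iterating \refToTheorem{srA} along this trace produces a type transition sequence $\RG\parG\Msg\stackred{\comseqA}\RG'\parG\Msg'$, and then \refToTheorem{lemma:keyA1} converts it into a proving sequence $\gecA{\osq\Msg,\comseqA}$ in $\ESGA{\RG\parG\Msg}$; its underlying set I take as $\phi(\ESet)$. Symmetrically, $\psi$ is defined using \refToTheorem{uf10A}, iterated \refToTheorem{sfA} and \refToTheorem{lemma:keyA2}.

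The main obstacle is well-definedness of $\phi$: two proving enumerations of the same configuration $\ESet$ differ by a finite sequence of adjacent transpositions of independent n-events (non-conflicting, neither $\precN$-causing the other). I must show that such a transposition, read on the trace side, is precisely a swap $\swap{\cdot}{\cdot}{\osq\Msg}$ of \refToDef{swap} (the two swapped communications have disjoint players by \refToDef{proceventNA}, and neither is the matching output of the other, since otherwise a cross-flow would force a causal order); hence the two traces are $\approx_{\osq\Msg}$-equivalent. I then verify that applying $\sf tec$ to such equivalent traces produces the same underlying set of t-events: at the prefix length containing both swapped communications the resulting t-events coincide by \refToDef{def:glEventA}(\ref{def:glEventA1}), while at the intermediate prefix length the filtering of \refToDef{def:trace-filtering} discards the newly-fronted communication (its players being disjoint from those of the unique remaining suffix and no matching partner being available), so the generated t-event collapses to the same one as in the other enumeration. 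A symmetric argument handles $\psi$.

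Mutual inverseness follows from Lemmas~\ref{eion}(\ref{eion2}) and~\ref{eiog}(\ref{eiog2}), which give $\at{\comseqA}{i}=\io{\netevA_i}=\io{\comoccA_i}$, combined with the fact that both $\necA{\osq\Msg,\comseqA}$ and $\gecA{\osq\Msg,\comseqA}$ are uniquely determined, event by event, by the same pair $(\osq\Msg,\comseqA)$ via \refToDef{necA} and \refToDef{gecA}; so $\psi\circ\phi$ and $\phi\circ\psi$ reconstruct exactly the starting proving sequences.

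Finally, monotonicity uses \refToLemma{separation}: if $\ESet\subseteq\ESet'$ in $\Conf{\ESNA{\Nt\parG\Msg}}$, I enumerate $\ESet$ as a proving sequence and, by iterated application of \refToLemma{separation}, extend it to a proving sequence for $\ESet'$ by appending extra n-events; the corresponding trace $\comseqA$ becomes a prefix of a longer trace $\concat{\comseqA}{\comseqA''}$, and \refToTheorem{lemma:keyA1} applied to the extended trace produces a proving sequence for $\ESet'$ that extends the one for $\ESet$, so $\phi(\ESet)\subseteq\phi(\ESet')$. The analogous argument for $\psi$ completes the isomorphism.
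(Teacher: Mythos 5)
Your proof follows essentially the same route as the paper's: the correspondence is set up via \refToTheorem{uf12A}/\refToTheorem{uf10A}, iterated Subject Reduction and Session Fidelity, and \refToTheorem{lemma:keyA1}/\refToTheorem{lemma:keyA2}, and inclusion-preservation is obtained from the Separation Lemma (\refToLemma{separation}) exactly as in the paper. Your explicit check that two proving enumerations of the same configuration yield $\approx_{\osq\Msg}$-equivalent traces whose images under $\sf tec$ have the same underlying set of t-events is a sound addition of detail on a well-definedness point that the paper leaves implicit.
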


\begin{proof}
  Let $\os=\osq\Msg$.  We start by constructing a bijection between
  the proving sequences of  the event structure   $\ESNA{\Nt\parG\Msg}$ and the proving
  sequences of   the event structure   $\ESGA{\RG\parG\Msg}$.  By \refToTheorem{uf12A}, if
  $\Seq{{\netevA_1};\cdots}{\netevA_n}\in \Conf{\ESNA{\Nt\parG\Msg}}$,
   then $\Nt\parG\Msg\stackred\comseqA\Nt'\parG\Msg'$
  where
  $\comseqA=\concat{\concat{\io{\netevA_1}\cdots}}{\io{\netevA_n}}$.
By  applying iteratively  Subject
  Reduction (\refToTheorem{srA}), we obtain
  \[\G\parG\Msg\stackred\comseqA\G'\parG\Msg' \mbox{ and }
   \derN{\Nt'\parG\Msg'}{\RG'\parG\Msg'}
  \]
  By \refToTheorem{lemma:keyA1},  we get $\gecA{\os,\comseqA} \in
  \Conf{\ESGA{\RG\parG\Msg}}$.

  \medskip
  By \refToTheorem{lemma:keyA2}, if
   $\Seq{{\comoccA_1};\cdots}{\comoccA_n} \in
  \Conf{\ESGA{\RG\parG\Msg}}$,
then $\G\parG\Msg\stackred\comseqA\G'\parG\Msg'$, where
  $\comseqA=\io{\comoccA_1}\cdots\io{\comoccA_n}$. By applying
  iteratively Session Fidelity (\refToTheorem{sfA}), we obtain
\[\Nt\parG\Msg\stackred\comseqA\Nt'\parG\Msg' \mbox{ and }
    \derN{\Nt'\parG\Msg'}{\RG'\parG\Msg'}
\]
By \refToTheorem{uf10A}, 
   we get $\necA{\comseqA} \in
  \Conf{\ESNA{\Nt\parG\Msg}}$.

\medskip
Therefore we have a bijection between $\CD{\ESNA{\Nt\parG\Msg}}$ and
$\CD{\ESGA{\RG\parG\Msg}}$, given by $\necA{\comseqA} \leftrightarrow \gecA{\os,\comseqA}$
for any $\comseqA$ generated by the (bisimilar) LTSs of $\Nt\parG\Msg$ and $\RG\parG\Msg$.

\medskip
We now show that this bijection preserves inclusion of configurations. \vspace*{1.8mm}

By \refToLemma{separation} it is enough to prove that if
$\Seq{{\netevA_1};\cdots}{\netevA_n} \in \Conf{\ESNA{\Nt\parG\Msg}}$
is mapped to $\Seq{{\comoccA_1};\cdots}{\comoccA_n}\! \in
\Conf{\ESGA{\RG\!\parG\Msg}}$,  then
$\Seq{\Seq{{\netevA_1};\cdots}{\netevA_n}}\netevA\! \in
\Conf{\ESNA{\Nt\!\parG\Msg}}$ iff
\mbox{$\Seq{\Seq{{\comoccA_1};\cdots}{\comoccA_n}}\comoccA\! \in
\Conf{\ESGA{\RG\!\parG\Msg}}$}, where
$\Seq{\Seq{{\comoccA_1};\cdots}{\comoccA_n}}\comoccA$ is the image of
$\Seq{\Seq{{\netevA_1};\cdots}{\netevA_n}}\netevA$ under the
bijection.
So, suppose  $\Seq{{\netevA_1};\cdots}{\netevA_n} \in
  \Conf{\ESNA{\Nt\parG\Msg}}$  and
$\Seq{{\globevA_1};\cdots}{\globevA_n} \in
  \Conf{\ESGA{\RG\parG\Msg}}$ are  such that
\[\Seq{{\netevA_1};\cdots}{\netevA_n} = \necA{\comseqA}
\leftrightarrow \gecA{\os,\comseqA} =
\Seq{{\globevA_1};\cdots}{\globevA_n}
\]
 Then $\io{\netevA_1}\cdots\io{\netevA_n}= \comseqA=
\io{\globevA_1}\cdots\io{\globevA_n}$.

\medskip
By \refToTheorem{uf12A}, if
 $\Seq{{\netevA_1};\cdots}{\netevA_n;\netevA} \in
  \Conf{\ESNA{\Nt\parG\Msg}}$
with $\io{\netevA} = \beta$, then
\[ \Nt\parG\Msg\stackred{\concat\comseqA
    \beta}\Nt'\parG\Msg'
  \]

 By applying iteratively Subject Reduction (\refToTheorem{srA}) we get
\[ \G\parG\Msg\stackred{\concat\comseqA
    \beta}\G'\parG\Msg'  \mbox{ and } \derN{\Nt'\parG\Msg'}{\RG'\parG\Msg'}
 \]
 We conclude that  $\gecA{\os,\concat\comseqA\beta} \in
  \Conf{\ESGA{\RG\parG\Msg}}$
by \refToTheorem{lemma:keyA1}.

\medskip
By \refToTheorem{lemma:keyA2}, if
 $\Seq{{\comoccA_1};\cdots}{\comoccA_n;\comoccA} \in
  \Conf{\ESGA{\RG\parG\Msg}}$
with $\io{\comoccA} = \beta$, then
\[
 \G\parG\Msg\stackred{\concat\comseqA
    \beta}\G'\parG\Msg'
 \]
 \eject
  By applying iteratively Session Fidelity (\refToTheorem{sfA}) we get
\[
\Nt\parG\Msg\stackred{\concat\comseqA
    \beta}\Nt'\parG\Msg'  \mbox{ and } \derN{\Nt'\parG\Msg'}{\RG'\parG\Msg'}
\]
 We conclude that
 $\necA{\concat\comseqA\beta} \in
  \Conf{\ESNA{\Nt\parG\Msg}}$
by \refToTheorem{uf10A}.
   \end{proof}

\section{Related work and conclusions}
\mylabel{sec:relatedA}

Session types, as originally proposed in~\cite{HVK98,CHY16} for binary
sessions, are grounded on types for the $\pi$-calculus.  Early
proposals for typing channels in the $\pi$-calculus include simple
sorts~\cite{M92}, \oi\ types~\cite{PS96}
and usage types~\cite{K05}. In particular, the notion of progress for multiparty
sessions~\cite{DY11,Coppo2016}
is inspired by the notion of lock-freedom developed for the
$\pi$-calculus in~\cite{K02,K06}. The more recent work~\cite{DGS12}
provides further evidence of the strong relationship between binary
session types and channel types in the linear $\pi$-calculus.  The
notion of lock-freedom for the linear $\pi$-calculus was also
revisited in~\cite{P15}.

\medskip
Multiparty sessions disciplined by global types were introduced in the
keystone papers~\cite{CHY08,CHY16}. These papers, as well as most
subsequent work on multiparty session types (for a survey
see~\cite{H2016}), were based on more expressive session calculi than
the one we use here, where sessions may be interleaved and
participants exchange pairs of labels and values.  In that more
general setting, global types are projected onto session types and in
turn session types are assigned to processes. Here, instead, we
consider only single sessions and pure label exchange: this allows us
to project global types directly to processes, as in~\cite{DS19},
where the considered global types are those of~\cite{CHY16}.
Possible extensions of our work to more expressive calculi are
discussed at the end of this section.

Standard global types are too restrictive for typing processes which
communicate asynchronously. A powerful typability extension is
obtained by the use of the subtyping relation given
in~\cite{Mostrous2009}. This subtyping allows inputs and outputs to be
exchanged, stating that anticipating outputs is better. The rationale
is that outputs are not blocking, while inputs are blocking in
asynchronous communication. Unfortunately, this subtyping is
undecidable~\cite{BCZ17,LY17}, and thus type systems equipped with
this subtyping are not effective.  Decidable restrictions of this
subtyping relation have been proposed~\cite{BCZ17,LY17,BCZ18}.  In
particular, subtyping is decidable when both internal and external
choices are forbidden in one of the two compared
processes~\cite{BCZ17}.  This result is improved in~\cite{BCZ18},
where both the subtype and the supertype can contain either internal
or external choices.  More interestingly, the work ~\cite{BCLYZ21}
presents a sound (though not complete) algorithm for checking
asynchronous subtyping.  A very elegant formulation of asynchronous
subtyping is given in~\cite{GPPSY21}: it allows the authors to show
that any extension of this subtyping would be unsound.  In the present
paper we achieve a gain in typability for asynchronous networks by
using a more fine-grained syntax for global types.  Our type system is
decidable, since projection is computable and the preorder on
processes is decidable. Notice that there are networks that can be
typed using the algorithm in~\cite{BCLYZ21} but cannot be typed in our
system, like the  running example of
that paper.

We claim that our asynchronous types are more ``prescribing'' than the
global types of~\cite{CHY08,CHY16} equipped with asynchronous
subtyping, since asynchronous types 
 specify  the order in which participants must do inputs and
outputs in a more precise way. 
 For instance,  the asynchronous type
$\agtSOS \pp\q {\la};\agtSOS \q\pp{\la'};\agtIS \pp\q {\la};\agtIS
\q\pp {\la'}\parG\emptyset$ of \refToExample{ta} can type the network
$\pP{\pp}{\sendL{\q}\la;\rcvL{\q}{\la'}} \parN
\pP{\q}{\sendL{\pp}{\la'};\rcvL{\pp}\la}$ of
\refToExample{sync-async-characteristic-example}, but cannot type the
network
$\pP{\pp}{\sendL{\q}\la;\rcvL{\q}{\la'}} \parN
\pP{\q}{\rcvL{\pp}\la;\sendL{\pp}{\la'}}$. Instead, in the system
of~\cite{CHY08,CHY16} one needs the global type
$\pp\to\q:\la;\q\to\pp:\la'$ and the subtyping
$\sendL{\pp}{\la'};\rcvL{\pp}\la\leq\rcvL{\pp}\la;\sendL{\pp}{\la'}$
to type the network
$\pP{\pp}{\sendL{\q}\la;\rcvL{\q}{\la'}} \parN
\pP{\q}{\sendL{\pp}{\la'};\rcvL{\pp}\la}$. The drawback is that the
same global type can also type the network
$\pP{\pp}{\sendL{\q}\la;\rcvL{\q}{\la'}} \parN
\pP{\q}{\rcvL{\pp}\la;\sendL{\pp}{\la'}}$.

 More permissive variants of our asynchronous types, called
``deconfined global types'',  were subsequently  considered in~\cite{DGD21a}
and~\cite{DaGiDe23}.
In~\cite{DGD21a} both
projection and balancing are refined, the first one allowing the
participants which are not involved in a choice to have different
behaviours in the branches of the choice, and the second one allowing
unbounded queues.    The
type  system of~\cite{DaGiDe23} has global types 
 that allow  
 choices of inputs, and types the
 running  example of~\cite{BCLYZ21} and also 
a network for which the algorithm of~\cite{BCLYZ21} fails.
With the type systems
of~\cite{DGD21a,DaGiDe23}, 
due to a more complex definition of balancing,  one can type networks
with  queues  that may grow unboundedly, which is not
possible with the balancing of \refToFigure{wfagtA}.  Since the focus
of the present paper was on the event structure semantics, we decided
to go for this simpler definition.

Since their introduction in~\cite{Win80,NPW81}, Event
Structures have been widely used to give semantics to process
calculi. Several ES interpretations of Milner's calculus CCS have been
proposed, using various classes of ESs: Stable ESs~\cite{Win82},
Prime ESs or variations of them~\cite{BC87, DDM88,DDM90}, and Flow
ESs~\cite{BC88a,GG04}. Other calculi such as TCSP~\cite{BHR84,Old86}
and LOTOS have been provided respectively with a PES
semantics~\cite{LG91,BM94} and with a Bundle ES
semantics~\cite{Lan93,Kat96}.  More recently, ES semantics have been
investigated also for the
$\pi$-calculus~\cite{CVY07,VY10,CVY12,Cri15,CKV15,CKV16}.  A more
extensive discussion on ES semantics for process calculi may be found
in our companion paper~\cite{CDG22}.

It is noteworthy that all the above-mentioned ES semantics were given
for calculi with synchronous communication. This is perhaps not
surprising since ESs are generally equipped with a
\emph{synchronisation algebra} when modelling process calculi, and a
communication is represented by a single event resulting from the
synchronisation of two events.
This is also the reason why, in our previous paper~\cite{CDG22},
we started by considering an ES semantics for a synchronous session
calculus with standard global types.

An asynchronous PES semantics for finite \emph{synchronous}
choreographies was recently proposed in~\cite{LMT20}, where,
like in the present paper,
a communication is represented by two
distinct events,
one for the output and the other for the
matching input. However, in our work the output and the matching input
are already decoupled in the types, and their matching relation needs
to be reconstructed in order to obtain the cross-causality relation in
the PES. Instead, in ~\cite{LMT20} the definition of cross-causality
is immediate, since the standard synchronous type construct gives
rises to a pair of events which are by construction in the
cross-causality relation. Moreover, only types are interpreted as ESs
in~\cite{LMT20}. To sum up, while asynchrony is an essential feature
of sessions in our calculus, and therefore it is modelled also in
their abstract specifications (asynchronous types), asynchrony is
rather viewed as an implementation feature of sessions
in~\cite{LMT20}, and therefore it is not modelled in their abstract
specifications (choreographies), which remain synchronous.

A denotational semantics based on concurrent games \cite{RW11} has been
proposed for the asynchronous $\pi$-calculus in \cite{ST17}.  Notice,
however, that in the asynchronous $\pi$-calculus an output can never
be a local cause of any other event, since the output construct has no
continuation.  Therefore the asynchrony of the asynchronous
$\pi$-calculus is more liberal than that of our calculus and of
session calculi in general, which adopt the definition of asynchrony
of standard protocols such as TCP/IP, where the order of messages
between any given pair of participants is preserved.

This work builds on the companion paper~\cite{CDG22}, where
synchronous rather than asynchronous communication was considered. In
that paper too, networks were interpreted as FESs, and global types,
which were the standard ones, were interpreted as PESs.  The key
result was again an isomorphism between the configuration domain of
the FES of a typed network and that of the PES of its global type.
Thus, the present paper completes the picture set up
in~\cite{CDG22} by exploring the ``asynchronous side'' of the
same construction.


An important feature of a denotational model such as Event Structures
is abstraction.  Clearly, our PES semantics for asynchronous types
abstracts away from their syntax, by making explicit the concurrency
relation between independent communications that is left implicit in
the types: for instance, it maps to the same PES all the types given
in \refToExample{ta} for the characteristic network of
\refToExample{sync-async-characteristic-example}. Indeed, it can be
shown that all well-formed asynchronous types that type the same
network give rise to the same PES. Our FES semantics for networks also
abstracts away from their syntax to some extent, via the narrowing
operation which prunes off all the input events that are not justified
by an output event or by a message in the queue, as well as all their
successors. As a consequence, the (non typable) network
$\pP{\pp}{\rcvL{\q}{ \la}; \sendL{\pr}{ \la'}} \parN
\pP{\pr}{\rcvL{\pp}{ \la'}} \parN \emptyset$ is interpreted as the FES
with an empty set of events, and so are other deadlocked networks of
the same kind.

As future work, we shall try to devise semantic counterparts for our
well-formedness conditions on asynchronous types, namely structural
conditions characterising both the PESs of well-formed asynchronous
types and the FESs of well-typed networks, along the lines of a
previous proposal for binary sessions as Linear Logic proofs based on
causal nets \cite{CY19}. This would allow us to reason entirely on the
semantic side, and in particular to establish the isomorphism of the
configuration domains of a well-typed network FES and the PES of one
its types in a more direct way.
Such semantic characterisations of well-formedness would also help
us to address the following \emph{synthesis problem}: starting from an
arbitrary Prime ES, is it possible (1) to verify that it represents a
well-formed asynchronous type, and, if this is the case, (2) to
reconstruct a network that behaves according to that asynchronous
type?
A step in this direction was recently made in \cite{CG24}, in the synchronous
setting of \cite{CDG22}.


Other possible directions for future work have already
been sketched in~\cite{CDG22}: they include the investigation of reversibility, which
would benefit from previous work on reversible session
calculi~\cite{TY14,TY16,MP17,MP17a,NY17,CDG18} and Reversible Event
Structures~\cite{PU2016,CKV16,GPY2017,GPY2018, Gra21}.  We also plan
to investigate the extension of our asynchronous calculus with
delegation.  In the literature, delegation is usually modelled using
the channel passing mechanism of the $\pi$-calculus, which requires
interleaved sessions. 
Now, the extension of our event structure semantics to interleaved
sessions would require a deep rethinking, especially for the
definition of narrowing.  Hence we plan to use the alternative notion
of delegation proposed in~\cite{CDGH19} for a session calculus without
channels, called ``internal delegation''.  Note that delegation
remains essentially a synchronous mechanism, even in the asynchronous
setting: indeed, unlike ordinary outputs that become non-blocking,
delegation remains blocking for the principal, who has to wait until
the deputy returns the delegation to be able to proceed. As a matter
of fact, this is quite reasonable: not only does it prevent the issue
of ``power vacancy'' that would arise if the role of the principal
disappeared from the network for some time, but it also seems natural
to assume that the principal delegates a task only when it has the
guarantee that the deputy will accept it.

\paragraph{Acknowledgments} We are indebted to Francesco Dagnino for
suggesting a simplification in the definition of \balancing\ for
asynchronous types.  We also wish to thank the anonymous
referees for their helpful comments.  In particular,
they helped us to  clarify several definitions,
 and to expand both the comparison with the literature and the
 discussion on future work.

\bibliographystyle{fundam}
\bibliography{session}

\begin{thebibliography}{10}
\providecommand{\url}[1]{\texttt{#1}}
\providecommand{\urlprefix}{URL }
\expandafter\ifx\csname urlstyle\endcsname\relax
  \providecommand{\doi}[1]{doi:\discretionary{}{}{}#1}\else
  \providecommand{\doi}{doi:\discretionary{}{}{}\begingroup
  \urlstyle{rm}\Url}\fi
\providecommand{\eprint}[2][]{\url{#2}}

\bibitem{THK94}
Takeuchi K, Honda K, Kubo M.
\newblock An interaction-based language and its typing system.
\newblock In: Hankin C (ed.), PARLE, volume 817 of \emph{LNCS}. Springer, 1994
  pp. 122--138.
\newblock \doi{10.1007/BFb0053567}.

\bibitem{HVK98}
Honda K, Vasconcelos VT, Kubo M.
\newblock Language primitives and type discipline for structured
  communication-based programming.
\newblock In: Hankin C (ed.), ESOP, volume 1381 of \emph{LNCS}. Springer, 1998
  pp. 122--138.
\newblock \doi{10.1007/BFb0053567}.

\bibitem{CHY08}
Honda K, Yoshida N, Carbone M.
\newblock Multiparty asynchronous session types.
\newblock In: Necula GC, Wadler P (eds.), POPL. ACM Press, 2008 pp. 273--284.
\newblock \doi{10.1145/1328897.1328472}.

\bibitem{CHY16}
Honda K, Yoshida N, Carbone M.
\newblock Multiparty asynchronous session types.
\newblock \emph{Journal of {ACM}}, 2016.
\newblock \textbf{63}(1):9:1--9:67.
\newblock \doi{10.1145/2827695}.

\bibitem{ABB0CDGGGH16}
Ancona D, Bono V, Bravetti M, Campos J, Castagna G, Deni{\'{e}}lou P, Gay SJ,
  Gesbert N, Giachino E, Hu R, Johnsen EB, Martins F, Mascardi V, Montesi F,
  Neykova R, Ng N, Padovani L, Vasconcelos VT, Yoshida N.
\newblock Behavioral types in programming languages.
\newblock \emph{Foundations and Trends in Programming Languages}, 2016.
\newblock \textbf{3}(2-3):95--230.
\newblock \doi{10.1561/2500000031}.

\bibitem{DY12}
Deni{\'{e}}lou P, Yoshida N.
\newblock Multiparty session types meet communicating automata.
\newblock In: Seidl H (ed.), ESOP, volume 7211 of \emph{LNCS}. Springer, 2012
  pp. 194--213.
\newblock \doi{10.1007/978-3-642-28869-2\_10}.

\bibitem{LTY15}
Lange J, Tuosto E, Yoshida N.
\newblock From communicating machines to graphical choreographies.
\newblock In: Rajamani SK, Walker D (eds.), {POPL}. {ACM Press}, 2015 pp.
  221--232.
\newblock \doi{10.1145/2676726.2676964}.

\bibitem{TuostoG18}
Tuosto E, Guanciale R.
\newblock Semantics of global view of choreographies.
\newblock \emph{Journal of Logic and Algebraic Methods in Programming}, 2018.
\newblock \textbf{95}:17--40.
\newblock \doi{10.1016/j.jlamp.2017.11.002}.

\bibitem{CP10}
Caires L, Pfenning F.
\newblock Session types as intuitionistic linear propositions.
\newblock In: Gastin P, Laroussinie F (eds.), CONCUR, volume 6269 of
  \emph{LNCS}. Springer, 2010 pp. 222--236.
\newblock \doi{10.1007/978-3-642-15375-4\_16}.

\bibitem{TCP11}
Toninho B, Caires L, Pfenning F.
\newblock Dependent session types via intuitionistic linear type theory.
\newblock In: Schneider{-}Kamp P, Hanus M (eds.), PPDP. ACM Press, 2011 pp.
  161--172.
\newblock \doi{10.1145/2003476.2003499}.

\bibitem{Wadler14}
Wadler P.
\newblock Propositions as sessions.
\newblock \emph{Journal of Functional Programming}, 2014.
\newblock \textbf{24}(2-3):384--418.
\newblock \doi{10.1017/S095679681400001X}.

\bibitem{PCPT14}
P{\'{e}}rez JA, Caires L, Pfenning F, Toninho B.
\newblock Linear logical relations and observational equivalences for
  session-based concurrency.
\newblock \emph{Information and Computation}, 2014.
\newblock \textbf{239}:254--302.
\newblock \doi{10.1016/j.ic.2014.08.001}.

\bibitem{CPT16}
Caires L, Pfenning F, Toninho B.
\newblock Linear logic propositions as session types.
\newblock \emph{Mathematical Structures in Computer Science}, 2016.
\newblock \textbf{26}(3):367--423.
\newblock \doi{10.1017/S0960129514000218}.

\bibitem{CDG22}
Castellani I, Dezani{-}Ciancaglini M, Giannini P.
\newblock Event structure semantics for multiparty sessions.
\newblock \emph{Journal of Logic and Algebraic Methods in Programming}, 2023.
\newblock \textbf{131}:100844.
\newblock \doi{10.1016/j.jlamp.2022.100844}.

\bibitem{Win88}
Winskel G.
\newblock An introduction to event structures.
\newblock In: de~Bakker JW, de~Roever WP, Rozenberg G (eds.), {REX}: Linear
  Time, Branching Time and Partial Order in Logics and Models for Concurrency,
  volume 354 of \emph{LNCS}. Springer, 1988 pp. 364--397.
\newblock \doi{10.1007/BFb0013026}.

\bibitem{DGJPY15}
Dezani-Ciancaglini M, Ghilezan S, Jaksic S, Pantovic J, Yoshida N.
\newblock Precise subtyping for synchronous multiparty sessions.
\newblock In: Gay S, Alglave J (eds.), PLACES, volume 203 of \emph{EPTCS}. Open
  Publishing Association, 2015 pp. 29 -- 44.
\newblock \doi{10.4204/EPTCS.203.3}.

\bibitem{BC88a}
Boudol G, Castellani I.
\newblock Permutation of transitions: an event structure semantics for {CCS}
  and {SCCS}.
\newblock In: de~Bakker JW, de~Roever WP, Rozenberg G (eds.), {REX}: Linear
  Time, Branching Time and Partial Order in Logics and Models for Concurrency,
  volume 354 of \emph{LNCS}. Springer, 1988 pp. 411--427.
\newblock \doi{10.1007/BFb0013028}.

\bibitem{BC94}
Boudol G, Castellani I.
\newblock Flow models of distributed computations: three equivalent semantics
  for {CCS}.
\newblock \emph{Information and Computation}, 1994.
\newblock \textbf{114}(2):247--314.
\newblock \doi{10.1006/inco.1994.1088}.

\bibitem{Win80}
Winskel G.
\newblock Events in computation.
\newblock Ph.D. thesis, University of Edinburgh, 1980.

\bibitem{NPW81}
Nielsen M, Plotkin G, Winskel G.
\newblock Petri nets, event structures and domains, part {I}.
\newblock \emph{Theoretical Computer Science}, 1981.
\newblock \textbf{13}(1):85--108.
\newblock \doi{10.1016/0304-3975(81)90112-2}.

\bibitem{Mostrous2009}
Mostrous D, Yoshida N, Honda K.
\newblock Global principal typing in partially commutative asynchronous
  sessions.
\newblock In: Castagna G (ed.), ESOP, volume 5502 of \emph{LNCS}. Springer,
  2009 pp. 316--332.
\newblock \doi{10.1007/978-3-642-00590-9\_23}.

\bibitem{BCZ17}
Bravetti M, Carbone M, Zavattaro G.
\newblock Undecidability of asynchronous session subtyping.
\newblock \emph{Information and Computation}, 2017.
\newblock \textbf{256}:300--320.
\newblock \doi{10.1016/j.ic.2017.07.010}.

\bibitem{LY17}
Lange J, Yoshida N.
\newblock On the undecidability of asynchronous session subtyping.
\newblock In: Esparza J, Murawski AS (eds.), FOSSACS, volume 10203 of
  \emph{LNCS}. 2017 pp. 441--457.
\newblock \doi{10.1007/978-3-662-54458-7\_26}.

\bibitem{Cour83}
Courcelle B.
\newblock Fundamental properties of infinite trees.
\newblock \emph{Theoretical Computer Science}, 1983.
\newblock \textbf{25}:95--169.
\newblock \doi{10.1016/0304-3975(83)90059-2}.

\bibitem{SY19}
Scalas A, Yoshida N.
\newblock Less is more: multiparty session types revisited.
\newblock \emph{Proceedings of the {ACM} on Programming Languages}, 2019.
\newblock \textbf{3}({POPL}):30:1--30:29.
\newblock \doi{10.1145/3290343}.

\bibitem{DGD21a}
Dagnino F, Giannini P, Dezani{-}Ciancaglini M.
\newblock Deconfined Global Types for Asynchronous Sessions.
\newblock In: Damiani F, Dardha O (eds.), COORDINATION, volume 12717 of
  \emph{LNCS}. Springer, 2021 pp. 41--60.
\newblock \doi{10.1007/978-3-030-78142-2_3}.

\bibitem{pier02}
Pierce BC.
\newblock Types and Programming Languages.
\newblock MIT Press, 2002.
\newblock ISBN 978-0-262-16209-8.

\bibitem{DY11}
Deni{\'e}lou PM, Yoshida N.
\newblock Dynamic multirole session types.
\newblock In: Thomas~Ball MS (ed.), POPL. ACM Press, 2011 pp. 435--446.
\newblock \doi{10.1145/1926385.1926435}.

\bibitem{Coppo2016}
Coppo M, Dezani-Ciancaglini M, Yoshida N, Padovani L.
\newblock Global progress for dynamically interleaved multiparty sessions.
\newblock \emph{Mathematical Structures in Computer Science}, 2016.
\newblock \textbf{26}(2):238--302.
\newblock \doi{10.1017/S0960129514000188}.

\bibitem{GH05}
Gay S, Hole M.
\newblock Subtyping for session types in the pi calculus.
\newblock \emph{Acta Informatica}, 2005.
\newblock \textbf{42}(2/3):191--225.
\newblock \doi{10.1007/s00236-005-0177-z}.

\bibitem{DemangeonH11}
Demangeon R, Honda K.
\newblock Full abstraction in a subtyped pi-calculus with linear types.
\newblock In: Katoen J, K{\"{o}}nig B (eds.), CONCUR, volume 6901 of
  \emph{LNCS}. Springer, 2011 pp. 280--296.
\newblock \doi{10.1007/978-3-642-23217-6\_19}.

\bibitem{Gay16}
Gay S.
\newblock Subtyping supports safe session substitution.
\newblock In: Lindley S, McBride C, Trinder PW, Sannella D (eds.), A List of
  Successes That Can Change the World - Essays Dedicated to Philip Wadler on
  the Occasion of His 60th Birthday, volume 9600 of \emph{LNCS}. Springer, 2016
  pp. 95--108.
\newblock \doi{10.1007/978-3-319-30936-1\_5}.

\bibitem{BDLT21}
Barbanera F, Dezani{-}Ciancaglini M, Lanese I, Tuosto E.
\newblock Composition and decomposition of multiparty sessions.
\newblock \emph{Journal of Logic and Algebraic Methods Program.}, 2021.
\newblock \textbf{119}:100620.
\newblock \doi{10.1016/j.jlamp.2020.100620}.

\bibitem{H2016}
H\"{u}ttel H, Lanese I, Vasconcelos VT, Caires L, Carbone M, Deni{\'e}lou PM,
  Mostrous D, Padovani L, Ravara A, Tuosto E, Vieira HT, Zavattaro G.
\newblock Foundations of session types and behavioural contracts.
\newblock \emph{ACM Computing Surveys}, 2016.
\newblock \textbf{49}(1):3:1--3:36.
\newblock \doi{10.1145/2873052}.

\bibitem{BC91}
Boudol G, Castellani I.
\newblock Flow models of distributed computations: event structures and nets.
\newblock Research Report 1482, INRIA, 1991.
\newblock \urlprefix\url{https://inria.hal.science/inria-00075080/document}.

\bibitem{CDG-LNCS19}
Castellani I, Dezani{-}Ciancaglini M, Giannini P.
\newblock Event Structure Semantics for Multiparty Sessions.
\newblock In: Boreale M, Corradini F, Loreti M, Pugliese R (eds.), Models,
  Languages, and Tools for Concurrent and Distributed Programming - Essays
  Dedicated to Rocco De Nicola on the Occasion of His 65th Birthday, volume
  11665 of \emph{LNCS}. Springer, 2019 pp. 340--363.
\newblock \doi{10.1007/978-3-030-21485-2\_19}.

\bibitem{M92}
Milner R.
\newblock The polyadic pi-calculus (Abstract).
\newblock In: Cleaveland R (ed.), CONCUR, volume 630 of \emph{LNCS}. Springer,
  1992 p.~1.
\newblock \doi{10.1007/BFb0084778}.

\bibitem{PS96}
Pierce BC, Sangiorgi D.
\newblock Typing and subtyping for mobile processes.
\newblock \emph{Mathematical Structures in Computer Science}, 1996.
\newblock \textbf{6}(5):376--385.
\newblock \doi{10.1017/S096012950007002X}.

\bibitem{K05}
Kobayashi N.
\newblock Type-based information flow analysis for the pi-calculus.
\newblock \emph{Acta Informatica}, 2005.
\newblock \textbf{42}(4-5):291--347.
\newblock \doi{10.1007/s00236-005-0179-x}.

\bibitem{K02}
Kobayashi N.
\newblock A type system for lock-free processes.
\newblock \emph{Information and Computation}, 2002.
\newblock \textbf{177}(2):122--159.
\newblock \doi{10.1016/S0890-5401(02)93171-8}.

\bibitem{K06}
Kobayashi N.
\newblock A new type system for deadlock-free processes.
\newblock In: Baier C, Hermanns H (eds.), CONCUR, volume 4137 of \emph{LNCS}.
  Springer, 2006 pp. 233--247.
\newblock \doi{10.1007/11817949\_16}.

\bibitem{DGS12}
Dardha O, Giachino E, Sangiorgi D.
\newblock Session types revisited.
\newblock In: Schreye DD, Janssens G, King A (eds.), PPDP. {ACM}, 2012 pp.
  139--150.
\newblock \doi{10.1145/2370776.2370794}.

\bibitem{P15}
Padovani L.
\newblock Type reconstruction for the linear {\(\pi\)}-calculus with composite
  regular types.
\newblock \emph{Logical Methods in Computer Science}, 2015.
\newblock \textbf{11}(4).
\newblock \doi{10.2168/LMCS-11(4:13)2015}.

\bibitem{DS19}
Severi P, Dezani-Ciancaglini M.
\newblock Observational equivalence for multiparty sessions.
\newblock \emph{Fundamenta Informaticae}, 2019.
\newblock \textbf{167}:267--305.
\newblock \doi{10.3233/FI-2019-1863}.

\bibitem{BCZ18}
Bravetti M, Carbone M, Zavattaro G.
\newblock On the boundary between decidability and undecidability of
  asynchronous session subtyping.
\newblock \emph{Theoretical Computer Science}, 2018.
\newblock \textbf{722}:19--51.
\newblock \doi{10.1016/j.tcs.2018.02.010}.

\bibitem{BCLYZ21}
Bravetti M, Carbone M, Lange J, Yoshida N, Zavattaro G.
\newblock A Sound Algorithm for Asynchronous Session Subtyping and its
  Implementation.
\newblock \emph{Logical Methods in Computer Science}, 2021.
\newblock \textbf{17}(1):20:1--20:35.
\newblock \doi{10.23638/LMCS-17(1:20)2021}.

\bibitem{GPPSY21}
Ghilezan S, Pantovi\'c J, Proki\'c I, Scalas A, Yoshida N.
\newblock Precise subtyping for asynchronous multiparty sessions.
\newblock \emph{Proceedings of the ACM on Programming Languages}, 2021.
\newblock \textbf{5}({POPL}):1--28.
\newblock \doi{10.1145/3434297}.

\bibitem{DaGiDe23}
Dagnino F, Giannini P, Dezani{-}Ciancaglini M.
\newblock Deconfined Global Types for Asynchronous Sessions.
\newblock \emph{Logical Methods in Computer Science}, 2023.
\newblock \textbf{19}(1).
\newblock \doi{10.46298/lmcs-19(1:3)2023}.

\bibitem{Win82}
Winskel G.
\newblock Event structure semantics for {CCS} and related languages.
\newblock In: Nielsen M, Schmidt EM (eds.), {ICALP}, volume 140 of \emph{LNCS}.
  Springer, 1982 pp. 561--576.
\newblock \doi{10.1007/BFb0012800}.

\bibitem{BC87}
Boudol G, Castellani I.
\newblock On the semantics of concurrency: partial orders and transition
  systems.
\newblock In: Ehrig H, Kowalski RA, Levi G, Montanari U (eds.), TAPSOFT, volume
  249 of \emph{LNCS}. Springer, 1987 pp. 123--137.
\newblock \doi{10.1007/3-540-17660-8\_52}.

\bibitem{DDM88}
Degano P, De~Nicola R, Montanari U.
\newblock On the consistency of truly concurrent operational and denotational
  semantics.
\newblock In: Chandra AK (ed.), {LICS}. IEEE Computer Society Press Press, 1988
  pp. 133--141.
\newblock \doi{10.1109/LICS.1988.5112}.

\bibitem{DDM90}
Degano P, {De Nicola} R, Montanari U.
\newblock A partial ordering semantics for {CCS}.
\newblock \emph{Theoretical Computer Science}, 1990.
\newblock \textbf{75}(3):223--262.
\newblock \doi{10.1016/0304-3975(90)90095-Y}.

\bibitem{GG04}
van Glabbeek RJ, Goltz U.
\newblock Well-behaved flow event structures for parallel composition and
  action refinement.
\newblock \emph{Theoretical Computer Science}, 2004.
\newblock \textbf{311}(1-3):463--478.
\newblock \doi{10.1016/j.tcs.2003.10.031}.

\bibitem{BHR84}
Brookes S, Hoare CA, Roscoe AW.
\newblock A theory of communicating sequential processes.
\newblock \emph{Journal of ACM}, 1984.
\newblock \textbf{31}(3):560--599.
\newblock \doi{10.1145/828.833}.

\bibitem{Old86}
Olderog E.
\newblock {TCSP:} theory of communicating sequential processes.
\newblock In: Brauer W, Reisig W, Rozenberg G (eds.), Advances in Petri Nets,
  volume 255 of \emph{LNCS}. Springer, 1986 pp. 441--465.
\newblock \doi{10.1007/3-540-17906-2\_34}.

\bibitem{LG91}
Loogen R, Goltz U.
\newblock Modelling nondeterministic concurrent processes with event
  structures.
\newblock \emph{Fundamenta Informaticae}, 1991.
\newblock \textbf{14}(1):39--74.
\newblock \doi{10.3233/FI-1991-14103}.

\bibitem{BM94}
Baier C, Majster{-}Cederbaum ME.
\newblock The connection between an event structure semantics and an
  operational semantics for {TCSP}.
\newblock \emph{Acta Informatica}, 1994.
\newblock \textbf{31}(1):81--104.
\newblock \doi{10.1007/BF01178923}.

\bibitem{Lan93}
Langerak R.
\newblock Bundle event structures: a non-interleaving semantics for {LOTOS}.
\newblock In: Diaz M, Groz R (eds.), FORTE, volume {C-10} of \emph{{IFIP}
  Transactions}. North-Holland.
\newblock ISBN 0-444-89282-6, 1992 pp. 331--346.

\bibitem{Kat96}
Katoen J.
\newblock Quantitative and qualitative extensions of event structures.
\newblock Ph.D. thesis, University of Twente, 1996.

\bibitem{CVY07}
Crafa S, Varacca D, Yoshida N.
\newblock Compositional event structure semantics for the internal
  $\pi$-Calculus.
\newblock In: Caires L, Vasconcelos VT (eds.), {CONCUR}, volume 4703 of
  \emph{LNCS}. Springer, 2007 pp. 317--332.
\newblock \doi{10.1007/978-3-540-74407-8\_22}.

\bibitem{VY10}
Varacca D, Yoshida N.
\newblock Typed event structures and the linear $\pi$-calculus.
\newblock \emph{Theoretical Computer Science}, 2010.
\newblock \textbf{411}(19):1949--1973.
\newblock \doi{10.1016/j.tcs.2010.01.024}.

\bibitem{CVY12}
Crafa S, Varacca D, Yoshida N.
\newblock Event structure semantics of parallel extrusion in the
  $\pi$-calculus.
\newblock In: Birkedal L (ed.), FOSSACS, volume 7213 of \emph{LNCS}. Springer,
  2012 pp. 225--239.
\newblock \doi{10.1007/978-3-642-28729-9\_15}.

\bibitem{Cri15}
Cristescu I.
\newblock Operational and denotational semantics for the reversible
  $\pi$-calculus.
\newblock Ph.D. thesis, University Paris Diderot - Paris 7, 2015.

\bibitem{CKV15}
Cristescu I, Krivine J, Varacca D.
\newblock Rigid families for {CCS} and the {\(\pi\)}-calculus.
\newblock In: Leucker M, Rueda C, Valencia FD (eds.), {ICTAC}, volume 9399 of
  \emph{LNCS}. Springer, 2015 pp. 223--240.
\newblock \doi{10.1007/978-3-319-25150-9\_14}.

\bibitem{CKV16}
Cristescu I, Krivine J, Varacca D.
\newblock Rigid families for the reversible {\(\pi\)}-calculus.
\newblock In: Devitt SJ, Lanese I (eds.), Reversible Computation, volume 9720
  of \emph{LNCS}. Springer, 2016 pp. 3--19.
\newblock \doi{10.1007/978-3-319-40578-0\_1}.

\bibitem{LMT20}
de' Liguoro U, Melgratti HC, Tuosto E.
\newblock Towards refinable choreographies.
\newblock In: Lange J, Mavridou A, Safina L, Scalas A (eds.), ICE, volume 324
  of \emph{{EPTCS}}. Open Publishing Association, 2020 pp. 61--77.
\newblock \doi{10.4204/EPTCS.324.6}.

\bibitem{RW11}
Rideau S, Winskel G.
\newblock Concurrent strategies.
\newblock In: Grohe M (ed.), LICS. {IEEE} Computer Society, 2011 pp. 409--418.
\newblock \doi{10.1109/LICS.2011.13}.

\bibitem{ST17}
Sakayori K, Tsukada T.
\newblock A truly concurrent game model of the asynchronous $\pi$-calculus.
\newblock In: Esparza J, Murawski AS (eds.), FOSSACS, volume 10203 of
  \emph{LNCS}. 2017 pp. 389--406.
\newblock \doi{10.1007/978-3-662-54458-7\_23}.

\bibitem{CY19}
Castellan S, Yoshida N.
\newblock Causality in linear logic - full completeness and injectivity
  (unit-free multiplicative-additive fragment).
\newblock In: Bojanczyk M, Simpson A (eds.), FOSSACS, volume 11425 of
  \emph{LNCS}. Springer, 2019 pp. 150--168.
\newblock \doi{10.1007/978-3-030-17127-8\_9}.

\bibitem{CG24}
Castellani I, Giannini P.
\newblock Towards a Semantic Characterisation of Global Type Well-formedness.
\newblock In: Costa D, Hu R (eds.), PLACES, volume 401 of \emph{EPTCS}. Open
  Publishing Association, 2024 pp. 11--21.
\newblock \doi{10.4204/EPTCS.401.2}.

\bibitem{TY14}
Tiezzi F, Yoshida N.
\newblock Towards reversible sessions.
\newblock In: Donaldson AF, Vasconcelos VT (eds.), PLACES, volume 155 of
  \emph{EPTCS}. Open Publishing Association, 2014 pp. 17--24.
\newblock \doi{10.4204/EPTCS.155.3}.

\bibitem{TY16}
Tiezzi F, Yoshida N.
\newblock Reversing single sessions.
\newblock In: Devitt SJ, Lanese I (eds.), RC, volume 9720 of \emph{LNCS}.
  Springer, 2016 pp. 52--69.
\newblock \doi{10.1007/978-3-319-40578-0\_4}.

\bibitem{MP17}
Mezzina CA, P{\'{e}}rez JA.
\newblock Causally consistent reversible choreographies: a monitors-as-memories
  approach.
\newblock In: Vanhoof W, Pientka B (eds.), PPDP. {ACM Press}, 2017 pp.
  127--138.
\newblock \doi{10.1145/3131851.3131864}.

\bibitem{MP17a}
Mezzina CA, P{\'{e}}rez JA.
\newblock Reversibility in session-based concurrency: {A} fresh look.
\newblock \emph{Journal of Logic and Algebraic Methods in Programming}, 2017.
\newblock \textbf{90}:2--30.
\newblock \doi{10.1016/j.jlamp.2017.03.003}.

\bibitem{NY17}
Neykova R, Yoshida N.
\newblock Let it recover: multiparty protocol-induced recovery.
\newblock In: Wu P, Hack S (eds.), CC. ACM Press, 2017 pp. 98--108.
\newblock \doi{10.1145/3033019}.

\bibitem{CDG18}
Castellani I, Dezani{-}Ciancaglini M, Giannini P.
\newblock Reversible sessions with flexible choices.
\newblock \emph{Acta Informatica}, 2019.
\newblock \textbf{56}(7):553--583.
\newblock \doi{10.1007/s00236-019-00332-y}.

\bibitem{PU2016}
Phillips IC, Ulidowski I.
\newblock Reversibility and asymmetric conflict in event structures.
\newblock \emph{Journal of Logical and Algebraic Methods in Programming}, 2015.
\newblock \textbf{84}(6):781 -- 805.
\newblock \doi{10.1016/j.jlamp.2015.07.004}.

\bibitem{GPY2017}
Graversen E, Phillips I, Yoshida N.
\newblock Towards a categorical representation of reversible event structures.
\newblock In: Vasconcelos VT, Haller P (eds.), PLACES, volume 246 of
  \emph{EPTCS}. Open Publishing Association, 2017 pp. 49--60.
\newblock \doi{10.4204/EPTCS.246.9}.

\bibitem{GPY2018}
Graversen E, Phillips I, Yoshida N.
\newblock Event structure semantics of (controlled) reversible {CCS}.
\newblock In: Kari J, Ulidowski I (eds.), Reversible Computation, volume 11106
  of \emph{LNCS}. Springer, 2018 pp. 122--102.
\newblock \doi{10.1007/978-3-319-99498-7\_7}.

\bibitem{Gra21}
Graversen E.
\newblock Event structure semantics of reversible process calculi.
\newblock Ph.D. thesis, Imperial College London, 2021.

\bibitem{CDGH19}
Castellani I, Dezani{-}Ciancaglini M, Giannini P, Horne R.
\newblock Global types with internal delegation.
\newblock \emph{Theoretical Computer Science}, 2020.
\newblock \textbf{807}:128--153.
\newblock \doi{10.1016/j.tcs.2019.09.027}.

\end{thebibliography}

\newpage

\appendix
\section{Appendix}

This Appendix contains the proofs of Lemmas~\ref{pb},
{\ref{wftr}},  \ref{keysrA}, \ref{keysrA34}, \ref{srgA}, \ref{prop:prePostNetArel}, \ref{tr}, \ref{prop:prePostGlA}, \ref{prop:prePostGlArel}, \ref{keybeta}, \ref{lemma:subev-bis}, \ref{eiog}, \ref{lemma:Gred-trace-owf}, \ref{kgec}  and the auxiliary Lemmas~\ref{scb}, \ref{nested-filtering} , \ref{prop:relPrePost}.

\begin{lemmaa}{\ref{pb}}{If $\GP$ is bounded, then $\proj\GP\pr$ is a partial function for all $\pr$. }
\end{lemmaa}

\begin{proof}
We redefine the projection   $\downarrow_\pr$    as the
  largest relation between \sgts\ and processes such that $\prR\G\PP\pr$
  implies:
\begin{enumerate}[i)]
\itemsep=0.8pt
\item if $\pr\not\in  \play{\G}$, then $\PP=\inact$;
\item if $\G=\agtO{\pr}{\q}i I{\la}{\G}$, then $\PP= \oupP\q{i}{I}{\M}{\PP_i}$ and $\prR{\G_i}{\PP_i}\pr$ for all $i\in I$;
\item if $\G=\CommAs \pp\la\pr ; {\G'}$, then $\prR{\G'}{\PP}\pr$;
\item if $\G=\agtO{\pp}{\pr}i I{\la}{\G}$ and $|I|>1$, then $\PP= \procActs   \,\inpP\pp{i}{I}{\M}{\PP_i}$ and $\prR{\G_i}{  \procActs   \,\Seq{\rcvL\pp{\M_i}}{\PP_i}}\pr$ for all $i\in I$;
\item if $\G=\agtO{\pp}{\q}i I{\la}{\G}$ and $\pr\not\in\set{\pp,\q}$ and $\pr\in  \play{\G_i}$, then $\prR{\G_i}{\PP}\pr$ for all $i\in I$;
\item if $\G=\agtI \pp\pr \la {\G'}$, then $\PP=\Seq{\rcvL{\pp}{\la}}{\PP'}$ and $\prR{\G'}{\PP'}\pr$;
\item if $\G=\agtI \pp\q \la {\G'}$ and $\pr\not=\q$ and $\pr\in  \play{\G'}$, then $\prR{\G'}{\PP}\pr$.
\end{enumerate}
We define equality $\mathcal E$ of processes to be
   the largest symmetric binary relation   $\RR$
on processes such that
$\iR\PP\Q{  \RR  }$ implies:

\begin{enumerate}[(a)]
\itemsep=0.8pt
\item\label{ca} if $\PP=\oupP\pp{i}{I}{\M}{\PP_i}$ , then $\Q=\oupP\pp{i}{I}{\M}{\Q_i}$ and $\iR{\PP_i}{\Q_i}{  \RR  }$ for all $i\in I$;
\item\label{cb}  if $\PP=\inpP\pp{i}{I}{\M}{\PP_i}$ , then $\Q=\inpP\pp{i}{I}{\M}{\Q_i}$ and $\iR{\PP_i}{\Q_i}{  \RR  }$ for all $i\in I$.
\end{enumerate}
It is then enough to show that the relation
$\RR_\pr  =\set{(\PP,\Q)\mid  \exists\, \G
  \, . \ \prR\G\PP\pr\text { and } \prR\G\Q\pr}$
satisfies Clauses~(\ref{ca}) and~(\ref{cb})   (with $\RR$ replaced by
$\RR_\pr$),   since this will imply $ \RR_\pr \subseteq
\mathcal{E}$.  Note first that $(\inact, \inact) \in \RR_\pr$ because
$(\End, \inact) \in \downarrow_\pr$, and that $(\inact, \inact) \in
\mathcal E$ because Clauses~(\ref{ca}) and~(\ref{cb}) are vacuously
satisfied by the pair   $(\inact, \inact)$,   which must
therefore belong to $\mathcal E$.

\medskip
The proof is by induction on $d=\weight(\G,\pr)$. We only consider
Clause~(\ref{cb}), the proof   for   Clause~(\ref{ca})
being similar and simpler.  So, assume
  $(\PP,\Q)\in \RR_\pr$ and
$\PP=\inpP\pp{i}{I}{\M}{\PP_i}$.

\medskip\noindent
{\it Case $d=1$.}  In this case $\G=\agtI \pp\pr \la {\G'}$
and $\PP=\Seq{\rcvL{\pp}{\la}}{\PP'}$ and $\prR{\G'}{\PP'}\pr$.
From $\prR\G\Q\pr$ we get
$\Q=\Seq{\rcvL{\pp}{\la}}{\Q'}$ and $\prR{\G'}{\Q'}\pr$.
  Hence  $\Q$ has the required form and
   $\iR{\PP'}{\Q'}\RR_\pr$.

\vspace*{1.8mm}\noindent
{\it Case $d>1$.} 
By definition of   $\downarrow_\pr$,
there are five possible subcases.
\begin{enumerate}
\item Case $\G\!=\CommAs \pp\la\pr ; {\G'}$ and $\prR{\G'}\PP\pr$. From
  $\prR\G\Q\pr$ we get $\prR{\G'}\Q\pr$.   Then  \mbox{$\iR{\PP}{\Q}{\mathcal{\RR_\pr }}$}.
\item Case $\G=\agtO{\pp}{\pr}i I{\la}{\G}$ and
$\prR{\G_i}{\Seq{\rcvL\pp{\M_i}}{\PP_i}}\pr$ for all  $i\in I$ and $|I|>1$. From
$\prR\G\Q\pr$ we get $\Q=
\procActs  \,\inpP\pp{i}{I}{\M}{\Q_i}$ and $\prR{\G_i}{
  \procActs  \,\Seq{\rcvL\pp{\M_i}}{\Q_i}}\pr$ for all $i\in I$.\\
  Since $(\Seq{\rcvL\pp{\M_i}}{\PP_i},
\procActs\,\Seq{\rcvL{\pp}{\M_i}}{\Q_i})\in   \RR_\pr $ for
all $i\in I$, by induction Clause ~(\ref{cb}) is satisfied.   Thus
$\procActsS=\ee$ and $\iR{\PP_i}{\Q_i}{  \RR_\pr }$ for all
$i\in I$.
\item Case  $\G=\agtO{\q}{\pr}j J{\la'}{\G}$   with $\q \neq \pp$
     and $\PP=\pp?\la;
\procActs  \,\inpP\q{j}{J}{\M'}{\PP'_j}$ and\\
$\prR{\G_j}{\pp?\la;
  \procActs  \,\Seq{\rcvL\q{\M'_j}}{\PP'_j}}\pr$ for all $j\in J$. From
$\prR\G\Q\pr$ we get $\Q=\procActsPS;\,\inpP\q{j}{J}{\M'}{\Q'_j}$ and
$\prR{\G_j}{\procActsPS;\,\Seq{\rcvL\q{\M'_j}}{\Q'_j}}\pr$ for all $j\in
J$.
  Since $\iR{\pp?\la;
  \procActs\,\Seq{\rcvL\q{\M'_j}}{\PP'_j}\,}{\procActsPS;\,\Seq{\rcvL\q{\M'_j}}{\Q'_j}}{{
    \RR_\pr }}$ for all $j\in J$, by induction Clause ~(\ref{cb}) is
satisfied.\sm
\\
Thus $\procActsPS =\pp?\la;\procActsS$ and $\iR{
  \procActs\,\Seq{\rcvL\q{\M'_j}}{\PP'_j}\,}{\procActs\,\Seq{\rcvL\q{\M'_j}}{\Q'_j}}{{
    \RR_\pr }}$ for all $j\in J$.
\item Case $\G=\agtO{\q}{\ps}j J{\la'}{\G}$ and
    $\pr\not=\ps$
and
$\pr\in  \play{\G_j}$ and $\prR{\G_j}{\PP}\pr$ for $j\in J$. From
$\prR\G\Q\pr$ we get $\prR{\G_j}{\Q}\pr$ for all $j\in J$.   Then   $\iR{\PP}{\Q}{{  \RR_\pr }}$.
\item Case $\G=\agtI \q\ps \la {\G'}$
and $\pr\in
\play{\G'}$. Then $\prR{\G'}{\PP}\pr$. From $\prR\G\Q\pr$ we get
$\prR{\G'}{\Q}\pr$.   Then   $\iR{\PP}{\Q}{{  \RR_\pr }}$.
\end{enumerate}
\end{proof}


\begin{lemmaa}{\ref{wftr}}
If $\G\parN\Msg\stackred{\beta}\G'\parN\Msg'$ is a top transition and $\G\parG\Msg$ is well formed, then $\G'\parG\Msg'$
 is well formed too.
\end{lemmaa}

\begin{proof}
If  the transition is derived using Rule $\rulename{\AsOut}$,
then   $\G=\agtO{\pp}{\q}i I{\la}{\G}$
 and for some $k\in I$ we have   $\G'=\G_k$ and
$\Msg'\equiv\addMsg\Msg{\mq\pp{\la_k}\q}$.
We show that $\G_k\parN\addMsg\Msg{\mq\pp{\la_k}\q}$ is well
formed. Since $\proj{\G}{\pp}$ is defined for all $\pp$, by definition
of projection also $\proj{\G_k}{\pp}$ is defined for all $\pp$. Since
$\G$ is bounded and $\G_k$ is a subtree of $\G$, also $\G_k$ is
bounded.  Finally, $\derSI{}{\G\parN\Msg}$ implies
$\derSI{}{\G_k\parN\addMsg\Msg{\mq\pp{\la_k}\q}}$ by inversion on Rule
$\rulename{Out}$ of \refToFigure{wfagtA}.

\sm
 If  the transition is derived using Rule $\rulename{\AsIn}$,
 then  $\G=\agtI \pp\q \la {\G'}$ and the proof is similar and simpler.
\end{proof}


\begin{lemmaa}{\ref{keysrA}}{ Let $\G\parN\Msg$ be well formed.

\vspace*{-4mm}
\begin{enumerate}
\item If $\proj\G\pp=\oup\q{i}{I}{\M}{\PP}$, then
  $\G\parN\Msg\stackred{\CommAs\pp{\la_i}\q}\G_i\parN\addMsg\Msg{\mq\pp{\la_i}\q}$
  and $\proj{\G_i}\pp=\PP_i$ for all $i\in I$.
\item
If $\proj\G\q=\inp\pp{i}{I}{\M}{\PP}$
and  $\Msg\equiv\addMsg{\mq\pp{\la}\q}{\Msg'}$   for some $\la$,
then  $I=\set{k}\,$ and $\la=\la_k$  and
$\G\parN\Msg\stackred{\CommAsI\pp{\la_k}\q}\G'\parN\Msg'$
and $\proj{\G'}\q=\PP_k$.
\end{enumerate}
}\end{lemmaa}

\begin{proof}
(\ref{keysrA1})
The proof is by induction on $d=\weight(\G,\pp)$.

\vspace{1.6mm}\noindent
{\it Case $d=1$.}    By definition of projection (see
Figure~\ref{fig:projAsP}), $\proj\G\pp=\oup\q{i}{I}{\M}{\PP}$ implies
$\G = \agtO{\pp}{\q}i I{\la}{\G}$ with $\proj{\G_i}\pp=\PP_i$ for all $i\in I$.
Then by Rule $\rulename{\AsOut}$ we may conclude \linebreak
$\G\parG\Msg\stackred{\CommAs\pp{\la_i}\q}\G_i\parG\addMsg\Msg{\mq\pp{\la_i}\q}$
for all $i\in I$.

\vspace*{1.6mm}\noindent
{\it Case $d>1$.}  In this case either   i)  $\G=\agtO{\pr}{\ps}j
J{\la'}{\G}$ with
$\pr\not=\pp$  or  ii)    $\G=\agtI \pr\ps \la \G$ with
$\ps\not=\pp$.

 \vspace*{1.6mm}\noindent
i)  There are three subcases. \sm
 \\
 If $\ps=\pp$ and $\eh{J}=1$, say $J=\set{1}$, then
$\G=\Seq{\CommAs\pr{\la'_1}\pp}{\G_1}$.  By definition of projection
and by assumption $\proj\G\pp=\proj{\G_1}\pp=\oup\q{i}{I}{\M}{\PP}$.
By  \refToLemma{ddA}(\ref{ddA1})
$\weight(\G,\pp)>\weight(\G_1,\pp)$.  By \refToLemma{wftr}
$\G_1\parG\addMsg\Msg{\mq\pr{\la'_1}\pp}$ is well formed.  Then by
induction

\vspace*{1.6mm}
\centerline{$\G_1\parG\addMsg\Msg{\mq\pr{\la'_1}\pp}
\stackred{\CommAs{\pp}{\la_i}{\q}}
\G'_i\parG\addMsg{\addMsg\Msg{\mq\pr{\la'_1}{\pp}}}{\mq\pp{\la_i}\q}$}

\eject

\noindent and $\proj{\G'_i}\pp= \PP_i$ for all $i\in I$.
Since
$\addMsg{\addMsg\Msg{\mq\pr{\la'_1}\pp}}{\mq\pp{\la_i}\q}\equiv\addMsg{\addMsg\Msg{\mq\pp{\la_i}\q}}{\mq\pr{\la'_1}\pp}$,
by Rule $\rulename{IComm-Out}$ we get
$\G\parG\Msg\stackred{\CommAs\pp{\la_i}\q}\Seq{\CommAs\pr{\la'_1}\pp}{\G'_i}\parG\addMsg\Msg{\mq\pp{\la_i}\q}$
for all $i\in I$. By definition of projection
$\proj{(\Seq{\CommAs\pr{\la'_1}\pp}{\G'_i})}{\pp}=\proj{\G'_i}{\pp}$
and so $\proj{(\Seq{\CommAs\pr{\la'_1}\pp}{\G'_i})}{\pp}=\PP_i$ for all $i\in I$.

\vspace*{1.8mm}\noindent
If $\ps=\pp$ and $\eh{J}>1$, by definition of projection and the
assumption that $\proj\G\pp$ is a choice of output actions on $\q$ we
have that $\proj\G\pp=\sendL\q\la;\PP$ with $\PP= \procActs
\,\inpP\pr{j}{J}{\M'}{\Q_j}$ and $\proj{\G_j}\pp=\sendL\q\la;
\procActs \rcvL\pr{\la'_j};\Q_j$ for all $j\in J$.  By
\refToLemma{ddA}(\ref{ddA1}) $\weight(\G,\pp)>\weight(\G_j,\pp)$ for
all $j\in J$.  By \refToLemma{wftr}
$\G_j\parG\addMsg\Msg{\mq\pr{\la'_j}\ps}$ is well formed.  This
implies 
$\G_j\parG\addMsg\Msg{\mq\pr{\la'_j}\ps}
\stackred{\CommAs{\pp}{\la}{\q}}
\G'_j\parG\addMsg{\addMsg\Msg{\mq\pr{\la'_j}\ps}}{\mq\pp{\la}\q}$ and
$\proj{\G'_j}\pp= \procActs \rcvL\pr{\la'_j};\Q_j$ for all $j\in J$ by
induction.  Since
$\addMsg{\addMsg\Msg{\mq\pr{\la'_j}\ps}}{\mq\pp{\la}\q}\equiv\addMsg{\addMsg\Msg{\mq\pp{\la}\q}}{\mq\pr{\la'_j}\ps}$,
by Rule $\rulename{IComm-Out}$ we get
$\G\parG\Msg\stackred{\CommAs\pp{\la}\q}\agtO{\pr}{\pp}j
J{\la'}{\G'}\parG\addMsg\Msg{\mq\pp{\la}\q}$.
 Lastly
$\proj{(\agtO{\pr}{\pp}j J{\la'}{\G'})}\pp= \procActs
\,\inpP\pr{j}{J}{\M'}{\Q_j}$ since $\proj{\G'_j}\pp= \procActs
\rcvL\pr{\la'_j};\Q_j$.
We may then conclude that $\proj{(\agtO{\pr}{\pp}j J{\la'}{\G'})}\pp=
\PP$.

\vspace*{1.8mm}\noindent
If $\ps\not=\pp$, then by definition of projection
$\proj\G\pp=\proj{\G_j}\pp$ for all $j\in J$.  By
\refToLemma{ddA}(\ref{ddA1}) $\weight(\G,\pp)>\weight(\G_j,\pp)$ for
all $j\in J$.  Then by induction $\G_j\parG\Msg
\stackred{\CommAs{\pp}{\la_i}{\q}} \G_{i,j}\parG\addMsg\Msg{\mq\pp{\la_i}\q}$ and $\proj{\G_{i,j}}\pp=\PP_i$ for all $i\in I$ and all $j\in J$. By Rule $\rulename{IComm-Out}$

\vspace*{1.8mm}
\centerline{$\G\parG\Msg\stackred{\CommAs\pp{\la_i}\q}\agtOP{\pr}{\ps}j J{\la'}{\G_{i,j}}\parG\addMsg\Msg{\mq\pp{\la_i}\q}$}

\vspace*{1.8mm}\noindent
for all $i\in I$. By definition of projection $\proj{(\agtOP{\pr}{\ps}j J{\la'}{\G_{i,j}})}\pp=\proj{\G_{i,j}}\pp=\PP_i$
for all $i\in I$.

\vspace*{1.6mm}\noindent
ii) The proof of this case is similar and simpler than the proof of
 Case  i).  It uses Lemmas~\ref{ddA}(\ref{ddA2}) and~\ref{wftr} and Rule
$\rulename{IComm-In}$, instead of Lemmas~\ref{ddA}(\ref{ddA1})
and~\ref{wftr} and Rule $\rulename{IComm-Out}$. Note that, in order to
apply Rule $\rulename{IComm-In}$, we need
$\Msg\equiv\addMsg{\mq\pr{\la}\ps}\Msg'$. This derives from
 \balancing\ of $\confAs{\agtI \pr\ps \la \G'}{\Msg}$
using Rule $\rulename{In}$ of Figure~\ref{wfagtA}.

\vspace*{1.8mm}
(\ref{keysrA2})
The proof is by induction on $d=\weight(\G,\q)$.

\medskip\noindent
{\it Case $d=1$.}
By definition of projection and the hypothesis
$\proj\G\q=\inp\pp{i}{I}{\M}{\PP}\,$, it must be $\G=
\agtI{\pp}{\q}{\la}{\G'}$ and $\eh I= 1$, say $I=\set k$, and $\la =
\la_k$ and $\proj{\G'}{\q}= \PP_k$.
Then by Rule $\rulename{\AsIn}$ we deduce $\G\parG\addMsg{\mq\pp{\la_k}\q}\Msg'\stackred{\CommAsI\pp{\la_k}\q}\G'\parG\Msg'$.

\vspace*{1.6mm}\noindent
 $\G=\agtO{\pr}{\ps}j J{\la'}{\G}$ with  $\pr\not=\q$
or  ii) $\G=\agtI \pr\ps \la
 \G'$ with  $\ps\not=\q$.  \vspace*{1.8mm}
 \\
 i) 
There  are two subcases, depending
on whether $\ps = \q$ or $\ps \neq \q$.  The  most
interesting case is  the first one, namely
$\G=\agtO{\pr}{\q}j J{\la'}{\G}$. By definition of projection $\proj{\G}{\q}= \procActs
\,\inpP\pr{j}{J}{\M'}{\Q_j}$, where $\proj{\G_j}\q= \procActs
\,\Seq{\rcvL\pr{\M'_j}}{\Q_j}$.  By assumption
$\proj\G\q=\inp\pp{i}{I}{\M}{\PP}$, thus it must be either $\procActsS=\ee$
or $\eh I= 1$, say $I=\set k$, and $\procActsS=\pp?\la_k;\procActsPS$.

\vspace*{1.8mm}\noindent
 If $\procActsS=\ee$,  we have that $\pr=\pp$ and $J=I$ and
$\la'_i=\la_i$ and $\Q_i=\PP_i$ for all $i\in I$. This means
that $\G=\agtO{\pp}{\q}i I{\la}{\G}$ and $\proj{\G_i}\q= \Seq{\rcvL\pp{\la_i}}{\PP_i}$.
Let $\Msg_i  \equiv \addMsg{\mq\pp{\la}\q}{\Msg'_i}$, where
$\Msg'_i = \addMsg {\Msg'}{\mq\pp{\la_i}\q}$. By \refToLemma{wftr}
$\G_i\parG\Msg_i$
is well formed for all $i\in
I$. By  \refToLemma{ddA}(\ref{ddA1})
$\weight(\G,\q)>\weight(\G_i,\q)$ for all $i\in I$.
By induction hypothesis,
$\G_i\parN\Msg_i\stackred{\CommAsI\pp{\la}\q}\G_{i}'\parN \Msg'_i$ and
$\la = \la_i$ and
$\proj{\G'_{i}}\q=\PP_{i}$ for all $i\in I$.
This implies that  $\eh I=1$, say $I=\set k$.    
Then $\G=\Seq{\CommAs{\pp}{\la}{\q}}{\G_k}$
and
by Rule $\rulename{IComm-Out}$ we deduce
$\G\parN\Msg\stackred{\CommAsI\pp{\la}\q}\G'\parN{\Msg'}$, where $\G'
= \Seq{\CommAs{\pp}{\la}{\q}}{\G'_k}$.
Whence by definition of projection  $\proj{\G}\q=\proj{\G_k}\q=\Seq{\rcvL{\pp}{\la_k}}{\PP_k}$ and 
 $\proj{\G'}\q=\proj{\G_k'}\q=\PP_k$.

 \eject

 \noindent If $\procActsS=\pp?\la_k;\procActsPS$, then
$\proj{\G}{\q}= \Seq{\rcvL{\pp}{\la_k}}{\PP_k}$,
where $\PP_k = \procActsP \,\inpP\pr{j}{J}{\M'}{\Q_j}$.  Let $\Msg_j
\equiv\addMsg{\mq\pp{\la}\q}{\Msg'_j}$, where $\Msg'_j = \addMsg
{\Msg'}{\mq\pr{\la'_j}\q}$.  For all $j\in J$, $\G_j\parG\Msg_j$ is
well formed by \refToLemma{wftr} and $\weight(\G,\q)>\weight(\G_j,\q)$
by \refToLemma{ddA}(\ref{ddA1}).  By induction hypothesis we get $\la=
\la_k$ and
$\G_j\parN\Msg_j\stackred{\CommAsI\pp{\la}\q}\G_j'\parN\Msg'_j$ for
all $j\in J$. Let $\G'=\agtO{\pr}{\q}j J{\la'}{\G'}$.  Then
$\G\parN\Msg\stackred{\CommAsI\pp{\la}\q}\G'\parN{\Msg'}$ by Rule
$\rulename{IComm-Out}$ and
$\proj{\G'}\q=\procActsPS;\,\inpP\pr{j}{J}{\M'}{\Q_j} = P_k$.

\vspace*{1.8mm}\noindent
ii) The proof of this case is similar and simpler than the proof of
 Case  i).  It uses Lemmas~\ref{ddA}(\ref{ddA2}) and~\ref{wftr} and Rule
$\rulename{IComm-In}$, instead of Lemmas~\ref{ddA}(\ref{ddA1})
and~\ref{wftr} and Rule $\rulename{IComm-Out}$. Note that, in order to
apply Rule $\rulename{IComm-In}$, we need
$\Msg\equiv\addMsg{\mq\pr{\la}\ps}\Msg'$. This derives from
 \balancing\ of $\confAs{\agtI \pr\ps \la \G'}{\Msg}$
using Rule $\rulename{In}$ of Figure~\ref{wfagtA}.
\end{proof}

\begin{lemmaa}{\ref{keysrA34}}{Let $\G\parN\Msg$ be well formed.
\begin{enumerate}
\item If $\G\parN\Msg\stackred{\CommAs\pp{\la}\q}\G'\parN\Msg'$,  then
  $\Msg'\equiv\addMsg{\Msg}{\mq\pp\la\q}$ and
  $\proj\G\pp=\oup\q{i}{I}{\M}{\PP}$
 and $\M=\M_k$
  and $\proj{\G'}\pp=\PP_k$ for some $k\in I$
  and  $\proj{\G}\pr\subt\proj{\G'}\pr$ for all   $\pr\not=\pp$.
\item If $\G\parN\Msg\stackred{\CommAsI\pp{\la}\q}\G'\parN\Msg'$,  then  $\Msg\equiv\addMsg{\mq\pp{\la}\q}\Msg'$ and $\proj\G\q=\agtI{\pp}{\q}{\la}{\proj{\G'}\q}$
  and  $\proj\G\pr\subt\proj{\G'}\pr$ for all   $\pr\not=\q$.
  \end{enumerate}}
  \end{lemmaa}

\begin{proof}
  (\ref{keysrA3}) By induction on the inference of the transition
$\G\parG\Msg  \stackred{\CommAs{\pp}{\la}{\q}}\G'\parG\Msg'$.

\medskip\noindent
{\it Base Case.} The applied rule must be Rule $\rulename{\AsOut}$,  so $\G=  \agtO{\pp}{\q}i I{\la}{\G}$   and $\la=\la_k$ and $\G'=\G_k$  for some $k\in I$, and \vspace*{1.8mm}

\centerline{$ \agtO{\pp}{\q}i I{\la}{\G}\parG\Msg \stackred{\CommAs\pp{\la_k}\q}\G_k\parG\addMsg\Msg{\mq\pp{\la_k}\q}$}

\vspace*{2mm}\noindent
By definition of projection  $\proj{\G}\pp=\oupp\q{i}{I}{\M}{\proj{{\G_i}}\pp}$  and $\proj{\G'}\pp=\proj{\G_k}\pp$.
Again
by definition of projection, if  $\pr\not\in\{\pp,\q\}$ or $\pr=\q$ and $\eh{I}=1$,
we have  $\proj{\G}\pr=\proj{\G_1}\pr$  and so $\proj{\G}\pr=\proj{\G'}\pr$.
If $\pr=\q$ and $\eh{I}>1$, then $\proj{\G}{\q}= \procActs
\,\inpP\pp{i}{I}{\M}{\Q_i}$, where $\proj{\G_i}\q= \procActs
\,\Seq{\rcvL\pp{\M_i}}{\Q_i}$  for all $i\in I$  and so $\proj{\G}{\q}\subt\proj{\G_k}{\q}$.

\vspace*{1.8mm}\noindent
{\it Inductive Cases.} If the applied rule is $\rulename{IComm-Out}$, then $\G= \agtO{\ps}{\pt}j J{\la'}{\G} $ and $\G'= \agtO{\ps}{\pt}j J{\la'}{\G'} $
and \vspace*{1.8mm}

\centerline{$\prooftree
 \G_j\parN\addMsg\Msg{\mq\ps{\la'_j}\pt}\stackred{\CommAs\pp{\la}\q}\G_j' \parN\addMsg{\Msg'}{\mq\ps{\la'_j}\pt}\quad
j \in J \quad\pp\not=\ps
 \justifies
 \agtO{\ps}{\pt}j J{\la'}{\G}\parN\Msg\stackred{\CommAs\pp{\la}\q}\agtO{\ps}{\pt}j J{\la'}{\G'}\parN\Msg'
 \endprooftree$}

 \vspace*{1.6mm}\noindent
  By \refToLemma{wftr}
$\G_j\parG\addMsg\Msg{\mq\ps{\la'_j}\pt}$ is well formed.  By
induction hypothesis $\addMsg{\Msg'}{\mq\ps{\la'_j}\pt} \equiv
\addMsg{\addMsg\Msg{\mq\ps{\la'_j}\pt}}{\mq\pp{\la}\q}$, which implies
$\Msg' \equiv\addMsg{ \Msg}{\mq\pp{\la}\q}$.  If $\pp\not=\pt$, by
definition of projection $\proj{\G}\pp=\proj{\G_1}\pp$ and
$\proj{\G_j}\pp=\proj{\G_1}\pp$ for all $j\in J$.  Similarly
$\proj{\G'}\pp=\proj{\G'_1}\pp$ and $\proj{\G'_j}\pp=\proj{\G'_1}\pp$
for all $j\in J$. By induction hypothesis
$\proj{\G_1}\pp=\oup\q{i}{I}{\M}{\PP}$ and $\M=\M_k$ and
$\proj{\G'_1}\pp=\PP_k$ for some $k\in I$. This implies
$\proj{\G}\pp=\oup\q{i}{I}{\M}{\PP}$ and $\proj{\G'}\pp=\PP_k$.

\vspace*{1.6mm}\noindent
If $\pp=\pt$ and $\eh{J}=1$ the proof is as in the previous case by definition of projection.

\vspace*{1.6mm}\noindent
If $\pp=\pt$ and $\eh{J}>1$, then the definition of projection gives
$\proj{\G}\pp= \procActs \,\inpP\ps{j}{J}{\M'}{\Q_j} $ and
$\proj{\G_j}\pp= \procActs \,\Seq{\rcvL\ps{\M'_j}}{\Q_j}$ and
$\proj{\G'}\pp= \procActsP \,\inpP\ps{j}{J}{\M'}{\Q'_j} $ and
$\proj{\G'_j}\pp=
\procActsP \,\Seq{\rcvL\ps{\M'_j}}{\Q'_j}$ for all $j\in J$. By induction hypothesis $\procActsS=\Seq{\sendL\q\la}\procActsPS$, which implies $\proj{\G}\pp=\Seq{\sendL\q\la}{\proj{\G'}\pp}$.

\eject

\noindent For $\pr\not\in\set{\pp,\ps,\pt}$ by definition of projection
$\proj{\G}\pr=\proj{\G_1}\pr$ and $\proj{\G_j}\pr=\proj{\G_1}\pr$ for
all $j\in J$. Similarly
$\proj{\G'}\pr=\proj{\G'_1}\pr$ and $\proj{\G'_j}\pr=\proj{\G'_1}\pr$ for all $j\in J$. By induction hypothesis \mbox{$\proj{\G_1}\pr\leq\proj{\G'_1}\pr$,}  which implies $\proj{\G}\pr\leq\proj{\G'}\pr$.

\vspace*{1.6mm}
For participant $\ps$ we have $ \proj{\G}\ps=
\oupp\pt{j}{J}{\M'}{\proj{\G_j}\ps}\leq\oupp\pt{j}{J}{\M'}{\proj{\G'_j}\ps}=\proj{\G'}\ps$.

\vspace*{1.6mm}
For participant $\pt\not=\pp$ if $\eh{J}=1$ 
the proof is the same as for $\pr\not\in\set{\pp,\ps,\pt}$.  If
$\eh{J}>1$, then we have $\proj{\G}\pt= \procActs
\,\inpP\ps{j}{J}{\M'}{\R_j} $, where $\proj{\G_j}\pt= \procActs
\,\Seq{\rcvL\ps{\M'_j}}{\R_j}$ and $\proj{\G'}\pt= \procActsP
\,\inpP\ps{j}{J}{\M'}{\R'_j} $, where $\proj{\G'_j}\pt= \procActsP
\,\Seq{\rcvL\ps{\M'_j}}{\R'_j}$. From
$\proj{\G_j}\pt\subt\proj{\G'_j}\pt$ for all $j\in J$ we get
$\procActsPS =\procActsS$ and $\R_j\subt\R'_j$ for all $j\in
J$.  This implies $\proj{\G}\pt\subt\proj{\G'}\pt$.

\vspace*{1.6mm}
If the applied rule is $\rulename{IComm-In}$ the proof is similar and
simpler.

\vspace*{1.8mm}
(\ref{keysrA4}) The proof is similar to the proof of (\ref{keysrA3}).
The most interesting case is the application of Rule $\rulename{IComm-Out}$

\centerline{$\prooftree
 \G_j\parN\addMsg\Msg{\mq\ps{\la'_j}\pt}\stackred{\CommAsI\pp{\la}\q}\G_j' \parN\addMsg{\Msg'}{\mq\ps{\la'_j}\pt}\quad
j \in J \quad\q\not=\ps
 \justifies
 \agtO{\ps}{\pt}j J{\la'}{\G}\parN\Msg\stackred{\CommAsI\pp{\la}\q}\agtO{\ps}{\pt}j J{\la'}{\G'}\parN\Msg'
 \endprooftree$}

 \vspace*{2mm}\noindent
 By \refToLemma{wftr}
$\G_j\parG\addMsg\Msg{\mq\ps{\la'_j}\pt}$ is well formed.  By
induction hypothesis $\addMsg{\Msg}{\mq\ps{\la'_j}\pt} \equiv
\addMsg{\mq\pp{\la}\q}{\addMsg{\Msg'}{\mq\ps{\la'_j}\pt}}$, which
implies $\Msg \equiv\addMsg{\mq\pp{\la}\q}{\Msg'}$.  If $\q\not=\pt$,
by definition of projection $\proj{\G}\q=\proj{\G_1}\q$ and
$\proj{\G_j}\q=\proj{\G_1}\q$ for all $j\in J$. Similarly
$\proj{\G'}\q=\proj{\G'_1}\q$ and $\proj{\G'_j}\q=\proj{\G'_1}\q$ for all $j\in J$. By induction hypothesis $\proj{\G_1}\q=\agtI{\pp}{\q}{\la}{\proj{\G'_1}\q}$. This implies $\proj{\G}\q=\agtI{\pp}{\q}{\la}{\proj{\G'}\pp}$.

\vspace*{1.8mm}\noindent
If $\q=\pt$ and $\eh{J}=1$ the proof is as in the previous case by definition of projection.

\vspace*{1.8mm}\noindent
If $\q=\pt$ and $\eh{J}>1$, then the definition of projection gives
$\proj{\G}\q= \procActs \,\inpP\ps{j}{J}{\M'}{\Q_j} $ and
$\proj{\G_j}\q= \procActs \,\Seq{\rcvL\ps{\M'_j}}{\Q_j}$ and
$\proj{\G'}\q= \procActsP \,\inpP\ps{j}{J}{\M'}{\Q'_j} $ and
$\proj{\G'_j}\q=
\procActsP \,\Seq{\rcvL\ps{\M'_j}}{\Q'_j}$ for all $j\in J$. By induction hypothesis $\procActsS=\Seq{\rcvL\pp\la}\procActsPS$, which implies $\proj{\G}\q=\Seq{\rcvL\pp\q\la}{\proj{\G'}\pp}$.

\medskip
The proof of $\proj\G\pr\subt\proj{\G'}\pr$ for all $\pr\not=\q$ is as
in  Case  (\ref{keysrA3}).
\end{proof}

\begin{lemmaa}{\ref{srgA}}
If  $\derPI{}{\confAs{\G}{\Msg}}$ and $\G\parG\Msg\stackred{\beta}\G'\parG\Msg'$, then $\derPI{}{\confAs{\G'}{\Msg'}}$.
 \end{lemmaa}

 \begin{proof}
  By induction on the inference of the transition  $\G\parG\Msg\stackred{\beta}\G'\parG\Msg'$ of Figure \ref{ltsgtAs}.

  \vspace*{1.8mm}\noindent
 {\it Base Cases.} Immediate from  \refToLemma{wftr}.

\vspace*{1.8mm}\noindent
 {\it Inductive Cases.} Let  $\G\parG\Msg\stackred{\beta}\G'\parG\Msg'$ with Rule $\rulename{IComm-Out}$.
Then we get
$\G= \agtO{\pp}{\q}i I{\la}{\G}$ and $\G'=\agtO{\pp}{\q}i I{\la}{\G'}$ and
 $\G_i\parG\addMsg{\Msg}{\mq\pp{\la_i}\q}\stackred\asCom\G_i' \parG\addMsg{\Msg'}{\mq\pp{\la_i}\q}$ for all
 $i \in I$.
From Rule $\rulename{Out}$
of \refToFigure{wfagtA}, we get $\derSI{}{\G_i\parN\addMsg\Msg{\mq\pp{\la_i}\q}}$ for all
 $i \in I$.
By induction hypotheses for all $i\in I$ we can derive $\derSI{}{\G'_i\parN\addMsg\Msg{\mq\pp{\la_i}\q}}$.
Therefore using Rule $\rulename{Out}$ we conclude $\derSI{}{\G'\parN\Msg'}$.

Similarly for Rule $\rulename{IComm-In}$.
 \end{proof}

\begin{lemmaa}{\ref{prop:prePostNetArel}}{\begin{enumerate}
\item  If  $\netevA\precN \netevA'$ and $\postA{\netevA}{\beta}$ and
  $\postA{\netevA'}{\beta}$ and $\mapBl{\beta}\os$ are defined,\\ then
  $\postA{\netevA}{\beta}\precNL{\mapBl{\beta}\os} \postA{\netevA'}{\beta}$.
\item   If  $\netevA\precNL{\os} \netevA'$ and $\mapWh{\beta}\os$ is defined,
then $\preA{\netevA}{\beta}\precNL{\mapWh{\beta}\os} \preA{\netevA'}{\beta}$.
\item  If  $\netevA\grr
  \netevA'$ and both $\postA{\netevA}{\beta}$ and $\postA{\netevA'}{\beta}$
  are defined, then $\postA{\netevA}{\beta}\grr\postA{\netevA'}{\beta}$.
\item   If  $\netevA\grr \netevA'$,  then $\preA{\netevA}{\beta}\grr
  \preA{\netevA'}{\beta}$.
 \end{enumerate}}
 \end{lemmaa}

\begin{proof}   (\ref{ppn2A})
If $\netevA\precNL{\os} \netevA'$, then 

\vspace*{-3mm}
\begin{itemize}
\itemsep=0.9pt
\item either $\netevA=\locevA{\pp}\os{\procev}$ and
  $\netevA'=\locevA{\pp}{\os}{\procev'}$ and $\procev < \procev'$, 
\item or $\netevA=\locevA\pp\os{\concat\actseq{\sendL\q\la}}$ and
  $\netevA'= \locevA\q{\os}{\concat{\actseq'}{\rcvL\pp\la}}$ and
  $\dualprecsim{\projs{(\concat{\pro\os\pp}{\actseq})}{\q}}{\projs{(\concat{\pro\os\q}{\actseq''})}{\pp}}$ \sm
  \\
  for some $\actseq'' $ and $\preEv$ such that $
  \projs{(\concat{\actseq'}{\rcvL\pp\la})} \pp\precsim
  \projs{(\concat{\concat{\actseq''}{\rcvL\pp\la}}{\preEv})}\pp $.
\end{itemize}
 In the first case, from  the fact that
$\postA{\netevA}{\beta}$ and $\postA{\netevA'}{\beta}$  are
defined and \refToDef{def:PostPre}(\ref{def:Post}) we get
$\postA{\netevA}{\beta}= \locevA\pp{}{\procev_1}$ and
$\postA{\netevA'}{\beta}= \locevA\pp{}{\procev'_1}$, where
$\procev=\concat{\projAP\beta{\pp}}{\procev_1}$ and
$\procev'=\concat{\projAP\beta{\pp}}{\procev'_1}$.  Since
$\procev_1< \procev'_1$ we conclude
$\postA{\netevA}{\beta}\precNL{\mapBl{\beta}\os} \postA{\netevA'}{\beta}$.

\medskip\noindent
In the second case, let $\os'=\mapBl{\beta}\os$.

\smallskip
If $\play{\beta}\not\subset\set{\pp,\q}$, then
$\projAP\beta{\pp}=\projAP\beta{\q}=\epsilon$ and
$\postA{\netevA}{\beta}=\netevA$ and $\postA{\netevA'}{\beta}=
\netevA'$. Moreover, by \refToDef{def:pQueue}(\ref{def:pQueue2})
$\projs{(\projAP{\os'}{\pp})}\q=\projs{(\projAP{\os}{\pp})}\q$ and
$\projs{(\projAP{\os'}{\q})}\pp=\projs{(\projAP{\os}{\q})}\pp$. Therefore
$\dualprecsim{\projs{(\concat{\pro{\os}\pp}{\actseq})}{\q}}{\projs{(\concat{\pro{\os}\q}{\actseq''})}{\pp}}$
implies
$\dualprecsim{\projs{(\concat{\pro{\os'}\pp}{\actseq})}{\q}}{\projs{(\concat{\pro{\os'}\q}{\actseq''})}{\pp}}$
which proves that $\postA{\netevA}{\beta}\precNL{\os'}
\postA{\netevA'}{\beta}$.

\smallskip
If $\play{\beta}=\set\pp$,  then either
$\beta=\CommAs\pp{\la'}\pr$ or $\beta=\CommAsI\pr{\la'}\pp$.\\
If $\beta=\CommAs\pp{\la'}{\pr}$, then
$\projAP\beta{\pp}=\sendL{\pr}{\la'}$ and $\projAP\beta{\q}=\epsilon$.
By \refToDef{def:PostPre}(\ref{def:Post}), since
$\postA{\netevA}{\beta}$ is defined we have
$\actseq=\concat{\sendL{\pr}{\la'}}{\actseq_1}$. Then
$\postA{\netevA}{\beta}=\locevA\pp{}{\concat{\actseq_1}{\sendL{\q}\la}}$
and $\postA{\netevA'}{\beta}= \netevA'$.  Moreover, by
\refToDef{def:pQueue}(\ref{def:pQueue2})
$\pro{\os'}\pp=\concat{(\pro{\os}\pp)}{\sendL{\pr}{\la'}}$ and
$\pro{\os'}\q=\pro{\os}\q$. Therefore
$\concat{\pro\os\pp}{\actseq =
  \concat{\pro\os\pp}{\concat{\sendL{\pr}{\la'}}{\actseq_1}}=
  \pro{\os'}\pp\cdot\actseq_1}$  and
$\concat{\pro\os\q}{\actseq''} =
\concat{\pro{\os'}\q}{\actseq''}$.  Then,  from the fact
that
$\dualprecsim{\projs{(\concat{\pro\os\pp}{\actseq})}{\q}}{\projs{(\concat{\pro\os\q}{\actseq''})}{\pp}}$
it follows that $\dualprecsim{
  \projs{(\pro{\os'}\pp\cdot\actseq_1)}{\q} }{
  \projs{(\concat{\pro{\os'}\q}{\actseq''})}{\pp}}$.

\smallskip
If  $\beta=\CommAsI\pr{\la'}\pp$,  then
$\projAP\beta{\pp}=\rcvL{\pr}{\la'}$ and $\projAP\beta{\q}=\epsilon$.
By \refToDef{def:PostPre}(\ref{def:Post}),  since   $\postA{\netevA}{\beta}$
 is  defined  we have   $\actseq=\concat{\rcvL\pr{\la'}}{\actseq_1}$. Then
$\postA{\netevA}{\beta}=\locevA\pp{}{\concat{\actseq_1}{\sendL\q\la}}$
and $\postA{\netevA'}{\beta}= \netevA'$.
We now distinguish two subcases,
according to whether $\pr = \q$ or $\pr \neq \q$.

\vspace*{1.8mm}
 If $\pr = \q$,  then by
\refToDef{def:pQueue}(\ref{def:pQueue2}) $\pro\os\pp=\pro{\os'}\pp$
and $\pro\os\q=\concat{\sendL{\pp}{\la'}}{(\pro{\os'}\q)}$.
Therefore \linebreak  we get
$\concat{\pro\os\pp}{\actseq}=
\concat{(\pro{\os'}\pp)}{\rcvL\q{\la'}\cdot\actseq_1}$
and
$\concat{\pro\os\q}{\actseq''}=\sendL{\pp}{\la'}\cdot\concat{\pro{\os'}\q}{\actseq''}$.
 Then, from the fact that
$\dualprecsim{\projs{(\concat{\pro\os\pp}{\actseq})}{\q}}{\projs{(\concat{\pro\os\q}{\actseq''})}{\pp}}$
and
$\projs{(\pro{\os'}\pp)}{\q}$ cannot contain inputs,
 it follows that \linebreak 
$\dualprecsim{
  \concat{\projs{(\pro{\os'}\pp)}{\q}}{\projs{\actseq_1}{\q}} }{
  \projs{(\concat{\pro{\os'}\q}{\actseq''})}{\pp} }$.

\vspace*{1.8mm}
 If $\pr \neq \q$, then by
\refToDef{def:pQueue}(\ref{def:pQueue2}) $\pro\os\pp=\pro{\os'}\pp$
and $\pro\os\q=\pro{\os'}\q$.  In this case we get
$\projs{(\concat{\pro\os\pp}{\actseq})}{\q}=
\projs{(\concat{\pro{\os'}\pp}{\concat{\rcvL{\pr}{\la'}}{\actseq_1}})}{\q}
= \projs{(\concat{\pro{\os'}\pp}{\actseq_1})}{\q}$ and
 $\concat{\pro\os\q}{\actseq''}=\concat{\pro{\os'}\q}{\actseq''}$.
Then,  from $\dualprecsim{\projs{(\concat{\pro\os\pp}{\actseq})}{\q}}{\projs{(\concat{\pro\os\q}{\actseq''})}
  {\pp}}$ it follows that $\dualprecsim{ \projs{(\pro{\os'}\pp\cdot\actseq_1)}{\q} }{
  \projs{(\concat{\pro{\os'}\q}{\actseq''})}{\pp}}$.

 \medskip
If $\play{\beta}=\set\q$ the proof is similar.

\vspace*{1.8mm}
(\ref{ppn2bA}) The proof is similar to that of Fact   (\ref{ppn2A}).

\vspace*{1.8mm}
(\ref{ppn3A})
Let $\netevA=\locevA{\pp}\os{\procev}$ and $\netevA'=\locevA{\pp}{\os}{\procev'}$ and $\procev\grr\procev'$.
From $\postA{\netevA}{\beta}$ and
  $\postA{\netevA'}{\beta}$ defined we get $\procev=\concat{\projAP\beta{\pp}}{\procev_1}$ and $\procev'=\concat{\projAP\beta{\pp}}{\procev'_1}$ and $\postA{\netevA}{\beta}=
 \locevA\pp{}{\procev_1}$ and
 $\postA{\netevA'}{\beta}=
 \locevA\pp{}{\procev'_1}$ by \refToDef{def:PostPre}(\ref{def:Post}).

 \medskip\noindent
Since $\procev\grr\procev'$ implies $\procev_1\grr\procev'_1$
 we conclude $\postA{\netevA}{\beta}\grr \postA{\netevA'}{\beta}$.
\end{proof}


In the following we use the notation $\overline{\CommAsI\pp\la\q}$ defined by
$\overline{\CommAsI\pp\la\q}=\CommAs\pp\la\q$.

\begin{lemmaa}{\ref{tr}}{Let $\play{\beta_1}\cap\play{\beta_2}=\emptyset$.
\begin{enumerate}
\item
If both
  $\mapBl{\beta_2}{\os}$ and $\mapBl{\beta_2}{({\mapWh{\beta_1}{\os}})}$ are defined,
then
$\mapWh{\beta_1}{{(\mapBl{\beta_2}{\os})}}\cong
 \mapBl{\beta_2}{({\mapWh{\beta_1}{\os}})}$.
 \item  If both $\mapWh{\beta_1}{\os}$ and
  $\mapWh{\beta_2}{\os}$ are defined,
then
$\mapWh{\beta_1}{{(\mapWh{\beta_2}{\os})}}$ is defined and
$\mapWh{\beta_1}{{(\mapWh{\beta_2}{\os})}}\cong
 \mapWh{\beta_2}{({\mapWh{\beta_1}{\os}})}$.
\end{enumerate}}
\end{lemmaa}

\begin{proof}
 (\ref{tr2}) Since
  $\os_2 = \mapBl{\beta_2}{\os}$ is defined, by \refToDef{def:pQueue}(\ref{def:pQueue2})
$\os\cong\overline{\beta_2}\cdot \os_2$ when
$\beta_2$ is an input. Since
$\mapBl{\beta_2}{({\mapWh{\beta_1}{\os}})}$ is defined,
$\os_1 = \mapWh{\beta_1}{\os}$ is defined and
by \refToDef{def:pQueue} $\os\cong
\os_1\cdot\beta_1$ when $\beta_1$ is an output and
$\os\cong\overline{\beta_2}\cdot \os_0\cdot\beta_1$ for some $\os_0$
 such that $\os_1 \cong \concat{\overline{\beta_2}}{\os_0}$ and
$\os_2 \cong \concat {\os_0}{\beta_1}$,
when $\beta_1$ is an output and $\beta_2$ is an input. Using \refToDef{def:pQueue} we compute:

\vspace*{3mm}
\centerline{$
\mapWh{\beta_1}{{(\mapBl{\beta_2}{\os})}}  \cong
\mapBl{\beta_2}{({\mapWh{\beta_1}{\os}})}  \cong \begin{cases}
   \os_1\cdot\beta_2   & \text{if both $\beta_1$ and $\beta_2$ are outputs}\\
   \os_0  & \text{if $\beta_1$ is an output and $\beta_2$ is an input}\\
    \overline{\beta_1}\cdot\os\cdot\beta_2 & \text{if $\beta_1$ is an input and $\beta_2$ is an output}\\
 \overline{\beta_1}\cdot\os_2   & \text{if both $\beta_1$ and $\beta_2$ are inputs}
\end{cases}
$}

\medskip (\ref{tr1})  Since
  $\os_i = \mapWh{\beta_i}{\os}$ is defined for $i\in\set{1,2}$, by
  \refToDef{def:pQueue}(\ref{def:pQueue1})  $\os\cong
  \os_i\cdot\beta_i$ when $\beta_i$ is an output. Then from
$\play{\beta_1}\cap\play{\beta_2}=\emptyset$ we get
$\os\cong \os'\cdot\beta_1\cdot\beta_2 \cong
\os'\cdot\beta_2\cdot\beta_1$ for some $\os'$ when both
$\beta_1$ and $\beta_2$ are outputs.
Using \refToDef{def:pQueue}(\ref{def:pQueue1}) we compute:

\vspace{2mm}
\centerline{$ \mapWh{\beta_1}{{(\mapWh{\beta_2}{\os})}}
  \cong  \mapWh{\beta_2}{{(\mapWh{\beta_1}{\os})}}
  \cong  \begin{cases}
    \os'   & \text{if both $\beta_1$ and $\beta_2$ are outputs}\\
    \overline{\beta_i}\cdot\os_j  & \text{if $\beta_i$ is an input and $\beta_j$ is an output}\\
    \overline{\beta_1}\cdot\overline{\beta_2}\cdot\os & \text{if both
      $\beta_1$ and $\beta_2$ are inputs}
\end{cases}
$}\end{proof}


\begin{lemma}\label{scb}
\begin{enumerate}
\item\label{scb5}\label{scb3}\label{scb4} If $\postGA{\eqA{\os}{\filtP{\beta}{{\os}}\comseqA}}{\beta}$ is defined, then $\os\cdot\beta\cdot\comseqA$ is well formed and\sm

 \centerline{
$\postGA{\eqA{\os}{\filtP{\beta}{{\os}}\comseqA}}{\beta}=\eqA{\mapBl{\beta}{\os}}{\comseqA}$}\sm

\item\label{scb0}\label{scb1}\label{scb2}If $ \cauA{\beta}{\eqA{\os}\comseqA}$ is defined, then $(\mapWh{\beta}{\os})\cdot\beta\cdot\comseqA$ is well formed and\sm

\centerline {$\cauA{\beta}{\eqA{\os}\comseqA}=\eqA{\mapWh{\beta}{\os}}{\filtP{\beta}{(\mapWh{\beta}{\os})}{\comseqA}}$}
\end{enumerate}
\end{lemma}

\begin{lemmaa}{\ref{prop:prePostGlA}}
{\begin{enumerate}
\item If  $\postGA{\comoccA}{\beta}$ is defined,
then
$\preGA{(\postGA{\comoccA}{\beta})}{\beta}=\comoccA$.
\item    If  $\preGA{\comoccA}{\beta}$ is defined,
then
$\postGA{(\preGA{\comoccA}{\beta})}{\beta}=\comoccA$.
\item If both
  $\postGA{\comoccA}{\beta_2}$,
  $\postGA{(\preGA{\comoccA}{\beta_1})}{\beta_2}$ are defined, and
  $\play{\beta_1}\cap\play{\beta_2}=\emptyset$, then
$\preGA{(\postGA{\comoccA}{\beta_2})}{\beta_1}=\postGA{(\preGA{\comoccA}{\beta_1})}{\beta_2}$.
\item  If both $\preGA{\comoccA}{\beta_1}$,  $\preGA{\comoccA}{\beta_2}$ are defined,  and  $\play{\beta_1}\cap\play{\beta_2}=\emptyset$,  then $\preGA{(\preGA{\comoccA}{\beta_2})}{\beta_1}$ is defined and
$\preGA{(\preGA{\comoccA}{\beta_2})}{\beta_1}=\preGA{(\preGA{\comoccA}{\beta_1})}{\beta_2}$.
\end{enumerate}}
\end{lemmaa}

\begin{proof}
  Statements (\ref{ppg1}) and (\ref{ppg0a}) immediately follow from
  \refToLemma{scb}.  In the proofs of the remaining statements we
  convene that ``$\beta$ is required in
  $\comseqA_1\cdot\beta\cdot\comseqA_2$'' is short for ``the shown
  occurrence of $\beta$ is required in
  $\comseqA_1\cdot\beta\cdot\comseqA_2$'' and similarly for ``$\beta$
  matches an  output  in $\comseqA_1\cdot\beta\cdot\comseqA_2$''.

  \medskip (\ref{ppg5})
 Let $\comoccA=\eqA{\os}{\comseqA}$.  Since both
 $\postGA{\comoccA}{\beta_2}$ and
 $\postGA{(\preGA{\comoccA}{\beta_1})}{\beta_2}$ are defined, by
 \refToLemma{scb} both
 $\mapBl{\beta_2}{\os}$ and
 $\mapBl{\beta_2}{({\mapWh{\beta_1}{\os}})}$ must be defined.  Then, by \refToLemma{tr}(\ref{tr2})
 $\mapBl{\beta_2}{{(\mapWh{\beta_1}{\os})}}\cong
 \mapWh{\beta_1}{({\mapBl{\beta_2}{\os}})}$.  So we set
$\os'=\mapWh{\beta_1}{({\mapBl{\beta_2}{\os}})}$.
Let $\os_1=\mapWh{\beta_1}{\os}$ .
By 
 \refToDef{def:PostPreGlA}(\ref{def:PreGl1A})
we get

\vspace*{1.8mm}
\centerline{ $\comoccA_1 =
  \cauA{\beta_1}{\comoccA}=\begin{cases}
  \eqA{\os_1}{ \beta_1\cdot\comseqA}    & \text{if  $ \beta_1\cdot\comseqA$ is $\os_1$-pointed } \\
      \eqA{\os_1}{\comseqA}     & \text{otherwise}
\end{cases}$ }

\vspace*{1.6mm}
  Let $\os_2=\mapBl{\beta_2}{\os}$.
By \refToDef{def:PostPreGlA}(\ref{def:PostPreGl1A})
we get

\vspace*{1.8mm}
\centerline{ $\comoccA_2 =
\postGA{\comoccA}{\beta_2}=\begin{cases}
\eqA{\os_2}{\comseqA'}      & \text{if }\comseqA\approx_\os\concat{\beta_2}{\comseqA'} \\
  \eqA{\os_2}\comseqA    & \text{if }
\play{\beta_2}\cap\play{\comseqA} =\emptyset
\end{cases}$}

\vspace*{1.8mm}\noindent
The remainder of this proof is split  into  two cases, according to
the shape of $\comoccA_2$.

\noindent\medskip
{\em Case $\comoccA_2 = \eqA{\os_2}\comseqA$.}  Then
$\play{\beta_2}\cap\play\comseqA =\emptyset$.
By \refToDef{def:PostPreGlA}(\ref{def:PreGl1A})  we get

\vspace*{1.5mm}
\centerline{$\begin{array}{lll}\preGA{\comoccA_2}{\beta_1}&=&\begin{cases}
     \eqA{\os'}{ \beta_1\cdot\comseqA }   & \text{if  $ \beta_1\cdot\comseqA$ is $\os'$-pointed} \\
     \eqA{\os'}{ \comseqA}& \text{otherwise}
\end{cases}
 \end{array}$}

\medskip
Since $\play{\beta_2}\cap\play{\beta_1\cdot\comseqA}=\emptyset$,
by \refToDef{def:PostPreGlA}(\ref{def:PostPreGl1A}) we get

\vspace*{1.5mm}
\centerline{$\begin{array}{lll}\postGA{\comoccA_1}{\beta_2}&=&\begin{cases}
    \eqA{\os'}{ \beta_1\cdot\comseqA}    & \text{if  $ \beta_1\cdot\comseqA$ is $\os_1$-pointed} \\
    \eqA{\os'}{ \comseqA} & \text{otherwise}
\end{cases}
 \end{array}$}
We have to show that\sm

\centerline{$ (**)\ \ \beta_1\cdot\comseqA$ is $\os'$-pointed iff $
  \beta_1\cdot\comseqA$ is $\os_1$-pointed}

  \medskip\noindent If $\beta_1$ is an input,
it must be required in $\comseqA$  for both $\os'$-pointedness
and $\os_1$-pointedness,  so this case is obvious.

\medskip
Let $\beta_1=\CommAs{\pp}{\la}{\q}$.

\vspace*{1.8mm}\noindent
If $\beta_2$ is an output, then $\os'\cong\os_1\cdot\beta_2$ by
\refToDef{def:pQueue}(\ref{def:pQueue2}).  Since
$\beta_2\neq\CommAs{\pp}{\la'}{\q}$ for all $\la'$,  an  input  
in
$\comseqA$ matches $\beta_1$ in
$\os_1\cdot\beta_2\cdot\beta_1\cdot\comseqA$ iff
it matches $\beta_1$ in $\os_1\cdot\beta_1\cdot\comseqA$.

\vspace*{1.8mm}\noindent
If $\beta_2$ is  an input, then $\os_1\cong\overline{\beta_2}\cdot\os'$
by \refToDef{def:pQueue}(\ref{def:pQueue2}). If
$\beta_2\not=\CommAsI{\pp}{\la'}{\q}$ for all $\la'$, then an  input  
in $\comseqA$ matches $\beta_1$ in $\os'\cdot\beta_1\cdot\comseqA$ iff
it matches $\beta_1$ in
$\overline{\beta_2}\cdot\os'\cdot\beta_1\cdot\comseqA$.  Let
$\beta_2=\CommAsI{\pp}{\la'}{\q}$ for some $\la'$. Since
$\play{\beta_2}\cap\play{\comseqA} \neq \emptyset$, there is no  input  $\beta_0$ in
$\comseqA$ such that $\beta_0$ matches $\beta_1$ in
$\os'\cdot\beta_1\cdot\comseqA$ or in
$\overline{\beta_2}\cdot\os'\cdot\beta_1\cdot\comseqA$.
This concludes the proof of $(**)$.

\medskip\noindent
{\em Case $\comoccA_2 = \eqA{\os_2}{\comseqA'}$. } Then
$\comseqA\approx_\os\concat{\beta_2}{\comseqA'}$.
By \refToDef{def:PostPreGlA}(\ref{def:PreGl1A})  we get \sm

\centerline{$\begin{array}{lll}\preGA{\comoccA_2}{\beta_1}&=&\begin{cases}
     \eqA{\os'}{ \beta_1\cdot\comseqA' }   & \text{if  $ \beta_1\cdot\comseqA' $ is $\os'$-pointed} \\
     \eqA{\os'}{ \comseqA' }& \text{otherwise}
\end{cases}
\end{array}$}

\vspace*{1.6mm}\noindent
and, since $\comoccA = \eqA{\os}{\beta_2\cdot\comseqA'}$, by the
same definition we get \vspace*{1.6mm}

\centerline{$\begin{array}{lll}\preGA{\comoccA}{\beta_1}&=&\begin{cases}
      \eqA{\os_1}{\beta_1\cdot\beta_2\cdot\comseqA'}    & \text{if  $ \beta_1\cdot\beta_2\cdot\comseqA'$ is $\os_1$-pointed} \\
      \eqA{\os_1}{\beta_2\cdot\comseqA'} & \text{otherwise}
\end{cases}
 \end{array}$}

\vspace*{1.8mm}\noindent
We first show that $\beta_2\cdot\beta_1\cdot\comseqA' \approx_{\os_1}
\beta_1\cdot\beta_2\cdot\comseqA'$. Since
$\preGA{\comoccA}{\beta_1}$ is defined,  the trace
$\os_1\cdot\beta_1\cdot\beta_2\cdot\comseqA'$ is well formed by
\refToLemma{scb}(\ref{scb1}).  So $\beta_1$ cannot be a matching input
for $\beta_2$.  To show that $\beta_2$ cannot be a matching input
for $\beta_1$ observe that, if it were, then $\beta_1=\overline{\beta_2}$.
Since $\postGA{(\preGA{\comoccA}{\beta_1})}{\beta_2}$ is
defined we have that
$\os_1\equiv\overline{\beta_2}\cdot\os'$ by \refToDef{def:pQueue}(\ref{def:pQueue2}). Therefore $\beta_2$ cannot be a matching input
for $\beta_1$ in
$\overline{\beta_2}\cdot\os'\cdot\beta_1\cdot\beta_2\cdot\comseqA'$, since
it is the matching input of the first $\overline{\beta_2}$.
From this and
$\play{\beta_1}\cap\play{\beta_2}=\emptyset$ we get that $\beta_2\cdot\beta_1\cdot\comseqA' \approx_{\os_1}
\beta_1\cdot\beta_2\cdot\comseqA'$. Therefore \vspace*{1.6mm}

\centerline{$\begin{array}{lll}\preGA{\comoccA}{\beta_1}&=&\begin{cases}
      \eqA{\os_1}{\beta_2\cdot\beta_1\cdot\comseqA'}    & \text{if  $ \beta_1\cdot\beta_2\cdot\comseqA'$ is $\os_1$-pointed} \\
     \eqA{\os_1}{\beta_2\cdot\comseqA'} & \text{otherwise}
\end{cases}
 \end{array}$}

 \noindent \vspace*{1.6mm} and by \refToDef{def:PostPreGlA}(\ref{def:PostPreGl1A})

\vspace*{1.6mm}
\centerline{$\begin{array}{lll}\postGA{\comoccA_1}{\beta_2}&=&\begin{cases}
    \eqA{\os'}{ \beta_1\cdot\comseqA'}    & \text{if  $ \beta_1\cdot\beta_2\cdot\comseqA'$ is $\os_1$-pointed} \\
    \eqA{\os'}{ \comseqA' } & \text{otherwise}
\end{cases}
\end{array}$}

\vspace*{1.8mm}\noindent We have to show that\sm

\centerline{$(***)\  \ \beta_1\cdot\comseqA'$ is $\os'$-pointed iff $ \beta_1\cdot\beta_2\cdot\comseqA'$ is $\os_1$-pointed}

\vspace*{1.8mm}
Note that $\beta_1$ is required in $\comseqA'$ iff  it
is required in $\beta_2\cdot\comseqA'$ since $\play{\beta_2}\cap\play{\beta_1}=\emptyset$. Therefore the result is immediate when $\beta_1$ is an input.

\vspace{1.8mm}
Let $\beta_1$ be an output. \sm
\\
If $\beta_2$ is an output, then $\os'\cong\os_1\cdot\beta_2$ by
\refToDef{def:pQueue}(\ref{def:pQueue2}). Suppose that $\beta_1\cdot\comseqA'$ is $\os'$-pointed,
where $\comseqA' = \comseqA'_0\cdot\beta_0\cdot\comseqA''_0$ and 
 $\beta_0$ matches $\beta_1$ in
$\os_1\cdot\beta_2\cdot\beta_1\cdot\comseqA'_0\cdot\beta_0\cdot\comseqA''_0$.  Then, since
$\beta_2\cdot\beta_1\cdot\comseqA' \approx_{\os_1}
\beta_1\cdot\beta_2\cdot\comseqA'$, we have that $\beta_0$ matches $\beta_1$ in
$\os_1\cdot\beta_1\cdot\beta_2\cdot\comseqA'_0\cdot\beta_0\cdot\comseqA''_0$.   In a similar way we
can prove that, if an input
$\beta_0$ matches $\beta_1$ in
$\os_1\cdot\beta_1\cdot\beta_2\cdot\comseqA'_0\cdot\beta_0\cdot\comseqA''_0$, then $\beta_0$ matches $\beta_1$ in
$\os_1\cdot\beta_2\cdot\beta_1\cdot\comseqA'_0\cdot\beta_0\cdot\comseqA''_0$.

\noindent \vspace*{1.6mm}
If $\beta_2$ is an input, then $\os_1\cong\overline{\beta_2}\cdot\os'$ by
\refToDef{def:pQueue}(\ref{def:pQueue2}). Suppose that
$\beta_1\cdot\comseqA'$ is $\os'$-pointed,
where $\comseqA' = \comseqA'_0\cdot\beta_0\cdot\comseqA''_0$ and
 $\beta_0$ matches $\beta_1$ in
$\os'\cdot\beta_1\cdot\comseqA'_0\cdot\beta_0\cdot\comseqA''_0$. Then $\beta_0$ matches $\beta_1$ in
$\overline{\beta_2}\cdot\os'\cdot\beta_1\cdot\beta_2\cdot\comseqA'_0\cdot\beta_0\cdot\comseqA''_0$,
since $\beta_2$ is the first input in the trace and it matches 
$\overline{\beta_2}$.   In a similar way we can prove that, if an input
$\beta_0$ matches $\beta_1$ in
$\overline{\beta_2}\cdot\os'\cdot\beta_1\cdot\beta_2\cdot\comseqA'_0\cdot\beta_0\cdot\comseqA''_0$, then
$\beta_0$ matches $\beta_1$ in
$\os'\cdot\beta_1\cdot\comseqA'_0\cdot\beta_0\cdot\comseqA''_0$.
Therefore $(***)$ holds.

\medskip (\ref{ppg6A}) Let $\comoccA=\eqA{\os}{\comseqA}$.  Since
$\preGA{\comoccA}{\beta_i}$ is defined  for $i\in\set{1,2}$,
 by \refToLemma{scb}(\ref{scb1}) $\os_i=\mapWh{\beta_i}{\os}$ is
defined for $i\in\set{1,2}$. Then by \refToLemma{tr}(\ref{tr1})
$\mapWh{\beta_1}{(\mapWh{\beta_2}{\os})}\cong
\mapWh{\beta_2}{(\mapWh{\beta_1}{\os})}$.   Let
$\os'=\mapWh{\beta_1}{(\mapWh{\beta_2}{\os})}$.

\vspace*{1.8mm}
  Using  \refToLemma{scb}(\ref{scb0}) we
get  for $i\in\set{1,2}$ \sm

\centerline{ $\comoccA_i =
  \cauA{\beta_i}{\eqA{\os}\comseqA}=\eqA{\os_i}{\filtP{\,\beta_i}{\os_i}{\comseqA}}$
}

\vspace*{1.8mm}
  Using again \refToLemma{scb}(\ref{scb0}) we get \sm

\centerline{ $\preGA{\comoccA_1}{\beta_2} =
  \preGA{\eqA{\os_1}{\filtP{\,\beta_1}{\os_1}{\comseqA}}}{\beta_2} =
  \eqA{\os'}{\filtP{\beta_2}{\os'}{(\filtP{\,\beta_1}{\os_1}{\comseqA})}}$
}
Similarly \sm

\centerline{ $\preGA{\comoccA_2}{\beta_1} =
  \preGA{\eqA{\os_2}{\filtP{\,\beta_2}{\os_2}{\comseqA}}}{\beta_1} =
  \eqA{\os'}{\filtP{\beta_1}{\os'}{(\filtP{\,\beta_2}{\os_2}{\comseqA})}}$
}\sm

\noindent  We want to prove that \sm

\centerline{ $ (*)\ \ \
  \filtP{\beta_1}{\os'}{(\filtP{\,\beta_2}{\os_2}{\comseqA})}\approx_{\os'}
  \filtP{\beta_2}{\os'}{(\filtP{\,\beta_1}{\os_1}{\comseqA})} $
}

\medskip
 In the proof of (*) we will use the following facts, where  $h,k=1,2$  
 and $h\neq k$:
\begin{enumerate}[(a)]
\itemsep=0.8pt
\item \label{c0i}
$\concat{\beta_h}{\concat{\beta_k}{\comseqA}}\approx_{\os'}\concat{\beta_k}{\concat{\beta_h}{\comseqA}}$;
\item\label{c1i} if $\concat{\beta_h}{\comseqA}$ is $\os'$-pointed and
$\concat{\beta_k}{\comseqA}$ is
not $\os_k$-pointed, then $\concat{\beta_h}{\comseqA}$ is
$\os_h$-pointed;
\item\label{c2i} if $\concat{\beta_h}{\comseqA}$ is $\os_h$-pointed and $\concat{\beta_k}{\comseqA}$ is not $\os_k$-pointed, then $\concat{\beta_h}{\comseqA}$ is $\os'$-pointed;
\item \label{c3i}
  ${\concat{\beta_h}{\concat{\beta_k}{\comseqA } } } $ is
 $\os'$-pointed
iff
$\concat{\beta_h}{\comseqA}$ is
$\os_h$-pointed
and
$\concat{\beta_k}{\comseqA}$ is
$\os_k$-pointed.
\end{enumerate}
{\em Fact \ref{c0i}}. We show that
$\concat{\beta_h}{\concat{\beta_k}{\comseqA}}\,$  $\,\os'$-swaps to
 $\concat{\beta_k}{\concat{\beta_h}{\comseqA}}$.
By hypothesis $\play{\beta_h}\cap\play{\beta_k}=\emptyset$, so it is
enough to show that  $\beta_k$ does
not match    $\beta_h$ in the trace
$ \os'\cdot\beta_h\cdot\beta_k\cdot\comseqA  =
 (\mapWh{\beta_h}{(\mapWh{\beta_k}{\os})})\cdot\beta_h\cdot\beta_k\cdot\comseqA$.
\\
Suppose that $\beta_h$ is an output and $\beta_k$ is an input such
 that $\overline{\beta_k} = \beta_h$.  Since $\comoccA_h =
 \preGA{\comoccA}{\beta_h}$ is defined and $\beta_h$ is an output, it
 must be $\os \cong \concat{\os_h}{\beta_h}$.
 Then,
 since $\comoccA_k = \preGA{\comoccA}{\beta_k}$ is defined and
 $\beta_k$ is an input and $\overline{\beta_k} = \beta_h$, we get
 $\mapWh{\beta_k}{\os} = \concat{\overline{\beta}_k}{\os} \cong
 \concat{\overline{\beta}_k}{\concat{\os_h}{\beta_h}} \cong
 \concat{\beta_h}{\concat{\os_h}{\beta_h}} $.  Then $\os' =
 \mapWh{\beta_h}{(\mapWh{\beta_k}{\os})} \cong \concat{\beta_h}{\os_h}$.
 Clearly, $\beta_k$  matches the initial
 output $\beta_h$ in the trace
 $\os'\cdot\beta_h\cdot\beta_k\cdot\comseqA$,
 since $\beta_k$ is the first input in the trace and the initial
 $\beta_h$ is the first complementary output in the trace.  Therefore
 $\beta_k$ does not match its adjacent output
 $\beta_h$.

 \vspace{1.5mm}
 {\em Fact \ref{c1i}}. If $\beta_h$ is required in
 $\concat{\beta_h}{\comseqA}$ - a condition that is always true when
 $\beta_h$ is an input and $\concat{\beta_h}{\comseqA}$ is
 $\os'$-pointed - then
 $\concat{\beta_h}{\comseqA}$ is $\os_0$-pointed for all $\os_0$.\\
We may then assume that $\beta_h$ is an output that is not required in
$\beta_h\cdot\comseqA$. \\
If $\beta_k$ is an output, then $\os_h\cong\os'\cdot\beta_k$.  If
 an input matches
$\beta_h$ in  $\os'\cdot\concat{\beta_h}{\comseqA}$,
then   the same input matches $\beta_h$  in
$\os_h\cdot\concat{\beta_h}{\comseqA}$, since
$\play{\beta_h}\cap\play{\beta_k}=\emptyset$. \\
If $\beta_k$ is an input, then
$\os'\cong\overline{\beta_k}\cdot\os_h$. Suppose
$\beta_h=\CommAs\pp\la\q$ and $\beta_k=\CommAsI\pr{\la'}\ps$.  Observe
that it must be 
  $\q\neq \ps$,
because otherwise no
input $\CommAsI\pp\la\q$ could occur in $\comseqA$, since
$\concat{\beta_k}{\comseqA}$ is not $\os_k$-pointed, contradicting the
hypotheses that $\concat{\beta_h}{\comseqA}$ is $\os'$-pointed and
$\beta_h$ is not required in $\beta_h\cdot\comseqA$.  Then the
presence of $\overline{\beta_k} = \CommAs\pr\la\ps$ cannot affect the
multiplicity of $\pp\q!$ or $\pp\q?$ in any trace.
Therefore,  if an input matches
$\beta_h$  in $\os'\cdot\concat{\beta_h}{\comseqA}$,
then  the same input matches  $\beta_h$ in
$\os_h\cdot\concat{\beta_h}{\comseqA}$.

\vspace{1.5mm}
  {\em Fact \ref{c2i}}.
   Again, we may assume that $\beta_h$ is an output that is not
  required in
  $\beta_h\cdot\comseqA$.  \sm
  \\
  If $\beta_k$ is an output, then $\os_h\cong\os'\cdot\beta_k$. If
   an input matches $\beta_h$  in
  $\os_h\cdot\concat{\beta_h}{\comseqA}$, then
   the same input  matches  $\beta_h$ in
$\os'\cdot\concat{\beta_h}{\comseqA}$, since $\play{\beta_h}\cap\play{\beta_k}=\emptyset$.\\
  If $\beta_k$ is an input, then
  $\os'\cong\overline{\beta_k}\cdot\os_h$.
 Let $\beta_h=\CommAs\pp\la\q$
  and $\beta_k=\CommAsI\pr{\la'}\ps$.
 Again, it must be 
  $\q\neq \ps$,  because
otherwise no input $\CommAsI\pp\la\q$ could occur in $\comseqA$, since
$\concat{\beta_k}{\comseqA}$ is not $\os_k$-pointed, contradicting the
hypotheses that $\concat{\beta_h}{\comseqA}$ is $\os_h$-pointed and
$\beta_h$ is not required in $\beta_h\cdot\comseqA$.
Therefore, if  an input  matches  $\beta_h$ in
  $\os_h\cdot\concat{\beta_h}{\comseqA}$, then  the
  same input matches  $\beta_h$ in $\os'\cdot\concat{\beta_h}{\comseqA}$.

  \vspace{1.5mm}
 {\em Fact \ref{c3i}}.  From
 $\play{\beta_h}\cap\play{\beta_k}=\emptyset$ it follows that
 $\beta_h$ is required in
 $\concat{\beta_h}{\beta_k\cdot\comseqA}$
    iff
$\beta_h$ is required in $\concat{\beta_h}{\comseqA}$, and similarly
for $\beta_k$.
Let us then assume that $\beta_h$ and $\beta_k$ are not both
required in $\concat{\beta_h}{\beta_k\cdot\comseqA}$, i.e., that at
least one of them is an output not required in $\concat{\beta_h}{\beta_k\cdot\comseqA}$. \\
If both $\beta_h$ and $\beta_k$ are outputs, then
$\os_h\cong\os'\cdot\beta_k$. Then  an input matches  $\beta_h$ in
$\os'\cdot\concat{\beta_h\cdot\beta_k}{\comseqA}$ iff  the same
input matches  $\beta_h$
in $\os_h\cdot\concat{\beta_h}{\comseqA}$,
since
$\concat{\beta_h}{\concat{\beta_k}{\comseqA}}\approx_{\os'}
\concat{\beta_k}{\concat{\beta_h}{\comseqA}}$ by \refToFact{c0i}.
  \\
  Let $\beta_h=\CommAs\pp\la\q$ and $\beta_k=\CommAsI\pr{\la'}\ps$,
   where $\beta_h$ is not required in $\concat{\beta_h}{\beta_k\cdot\comseqA}$.
  Then $\os'\cong\overline{\beta_k}\cdot\os_h$. Therefore  an
input matches  $\beta_h$ in
  $\os'\cdot\concat{\beta_h\cdot\beta_k}{\comseqA}$ iff  the same
input matches  $\beta_h$
 in $\os_h\cdot\concat{\beta_h}{\comseqA}$,
  since
$\concat{\beta_h}{\concat{\beta_k}{\comseqA}}\approx_{\os'}\concat{\beta_k}{\concat{\beta_h}{\comseqA}}$
by \refToFact{c0i}.

\vspace{1.5mm}
 We proceed now to prove (*).  We distinguish three cases, according
  to whether:
\begin{enumerate}[i)]
\item \mylabel{o1}
each $\concat{\beta_i}{\comseqA}$ is $\os_i$-pointed, for $i
  =1,2$;
\item \mylabel{o2} no $\concat{\beta_i}{\comseqA}$ is $\os_i$-pointed, for $i
  =1,2$;
\item \mylabel{o3} $\concat{\beta_h}{\comseqA}$ is $\os_h$-pointed and
  $\concat{\beta_k}{\comseqA}$ is not $\os_k$-pointed, for $h,k = 1,2$
  and $h\neq k$.
\end{enumerate}
\emph{Case} \ref{o1}.  Suppose each $\concat{\beta_i}{\comseqA}$ is
$\os_i$-pointed, for $i = 1,2$.  Then
$\filtP{\beta_1}{\os'}{(\filtP{\,\beta_2}{\os_2}{\comseqA})}
\approx_{\os'} \filtP{\beta_1}{\os'}{\beta_2\cdot\comseqA}$ and
$\filtP{\beta_2}{\os'}{(\filtP{\,\beta_1}{\os_1}{\comseqA})}
\approx_{\os'} \filtP{\beta_2}{\os'}{\beta_1\cdot\comseqA}$.  
By \refToFact{c3i} 
both
$\concat{\beta_1}{\concat{\beta_2}{\comseqA}}$ and
$\concat{\beta_2}{\concat{\beta_1}{\comseqA}}$ are
$\os'$-pointed. Then $\filtP{\beta_1}{\os'}{\beta_2\cdot\comseqA}
\approx_{\os'} \concat{\beta_1}{\concat{\beta_2}{\comseqA}}$ and
$\filtP{\beta_2}{\os'}{\beta_1\cdot\comseqA} \approx_{\os'}
\concat{\beta_2}{\concat{\beta_1}{\comseqA}}$.  By \refToFact{c0i}
$\concat{\beta_1}{\concat{\beta_2}{\comseqA}}\approx_{\os'}
\concat{\beta_2}{\concat{\beta_1}{\comseqA}}$.

\medskip\noindent
\emph{Case} \ref{o2}.  Suppose no $\concat{\beta_i}{\comseqA}$ is
$\os_i$-pointed, for $i = 1,2$.  Then
$\filtP{\beta_1}{\os'}{(\filtP{\,\beta_2}{\os_2}{\comseqA})}
\approx_{\os'} \filtP{\beta_1}{\os'}{\comseqA}$
and
$\filtP{\beta_2}{\os'}{(\filtP{\,\beta_1}{\os_1}{\comseqA})}
\approx_{\os'} \filtP{\beta_2}{\os'}{\comseqA}$.
 By \refToFact{c1i}, no $\concat{\beta_i}{\comseqA}$ can be
$\os'$-pointed, for $i\in\set{1,2}$.
Hence
$\filtP{\beta_1}{\os'}{\comseqA} \approx_{\os'}  \comseqA
\approx_{\os'} \filtP{\beta_2}{\os'}{\comseqA}$.

\medskip\noindent
\emph{Case} \ref{o3}. Suppose
$\concat{\beta_h}{\comseqA}$ is $\os_h$-pointed and
  $\concat{\beta_k}{\comseqA}$ is not $\os_k$-pointed,
for $h,k = 1,2$ and $h\neq k$.
Then $\filtP{\beta_h}{\os'}{(\filtP{\,\beta_k}{\os_k}{\comseqA})}
\approx_{\os'} \filtP{\beta_h}{\os'}{\comseqA}$ and
$\filtP{\beta_k}{\os'}{(\filtP{\,\beta_h}{\os_h}{\comseqA})}
\approx_{\os'} \filtP{\beta_k}{\os'}{\beta_h\cdot\comseqA}$.
By \refToFact{c2i} $\concat{\beta_h}{\comseqA}$ is
$\os'$-pointed. Hence $\filtP{\beta_h}{\os'}{\comseqA}
\approx_{\os'}\beta_h\cdot\comseqA$.
By \refToFact{c3i}
$\concat{\beta_k}{\concat{\beta_h}{\comseqA}}$ is not
$\os'$-pointed.
Therefore $\filtP{\beta_k}{\os'}{\beta_h\cdot\comseqA}
\approx_{\os'}\beta_h\cdot\comseqA$.
\end{proof}


 \begin{lemma}
\mylabel{nested-filtering}
\begin{enumerate}
\item\mylabel{nested-filtering2}
Let $\mapBl{\beta}{\os}$ be defined and $\os'=\mapBl{\beta}{\os}$.
 Let $\comseqA,\comseqA'$ be such that $\comseqA'$ is
$(\concat{\os}{\concat{\beta}{\comseqA}})$-pointed. Then

\centerline{
 $\filtP{(\concat{\beta}{\comseqA})}{\os}{\comseqA'} =
\filtP{\beta}{\os}{(\filtP{\comseqA}{\os'}{\comseqA'})}$\sm
}

\item\mylabel{nested-filtering1}  Let $\mapWh{\beta}{\os}$ be defined
  and $\os'=\mapWh{\beta}{\os}$.
   Let $\comseqA,\comseqA'$ be such that $\comseqA'$ is
$(\concat{\os'}{\concat{\beta}{\comseqA}})$-pointed.  Then\sm

\centerline{
$\filtP{(\concat{\beta}{\comseqA})}{\os'}{\comseqA'} =
\filtP{\beta}{\os'}{(\filtP{\comseqA}{\os}{\comseqA'})}$
}
\end{enumerate}
\end{lemma}

\begin{proof}
(\ref{nested-filtering2}) We show
$\filtP{(\concat{\beta}{\comseqA})}{\os}{\comseqA'} =
\filtP{\beta}{\os}{(\filtP{\comseqA}{\os'}{\comseqA'})}$
by induction on $\comseqA$.

\medskip\noindent
 \emph{Case $\comseqA = \ee$.}  In this case both the LHS
and RHS reduce to $\filtP{\beta}{\os}{\comseqA'}$, for whatever $\os$.

\vspace*{1.6mm}\noindent
 \emph{Case $\comseqA = \concat{\comseqA''}{\beta'}$.}     By
\refToDef{def:trace-filtering} we obtain for the LHS:

\vspace*{1.6mm}
\centerline{$
\filtP{(\concat{\beta}{\concat{\comseqA''}{\beta'}})}{\os}{\comseqA'} =\begin{cases}
\filtP{(\concat{\beta}{\comseqA''})}{\os}{(\concat{\beta'}{\comseqA'})}     & \text{if $\concat{\beta'}{\comseqA'}$ is $(\concat{\os}{\concat{\beta}{\comseqA''}})$-pointed}  \\
\filtP{(\concat{\beta}{\comseqA''})}{\os}{\comseqA'}   & \text{otherwise}
\end{cases}
$}

\vspace*{1.6mm}\noindent
By \refToDef{def:trace-filtering} (applied to the internal
filtering) we obtain for the RHS:

\vspace*{1.6mm}
\centerline{$
\filtP{\beta}{\os}{(\filtP{(\concat{\comseqA''}{\beta'})}{\os'}{\comseqA'})} =
\begin{cases}
  \filtP{\beta}{\os}{(\filtP{\comseqA''}{\os'}{(\concat{\beta'}\comseqA'}))}
& \text{if $\concat{\beta'}{\comseqA'}$ is $(\concat{\os'}{\comseqA''})$-pointed}  \\
 \filtP{\beta}{\os}{(\filtP{\comseqA''}{\os'}{\comseqA'})}
& \text{otherwise}
\end{cases}
$}

\vspace*{1.6mm}\noindent
We distinguish two cases, according to whether $\beta$ is an input
or an output.

\medskip
Suppose first that $\beta$ is an output. Then $\os'
=\concat{\os}\beta$.  The side condition, i.e. the requirement that
$\concat{\beta'}{\comseqA'}$ be $(\concat{\os'}{\comseqA''})$-pointed,
is the same in both cases. We may then immediately conclude that LHS =
RHS using the induction hypothesis.

\medskip
Suppose now that $\beta$ is an input. Then
$\os=\concat{\overline\beta}{\os'}$. Observe that, since
$\,(\concat{\os'}{\comseqA"})$ is obtained from
$(\concat{\os}{\concat{\beta}{\comseqA''}}) =
(\concat{\concat{\overline{\beta}}{\os'}}{\concat{\beta}{\comseqA''}})$
by erasing a pair of matching communications,
$(\concat{\beta'}{\comseqA'})$ is $(\concat{\os'}{\comseqA''})$-pointed
if and only if $(\concat{\beta'}{\comseqA'})$ is
$(\concat{\os}{\concat{\beta}{\comseqA''}})$-pointed.  Then we may
again conclude by induction.

\medskip
(\ref{nested-filtering1})  follows from (\ref{nested-filtering2})
since $\mapBl{\beta}{(\mapWh\beta\os)}=\os$.
\end{proof}


\begin{lemma}\mylabel{prop:relPrePost}
\begin{enumerate}
\item  \mylabel{prop:relPrePost12} \mylabel{prop:relPrePost3}\mylabel{prop:relPrePost4}
 If $\comseqA\not=\ee$ and  $\mapBl{\beta}{\os}$ is defined,  then $\postGA{\point(\os,\concat{\beta}\comseqA)}{\beta}=\point(\mapBl{\beta}{\os},\comseqA)$.
\item  \mylabel{prop:relPrePost11}\mylabel{prop:relPrePost1}\mylabel{prop:relPrePost2}If $\mapWh{\beta}{\os}$ is defined, then
$\cauA{\beta}{\point(\os,\comseqA)}=\point(\mapWh{\beta}{\os},\beta\cdot\comseqA)$.
\end{enumerate}
\end{lemma}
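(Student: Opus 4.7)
The plan is to expand both sides of each equality using the definition of $\point$ (\refToDef{def:pf}), invoke \refToLemma{nested-filtering} to split the filtering of $\beta\cdot\comseqA$ into a filtering of $\comseqA$ alone followed by a single decision about $\beta$, and then match the two possible outcomes of that decision with the two clauses of the residual/retrieval operator in \refToDef{def:PostPreGlA}.

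For (1), I would set $\os' = \mapBl\beta\os$ and $\comseqA_1 = \filtP\comseqA{\os'}\ee$. Since $\beta\cdot\comseqA$ is $\os$-well formed and $\mapBl\beta\os$ is defined, $\comseqA$ is $\os'$-well formed (the matching output in $\os$ is consumed by the initial input $\beta$ when $\beta$ is an input), so by \refToLemma{gl} we have $\comseqA_1\neq\ee$ and $\point(\os',\comseqA) = \eqA{\os'}{\comseqA_1}$. Applying \refToLemma{nested-filtering}(\ref{nested-filtering2}) with $\comseqA' = \ee$ yields $\filt{(\beta\cdot\comseqA)}\ee = \filtP\beta\os{\comseqA_1}$. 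I then case split on whether $\beta\cdot\comseqA_1$ is $\os$-pointed. If it is, the filtering keeps $\beta$, so $\point(\os,\beta\cdot\comseqA) = \eqA\os{\beta\cdot\comseqA_1}$, and the first clause of \refToDef{def:PostPreGlA}(\ref{def:PostPreGl1A}) (using $\comseqA_1\neq\ee$) gives $\eqA{\os'}{\comseqA_1}$. If it is not, then since $\comseqA_1$ is already $\os'$-pointed the failure must come from $\beta$ alone, which forces that $\beta$ has no matching input in $\comseqA_1$ and $\play\beta\not\subseteq\play{\comseqA_1}$; thus $\point(\os,\beta\cdot\comseqA) = \eqA\os{\comseqA_1}$ and the second clause of the residual again yields $\eqA{\os'}{\comseqA_1}$.

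Statement (2) follows a symmetric pattern. Setting $\os' = \mapWh\beta\os$ and $\comseqA_1 = \filtP\comseqA\os\ee$, so that $\point(\os,\comseqA) = \eqA\os{\comseqA_1}$, \refToLemma{nested-filtering}(\ref{nested-filtering1}) gives $\filtP{(\beta\cdot\comseqA)}{\os'}\ee = \filtP\beta{\os'}{\comseqA_1}$. The case split on whether $\beta\cdot\comseqA_1$ is $\os'$-pointed corresponds exactly to the two clauses of \refToDef{def:PostPreGlA}(\ref{def:PreGl1A}), and both sides collapse to either $\eqA{\os'}{\beta\cdot\comseqA_1}$ or $\eqA{\os'}{\comseqA_1}$ accordingly.

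The main obstacle is the careful bookkeeping of well-formedness and pointedness across the mappings $\mapBl{}{}$ and $\mapWh{}{}$: in particular one must verify that $\comseqA_1$ is nonempty in both statements, that the hypotheses on $\beta\cdot\comseqA$ in (1) and on $\comseqA$ in (2) transfer through the queue maps, and that the failure of $\beta\cdot\comseqA_1$ being pointed is equivalent to $\play\beta\not\subseteq\play{\comseqA_1}$ once $\comseqA_1$ is already pointed on its own. Note that the two clauses of \refToDef{def:PostPreGlA} are in fact mutually exclusive, since $\comseqA\approx_\os\beta\cdot\comseqA'$ implies $\play\beta\subseteq\play\comseqA$, so no ambiguity arises. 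Once these checks are in place the algebraic identity reduces to a direct computation.
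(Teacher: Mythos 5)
Your proof is correct and follows essentially the same route as the paper's: both unfold $\point$ via \refToDef{def:pf} and apply \refToLemma{nested-filtering} with $\comseqA'=\ee$ to peel off $\beta$ from the filtering. The only difference is that where you carry out the case analysis on whether $\beta\cdot\comseqA_1$ is pointed by hand, the paper simply cites \refToLemma{scb}, which packages exactly that case analysis into a single equation.
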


\begin{proof}  \refToDef{def:pf} and Lemmas~\ref{scb} and~\ref{nested-filtering} with $\comseqA'=\ee$ imply (\ref{prop:relPrePost4}) and (\ref{prop:relPrePost1})  since:

\vspace*{1.8mm}

   \centerline{$\begin{array}{llllll}
(\ref{prop:relPrePost3})
&\postGA{\point(\os,\beta\cdot\comseqA)}{\beta}
&=&\postGA{\eqA{\os}{\filt{(\beta\cdot\comseqA)}\ee}}{\beta}
& \mbox{by \refToDef{def:pf}} &\\
&&=&\postGA{\eqA{\os}{\filt{\beta}{(\filtP\comseqA{\os'}\ee)}}}{\beta}&
\mbox{by \refToLemma{nested-filtering}(\ref{nested-filtering2})} &\\
&&=&\eqA{\os'}{\filtP\comseqA{\os'}\ee}
& \mbox{by \refToLemma{scb}(\ref{scb3})} &\\
&\point(\os',\comseqA)&=&\eqA{\os'}{\filtP\comseqA{\os'}\ee}& \mbox{by \refToDef{def:pf}}\\&&&&\text{where }\os'=\mapBl\beta\os \\
  (\ref{prop:relPrePost1}) &\cauA{\beta}{\point(\os,\comseqA)}&=&\cauA{\beta}{\eqA{\os}{\filtP\comseqA{\os}\ee}}&
    \mbox{by \refToDef{def:pf}} &\\
  &&=&
    \eqA{\os'}{\filtP{\beta}{\os'}{(\filtP\comseqA{\os}\ee)}}
    & \mbox{by \refToLemma{scb}(\ref{scb1})} &\\
&&=&\eqA{\os'}{\filtP{(\beta\cdot\comseqA)}{\os'}\ee}
 &
 \mbox{by \refToLemma{nested-filtering}(\ref{nested-filtering1})}&\\
 &\point(\os',\beta\cdot\comseqA)&=&\eqA{\os'}{\filtP{(\beta\cdot\comseqA)}{\os'}\ee}
 &
 \mbox{by \refToDef{def:pf}}&\\
 &&&&\text{where }\os'=\mapWh\beta\os
 \end{array}$ }
 \end{proof}


 \begin{lemmaa}{\ref{prop:prePostGlArel}}{\begin{enumerate}
\item  If $\comoccA_1  < \comoccA_2$
 and  both $\postGA{\comoccA_1}{\beta}$, $\postGA{\comoccA_2}{\beta}$ are defined,
 then
  $\postGA{\comoccA_1}{\beta}  <   \postGA{\comoccA_2}{\beta}$.
  \item If $\comoccA_1 < \comoccA_2$ and
$\preGA{\comoccA_1}{\beta}$ is defined, then
$\preGA{\comoccA_1}{\beta}  <  \preGA{\comoccA_2}{\beta}$.
 \item  If $\comoccA_1\grr\comoccA_2$ and   both $\preGA{\comoccA_1}{\beta}$,  $\preGA{\comoccA_2}{\beta}$ are defined,  then
 $\preGA{\comoccA_1}{\beta}\grr \preGA{\comoccA_2}{\beta}$.
\end{enumerate}}
\end{lemmaa}

\begin{proof}
(\ref{ppg4bA}) Let
$\comoccA_1=\eqA\os{\comseqA}$ and
$\comoccA_2=\eqA\os{\concat\comseqA{\comseqA'}}$.
  If $\postGA{\comoccA_1}{\beta}=\eqA{\os'}{\comseqA}$ and
  $\postGA{\comoccA_2}{\beta}=\eqA{\os'}{\concat\comseqA{\comseqA'}}$
  for some $\os'$, then $\postGA{\comoccA_1}{\beta}  <  \postGA{\comoccA_2}{\beta}$. \sm

Let $\beta$ be an output.  If $\comseqA\approx_\os\concat\beta{\comseqA_1}$ with $\comseqA_1\not=\ee$, then $\postGA{\comoccA_1}{\beta}=\eqA{\os\cdot\beta}{{\comseqA_1}}$ and $\postGA{\comoccA_2}{\beta}=\eqA{\os\cdot\beta}{\concat{\comseqA_1}{\comseqA'}}$. Therefore
$\postGA{\comoccA_1}{\beta}  <   \postGA{\comoccA_2}{\beta}$.  Let
$\play\beta\not\subseteq\play{\comseqA}$ and
$\concat\comseqA{\comseqA'}\approx_\os\concat\beta{\comseqA_2}$ with
$\comseqA_2\not=\ee$.
This implies $\concat\beta{\comseqA_2} \approx_\os
\concat{\beta}{\concat{\comseqA}{\comseqA'_2}}$ for some
$\comseqA'_2$. It follows that $\comseqA_2 \approx_{\os\cdot\beta}
\concat{\comseqA}{\comseqA'_2}$. Then we get
$\postGA{\comoccA_1}{\beta}=\eqA{\os\cdot\beta}{{\comseqA}}$ and
$\postGA{\comoccA_2}{\beta}=\eqA{\os\cdot\beta}{\comseqA_2}=\eqA{\os\cdot\beta}{\concat\comseqA{\comseqA'_2}}$,
which imply
$\postGA{\comoccA_1}{\beta}  < \postGA{\comoccA_2}{\beta}$. \sm

Let $\beta$ be an input. The proof is similar.

\medskip  (\ref{ppg4aA}) Since $\comoccA_1 < \comoccA_2$ and
  $\preGA{\comoccA_1}{\beta}$  is  
  defined, then also
  $\preGA{\comoccA_2}{\beta}$ is defined.
  Let 
  $\comoccA_1=\eqA\os{\comseqA}$ and
  $\comoccA_2=\eqA\os{\concat\comseqA{\comseqA'}}$.  If
  $\preGA{\comoccA_1}{\beta}=\eqA{\os'}{\comseqA}$ and
  $\preGA{\comoccA_2}{\beta}=\eqA{\os'}{\concat\comseqA{\comseqA'}}$
  for some $\os'$, then $\preGA{\comoccA_1}{\beta}  <  \preGA{\comoccA_2}{\beta}$.\sm

   Let $\beta$ be an output. Then $\os\cong\os'\cdot\beta$.  If
  $\preGA{\comoccA_1}{\beta}=\eqA{\os'}{\beta\cdot{\comseqA}}$, then
  it must be
  $\preGA{\comoccA_2}{\beta}=\eqA{\os'}{\beta\cdot\concat\comseqA{\comseqA'}}$. Thus
  $\preGA{\comoccA_1}{\beta} <
  \preGA{\comoccA_2}{\beta}$. 
  The only other case is
  $\preGA{\comoccA_1}{\beta}=\eqA{\os'}{{\comseqA}}$ and
  $\preGA{\comoccA_2}{\beta}=\eqA{\os'}{\beta\cdot\concat\comseqA{\comseqA'}}$.
  Since $\preGA{\comoccA_1}{\beta}=\eqA{\os'}{{\comseqA}}$,  the trace
  $\beta\cdot\comseqA$ is not $\os'$-pointed, so
  $\play\beta\not\subseteq\play{\comseqA}$ and $\comseqA$ does not
  contain the matching input of $\beta$. Therefore
  $\beta\cdot\concat\comseqA{\comseqA'}\approx_{\os'}\concat\comseqA{\beta\cdot\comseqA'}$
  and
  $\preGA{\comoccA_2}{\beta}=\eqA{\os'}{\beta\cdot\concat\comseqA{\comseqA'}}=\eqA{\os'}{
    \concat\comseqA{\beta\cdot\comseqA'}}$, so
  $\preGA{\comoccA_1}{\beta}  <  \preGA{\comoccA_2}{\beta}$. \sm

  Let $\beta$ be an input.  If
  $\preGA{\comoccA_1}{\beta}=\eqA{\overline\beta\cdot\os}{\beta\cdot{\comseqA}}$,
  then it must be
  $\preGA{\comoccA_2}{\beta}=\eqA{\overline\beta\cdot\os}{\beta\cdot\concat\comseqA{\comseqA'}}$.
  We get $\preGA{\comoccA_1}{\beta} < \preGA{\comoccA_2}{\beta}$.  The
  only other case is
  $\preGA{\comoccA_1}{\beta}=\eqA{\overline\beta\cdot\os}{{\comseqA}}$
  and
  $\preGA{\comoccA_2}{\beta}=\eqA{\overline\beta\cdot\os}{\beta\cdot\concat\comseqA{\comseqA'}}$.
  If
  $\preGA{\comoccA_1}{\beta}=\eqA{\overline\beta\cdot\os}{{\comseqA}}$,
  then $\play\beta\not\subseteq\play{\comseqA}$. Therefore
  $\beta\cdot\concat\comseqA{\comseqA'}\approx_{\overline\beta\cdot\os}\concat\comseqA{\beta\cdot\comseqA'}$
  and
  $\preGA{\comoccA_2}{\beta}=\eqA{\overline\beta\cdot\os}{\beta\cdot\concat\comseqA{\comseqA'}}=
  \eqA{\overline\beta\cdot\os}{\concat\comseqA{\beta\cdot\comseqA'}}$,
  so $\preGA{\comoccA_1}{\beta} < \preGA{\comoccA_2}{\beta}$.

\medskip  (\ref{ppg-conflictA-a}) Let $\comoccA_1=\eqA\os{\comseqA}$ and
$\comoccA_2=\eqA\os{\comseqA'}$
 and $\projAP{\comseqA}\pp\grr\projAP{\comseqA'}\pp$.
 We  select some  interesting cases. \sm

  Note first that $\projAP{\comseqA}\pp\grr\projAP{\comseqA'}\pp$
 implies
 $\pp \in \play{\comseqA} \cap \play{\comseqA'}$. \sm

 If $\beta$ is an output , then $\os\cong \concat{\os'}{\beta}$. If
 both $\concat\beta\comseqA$ and $\concat\beta{\comseqA'}$ are
 $\os'$-pointed or not $\os'$-pointed, then the result is
 immediate.  If $\concat\beta\comseqA$ is  $\os'$-pointed
  while $\concat\beta{\comseqA'}$ is not  $\os'$-pointed,
  then  $\play{\beta} \not \subseteq \play{\comseqA'}$.
 This implies  $\pp\not\in\play\beta$.
 Similarly,  if $\beta$ is an input and  $\play\beta\subseteq\play\comseqA$  while
  $\play\beta \not\subseteq\play{\comseqA'}$,  then $\pp\not\in\play\beta$. In
 both cases we get $\projAP{(\concat\beta{\comseqA})}\pp=
 \projAP{\comseqA}\pp$  and $\projAP{(\concat\beta{\comseqA'})}\pp=
 \projAP{\comseqA'}\pp$,  
 so we conclude
 $\preGA{\comoccA_1}{\beta}\grr \preGA{\comoccA_2}{\beta}$.
 \end{proof}


\begin{lemmaa}{\ref{keybeta}}{\begin{enumerate}
 \item  If $\comoccA\in \EGGA( \agtO{\pp}{\q}i  I{\la}{\G}\parN\Msg)$ and
$\postGA\comoccA{\CommAs\pp{\la_k}\q}$ is defined,
then

 \centerline{$\postGA\comoccA{\CommAs\pp{\la_k}\q}\in
   \EGGA(\G_k\parN\addMsg\Msg{\mq\pp{\la_k}\q})$, where $k\in I$.}
 \item If $\comoccA\in \EGGA( \agtI \pp\q \la \G \parN\addMsg{\mq\pp\la\q}\Msg)$  and $\postGA\comoccA{\CommAsI\pp\la\q}$ is defined, then
 $\postGA\comoccA{\CommAsI\pp\la\q}\in\EGGA(\G\parN\Msg)$.

 \item  If $\comoccA\in\EGGA(\G\parN\addMsg\Msg{\mq\pp{\la}\q})$,  then

\centerline{$\cauA{\CommAs\pp{\la}\q}\comoccA\in \EGGA(
  \agtO{\pp}{\q}i I{\la}{\G}\parN\Msg)$,  where $\la = \la_k$ and $\G =
  \G_k$ for some $k\in I$.}
 \item If $\comoccA\in\EGGA(\G\parN\Msg)$, then $\cauA{\CommAsI\pp\la\q}\comoccA\in \EGGA( \agtI \pp\q \la \G \parN\addMsg{\mq\pp\la\q}\Msg)$.
\end{enumerate}}
\end{lemmaa}

\begin{proof}
 (\ref{lemma:subev3})
By \refToDef{egA}(\ref{eg1AP}), if $\comoccA\in \EGGA( \agtO{\pp}{\q}i
I{\la}{\G}\parN\Msg)$, then $\comoccA=\point(\os,\comseqA)$, where
$\os=\osq{\Msg}$ and $\comseqA\in\FPaths{\agtO{\pp}{\q}i I{\la}{\G}}$,
which 
 gives $\comseqA\approx_\os\concat{\CommAs\pp{\la_h}\q}{\comseqA_h}$ with
$\comseqA_h\in \FPaths{ \G_h}$ for some $h\in I$.   By hypothesis
$\postGA{\comoccA}{\CommAs\pp{\la_k}\q}$ is defined, which implies
$\comseqA\approx_\os\concat{\CommAs\pp{\la_k}\q}{\comseqA_k}$ and
$\comseqA_k\not=\ee$. Then
\refToLemma{prop:relPrePost}(\ref{prop:relPrePost3}) gives
$\postGA{\comoccA}{\CommAs\pp{\la_k}\q}=\point(\concat\os{\CommAs\pp{\la_k}\q},\comseqA_k)$.
We conclude that

\vspace{1.5mm}
\centerline{$\postGA{\comoccA}{\CommAs\pp{\la_k}\q}\in \EGGA(
\G_k\parN\addMsg\Msg{\mq\pp{\la_k}\q})$}

\medskip (\ref{lemma:subev4})  Similar to the proof of (\ref{lemma:subev3}).

\medskip (\ref{lemma:subev1}) By
  \refToDef{egA}(\ref{eg1AP}), if
  $\comoccA\in\EGGA(\G\parN\addMsg\Msg{\mq\pp{\la}\q})$,
  then $\comoccA=\point(\concat\os{\CommAs\pp{\la}\q},\comseqA)$,
  where $\os=\osq\Msg$ and $\comseqA\in\FPaths{\G}$.  By
  \refToLemma{prop:relPrePost}(\ref{prop:relPrePost1})
  $\preGA{\comoccA}{\CommAs\pp{\la}\q}=
\point(\os,\concat{\CommAs\pp{\la}\q}\comseqA)$.
  Then, again by \refToDef{egA}(\ref{eg1AP}),
  $\preGA{\comoccA}{\CommAs\pp{\la}\q}\in \EGGA( \agtO{\pp}{\q}i
  I{\la}{\G}\parN\Msg)$,  where $\la = \la_k$ and $\G =
  \G_k$ for some $k\in I$,  since
  $\concat{\CommAs\pp{\la_k}\q}{\comseqA}\in \FPaths{ \agtO{\pp}{\q}i
    I{\la}{\G}}$.

\medskip (\ref{lemma:subev2})  Similar to the proof of (\ref{lemma:subev1}).
\end{proof}


\begin{lemmaa}{\ref{lemma:subev-bis}}{ Let $\G\parG\Msg\stackred\beta \G'\parG\Msg'$. Then $\osq\Msg
  \cong\mapWh\beta{\osq{\Msg'}}$ and
\begin{enumerate}
\item if $\comoccA\in\EGGA(\G\parN\Msg)$ and
  $\postGA{\comoccA}{\beta}$ is defined, then
  $\postGA{\comoccA}{\beta}\in\EGGA(\G'\parN\Msg')$;
\item if $\comoccA\in\EGGA(\G'\parN\Msg')$, then
$\preGA{\comoccA}{\beta}\in\EGGA(\G\parN\Msg)$.
\end{enumerate} }
\end{lemmaa}

\begin{proof}
\refToLemma{defpost} and Session Fidelity (\refToTheorem{sfA}) imply $\osq\Msg\cong\mapWh\beta{\osq{\Msg'}}$.

(\ref{keybeta5}) By induction on the inference of the transition
$\G\parG\Msg\stackred{\beta}\G'\parG\Msg'$ , see Figure \ref{ltsgtAs}.

\sm \noindent
{\it Base Cases.}
If the  applied rule is \rulename{Ext-Out}, then  $\G=\agtO\pp\q i I \la
\G$ and $\beta = \CommAs{\pp}{\la_k}{\q}$ and $\G'=\G_k$ and
$\Msg'\equiv\addMsg{\Msg}{\mq\pp{\la_k}\q}$ for some $k\in I$.
 By assumption $\postGA{ \comoccA}\beta$ is defined.
By \refToLemma{lemma:subev}(\ref{lemma:subev3})
 $\postGA\comoccA\beta\in\EGGA(\G'\parG\Msg')$. \sm

If the  applied rule   is \rulename{Ext-In}, then
 $\G= \agtI\pp\q \la
{\G'}$ and $\beta = \CommAsI{\pp}{\la}{\q}$ and
$\Msg\equiv\addMsg{\mq\pp{\la}\q}\Msg'$.
 By assumption $\postGA{ \comoccA}\beta$ is defined.
By \refToLemma{lemma:subev}(\ref{lemma:subev4})
$\postGA\comoccA\beta\in\EGGA(\G'\parG\Msg')$. \sm

\vspace*{1.6mm}\noindent
{\it Inductive Cases.}  If the last applied  rule  is
\rulename{IComm-Out}, then $\G=\agtO\pp\q i I \la \G$ and
$\G'=\agtO\pp\q i I \la {\G'}$ and $
\G_i\parG\addMsg{\Msg}{\mq\pp{\la_i}\q}\stackred\asCom\G_i' \parG\addMsg{\Msg'}{\mq\pp{\la_i}\q}$
for all $i \in I $ and $\pp\not\in\play\beta$. \sm
\\
By \refToDef{egA}(\ref{eg1AP}) $\comoccA\in\EGGA(\G\parN\Msg)$ implies
$\comoccA=\point(\os,\comseqA)$, where $\os=\osq{\Msg}$ and
$\comseqA\in\FPaths{\G}$. Then
$\comseqA=\concat{\CommAs\pp{\la_k}\q}{\comseqA'}$ and
$\comoccA=\eqA\os{\comseqA_0}$ with
$\comseqA_0=\filt{(\concat{\CommAs\pp{\la_k}\q}{\comseqA'})}\ee$ for
some $k\in I$ by \refToDef{def:pf}. We get either
$\comseqA_0\approx_\os\concat{\CommAs\pp{\la_k}\q}{\comseqA_0'}$ or
$\pp\not\in\play{\comseqA_0}$ by \refToDef{def:pm}. Then
$\postGA\comoccA{\CommAs\pp{\la_k}\q}$ is defined unless  $\comseqA_0\approx_\os\concat{\CommAs\pp{\la_k}\q}{\comseqA_0'}$ and
$\comseqA_0'=\ee$  by
\refToDef{def:PostPreGlA}(\ref{def:PostPreGl1A}). We consider the two cases. \sm

\vspace*{1.6mm}\noindent
 {\it  Case  $\comseqA_0\approx_\os\concat{\CommAs\pp{\la_k}\q}{\comseqA_0'}$ and  $\comseqA_0'=\ee$.}
We get $\postGA\comoccA\beta=\eqA{\mapBl\beta\os}{\CommAs\pp{\la_k}\q}$ since $\play\beta\cap\play{\CommAs\pp{\la_k}\q}=\emptyset$, which implies $\postGA\comoccA\beta\in\EGGA(\G'\parN\Msg')$ by \refToDef{egA}(\ref{eg1AP}).

\vspace*{1.6mm}\noindent
 {\it  Case  $\comseqA_0\approx_\os\concat{\CommAs\pp{\la_k}\q}{\comseqA_0'}$ and  $\comseqA_0'\not=\ee$ or  $\pp\not\in\play{\comseqA_0}$.}
Let $\comoccA'=\postGA\comoccA{\CommAs\pp{\la_k}\q}$.
 By  \refToLemma{lemma:subev}(\ref{lemma:subev3})
$\comoccA' \in\EGGA(\G_k\parG\addMsg{\Msg}{\mq\pp{\la_k}\q})$.
By assumption $\postGA{ \comoccA}\beta$ is defined.
We  first  show that $\postGA{ \comoccA' }\beta$ is defined.
Since $\postGA{ \comoccA}\beta$ and
$\postGA\comoccA{\CommAs\pp{\la_k}\q}$ are defined,
by  \refToDef{def:PostPreGlA}(\ref{def:PostPreGl1A}) we have four cases:
\begin{enumerate}[(a)]
\itemsep=0.9pt
\item \label{keybetaE1} $\comseqA_0\approx_\os\concat{\beta}{\comseqA_1}$ for some $\comseqA_1$ and $\comseqA_0\approx_\os\concat{\CommAs\pp{\la_k}\q}{\comseqA_0'}$;
\item \label{keybetaE2}$\comseqA_0\approx_\os\concat{\beta}{\comseqA_1}$ and $\pp\not\in\play{\comseqA_0}$;
\item \label{keybetaE3}$\play\beta\cap\play{\comseqA_0}=\emptyset$ and $\comseqA_0\approx_\os\concat{\CommAs\pp{\la_k}\q}{\comseqA_0'}$;
\item \label{keybetaE4} $\play\beta\cap\play{\comseqA_0}=\emptyset$ and $\pp\not\in\play{\comseqA_0}$.
\end{enumerate}
Let
$\os'=\mapBl{\CommAs\pp{\la_k}\q}\os=\os\cdot{\CommAs\pp{\la_k}\q}$
and $\os''=\mapBl{\beta}{\os'}$.

\vspace*{1mm}
In Case \ref{keybetaE1} we have
$\comseqA_0\approx_\os\concat{\concat{\beta}{\CommAs\pp{\la_k}\q}}{\comseqA'_1}\approx_\os\concat{\concat{\CommAs\pp{\la_k}\q}{\beta}}{\comseqA'_1}$
for some $\comseqA'_1$. Let $\comseqA_2 =
\concat{\beta}{\comseqA'_1}$. Then
$\comoccA=\eqA\os{\concat{\CommAs\pp{\la_k}\q}{\comseqA_2}}$ and
therefore $\comoccA' =\eqA{\os'}{\comseqA_2} =
\eqA{\os'}{\concat{\beta}{\comseqA'_1}}$.
Hence $\postGA{ \comoccA' }\beta=\eqA{\os''}{\comseqA_1'}$.

\vspace*{1mm}
In Case \ref{keybetaE2} we have
$\comoccA=\eqA\os{\concat{\beta}{\comseqA_1}}$ and
$\pp\not\in\play{\concat{\beta}{\comseqA_1}}$.
Therefore $\comoccA' =\eqA{\os'}{\concat{\beta}{\comseqA_1}}$. Hence   $\postGA{
 \comoccA' }\beta=\eqA{\os''}{\comseqA_1}$.

\vspace*{1mm}
In  Case  \ref{keybetaE3} we have $\comoccA'
=\eqA{\os'}{\comseqA_0'}$ and $\postGA{ \comoccA'
}\beta=\eqA{\os''}{\comseqA_0'}$ since
$\play\beta\cap\play{\comseqA_0}=\emptyset$ implies
$\play\beta\cap\play{\comseqA'_0}=\emptyset$.

\vspace*{1mm}
In  Case  \ref{keybetaE4} we have $\comoccA'
=\eqA{\os'}{\comseqA_0}$ and
$\postGA{ \comoccA' }\beta=\eqA{\os''}{\comseqA_0}$.

\medskip\noindent
So in all cases we conclude that $\postGA{ \comoccA' }\beta$ is defined.

\sm
By induction $\postGA{ \comoccA' }\beta\in\EGGA(\G'_k\parG\addMsg{\Msg'}{\mq\pp{\la_k}\q})$.
By  \refToLemma{lemma:subev}(\ref{lemma:subev1})
$\cauA{\CommAs\pp{\la_k}\q}{(\postGA{ \comoccA' }\beta)}\in\EGGA(\G'\parG\Msg')$.
Since $\comoccA'$ is defined,
\refToLemma{prop:prePostGlA}(\ref{ppg1}) implies
$\cauA{\CommAs\pp{\la_k}\q}{ \comoccA' }=\comoccA$.  Since
$\postGA{ \comoccA' }\beta$ and
$\postGA{(\cauA{\CommAs\pp{\la_k}\q}{ \comoccA' })}\beta$ are defined
 and $\pp\not\in\play\beta$, by
\refToLemma{prop:prePostGlA}(\ref{ppg5}) we get  $\cauA{\CommAs\pp{\la_k}\q}{(\postGA{ \comoccA' }\beta)}=\postGA{(\cauA{\CommAs\pp{\la_k}\q}{ \comoccA' })}\beta=\postGA\comoccA\beta$. We
conclude that
$\postGA\comoccA\beta\in\EGGA(\G'\parG\Msg')$. \sm

If the last applied  rule  is \rulename{IComm-In} the proof is similar.

\medskip  (\ref{keybeta4}) By induction on the inference of the transition
  $\G\parG\Msg\stackred{\beta}\G'\parG\Msg'$, see Figure
  \ref{ltsgtAs}.

\medskip \noindent
{\it Base Cases.}
If the applied rule is \rulename{Ext-Out}, then $\G=\agtO\pp\q i I \la
\G$ and $\beta = \CommAs{\pp}{\la_k}{\q}$ and $\G'=\G_k$ and
$\Msg'\equiv\addMsg{\Msg}{\mq\pp{\la_k}\q}$ for some $k\in I$.  By
\refToLemma{lemma:subev}(\ref{lemma:subev1})
$\preGA\comoccA\beta\in\EGGA(\G\parG\Msg)$.\\
If the applied rule is \rulename{Ext-In}, then $\G= \agtI\pp\q \la
{\G'}$ and $\beta = \CommAsI{\pp}{\la}{\q}$ and
$\Msg\equiv\addMsg{\mq\pp{\la}\q}\Msg'$. By
\refToLemma{lemma:subev}(\ref{lemma:subev2})
$\preGA\comoccA\beta\in\EGGA(\G\parG\Msg)$.

\medskip \noindent
{\it Inductive Cases.}  If the last applied  rule  is
\rulename{IComm-Out}, then $\G=\agtO\pp\q i I \la \G$ and
$\G'=\agtO\pp\q i I \la {\G'}$ and $
\G_i\parG\addMsg{\Msg}{\mq\pp{\la_i}\q}\stackred\asCom\G_i'
\parG\addMsg{\Msg'}{\mq\pp{\la_i}\q}$
for all $i \in I $ and
$\pp\not\in\play\beta$.

\medskip
By \refToDef{egA}(\ref{eg1AP}) $\comoccA\in\EGGA(\G'\parN\Msg')$
implies $\comoccA=\point(\os,\comseqA)$, where $\os=\osq{\Msg'}$ and
$\comseqA\in\FPaths{\G'}$. Then
$\comseqA=\concat{\CommAs\pp{\la_k}\q}{\comseqA'}$ and
$\comoccA=\eqA\os{\comseqA_0}$ with
$\comseqA_0=\filt{(\concat{\CommAs\pp{\la_k}\q}{\comseqA'})}\ee$ for
some $k\in I$ by \refToDef{def:pf}. We get either
$\comseqA_0\approx_\os\concat{\CommAs\pp{\la_k}\q}{\comseqA_0'}$ or
$\pp\not\in\play{\comseqA_0}$ by \refToDef{def:pm}.
Then $\postGA{\comoccA}{\CommAs\pp{\la_k}\q}$ is defined unless  $\comseqA_0\approx_\os\concat{\CommAs\pp{\la_k}\q}{\comseqA_0'}$ and
$\comseqA_0'=\ee$
by \refToDef{def:PostPreGlA}(\ref{def:PostPreGl1A}).  We consider
 the  two cases.

 \medskip\noindent
{\it Case  $\comseqA_0\approx_\os\concat{\CommAs\pp{\la_k}\q}{\comseqA_0'}$ and  $\comseqA_0'=\ee$.} We get
$\preGA\comoccA\beta=\eqA{\mapWh\beta\os}{\CommAs\pp{\la_k}\q}$ since
$\pp\not\in\play\beta$, which implies
$\preGA\comoccA\beta\in\EGGA(\G\parN\Msg)$ by
\refToDef{egA}(\ref{eg1AP}).

\vspace*{1.6mm}\noindent
{\it Case  $\comseqA_0\approx_\os\concat{\CommAs\pp{\la_k}\q}{\comseqA_0'}$ and   $\comseqA_0'\not=\ee$ or $\pp\not\in\play{\comseqA_0}$.}
Let $\comoccA'=\postGA\comoccA{\CommAs\pp{\la_k}\q}$.  By
\refToLemma{lemma:subev}(\ref{lemma:subev3}) $\comoccA'
\in\EGGA(\G'_k\parG\addMsg{\Msg'}{\mq\pp{\la_k}\q})$.  By induction
$\cauA\beta{ \comoccA'
}\in\EGGA(\G_k\parG\addMsg\Msg{\mq\pp{\la_k}\q})$.  Since $\comoccA'$
is defined, \refToLemma{prop:prePostGlA}(\ref{ppg1}) implies
$\cauA{\CommAs\pp{\la_k}\q}{ \comoccA' }=\comoccA$.  Since
$\cauA\beta{ \comoccA' }$ and $\cauA{\CommAs\pp{\la_k}\q}{ \comoccA'
}$ are defined, by \refToLemma{prop:prePostGlA}(\ref{ppg6A}) and
$\pp\not\in\play\beta$ we get $\cauA{\CommAs\pp{\la_k}\q}{(\cauA\beta{
    \comoccA' })}=\cauA\beta{(\cauA{\CommAs\pp{\la_k}\q}{ \comoccA'
  })}=\preGA\comoccA\beta$.  By
\refToLemma{lemma:subev}(\ref{lemma:subev1})
$\cauA{\CommAs\pp{\la_k}\q}{(\cauA\beta{ \comoccA'
  })}\in\EGGA(\G\parG\Msg)$.  We conclude that
$\preGA\comoccA\beta\in\EGGA(\G\parG\Msg)$.

\medskip
If the last applied  rule  is \rulename{IComm-In} the proof
is similar.
\end{proof}


\begin{lemmaa}{\ref{eiog}}{ Let   $\comseqA \neq \ee$ be
$\os$-well formed and  $\gecA{\os,\comseqA}=\Seq{\comoccA_1;\cdots}{\comoccA_n}$.
\begin{enumerate}
\itemsep=0.9pt
\item  If $1\leq k,l\leq n$, then $\neg (\comoccA_k \grr \comoccA_l)$;
\item  $\at\comseqA{i} = \io{\comoccA_i} \,$ for all $i$,  $1\leq i\leq n$.
\end{enumerate} }
\end{lemmaa}

\begin{proof}
(\ref{eiog1})
Let $\comoccA_i=\eqA{\os}{\comseqA_i}$ for all $i$,  $1\leq
i\leq n$. By Definitions~\ref{gecA} and~\ref{def:pf} $\,\comseqA_i =
\filt{\range{\comseqA}{1}{i}}{\ee}$.
By \refToDef{def:trace-filtering} if $\projAP{\comseqA_i}\pp\not=\ee$, then there are $k_i\leq i$ and $\comseqA_i'$ such that $\play{\at\comseqA {k_i}}=\set\pp$, $\pp\not \in\play{\comseqA_i'}$ and $\comseqA_i=\filt{\range\comseqA 1 {k_i-1}}{\concat{\at\comseqA {k_i}}{\comseqA_i'}}$. By the same definition all  $\at\comseqA {j}$ with $j\leq k_i$ and
$\play{\at\comseqA {j}}=\set\pp$  occur in $\comseqA_i$, in the same order as in $\comseqA$. Theferore $\projAP{\comseqA_i}\pp$ is a prefix of $\projAP{\comseqA}\pp$ for all $\pp$ and all $i$, $1\leq i\leq n$. This implies that $\projAP{\comseqA_h}\pp$ cannot be in conflict with $\projAP{\comseqA_l}\pp$ for any $\pp$ and any  $h,l$, $1\leq h,l\leq n$.

\medskip (\ref{eiog2}) Immediate from Definitions~\ref{gecA}, \ref{def:pf}
and  \refToLemma{gl}.
 \end{proof}


\begin{lemmaa}{\ref{lemma:Gred-trace-owf}}{If $\G\parG\Msg\stackred\comseqA\G'\parG\Msg'$ and
 $\os = \osq\Msg$, then $\comseqA$ is $\omega$-well formed.}
 \end{lemmaa}

\begin{proof}
The proof is by induction on $\comseqA$.

\vspace*{1.6mm}\noindent
 {\it Case $\comseqA =\beta$.}
 If $\beta=\CommAs{\pp}{\la}{\q}$, then the result is immediate. \\
If $\beta =\CommAsI\pp\la\q$, then from
$\G\parG\Msg\stackred\beta\G'\parG\Msg'$ we get
$\Msg\equiv\addMsg{\mq\pp{\la}\q}{\Msg'}$  by
Lemma~\ref{keysrA34}(\ref{keysrA4}), which implies
$\os\cong\concat{\CommAs\pp\la\q}{\os'}$.
 Then the trace $\concat{\os}{\beta} =
\concat{\concat{\CommAs\pp\la\q}{\os'}}{\CommAsI\pp\la\q}$ is well formed,
 since $\CommAsI{\pp}{\la}{\q}$ is the first 
 input of $\q$ from $\pp$ and $\CommAs{\pp}{\la}{\q}$ is the first output of  
$\pp$ to $\q$,  and therefore
$\ct 1 {\concat\os\beta} {\cardin{\os}+1}$. Hence $\beta$ is
$\omega$-well formed.

\vspace*{1.6mm}\noindent
 {\it Case $\comseqA =\concat\beta{\comseqA'}$ with $\comseqA'\neq\epsilon$.}  Let
$\G\parG\Msg\stackred\beta\G''\parG\Msg''\stackred{\comseqA'}\G'\parG\Msg'$
and $\os' = \osq{\Msg''}$.  By induction $\comseqA'$ is $\omega'$-well
formed.   If $\beta=\CommAs{\pp}{\la}{\q}$, then from
$\G\parG\Msg\stackred\beta\G''\parG\Msg''$ we get
$\Msg''=\addMsg{\Msg}{\mq{\pp}{\la}{\q}}$ by
Lemma~\ref{keysrA34}(\ref{keysrA3}).  Therefore
$\osq{\Msg''}=\omega\cdot\beta = \os'$. Since $\comseqA'$ is
$(\concat{\omega}{\beta})$-well formed,
i.e. $\concat{\concat{\omega}{\beta}}{\comseqA'}$ is well formed, we may conclude that
$\comseqA = \concat\beta{\comseqA'}$ is ${\omega}$-well formed.
If $\beta=\CommAsI{\pp}{\la}{\q}$,  as in the base case we get
$\Msg\equiv\addMsg{\mq{\pp}{\la}{\q}}{\Msg''}$ by Lemma~\ref{keysrA34}(\ref{keysrA4}),
and thus $\omega=\concat{\CommAs{\pp}{\la}{\q}}{\os'}$.
 We know that $\comseqA'$ is $\os'$-well formed,
i.e. ${\concat{\os'}{\comseqA'}}$ is well formed.
Therefore we have that
$\concat{\CommAs{\pp}{\la}{\q}}{\concat{\os'}{\concat{\CommAsI{\pp}{\la}{\q}}{\comseqA'}}}$
is well formed,  since  $\ct 1
{\concat\os\comseqA} {\cardin{\os}+1}$, and
we may conclude that $\comseqA$ is $\os$-well formed.
\end{proof}


\begin{lemmaa}{\ref{kgec}}{\begin{enumerate}
\item
   Let $\comseqA=\concat\beta{\comseqA'}$ and
   $\os'=\mapBl{\beta}{\os}$. If
   $\gecA{\os,{\comseqA}}=\Seq{\comoccA_1;\cdots}{\comoccA_n}$ and
   $\gecA{\os',{\comseqA'}}=\Seq{\comoccA'_2;\cdots}{\comoccA'_n}$, then
   $\postGA{\comoccA_i}\beta={\comoccA'_i}\,$  for all $i$,   $2\leq i\leq n$.
   \item
   Let $\comseqA=\concat\beta{\comseqA'}$ and
   $\os=\mapWh{\beta}{\os'}$. If
   $\gecA{\os,{\comseqA}}=\Seq{\comoccA_1;\cdots}{\comoccA_n}$ and
   $\gecA{\os',{\comseqA'}}=\Seq{\comoccA'_2;\cdots}{\comoccA'_n}$, then
   $\preGA{\comoccA'_i}\beta={\comoccA_i}\,$  for all $i$,   $2\leq i\leq n$.
   \end{enumerate}
}\end{lemmaa}

\begin{proof}
The proof is by induction on $\comseqA$.

\medskip (\ref{kgec2})
By \refToDef{gecA}
$\comoccA_i=\point(\os,\concat{\beta}{\range{\comseqA'}{1}{i}})$ and
$\comoccA'_i=\point(\os',{\range{\comseqA'}{1}{i}})$ for all $i$,
$2\leq i\leq n$. Then by
\refToLemma{prop:relPrePost}(\ref{prop:relPrePost3}) $\preGA{\comoccA'_i}\beta={\comoccA_i}\,$  for all $i$,   $2\leq i\leq n$.

\medskip (\ref{kgec1}) By  Point (\ref{kgec2}) and \refToLemma{prop:relPrePost}(\ref{prop:relPrePost1}).
 \end{proof}

\subsection*{Glossary of symbols}

\vspace*{-4mm}
$\;$\\
$\begin{array}{ll}\small
symbol & meaning \\[6pt]
\beta & \text{input/output communication} ~\CommAs{\pp}{\M}{\q}, \CommAsI{\pp}{\M}{\q} \\[-1pt]
\delta & \text{type event} \\[-1pt]
\epsilon  & \text{empty trace} \\[-1pt]
\zeta & \text{sequence of input/output actions} \\[-1pt]
\eta & \text{process event (nonempty sequence of input/output actions)} \\[-1pt]
\vartheta & \text{\text{sequence of undirected actions $!\la$,
    $?\la$ } } \\[-1pt]
\pi & \text{input/output action} \ \sendL{\pp}{\la}, \rcvL{\pp}{\la}\\[-1pt]
\rho & \text{network event} \\[-1pt]
\tau & \text{trace (sequence of input/output communications)} \\[-1pt]
\chi & \text{sequence of output actions} \\[-1pt]
\omega & \text{sequence of output communications} \\[-1pt]
\end{array}
$

\end{document}